\newtheorem{lemma}{Lemma}
\newtheorem{theoreminner}{Theorem} 
\newenvironment{theorem}[1][]{
  \pushQED{\qed}
  \begin{theoreminner}[#1]
}{
  \popQED
  \end{theoreminner}
}
\newtheorem{propositioninner}{Proposition} 
\newenvironment{proposition}[1][]{
  \pushQED{\qed}
  \begin{propositioninner}[#1]
}{
  \popQED
  \end{propositioninner}
}
\newtheorem{exampleinner}{Example} 
\newenvironment{example}[1][]{
  \pushQED{\qed}
  \begin{exampleinner}[#1]
}{
  \popQED
  \end{exampleinner}
}
\NewDocumentEnvironment{RevExample}{m}{ 
    \begingroup
    \begin{example}[Revisited]%
}{%
    \end{example}%
    \endgroup
}
\newtheorem{assumptioninner}{Assumption} 
\newenvironment{assumption}[1][]{
  \pushQED{\qed}
  \begin{assumptioninner}[#1]
}{
  \popQED
  \end{assumptioninner}
}
\newtheorem{remarkinner}{Remark} 
\newenvironment{remark}[1][]{
  \pushQED{\qed}
  \begin{remarkinner}[#1]
}{
  \popQED
  \end{remarkinner}
}
\numberwithin{equation}{section}
\numberwithin{theoreminner}{section}
\numberwithin{propositioninner}{section}
\numberwithin{lemma}{section}
\numberwithin{corollary}{section}
\numberwithin{remarkinner}{section}
\numberwithin{definition}{section}
\numberwithin{assumptioninner}{section}
\crefname{equation}{}{}
\Crefname{equation}{}{}
\crefname{theoreminner}{Theorem}{Theorems}
\Crefname{theoreminner}{Theorem}{Theorems}
\crefname{propositioninner}{Proposition}{Propositions}
\Crefname{propositioninner}{Proposition}{Propositions}
\crefname{exampleinner}{Example}{Examples}
\Crefname{exampleinner}{Example}{Examples}
\crefname{assumptioninner}{Assumption}{Assumptions}
\Crefname{assumptioninner}{Assumption}{Assumptions}
\crefname{remarkinner}{Remark}{Remarks}
\Crefname{remarkinner}{Remark}{Remarks}
\crefname{appendix}{Appendix}{Appendices}
\Crefname{appendix}{Appendix}{Appendices}
\newlist{assumptionenum}{enumerate}{1}
\setlist[assumptionenum]{label=\upshape(\roman*), ref=\theassumptioninner(\roman*)}
\crefname{assumptionenumi}{Assumption}{Assumptions}
\Crefname{assumptionenumi}{Assumption}{Assumptions}
\newcommand{\Pto}{\xrightarrow{P}}
\newcommand{\Pnto}{\xrightarrow{P_n}}
\newcommand{\etahat}{{\hat{\eta}}}
\newcommand{\etanot}{{\eta_{0}}}
\newcommand{\etabar}{{\bar{\eta}}}
\newcommand{\betahat}{{\hat{\beta}}}
\newcommand{\deltahat}{{\hat{\delta}}}
\newcommand{\etatilde}{{\tilde{\eta}}}
\newcommand{\etastar}{\eta^*}
\newcommand{\etastarP}{{\eta^{*}_{P}}}
\newcommand{\etastarPn}{{\eta^{*}_{P_n}}}
\newcommand{\thetahat}{\hat{\theta}}
\newcommand{\thetaeta}{\theta_{\eta}}
\newcommand{\thetaetahat}{\theta_{\etahat}}
\newcommand{\thetahatetahat}{\thetahat_{\etahat}}
\newcommand{\sigmahat}{\hat{\sigma}}
\newcommand{\Sigmahat}{\hat{\Sigma}}
\newcommand{\tauhat}{\hat{\tau}}
\newcommand{\taubar}{\bar{\tau}}
\newcommand{\Psihat}{\hat{\Psi}}
\newcommand{\Psidot}{\dot{\Psi}}
\newcommand{\Psihatmin}{\hat{\Psi}_{\rm min}}
\newcommand{\CIalpha}{\widehat{\rm CI}_{\alpha}}
\newcommand{\Mbar}{\bar{M}}
\newcommand{\cP}{\mathcal{P}}
\newcommand{\cW}{\mathcal{W}}
\newcommand{\cF}{\mathcal{F}}
\newcommand{\cX}{\mathcal{X}}
\newcommand{\cQ}{\mathcal{Q}}
\newcommand{\cA}{\mathcal{A}}
\newcommand{\cN}{\mathcal{N}}
\newcommand{\cS}{\mathcal{S}}
\newcommand{\cR}{\mathcal{R}}
\newcommand{\cV}{\mathcal{V}}
\newcommand{\R}{\mathbb{R}}
\newcommand{\G}{\mathbb{G}}
\newcommand{\I}[1]{\mathbb{I}\left(#1\right)}
\newcommand{\bP}{\mathbb{P}}
\newcommand{\s}{\mathsf{s}}
\newcommand{\stilde}{\tilde{\s}}
\newcommand{\xitilde}{\tilde{\xi}}
\newcommand{\bkn}{\lrbk{n}}
\newcommand{\hdot}{\dot{h}}
\newcommand{\bhat}{\hat{b}}
\newcommand{\dhat}{\hat{d}}
\newcommand{\betastarP}{\beta^*_{P}}
\newcommand{\dstarP}{d^*_{P}}
\newcommand{\lrp}[1]{\left( #1 \right)}
\newcommand{\lrbk}[1]{\left[ #1 \right]}
\newcommand{\lrbc}[1]{\left\{ #1 \right\}}
\newcommand{\lrm}[1]{\left| #1 \right|}
\newcommand{\norm}[1]{\left\lVert #1 \right\rVert}
\newcommand{\covering}[1]{N \left( #1 \right)}
\newcommand{\bracketing}[1]{N_{[\, ]} \left( #1 \right)}
\NewDocumentCommand{\E}{o g}{%
\IfNoValueTF{#2}{\operatorname{\mathbb{E}}\IfValueT{#1}{_{#1}}}
{\,{\operatorname{\mathbb{E}}\IfValueT{#1}{_{#1}}}\left[{#2}\right]}}
\NewDocumentCommand{\Var}{o m o}{%
  \IfNoValueTF{#1}{%
    \IfNoValueTF{#3}{%
      \operatorname{Var}\left[#2\right]%
    }{%
      \operatorname{Var}\left[#2 \mid #3\right]%
    }%
  }{%
    \IfNoValueTF{#3}{%
      \operatorname{Var}_{#1}\left[#2\right]%
    }{%
      \operatorname{Var}_{#1}\left[#2 \mid #3\right]%
    }%
  }%
}
\NewDocumentCommand{\Cov}{o g}{%
\IfNoValueTF{#2}{\operatorname{Cov}\IfValueT{#1}{_{#1}}}
{\,{\operatorname{Cov}\IfValueT{#1}{_{#1}}}\left[{#2}\right]}}
\title{\vspace{-65pt}Training and Testing with Multiple Splits: A\\Central Limit Theorem for Split-Sample Estimators}
\author{Bruno Fava\thanks{Department of Economics, Northwestern University. 
Contact: brunofava@u.northwestern.edu. 
I am incredibly grateful to Federico Bugni, Ivan Canay, Joel Horowitz, and Dean Karlan for their unparalleled advising. 
I thank Eric Auerbach, Denis Chetverikov, Federico Crippa, Georgy Egorov, Annie Liang, and Amilcar Velez for helpful discussions. 
All errors are my own.
This research was supported in part through the computational resources and staff contributions provided for the Quest high performance computing facility at Northwestern University which is jointly supported by the Office of the Provost, the Office for Research, and Northwestern University Information Technology.}}
\date{\today}
\begin{document}

\maketitle

\begin{center}
\Large
\vspace{-25pt}
\href{https://bfava.com/files/Bruno_Fava_JMP.pdf}{[Click here for the latest version]}
\vspace{10pt}
\end{center}

\begin{abstract}
    As predictive algorithms grow in popularity, using the same dataset to both train and test a new model has become routine across research, policy, and industry. 
    Sample-splitting attains valid inference on model properties by using separate subsamples to estimate the model and to evaluate it. 
    However, this approach has two drawbacks, since each task uses only part of the data, and different splits can lead to widely different estimates. 
    Averaging across multiple splits, I develop an inference approach that uses more data for training, uses the entire sample for testing, and improves reproducibility. 
    I address the statistical dependence from reusing observations across splits by proving a new central limit theorem for a large class of split-sample estimators under arguably mild and general conditions. 
    Importantly, I make no restrictions on model complexity or convergence rates. 
    I show that confidence intervals based on the normal approximation are valid for many applications, but may undercover in important cases of interest, such as comparing the performance between two models. 
    I develop a new inference approach for such cases, explicitly accounting for the dependence across splits. 
    Moreover, I provide a measure of reproducibility for p-values obtained from split-sample estimators. 
    Finally, I apply my results to two important problems in development and public economics: predicting poverty and learning heterogeneous treatment effects in randomized experiments. 
    I show that my inference approach with repeated cross-fitting achieves better power than existing alternatives, often enough to reveal statistical significance that would otherwise be missed. 
\end{abstract}

\clearpage

\section{Introduction} \label{section.intro}

As predictive algorithms transform empirical economics, policy, and industry, it is now routine to use the same dataset to train and evaluate a new model. 
For example, when training a machine learning algorithm to predict treatment effects, create a targeted policy rule, or automate consumer credit scoring, it is essential to evaluate the quality of the predictions and assess whether implementation would generate disparate impact across demographic groups. 
However, standard methods for training and evaluating typically waste part of the data by splitting the sample into training and testing sets. 
I develop a new inference approach that uses the entire sample for both tasks by combining multiple splits, improving statistical power and reproducibility. 
I provide valid confidence intervals under weak conditions for model properties such as accuracy and fairness calculated using cross-fitting, repeated sample-splitting or repeated cross-fitting.

Specifically, I study a setting in which an analyst (a researcher, policymaker, or industry practitioner) wishes to use the same dataset to both: 
\begin{enumerate}[label=(\roman*)]
    \item train a new model, and 
    \item evaluate some of its properties, such as a measure of accuracy or fairness. 
\end{enumerate}
For example, consider a government using machine learning (ML) to target recipients of a poverty alleviation program. 
Step (i) consists of training a model to predict, for example, families at higher risk of falling below the poverty line, while step (ii) consists of constructing a confidence interval for the out-of-sample mean squared error or rate of correct classifications. 

Using the same observations for both steps (i) and (ii) creates a form of statistical dependence that makes inference challenging. 
For example, standard central limit theorems (CLTs) assume independence, which is violated in this setting. 
This difficulty is often overcome by randomly splitting the sample into two, one part to train the model (training sample), and the other to evaluate its properties (evaluation sample). 
Since each task is conducted with separate data, such statistical dependence is not generated, and one can use standard approaches to inference. 
This procedure, however, has three drawbacks: it uses only part of the data for training the model, only part of the data for evaluating its properties, and different random splits can lead to widely different estimates and potentially affect statistical significance. 

My main contribution is an inference approach that averages estimates across multiple sample splits, improving upon a standard 50/50 split by using more data for training, using twice as much data for evaluation, and improving reproducibility. 
In empirical applications and Monte Carlo experiments, I show that these improvements often reveal statistical significance that would otherwise be missed. 
The main challenge of using multiple splits is a new form of statistical dependence due to reusing observations in both training and evaluation roles across different splits. 
I address this challenge by proving a new CLT for a large class of split-sample estimators under weak conditions. 
My CLT builds on previous literature in two key dimensions: (i) it applies to a large class of estimators and split-sample procedures, and (ii) it imposes no restrictions on model complexity or rates of convergence and stability, only requiring that the estimated model converges to an arbitrary limit at any rate. 
This generality is crucial for accommodating popular ML algorithms, such as random forests or neural networks. 
Moreover, I characterize when confidence intervals based on the normal approximation are valid, showing they may fail in important cases such as comparing the performance of two models or learning features of heterogeneous treatment effects. 
I develop a new inference approach for such cases that explicitly accounts for the dependence across splits, leveraging my CLT.
Finally, I develop a reproducibility measure for p-values from split-sample procedures, quantifying whether the number of repetitions is sufficiently large to ensure reproducible inference.

To illustrate the technical challenges and empirical implications of my results, consider the problem of predicting poverty as a simple running example, which is one of my applications. 
Accurate out-of-sample poverty prediction is central to Development Economics for understanding poverty dynamics and designing targeted interventions. 
I focus on assessing predictive accuracy as the natural starting point, though my framework applies more broadly. 
Consider a sample $D = (Y_i, X_i)_{i=1}^n$ of $n$ households, where $X_i$ are covariates and $Y_i$ is a binary indicator equal to one if household $i$ is below the poverty line 13 years after the covariates were measured. 
The goal is to use the sample to (i) train a model $\etahat(x)$ to predict poverty by estimating $P(Y = 1 | X = x)$, for example using a machine learning algorithm, and (ii) evaluate its accuracy, for example by estimating and calculating a confidence interval (CI) for the out-of-sample mean squared error (MSE) 
$$\theta_\etahat = \E{\lrp{Y_{new} - \etahat(X_{new})}^2 | D} = \int \lrp{y - \etahat(x)}^2 dP(y,x),$$
where $(Y_{new},X_{new})$ is an out-of-sample observation drawn from the same distribution as the sample. 
Note that $\theta_\etahat$ is data-dependent, and is thus different from targeting a parameter $\theta_\etanot$ for some fixed $\etanot$. 
In the policy prediction example, the researcher is not interested in the out-of-sample accuracy of an ideal but unknown model $\etanot$. 
Instead, they are interested in the accuracy of the actually estimated model $\etahat$.

In this context, confidence intervals are often constructed using sample-splitting. 
If the entire sample is used for both tasks, standard CLTs do not apply to the average 
$$\frac{1}{n} \sum_{i=1}^{n} (Y_i - \etahat(X_i))^2,$$
since the summands are not independent. 
For example, $Y_1 - \etahat(X_1)$ and $Y_2 - \etahat(X_2)$ are dependent since $\etahat$ is estimated with both $(Y_1, X_1)$ and $(Y_2, X_2)$. 
A standard approach to handle this dependence is to impose complexity restrictions on how $\etahat$ is estimated, such as Donsker conditions. 
These restrictions hold for simple procedures like ordinary least squares, but fail for complex machine learning algorithms frequently used in applied problems \citep{chernozhukov2018double}. 
Sample-splitting avoids this dependence without strong assumptions: randomly split $\{1, \dots, n\}$ into sets $\s_1$ and $\s_2$ of size for example $n/2$, use data in $\s_1$ to estimate $\etahat_{1}$, and data in $\s_2$ to calculate the average 
\begin{equation} \label{eq.setup_ss}
    \thetahat_{\etahat_{1}} = \frac{1}{n/2} \sum_{i \in \s_2} (Y_i - \etahat_{1}(X_i))^2.
\end{equation}
Since the summands in $\thetahat_{\etahat_{1}}$ are independent conditional on $\s_1$, standard CLTs apply, and the normal approximation gives a valid CI for $\theta_{\etahat_1}$. 
However, this procedure uses only half of the data for each task, and different random splits can lead to widely different estimates and potentially different conclusions about statistical significance. 

Using multiple splits can improve upon these drawbacks but introduces a new challenge. 
Consider, for example, two-fold cross-fitting, where the roles of samples $\s_1$ and $\s_2$ are reversed and the final estimator averages the split-specific estimates. 
That is, estimate $\etahat_{1}$ using $\s_1$ and $\etahat_{2}$ using $\s_2$, then calculate the final estimator 
\begin{equation} \label{eq.intro_cf}
    \thetahat_\etahat = \frac{1}{n} \lrbk{\sum_{i \in \s_1} (Y_i - \etahat_{2}(X_i))^2 + \sum_{i \in \s_2} (Y_i - \etahat_{1}(X_i))^2},
\end{equation}
where $\etahat = (\etahat_1, \etahat_2)$. 
The estimand in this case is the MSE of an average model, as discussed in the next paragraph. 
While this estimator averages over all $n$ observations, standard CLTs do not apply due to a different form of statistical dependence: the first sum is not independent of the second since both use the entire dataset. 
My first main contribution is a central limit theorem for a large class of estimators that includes $\thetahat_\etahat$, which I use to construct valid CIs. 
In addition to using the entire sample for evaluation in \cref{eq.intro_cf}, which reduces the variance of the asymptotic distribution compared to that of \cref{eq.setup_ss}, more data can be used for training by increasing the number of folds. 
With 3 folds, for example, three models are trained, each using two-thirds of the data, with the remaining third used to evaluate the MSE. 
Finally, reproducibility is improved by repeating the splitting process multiple times and averaging the estimators over repetitions. 

I show that $\sqrt{n} (\thetahat_\etahat - \theta_\etahat)$ is asymptotically normal under weak conditions, targeting its out-of-sample expectation 
$$\theta_\etahat = \E{\frac{1}{2} \lrp{Y_{new} - \etahat_1(X_{new})}^2 + \frac{1}{2} \lrp{Y_{new} - \etahat_2(X_{new})}^2 | D}.$$
In the example above, $\theta_\etahat$ is mathematically equivalent to the MSE of the average model, that is, $\theta_\etahat = \theta_\etabar$, where 
$$\theta_\etabar = \E{\lrp{Y_{new} - \etabar(X_{new})}^2 | D}, \qquad \etabar(x) = \frac{1}{2} \etahat_1(x) + \frac{1}{2} \etahat_2(x).$$
This happens anytime the outcome is binary, and holds for the MSE, mean absolute deviation, among others, including when averaging over multiple folds and repetitions. 
In the poverty prediction example, this means that a researcher or policymaker can use model $\etabar$ for out-of-sample predictions, which will have MSE $\theta_\etahat$. 
For continuous outcomes, the researcher has two options. 
The first is to use a model $\etatilde(x)$ that predicts a value in $(\etahat_1(x), \etahat_2(x))$ at random. 
This model has an out-of-sample MSE equal to $\theta_\etahat$. 
Alternatively, one could still use $\etabar$, which has the guarantee to perform better or equal than $\etatilde$ in terms of out-of-sample accuracy due to a risk-contraction property (for details, see \Cref{appendix.etabar}). 

I make three main contributions. 
First, I prove a new central limit theorem for a large class of split-sample estimators under mild conditions. 
Specifically, I make no restrictions on the complexity of the models $\etahat$, or on their rates of convergence or algorithmic stability. 
For sample-average estimators, my CLT follows under a standard moments condition and assuming that $\etahat$ converges to an arbitrary limit, at any rate. 
I show that the normal approximation yields a valid CI in many applications, but may fail to do so in important cases of interest, such as comparing the performance between two models or some instances when $\etahat$ converges to a constant. 
My second contribution builds on the CLT to develop a new inference approach that covers such cases, explicitly accounting for the dependence across splits. 
I focus on the case of comparing the performance between two models, and discuss how the arguments apply more broadly to other cases. 
Finally, I develop a reproducibility measure for p-values obtained from split-sample estimators. 
It addresses a common concern: another researcher using the same dataset, but different splits, may reach a different conclusion about statistical significance. 
For a given (large) number of repetitions of sample-splitting/cross-fitting, my measure quantifies p-value reproducibility, assessing whether the number of repetitions is sufficiently large to ensure reproducible inference. 

Other contributions include a central limit theorem for split-sample empirical processes, which I use to prove my main central limit theorem, and may be of independent interest. 
I also apply this CLT to develop a new \textit{ensemble} method for learning features of heterogeneous treatment effects in randomized experiments, following the framework of \citet{chernozhukov2025generic}. 
The ensemble method improves on previous alternatives by using the entire sample for evaluation, more data for training, and combining multiple machine learning predictors, potentially improving power and avoiding issues of multiple hypothesis testing. 

I apply my inference approaches to two important problems in development and public economics: predicting poverty and learning heterogeneous treatment effects in randomized experiments. 
In the first application, using a panel from Ghana \citep{ghanapaneldataset} and Monte Carlo experiments, I show that repeated cross-fitting outperforms previous alternative approaches in detecting predictive power for being below the poverty line 13 years ahead. 
In the second application, I revisit the experiment of \citet{karlan2007does} on charitable giving and conduct Monte Carlo simulations, and show that my ensemble method achieves improved power for detecting heterogeneous treatment effects compared to previous alternatives.

The rest of the paper is structured as follows. 
\Cref{section.literature} summarizes related work, and \Cref{section.setup} establishes the setup and notation. 
\Cref{section.z_estimators} establishes a central limit theorem for split-sample Z-estimators, and \Cref{section.inference} develops inference using the normal approximation and for comparing two models. 
I introduce my measure of reproducibility in \Cref{section.reproducibility}. 
Finally, I implement my inference approaches in two empirical applications: predicting poverty in Ghana in \Cref{section.application_ghana} and heterogeneous treatment effects in charitable giving in \Cref{section.application_hte}. 
\Cref{section.conclusion} concludes. 
Proofs are delayed to \Cref{app.proofs}.

\subsection{Related Work} \label{section.literature}

I contribute to the literature on inference using multiple splits of the data. 
The literature on risk estimation via cross-validation provides related results establishing asymptotic normality for sample-average estimators based on multiple splits. 
Like my approach, these CLTs target data-dependent parameters, but rely on different types of assumptions and focus on the particular case of sample-averages. 
\citet{dudoit2005asymptotics} consider estimators that average over all possible splits or cross-splits of the sample, assume bounded loss function and require $\etahat$ to be loss consistent for a risk minimizing function, whereas I assume that $\etahat$ converges to an arbitrary limit. 
\citet{austern2020asymptotics} and \citet{bayle2020cross} provide CLTs under rate assumptions on the algorithmic stability of $\etahat$. 
\citet{bayle2020cross} provide two CLTs using estimators based on a single repetition of cross-fitting, one relying on rate conditions for algorithmic stability, and the second requires a ``conditional variance convergence'' assumption that they verify using rates for loss stability. 
My result does not require verifying loss stability conditions, which may not be satisfied in some high-dimensional settings \citep{bates2024cross}, and my result allows for any ML algorithm as long as $\etahat$ converges to an arbitrary limit at any rate, in the sense established in \Cref{section.z_estimators}. 
\citet{ledell2015computationally} provide a CLT for the particular case of estimating the area under the curve (AUC) measure via cross-validation, and \citet{andrews2022transfer} derive confidence intervals for the different but related problem of learning the transfer error of a model across domains. 
Moreover, I document that asymptotic normality of split-sample estimators does not immediately lead to valid inference for the important problem of comparing the performance between two models, and I construct a new inference approach for this case that explicitly incorporates the dependence across splits. 

A different class of related results show asymptotic normality using cross-fitting when targeting parameters that are not data-dependent. 
These approaches require stronger conditions on $\etahat$ that may not hold in general for nonparametric models with more than a handful of covariates, such as requiring $\etahat$ to converge in probability at some specified rate \citep{luedtke2016statistical,belloni2017program,chernozhukov2018double,benkeser2020improved,imai2025statistical}. 
Leveraging the data-dependent parameter of interest, my CLT (\Cref{th.clt_z}) requires no complexity restrictions, and assumes $\etahat$ converges in probability to any limit at any rate. 

In the context of learning features of heterogeneous treatment effects in randomized trials, \citet{chernozhukov2025generic} proposed taking the median of estimators, confidence intervals and p-values across splits, similarly focusing on a data-dependent parameter, without relying on complexity or rate assumptions. 
\citet{wager2024sequential} proposed a modified, sequential approach based on \citet{luedtke2016statistical}, and \citet{chernozhukov2025reply} suggested taking the median over repetitions of the sequential approach. 
In the same framework, \citet{imai2025statistical} developed inference using cross-fitting, relying on rate assumptions. 
My results build on this literature in four main dimensions, relying on the mild assumption that the trained models converge to any limit, at any rate. 
First, my estimator uses all observations on the role of evaluation sample, leading to a smaller variance of its asymptotic distribution. 
Second, my approach does not exhibit a tradeoff between training and evaluation sample sizes, allowing for more data to be used to train the models. 
Third, I provide inference for an interpretable estimand under no rate assumptions on the trained models, while \citet{chernozhukov2025generic} require a rate of concentration condition for coverage of their median estimand, which requires for example the training sample to be large relative to the evaluation sample. 
Finally, I introduce a new \textit{ensemble} method that combines predictions from multiple ML algorithms, potentially improving statistical power for detecting HTE, and avoiding issues of multiple hypothesis testing. 

The literature on learning features of heterogeneous treatment effects with multiple splits is a subset of a broader literature on aggregating potentially dependent p-values \citep{ruger1978maximale,ruschendorf1982random,meng1994posterior,meinshausen2009pvalues,gasparin2025combining}. 
These approaches similarly apply to data-dependent parameters under weak conditions, and typically target size control under the worst data generating process, thus being conservative in general. 
My confidence intervals are asymptotically exact and improve statistical power. 

Finally, my work is complementary to \citet{ritzwoller2023reproducible}. 
They provide a stopping algorithm for determining how many times to repeat sample-splitting to ensure reproducibility of averages over split-sample statistics, for example, the average point estimate. 
I take the number of repetitions as given and focus on inference, providing a measure of reproducibility for p-values calculated using multiple splits.
While \citet{ritzwoller2023reproducible} uses an asymptotic framework that takes the sample size fixed and assumes a small threshold for the variability of the average split-sample statistic, my framework uses a growing sample size and number of repetitions.

\section{Setup} \label{section.setup} 

I consider a setup in which an analyst (a researcher, policymaker, or industry practitioner) wishes to use a dataset to both (i) train a new model and (ii) evaluate some of its properties. 
This is typically the case when one wants to train a new model to automate or assist decision-making, for example using a machine learning algorithm. 
Since these algorithms, despite their potential, may perform poorly in practice or have disparate performance across groups, it is often important to assess their accuracy and fairness. 
I use the term fairness as in the algorithmic fairness literature \citep{chouldechova2018frontiers,cowgill2020algorithmic,barocas2016big} and provide example measures below.
I state the analyst's goals, discuss the parameter of interest with examples, and introduce the split-sample procedures I study. 

The first goal of the analyst is to train a model $\etahat \in H$ using an algorithm and a dataset $D = (W_i)_{i=1}^n$, where each $W_i \in \cW$ is an iid draw from a distribution $P$. 
I use \textit{train} to denote the fitting/estimation of $\etahat$ using $D$, \textit{training algorithm} (or just \textit{algorithm}) for the procedure that maps $D$ to $\etahat$, and \textit{estimated model} (or just \textit{model}) for the realized function $\etahat$. 
For example, one can use the Random Forests algorithm to train a new model $\etahat$. 
The sets $H$ and $\cW$ are in principle unconstrained, and $H$ depends on the choice of training algorithm. 
Typically, the analyst will estimate $\etahat$ by minimizing some loss function. 
My setup, however, is agnostic to the choice of training algorithm, and all results hold for any algorithm as long as $\etahat$ converges to an arbitrary limit at any rate, in the sense defined in \Cref{section.z_estimators}. 

The second goal of the analyst is to use $D$ to evaluate some performance property of $\etahat$, denoted $\theta_\etahat$. 
Specifically, the analyst wishes to construct a confidence interval $\CIalpha$ for $\theta_\etahat$ such that, for $\alpha \in (0,1)$,
\begin{equation} \label{eq.goal}
    \liminf_{n \to \infty} P \lrp{\thetaetahat \in \CIalpha} \ge 1 - \alpha,
\end{equation}
where the probability accounts for the randomness in both $\etahat$ and $\CIalpha$. 
$\theta_\etahat$ can be, for example, a measure of accuracy or fairness of $\etahat$. 
The parameter of interest, $\theta_\etahat$, depends on the data through the estimated model $\etahat$. 
This differs from the standard semiparametric literature, where the parameter of interest takes the form of $\theta_\etanot$ for some nuisance function $\etanot$. 
In the applications I consider, the object of interest is $\theta_\etahat$ since the analyst/policymaker is interested in the accuracy or fairness of the specific estimated model $\etahat$ they can actually implement. 
This is different from evaluating the performance of an ideal but unknown model $\etanot$. 

I provide three examples of such parameters of interest, and then discuss related cases in the literature where the parameter of interest is data-dependent. 

\begin{example}[Mean Squared Error] \label{example.mse}
    The individual observations are $W = (Y,X)$, where $Y \in \R$ is an outcome and $X \in \cX \subseteq \R^d$ is a set of covariates with $d \ge 1$. 
    $\etahat : \cX \to \R$ is a function that predicts $Y$ from $X$. 
    In the poverty prediction example discussed in \Cref{section.intro} and developed in \Cref{section.application_ghana}, $Y$ is a binary indicator for whether a household is below the poverty line, and $\etahat(x)$ is an estimate of $P(Y=1|X=x)$. 
    The mean squared error (MSE) of model $\etahat$ is 
    $$\theta_\etahat = \int \lrp{y - \etahat(x)}^2 dP(y,x).$$

    A related estimand, also covered by my framework, is the difference in MSE between two groups, which is a measure of fairness \citep{auerbach2024testing,liang2024algorithm,liu2024inference}. 
    Let $W = (Y,X,G)$, where $G \in \lrbc{a,b}$ indicates group membership (e.g., racial groups). 
    Then, 
    \begin{equation} \label{eq.mse_fairness}
        \theta'_\etahat = \int \lrp{y - \etahat(x)}^2 dP_{Y,X|G=a}(y,x) - \int \lrp{y - \etahat(x)}^2 dP_{Y,X|G=b}(y,x)
    \end{equation}
    quantifies how much better $\etahat$ performs for one group relative to the other, where $P_{Y,X|G=g}$ is the conditional distribution of $(Y,X)$ given $G=g$. 
\end{example}

\begin{example}[Classification Rate - Binary Classifiers] \label{example.classif_binary}
    The individual observations are $W = (Y,X)$, where $Y$ is a binary outcome and $X \in \cX \subseteq \R^d$ is a set of covariates, for some $d \ge 1$. 
    $\etahat : \cX \to \lrbc{0,1}$ is a function that predicts whether $Y=1$ or $Y=0$. 
    The correct classification rate of model $\etahat$ is 
    $$\theta_\etahat = \int \I{y = \etahat(x)} dP(y,x).$$
    $\theta_\etahat$ is a measure of accuracy, corresponding to the probability that $\etahat$ classifies an observation correctly. 
    
    Similar to \cref{eq.mse_fairness}, the difference in classification rate between two groups is a measure of fairness. 
\end{example}

\begin{example}[Classification Rate - Probabilistic Classifiers] \label{example.classif_prob}
    The previous example can be generalized to accommodate probabilistic classifiers $\etahat : \cX \to \lrbk{0,1}$, with $\etahat(X)$ being the estimated probability that $Y = 1$ given $X$. 
    The correct classification rate is given by 
    $$\theta_\etahat = \int \lrbk{\etahat(x) \I{y = 1} + (1 - \etahat(x)) \I{y = 0}} dP(y,x).$$
    This is equivalent to the probability (taking $\etahat$ fixed) that $a_\etahat(X)=Y$, where $a_\etahat(X) = 1$ with probability $\etahat(X)$, independent of $D$. 
    A measure of fairness can be defined similar to \cref{eq.mse_fairness}. 
\end{example}

There are several examples in the literature where the parameter of interest takes the form of a data-dependent $\theta_\etahat$. 
This occurs anytime the hypothesis of interest is selected only after the data has been used \citep{dawid1994selection}. 
An important case is the approach of \citet{chernozhukov2025generic} to inference on features of heterogeneous effects in randomized trials, which I revisit in \Cref{section.application_hte}. 
Other examples include evaluating the impacts of data-driven algorithms in policy applications \citep{potash2015predictive,kuzmanovic2024causal}, measuring welfare gains generated from data-driven rules \citep{kitagawa2018should,ida2024dynamic}, and the ``inference on winners'' framework of \citet{andrews2024inference}. 

My setup also applies to some cases where the parameter of interest is not data dependent, but is estimated using split-sample techniques. 
For example, in \citet{fava2025predicting} I develop an approach to inference on points of the distribution of treatment effects. 
Although the parameter of interest, $\theta$, is not data dependent, I incorporate covariate-adjustment terms $\etahat$ that yield bounds $\theta_{\etahat,L} \le \theta \le \theta_{\etahat,U}$. 
Inference on $\theta$ can then be derived from the asymptotic distribution of split-sample estimators $(\thetahat_{\etahat,L},\thetahat_{\etahat,U})$, centered around the bounds $(\theta_{\etahat,L},\theta_{\etahat,U})$. 
Other examples where $\theta_\etahat$ is informative about a parameter $\theta$ include learning the mean outcome under an optimal treatment regime \citep{shi2020breaking, fischer2024bridging}, and averages of intersection bounds \citep{ji2024model,semenova2025debiased}. 
Another type of application is when $\theta = \theta_\etahat$ does not depend on $\etahat$, yet estimating $\theta_\etahat$ leads to some better properties. This is the case of adding a covariate-adjustment term for learning the average treatment effect in a randomized trial, as I discuss in Appendix \ref{appendix.application_rct}. 

I consider four split-sample procedures for attaining the analyst's goals: 1) sample-splitting, 2) cross-fitting, 3) repeated sample-splitting, and 4) repeated cross-fitting. 
First, I introduce some notation. 
Let $\bkn = \lrbc{1, \dots, n}$ and the dataset $D = ( W_i)_{i \in \bkn}$ be an iid sample of $W \sim P$. 
I denote the training algorithm by $\cA : \cW^m \to H$, a function that takes a sample of size $m$ and returns a value $\eta \in H$. 
The dependence on $m$ is suppressed in the notation of $\cA$. 
For any subsample $\s \subset \bkn$, $D_{\s} = \{ W_i \}_{i \in \s}$. 

Sample-splitting consists of taking a random subsample $\s \subseteq \bkn$ of size $b$, using its complement $\stilde = \bkn \setminus \s$ to train the model $\etahat_{\stilde} = \cA(D_{\stilde})$, and calculating $\CIalpha$ from $\s$ for the parameter $\theta_{\etahat_{\stilde}}$. 
Cross-fitting consists of partitioning $\bkn$ into $K$ roughly equal-sized subsets (\textit{folds}) $(\s_k)_{k=1}^K$, at random. 
For $k=1,\dots,K$, train a model $\etahat_{\stilde_k} = \cA(D_{\stilde_k})$, that is, using all observations except those in fold $k$. 
Each model $\etahat_{\stilde_k}$ is trained from $n (K-1) / K$ observations when $n$ is a multiple of $K$. 
In \Cref{section.z_estimators}, I discuss different ways to aggregate the $K$ models into an estimand $\theta_\etahat$, where $\etahat = (\etahat_{\stilde_k})_{k=1}^K$, and the construction of a confidence interval $\CIalpha$ for $\theta_\etahat$. 
I consider $K$ fixed as $n \to \infty$. 

Repeated sample-splitting and cross-fitting consist of repeating the procedures above $M$ times. 
That is, for repeated sample-splitting, take $M$ independent, random subsamples of $\bkn$ of size $b$, $(\s_{m,1})_{m=1}^M$, and train $M$ models $(\etahat_{\stilde_{m,1}})_{m=1}^M$. 
For repeated cross-fitting, take $M$ independent, random partitions of $\bkn$ into $K$ roughly equal-sized folds, $\cR = (r_m)_{m=1}^M$, where each $r_m = (\s_{m,k})_{k=1}^K$ forms a partition of $\bkn$. 
For each subsample $\s_{m,k}$, train a model $\etahat_{\stilde_{m,k}}$ using all observations except the ones in $\s_{m,k}$, giving a total of $M K$ models. 
I discuss different ways to aggregate the multiple splits in \Cref{section.z_estimators}. 

I give a unified notation to the four split-sample procedures described above. 
Let $\cR = \lrp{r_m}_{m=1}^M$ denote a collection of $M$ random splits of the sample, where each split can be either:
\begin{itemize}
\item Sample-splitting: $K=1$ and $r_m = (\s_{m,1})$ with $\s_{m,1} \subset \bkn$ of size $b$, or 
\item $K$-fold cross-fitting: $K>1$ and $r_m = (\s_{m,k})_{k=1}^K$ forms a partition of $\bkn$. 
\end{itemize}
I use $K=1$ to denote sample-splitting for convenience. 
$K=1$ means that $r_m$ consists of one subsample, of size $b < n$ chosen by the researcher. 
For cross-fitting, I assume folds are equal-sized if $n$ is a multiple of $K$, and have sizes $\lfloor n/K \rfloor$ and $\lceil n/K \rceil$ otherwise, and define $b = n/K$. 
Define $\pi = \lim_{n \to \infty} b/n$, $\pi \in (0,1)$. 
With this notation, $K=1$ denotes sample-splitting and $K>1$ denotes cross-fitting. 
I allow $M$ to grow as $n$ increases, and denote 
$$\Mbar = \lim_{n \to \infty} M \in \mathbb{N} \cup \lrbc{+\infty}.$$
This notation unifies the four split-sample procedures described previously, as shown in \Cref{table.procedures}.
\begin{table}[ht]
\centering
\caption{Classification of Split-Sample Procedures}
\label{table.procedures}
\begin{tabular}{c|cc}
\toprule
& \multicolumn{2}{c}{Number of folds ($K$)} \\
\makecell{Limit number of\\repetitions ($\Mbar$)} & 1 & $>1$ \\
\midrule
1 & Sample-splitting & Cross-fitting \\
$>1$ & Repeated sample-splitting & Repeated cross-fitting \\
\bottomrule
\end{tabular}
\end{table}
I use the term \textit{multiple splits} to denote any of the three procedures that use more than one split ($\Mbar > 1$ and/or $K>1$).  
In all cases, I assume that the splits are taken at random uniformly over all possible splits or cross-splits. 
Although the number of possible splits is finite for any given $n$, I consider that the $M$ repetitions are taken independently, with repetition. 
This assumption reflects common practice, as the computationally feasible number of repetitions is usually much smaller than the total number of possible splits, so that the probability of taking two identical splits is negligible. 

I compare the four split-sample procedures in terms of statistical power, modeling power, and reproducibility properties in \Cref{section.comparison_procedures}.

\section{CLT for Split-Sample Z-Estimators} \label{section.z_estimators}

I prove a central limit theorem for split-sample Z-estimators, defined as zeroes of empirical moment equations. 
Z-estimators are a large class of estimators which include averages, linear regressions, and most M-estimators, since the parameter value that maximizes some objective function is the same that sets its partial derivatives to zero. 
This CLT can be used off-the-shelf in many applications, including the poverty prediction application in \Cref{section.application_ghana}. 
First, in \Cref{section.main_result_z}, I define split-sample Z-estimators and Z-estimands, introduce the assumptions used, and state the CLT. 
Finally, in \Cref{section.comparison_procedures}, I compare the four split-sample procedures (sample-splitting, cross-fitting, repeated sample-splitting, and cross-fitting). 

I provide a more accessible exposition for the particular case of sample average estimators, such as the MSE (\Cref{example.mse}), in \Cref{section.avg}. 
I prove a new CLT for split-sample empirical processes in \Cref{section.general}, which I use to prove my CLT for Z-estimators and may be of independent interest. 

\subsection{Main Result} \label{section.main_result_z}

Since Z-estimators can be nonlinear, unlike the mean squared error (\Cref{example.mse}), different approaches to aggregating multiple splits lead to different estimators and estimands. 
I discuss three such approaches. 
Let $\norm{\cdot}$ be the Euclidean norm, $\psi_{\theta,\eta}: \cW \to \R^d$ be measurable functions for $\theta \in \Theta \subseteq \R^d$ and $\eta \in H$ ($H$ is defined as in \Cref{section.setup}), and $d \ge 1$. 
For $\eta \in H$, let $\Psi_{\eta}(\theta) = P \psi_{\theta, \eta}$, $\Psihat_{\s,\eta}(\theta) = |\s|^{-1} \sum_{i \in \s} \psi_{\theta, \eta}(W_i)$, and $\Psidot_{\eta}$ be the Jacobian matrix of $\Psi_{\eta}(\theta_\eta)$. 
As in \Cref{section.setup}, let $\cR$ denote a collection of splits with $M$ repetitions and $K$ folds, and let $\etahat = \etahat_{\cR} = \lrp{\etahat_{\stilde_{m,k}}}_{m \in \lrbk{M}, k \in \lrbk{K}}$. 

The first type of estimand is an average across split-specific estimands:
\begin{equation} \label{eq.theta_z_1}
    \theta_{\etahat}^{(1)} = \frac{1}{M K} \sum_{r \in \cR} \sum_{\s \in r} \theta_{\etahat_{\stilde}}^{(1)},
\end{equation}
where $\theta_{\eta}^{(1)}$ for $\eta \in H$ is the unique solution for $\theta$ in $\Psi_{\eta}(\theta) = 0$, i.e., 
$$\Psi_{\eta}(\theta_{\eta}^{(1)}) = 0.$$
\cref{eq.theta_z_1} consists of solving the moment condition $\Psi_{\etahat_{\stilde}}(\theta) = 0$ for each split $\stilde$, and averaging over the split-specific estimands. 
The Z-estimator for \cref{eq.theta_z_1} is 
\begin{equation} \label{eq.thetahat_z_1}
    \thetahat_{\etahat}^{(1)} = \frac{1}{M K} \sum_{r \in \cR} \sum_{\s \in r} \thetahat_{\etahat_{\stilde}}^{(1)},
\end{equation}
where $\thetahat_{\etahat_{\stilde}}^{(1)} \in \arg\min_{\theta \in \Theta} \norm{\Psihat_{\s,\etahat_{\stilde}}(\theta)}$. 
This approach is analogous to the DML1 estimator in \citet{chernozhukov2018double}. 

The second type of estimand solves the average of the moment conditions. 
That is, $\theta_{\etahat}^{(2)}$ uniquely solves 
\begin{equation} \label{eq.theta_z_2}
    \frac{1}{M K} \sum_{r \in \cR} \sum_{\s \in r} \Psi_{\etahat_{\stilde}}\lrp{\theta_{\etahat}^{(2)}} = 0.
\end{equation}
The associated Z-estimator is given by 
\begin{equation} \label{eq.thetahat_z_2}
    \thetahat_{\etahat}^{(2)} \in \arg\min_{\theta \in \Theta} \norm{\frac{1}{M K} \sum_{r \in \cR} \sum_{\s \in r} \Psihat_{\s,\etahat_{\stilde}}(\theta)}.
\end{equation}
This approach is analogous to the DML2 estimator in \citet{chernozhukov2018double}. 

Finally, the third type of estimand is a hybrid of the previous two approaches. 
It solves the moment condition at each cross-split of the sample, and averages across repetitions. 
That is,
\begin{equation} \label{eq.theta_z_3}
    \theta_{\etahat}^{(3)} = \frac{1}{M} \sum_{r \in \cR} \theta_{\etahat_{r}}^{(2)},
\end{equation}
where $\theta_{\etahat_{r}}^{(2)}$ uniquely solves 
$$\frac{1}{K} \sum_{\s \in r} \Psi_{\etahat_{\stilde}}\lrp{\theta_{\etahat_{r}}^{(2)}} = 0.$$
The associated Z-estimator is given by 
\begin{equation} \label{eq.thetahat_z_3}
    \thetahat_{\etahat}^{(3)} = \frac{1}{M} \sum_{r \in \cR} \thetahat_{\etahat_{r}}^{(3)},
\end{equation}
where 
\begin{equation} \label{eq.thetahat_z_3.1}
    \thetahat_{\etahat_{r}}^{(3)} \in \arg\min_{\theta \in \Theta} \norm{\frac{1}{K} \sum_{\s \in r} \Psihat_{\s,\etahat_{\stilde}}(\theta)}.
\end{equation}
In this approach, each $\thetahat_{\etahat_{r}}^{(3)}$ uses the whole sample both for calculating $\etahat_{r}$ and the average in \cref{eq.thetahat_z_3.1}, and the final estimator $\thetahat_{\etahat}^{(3)}$ is the average of the cross-fitting estimators across repetitions. 
Note that $\theta_{\etahat}^{(1)} = \theta_{\etahat}^{(3)}$ if $K=1$, $\theta_{\etahat}^{(2)} = \theta_{\etahat}^{(3)}$ if $M = 1$, and $\theta_{\etahat}^{(1)} = \theta_{\etahat}^{(2)} = \theta_{\etahat}^{(3)}$ if $K=M=1$. 
The estimators are not assumed to be unique, but I assume the estimands and the limit of the estimators to be unique. 

For a concrete example, I consider below the particular case of sample-averages, as in the example of calculating the MSE for poverty prediction (\Cref{example.mse}). 

\begin{example}[Split-sample averages] \label{example.averages} \; \\
    Let $\psi_{\theta, \eta}(w) = f_\eta(w) - \theta$ for some known $f_\eta$.
    In this case, the three estimators coincide:
    $$\thetahat_{\etahat}^{(j)} = \frac{1}{M K} \sum_{r \in \cR} \sum_{\s \in r} \frac{1}{|\s|} f_{\etahat_{\stilde}}(W_i)$$
    for any $j$, and the estimand is 
    $$\theta_{\etahat}^{(j)} = \frac{1}{M K} \sum_{r \in \cR} \sum_{\s \in r} \int f_{\etahat_{\stilde}}(w) dP(w).$$ 
\end{example}

$\thetahat_{\etahat}^{(2)}$ can be interpreted as the value of $\theta$ that solves the moment condition for a randomized function that takes value across $\lrp{\etahat_{\stilde_{m,k}}}_{m \in \lrbk{M}, k \in \lrbk{K}}$ uniformly at random. 
That is, \cref{eq.theta_z_2} is equivalent to 
$$\int \psi_{\thetahat_{\etahat}^{(2)}, \etahat_{\xi}}(w) dP(w, \xi) = 0,$$
where $dP(w, \xi) = dP(w) dP(\xi)$ and $\xi$ takes value in $\lrp{\stilde_{m,k}}_{m \in \lrbk{M}, k \in \lrbk{K}}$ uniformly at random. 
If, for example, each $\etahat_{\stilde}$ is a probabilistic classifier as in \Cref{example.classif_prob}, $\thetahat_{\etahat}^{(2)}$ can be interpreted as solving the moment condition for a randomized rule $\etabar(x)$ that predicts a positive classification with probability $(MK)^{-1} \sum_{r \in \cR} \sum_{\s \in r} \etahat_{\stilde}(x)$. 

I provide a CLT for the three estimators $(\thetahat_{\etahat}^{(1)},\thetahat_{\etahat}^{(2)},\thetahat_{\etahat}^{(3)})$. 
Below, I establish my main regularity conditions.

\begin{assumption} \label{as.z_estimator}
    For some $\Theta' \subseteq \Theta$, the following conditions hold: 
    \begin{assumptionenum}
        \item \label{as.z_2plusdelta}
        For some $\delta > 0$, 
        $$\sup_{P \in \cP, \eta \in H, \theta \in \Theta'} \E[P]{\lrm{\psi_{\theta,\eta}(W)}^{2+\delta}} < \infty;$$
        \item \label{as.z_etahat} There exists $\etastarP \in H$ such that for $\etatilde = \cA(D)$, $W \perp D$, and every $\theta \in \Theta'$, 
        $$\lrm{\psi_{\theta, \etatilde}(W) - \psi_{\theta, \etastarP}(W)} \Pto 0$$ 
        uniformly in $P \in \cP$. 
    \end{assumptionenum}
\end{assumption}

\Cref{as.z_2plusdelta} is a standard moments condition for CLTs. 
\Cref{as.z_etahat} is a mild stability condition on $\etatilde$.
Importantly, $\etatilde$ is allowed to converge at any rate and to any limit $\etastar_P$, which may depend on $P$. 
It holds, for example, if 
$$\psi_{\theta, \etatilde}(w) \Pto \psi_{\theta, \etastarP}(w)$$
pointwise for every $w \in \cW$ and $\theta \in \Theta'$. 
This condition is more interpretable but stronger than required (see \Cref{as.etahat} in \Cref{section.general}). 
\Cref{as.z_etahat} differs from the typical approach in the double machine learning literature where faster convergence rates (often $n^{-1/4}$) are required for nuisance functions \citep[e.g.,][]{chernozhukov2018double}. 
The key difference between the two approaches is that I target a different, data-dependent parameter. 

My CLT relies on the additional technical regularity conditions \Cref{as.z_estimator_technical}, which I delay to \Cref{appendix.proofs_z_estimators}. 
This assumption adapts standard conditions for consistency and asymptotic normality of Z-estimators to the context of split-sample estimators \citep[e.g.,][]{van2000asymptotic,van2023weak}. 
This is a weak assumption that holds in many settings, and it mostly concerns the choice of $\psi_{\theta,\eta}$. 
First, it assumes that the classes $\cF_\eta = \lrbc{\psi_{\theta, \eta, j} : \theta \in \Theta'}$ are Donsker, which restricts complexity along $\theta \in \Theta'$ but does not restrict the complexity of $H$. 
Second, it requires $\thetahat_{\etahat}^{(j)}$ to nearly solve the moment conditions, and $\theta_{\etahat}^{(j)}$ to be unique and well-separated zeroes of the population moment conditions. 
Finally, it assumes that $\Psi_{\eta}$ is differentiable in $\theta$ for $\eta \in H$, and the Jacobian is continuous in $\eta$ around $\etastarP$. 
\Cref{as.z_estimator_technical} holds, for example, in the case of sample averages (\Cref{example.averages}), or the ``fraction in poverty by tercile'' estimator in the poverty prediction application in \Cref{section.application_ghana}. 

\Cref{th.clt_z} is the first main result of this paper. 

\begin{theorem} \label{th.clt_z}
    \hyperlink{proof.th.clt_z}{(CLT for split-sample Z-estimators)} \, \\
    Let Assumptions \ref{as.z_estimator} and \ref{as.z_estimator_technical} hold. 
    Then, for $j \in \lrbc{1,2,3}$, 
    $$\sqrt{n} \lrp{\thetahat_{\etahat}^{(j)} - \theta_{\etahat}^{(j)}} \leadsto \cN\lrp{0, V_{\etastarP}}$$
    uniformly in $P \in \cP$, where 
    $$V_{\etastarP} = V_{\Mbar, K} \Psidot^{-1}_{\etastarP} \lrp{P \psi_{\theta_{\etastarP},\etastarP} \psi^{T}_{\theta_{\etastarP},\etastarP}} \lrp{\Psidot^{-1}_{\etastarP}}^{T},$$ 
    and 
    $$V_{\Mbar, K} = \begin{cases}
    \Mbar^{-1} \lrp{\pi^{-1} + \Mbar - 1}, & \text{if } K = 1 \text{ and } \Mbar < \infty \\
    1, & \text{otherwise}. 
    \end{cases}$$
\end{theorem}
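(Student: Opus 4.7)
The plan is to reduce the theorem, via a standard Z-estimator linearization, to the central limit theorem for split-sample empirical processes developed in \Cref{section.general}. The key observation is that \Cref{as.z_etahat} guarantees that every trained model $\etahat_\stilde$ appearing in $\cR$ converges in probability, uniformly in $P \in \cP$, to the same limit $\etastarP$. Consequently, all three estimands $\theta_\etahat^{(j)}$ should share the common probability limit $\theta_{\etastarP}$, defined as the unique zero of $\Psi_{\etastarP}$, and all three estimators should admit the same asymptotic linear expansion; the three constructions differ only in how averages over $\cR$ are taken, not in the leading stochastic behavior.

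First I would establish consistency. Combining the Donsker condition on $\cF_\eta$ in \Cref{as.z_estimator_technical}, the $(2+\delta)$ moments bound \Cref{as.z_2plusdelta}, and the stability condition \Cref{as.z_etahat}, I expect to obtain
$$\sup_{\theta \in \Theta'} \norm{\Psihat_{\s,\etahat_\stilde}(\theta) - \Psi_{\etastarP}(\theta)} \Pto 0$$
for every split in $\cR$, uniformly in $P$. Paired with the well-separated-zero and near-minimization hypotheses in \Cref{as.z_estimator_technical}, the standard argmin-consistency lemma then yields $\thetahat_\etahat^{(j)} \Pto \theta_{\etastarP}$ and $\theta_\etahat^{(j)} \Pto \theta_{\etastarP}$ for each $j$.

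Next I would linearize. Because $\Psidot_\eta$ is continuous in $\eta$ at $\etastarP$, a Taylor expansion of $\Psi_{\etahat_\stilde}$ around $\theta_{\etastarP}$, together with $\Psi_{\etahat_\stilde}(\theta_{\etahat_\stilde}^{(1)}) = 0$ and the fact that $\thetahat_{\etahat_\stilde}^{(1)}$ nearly solves $\Psihat_{\s,\etahat_\stilde}(\cdot) = 0$, delivers
$$\sqrt{n} \lrp{\thetahat_{\etahat_\stilde}^{(1)} - \theta_{\etahat_\stilde}^{(1)}} = -\Psidot^{-1}_{\etastarP} \sqrt{n} \lrp{\Psihat_{\s,\etahat_\stilde} - \Psi_{\etahat_\stilde}}(\theta_{\etastarP}) + o_P(1).$$
Averaging this identity across splits reproduces $\thetahat_\etahat^{(1)}$; analogous linearizations of the averaged and of the cross-fitted moment conditions handle $\thetahat_\etahat^{(2)}$ and $\thetahat_\etahat^{(3)}$, and all three yield the same leading term because they share both $\Psidot^{-1}_{\etastarP}$ and the same inner empirical averages.

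The main obstacle is the final step: computing the asymptotic variance of the linearized expression by invoking the split-sample empirical process CLT from \Cref{section.general}, and in particular pinning down the factor $V_{\Mbar,K}$. When $K > 1$, every observation lies in exactly one evaluation fold per repetition, so the inner fold-sum reconstitutes a full-sample empirical process and additional repetitions contribute only Monte Carlo variability that vanishes upon averaging; this yields the factor $1$. When $K = 1$ and $\Mbar < \infty$, each observation enters the evaluation subsample of a given repetition with probability $\pi$ and the $\Mbar$ subsamples overlap across repetitions; the bookkeeping of within-repetition ($\pi^{-1}$) and cross-repetition ($\Mbar-1$) covariances produces exactly $\Mbar^{-1}(\pi^{-1} + \Mbar - 1)$. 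Sandwiching with $\Psidot^{-1}_{\etastarP}$ and the outer product $P \psi_{\theta_{\etastarP},\etastarP} \psi^T_{\theta_{\etastarP},\etastarP}$ reproduces $V_{\etastarP}$, with uniformity in $P \in \cP$ inherited from the uniform form of each input.
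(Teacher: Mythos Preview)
Your proposal is correct and follows essentially the same route as the paper: uniform consistency via the Donsker condition plus the well-separated-zero assumption (the paper packages this as \Cref{lemma.z_unif_conv} and \Cref{lemma.z_consistency}), a standard Z-estimator linearization using differentiability of $\Psi_\eta$ and the Jacobian convergence in \Cref{as.z_psidot_etastar}, and finally an appeal to the split-sample empirical-process CLT (\Cref{th.clt_general}) to obtain both the asymptotic normality and the factor $V_{\Mbar,K}$. The only cosmetic difference is that the paper first carries out the full argument for $j=2$ (the averaged moment condition) and then obtains $j=1$ and $j=3$ by specializing to $K=M=1$ or $M=1$ and summing, whereas you begin with the per-split expansion for $j=1$; the leading terms and the remainder control are identical either way.
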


The limiting variance $V_{\etastarP}$ is the product of two terms, the scalar $V_{\Mbar, K}$ and a positive semidefinite matrix. 
The choice of split-sample procedure only affects $V_{\etastarP}$ through $V_{\Mbar, K}$, which acts as a variance-inflating term since $V_{\Mbar, K} \ge 1$. 
When using a single split ($K=1, \Mbar=1$), the asymptotic variance is inflated by $V_{\Mbar, K} = \pi^{-1}$, where $\pi$ is the fraction of the sample used to evaluate $\thetahat_{\etahat}^{(j)}$ (as opposed to training $\etahat$).
This occurs because $\thetahat_{\etahat}^{(j)}$ is calculated from only $b = \pi n$ observations.
When using repeated sample-splitting ($K = 1$ and $\Mbar > 1$), $V_{\Mbar, K} = \Mbar^{-1} \pi^{-1} + \Mbar^{-1} (\Mbar - 1)$ is an average of $\pi^{-1}$ and $1$ with weights proportional to $1$ and $\Mbar - 1$. 
This occurs since each observation is picked a different number of times across splits for calculating $\thetahat_{\etahat}^{(j)}$. 
A larger number of repetitions leads to more balance in how often each observation is selected, and $V_{\Mbar, K}$ decreases with larger $\Mbar$. 
In fact, if $\Mbar = \infty$, there is perfect balance in large samples and $V_{\Mbar, K} = 1$. 
When using cross-fitting ($K > 1$), all observations are used an equal amount of times, and $V_{\Mbar, K} = 1$. 
For intuition on this result, consider the particular case of sample averages (\Cref{example.averages}). 
In this case, 
$$\thetahat_{\etahat} = \frac{1}{M} \sum_{r \in \cR} \frac{1}{K} \sum_{\s \in r} \frac{1}{b} \sum_{i \in \s} f_{\etahat_{\stilde}}(W_i)$$
is the same for $j=1,2,3$. 
If $\Mbar=K=1$, $\thetahat_{\etahat}$ averages over $b = \pi n$ observations. 
If $\Mbar > 1$ and $K=1$, different observations are picked by splits $\s$ a different, random amount of times, and larger $\Mbar$ leads to more balance. 
If $K > 1$, $\frac{1}{K} \sum_{\s \in r} \frac{1}{b} \sum_{i \in \s} f_{\etahat_{\stilde}}(W_i)$ is an average over all observations, the entire sample is used equally, and the variance-inflation term is minimum. 
Hence, the asymptotic variance is minimized using cross-fitting with any number of folds $K > 1$ and repetitions $\Mbar$. 

\Cref{th.clt_z} appears to be new. 
The literature on risk estimation via cross-validation provides related results establishing asymptotic normality for sample average estimators based on multiple splits. 
Like my approach, these CLTs target data-dependent parameters, though they rely on different types of assumptions, and focus on the specific case of sample-averages. 
\citet{dudoit2005asymptotics} consider estimators that average over all possible splits or cross-splits of the sample, assume bounded loss function and requires $\etahat$ to be loss consistent for a risk minimizing function, whereas I assume $\etahat$ converges to any limit. 
\citep{austern2020asymptotics,bayle2020cross} provide CLTs under rate assumptions on the algorithmic stability of $\etahat$. 
\citet{bayle2020cross} provides two CLTs using estimators based on a single repetition of cross-fitting, one relying on rate condition for algorithmic stability, and the second requires a ``conditional variance convergence'' assumption that they verify using rates for loss stability. 
My result does not require verifying a loss stability condition, which may not be satisfied in some high-dimensional settings \citep{bates2024cross}, and my result allows for any ML algorithm as long as \Cref{as.avg_etahat} holds. 
\citet{ledell2015computationally} provides a CLT for the particular case of estimating the area under the curve (AUC) measure via cross-validation. 

A different class of related results are CLTs with cross-fitting for parameters that are not data-dependent. 
These approaches require stronger conditions on $\etahat$, such as requiring $\etahat$ to converge in probability at some specified rate, typically $n^{-1/4}$ \citep{luedtke2016statistical,chernozhukov2018double,benkeser2020improved,imai2025statistical}. 
\Cref{th.clt_z} requires no complexity restrictions, and assumes $\etahat$ converges in probability to any limit at any rate. 

A central limit theorem for the class of split-sample Z-estimators appears to be new. 
The characterization of the asymptotic variance, specifically how the variance-inflating term $V_{\Mbar, K}$ depends on the number of splits $\Mbar$ when $K=1$, also appears to be new. 
The proof uses a new CLT for split-sample empirical stated in \Cref{section.general}, which also appears to be new and may be of independent interest. 

\begin{remark}
    In the double machine learning context, which targets a different parameter $\theta_{\etanot}$ and uses $M=1$, simulation evidence \citep{chernozhukov2018double} and theoretical results \citep{velez2024asymptotic} suggest using DML2 over DML1.
    It is unclear whether similar arguments hold for comparing $\thetahat_{\etahat}^{(1)}$ and $\thetahat_{\etahat}^{(2)}$, and how they compare with $\thetahat_{\etahat}^{(3)}$. 
    Exploring theoretical and empirical properties of the three methods is an interesting direction for future research. 
\end{remark}

\subsection{Comparison of Split-Sample Procedures} \label{section.comparison_procedures}

I compare the four split-sample procedures (sample-splitting, cross-fitting, repeated sample-splitting, and repeated cross-fitting) in terms of statistical power, modeling power, reproducibility, and computation time. 

Cross-fitting and repeated cross-fitting, as well as repeated sample-splitting with $\Mbar = \infty$, all exhibit the highest statistical power since they all minimize the variance of the asymptotic distribution in \Cref{th.clt_z}. 
Repeated sample-splitting with $1 < \Mbar < \infty$ comes second, and sample-splitting yields the largest variance. 

I say that an estimator has better modeling power than another if the models in $\lrp{\etahat_{\stilde_{m,k}}}_{m \in \lrbk{M}, k \in \lrbk{K}}$ are trained using larger datasets. 
Using more data for training typically leads to models with smaller expected loss, as I formalize in \Cref{appendix.modeling_power}.
For sample-splitting or repeated sample-splitting, modeling power increases by picking a smaller $b$ (and $\pi$), so that more data is used to train each $\etahat_{\stilde_{m,k}}$. 
However, if $\Mbar < \infty$, a smaller $b$ leads to smaller statistical power, since fewer data are used as evaluation sample at each split. 
When using cross-fitting, modeling power increases with $K$, since $b=n/K$. 
In this case, the returns to increasing $K$ are diminishing. 
For example, if $K=2$, $\etahat_{\stilde_{m,k}}$ is calculated with $50$\% of the sample, and this fraction raises to $90$\% with $K=10$. 
If $K=20$, however, the fraction only raises by another $5$\%. 
Although a large value of $K$ or small value of $\pi$ (when $K=1$) lead to better modelling power, my asymptotic framework takes these quantities as fixed. 
This means that the quality of the asymptotic approximation may be poor if $K$ is large (or $\pi$ small) relative to the sample size. 
For example, my asymptotic framework does not accommodate for leave-one-out cross-fitting, that is, $K=n$. 

I formalize the fact that increasing $M$ leads to better reproducibility properties in \Cref{section.reproducibility}. 
For example, as $M$ increases, it becomes more likely that two researchers using the same dataset but different random splits will reach the same conclusion about statistical significance of $\theta_\etahat$.
Although I make no formal comparison between the cases $K=1$ and $K>1$ in terms of reproducibility, I note that \citet{ritzwoller2023reproducible} documented the difference in variance between repeated sample-splitting and cross-fitting in an earlier draft.\footnote{This discussion appears in the second version at \url{https://arxiv.org/abs/2311.14204} (dated December 9, 2023).} 
Comparing cross-fitting with $M$ repetitions to sample-splitting with $KM$ repetitions, they argued that in principle it is possible that split-sample estimators have smaller variance conditional on data when $K=1$ instead of $K>1$, but show empirical evidence that cross-fitting typically leads to better reproducibility. 

\Cref{table.comparison} summarizes the comparison of the four procedures. 
\begin{table}[ht]
\centering
\caption{Comparison of Split-Sample Procedures}
\label{table.comparison}
\begin{tabular}{lcccc}
\toprule
Procedure & \makecell{Statistical\\Power} & \makecell{Modeling\\Power} & Reproducibility & \makecell{Computation\\Time} \\
\midrule
Sample-splitting & Low & Low & Low & Low \\
Cross-fitting & High & High & Medium & Medium \\
Repeated sample-splitting & Med/High$^*$ & Med/High$^*$ & High$^{**}$ & High \\
Repeated cross-fitting & High & High & High$^{**}$ & High \\
\bottomrule
\end{tabular}
\begin{tablenotes}
\footnotesize
\item $^*$High if $\Mbar = \infty$, medium if $\Mbar < \infty$.
\item $^{**}$Whether repeated sample-splitting or cross-fitting dominates depends on application. 
\item Modeling power considers the trade-off with statistical power: 
for sample-splitting and repeated sample-splitting with $\Mbar < \infty$, increasing modeling power requires decreasing statistical power. 
Computation time and reproducibility columns compare repeated cross-fitting with $M$ repetitions to repeated sample-splitting with $K M$ repetitions. 
\end{tablenotes}
\end{table}

The choices of $M$, $K$, and $\pi$ (when $K=1$) involve tradeoffs. 
Statistical power is maximized when $K>1$ or $\Mbar = \infty$ (\Cref{section.avg}), and the reproducibility properties improve with larger $M$ and are ambiguously affected by $K$, despite empirical evidence that $K>1$ usually leads to better properties \citep{ritzwoller2023reproducible}. 
For $K=1$ and $\Mbar < \infty$, there is a tradeoff between statistical and modelling powers, unlike with cross-fitting. 
A larger $M$ is always beneficial in terms of reproducibility (and statistical power when $K=1$), but this comes at the cost of higher computation time. 
Hence, I recommend choosing $M$ as large as computationally convenient, and $K>1$ but small, since that provides valid asymptotic inference, maximum statistical power, and likely better reproducibility properties. 
In \Cref{section.reproducibility}, I provide a measure to assess whether a given $M$ is sufficiently large to ensure reproducibility of p-values calculated from split-sample Z-estimators.

\section{Inference on Split-Sample Estimands} \label{section.inference}

The CLT in \Cref{th.clt_z} can be directly applied to conduct inference on many split-sample estimands. 
However, confidence intervals based on the normal approximation may fail to cover $\theta_\etahat$ at the nominal level in some important cases of interest. 
First, in \Cref{section.inference_normal}, I consider inference when the normal approximation is asymptotically exact, and discuss why this approximation may not be precise in some contexts. 
Then, in \Cref{section.diff_performance}, I propose a new approach for the particular cases of inference on comparisons between models, which explicitly accounts for the dependence across splits. 

I discuss in \Cref{section.inference_normal} that a typical case when the normal approximation CI may have coverage probability smaller than nominal is when the variance of a moment function is allowed to be zero in the limit. 
I provide a general method for inference that covers this case in \Cref{section.fast_convergence}, by exploring the faster-than-$\sqrt{n}$ convergence rate of the empirical moment functions and introducing a tuning parameter. 
I also discuss in \Cref{section.inference_normal,section.diff_performance} that although \Cref{section.diff_performance} considers the specific case of comparing two models, the arguments developed in that section apply more broadly, covering other cases such as the Generic ML approach of \citet{chernozhukov2025generic} (see \Cref{appendix.test_hte}).

\subsection{Inference from Normal Approximation} \label{section.inference_normal}

Consider the problem of conducting inference on $h(\theta_\etahat)$, where $\theta_\etahat$ is any of the split-sample Z-estimands in \Cref{section.z_estimators}, and $h$ is any scalar differentiable function with row-vector of partial derivatives $\hdot(\theta_{\etastarP}) \neq 0$. 
This encompasses many cases of interest, for example when $h(\theta_\etahat)$ is a subset of the vector $\theta_\etahat$ or a linear combination of its entries, as in the application of \Cref{section.application_ghana}. 
An application of \Cref{th.clt_z} and the delta-method yields 
\begin{equation} \label{eq.clt_h}
    \sqrt{n} \lrp{h(\thetahat_\etahat) - h(\theta_\etahat)} \leadsto \cN(0, \sigma^2_{\etastarP}),
\end{equation}
where $\thetahat_\etahat$ is a Z-estimator as in \cref{eq.thetahat_z_2}, and  
$$\sigma^2_{\etastarP} = \hdot(\theta_{\etastarP}) V^*_{\etastarP} \hdot(\theta_{\etastarP})^T.$$
If $\sigma^2_{\etastarP} > 0$, one can calculate the plug-in estimator 
\begin{equation} \label{eq.sigmahat}
    \sigmahat^2_{\etahat} = \hdot(\thetahatetahat) \hat{V}_{\etahat} \hdot(\thetahatetahat)^{T}, 
\end{equation}
where $\hat{V}_{\etahat}$ is given in \cref{eq.Vhat_etahat} in \Cref{section.proofs_section.diff_performance}, and the confidence interval  
\begin{equation} \label{eq.ci_h}
    \lrbk{h(\thetahat_\etahat) - n^{-1/2} z_{1-\alpha/2} \sigmahat_{\etahat}, h(\thetahat_\etahat) + n^{-1/2} z_{1-\alpha/2} \sigmahat_{\etahat}}
\end{equation}
contains $h(\theta_\etahat)$ with probability approaching $1 - \alpha$, where $z_\alpha$ is the $\alpha$-th quantile of the standard normal distribution. 

\begin{theorem} \label{th.clt_h}
    \hyperlink{proof.th.clt_h}{(Asymptotic Exactness of Normal Approximation CI)} \, \\
    Let the conditions of \Cref{th.clt_z} hold, $V_{\etastarP}$ be positive definite, assume there exists an estimator $\widehat{\dot{\Psi}}_{\etahat}$ such that 
    $$\norm{\widehat{\dot{\Psi}}_{\etahat} - \dot{\Psi}_{\etastarP}} \Pto 0$$
    uniformly in $P \in \cP$, and that $\inf_{P \in \cP} \norm{\hdot(\theta_{\etastarP})} > 0$. 
    Then, for any sequence $(P_n)_{n \ge 1} \subseteq \cP$, 
    $$P_n \lrp{h(\theta_\etahat) \in \lrbk{h(\thetahat_\etahat) - n^{-1/2} z_{1-\alpha/2} \sigmahat_{\etahat}, h(\thetahat_\etahat) + n^{-1/2} z_{1-\alpha/2} \sigmahat_{\etahat}}} \to 1 - \alpha.$$
\end{theorem}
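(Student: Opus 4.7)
The plan is to combine the CLT of \Cref{th.clt_z} with the delta method and a standard plug-in argument for the asymptotic variance, handling everything uniformly so that the conclusion along any sequence $(P_n) \subseteq \cP$ follows automatically. First, by \Cref{th.clt_z} applied to $\thetahat_\etahat^{(j)}$ (for the $j$ chosen by the researcher), I have $\sqrt{n}(\thetahat_\etahat - \theta_\etahat) \leadsto \cN(0, V_{\etastarP})$ uniformly in $P \in \cP$. Since $h$ is differentiable with $\hdot(\theta_{\etastarP}) \neq 0$ uniformly bounded away from $0$ by hypothesis, a uniform version of the delta method yields \cref{eq.clt_h} uniformly in $P \in \cP$, with limiting variance $\sigma^2_{\etastarP} = \hdot(\theta_{\etastarP}) V_{\etastarP} \hdot(\theta_{\etastarP})^T$. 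The positive definiteness of $V_{\etastarP}$ combined with $\inf_{P \in \cP} \norm{\hdot(\theta_{\etastarP})} > 0$ ensures $\sigma^2_{\etastarP}$ is bounded away from zero uniformly, which is crucial for the ratio $\sqrt{n}(h(\thetahat_\etahat) - h(\theta_\etahat))/\sigmahat_{\etahat}$ to converge to a standard normal.

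Second, I need to establish consistency of the plug-in variance estimator: $\sigmahat^2_{\etahat} \Pto \sigma^2_{\etastarP}$ uniformly in $P \in \cP$. This decomposes into three pieces: (a) $\hdot(\thetahatetahat) \Pto \hdot(\theta_{\etastarP})$, which follows from continuity of $\hdot$ combined with the consistency of $\thetahatetahat$ (the latter being part of \Cref{as.z_estimator_technical} and implicit in the CLT of \Cref{th.clt_z}); (b) $\widehat{\dot{\Psi}}_{\etahat} \Pto \dot{\Psi}_{\etastarP}$ by hypothesis; and (c) consistency of the plug-in estimator for the central term $P \psi_{\theta_{\etastarP},\etastarP} \psi^T_{\theta_{\etastarP},\etastarP}$, which follows by combining \Cref{as.z_2plusdelta} (yielding a uniform integrability bound) with \Cref{as.z_etahat} (giving $\psi_{\thetahatetahat, \etahat} \Pto \psi_{\theta_{\etastarP},\etastarP}$ in an appropriate sense) and the Donsker/equicontinuity properties built into \Cref{as.z_estimator_technical}. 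Putting (a)-(c) together and invoking continuity of matrix inversion at $\dot{\Psi}_{\etastarP}$ (which is nonsingular since $V_{\etastarP}$ is positive definite) gives the required uniform consistency.

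Third, I combine the two pieces via Slutsky's theorem applied uniformly: the ratio $T_n = \sqrt{n}(h(\thetahat_\etahat) - h(\theta_\etahat))/\sigmahat_{\etahat}$ converges in distribution to $\cN(0,1)$ uniformly in $P \in \cP$, and hence $P(|T_n| \le z_{1-\alpha/2}) \to 1 - \alpha$ uniformly. This uniformity is exactly what is needed to pass to a sequence $(P_n) \subseteq \cP$: for any such sequence, $P_n(h(\theta_\etahat) \in \CIalpha) \to 1 - \alpha$, which is the claim. A convenient way to package the uniformity is to argue by contradiction: if there were a subsequence along which the coverage probability stayed bounded away from $1-\alpha$, then the uniform weak convergence conclusion would be violated.

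The main obstacle is the uniformity of the plug-in consistency for the ``meat'' term $P \psi_{\theta_{\etastarP},\etastarP} \psi^T_{\theta_{\etastarP},\etastarP}$ in the presence of the data-dependent $\etahat$ and the split-sample aggregation across $M$ repetitions and $K$ folds. Pointwise this is standard, but uniform consistency requires exploiting the $2+\delta$ moment bound in \Cref{as.z_2plusdelta} to control the second moments of $\psi$ and its products uniformly, and using the Donsker-type condition on $\lrbc{\psi_{\theta,\eta,j} : \theta \in \Theta'}$ from \Cref{as.z_estimator_technical} together with \Cref{as.z_etahat} to replace $\etahat$ by $\etastarP$ without losing uniform control. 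Once this is in place, the remainder of the argument is routine.
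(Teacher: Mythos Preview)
Your proposal is correct and follows essentially the same approach as the paper's proof: combine \Cref{th.clt_z} with the delta method, establish uniform consistency of $\sigmahat_{\etahat}^2$ via the three plug-in pieces (in particular the ``meat'' term $P\psi_{\theta_{\etastarP},\etastarP}\psi_{\theta_{\etastarP},\etastarP}^T$, for which the paper simply refers to the argument in the proof of \Cref{th.avg_ci}), and conclude by the continuous mapping theorem/Slutsky. Your write-up is in fact more detailed than the paper's terse proof sketch; the identification of the meat-term uniformity as the main obstacle and the use of \Cref{as.z_2plusdelta} together with \Cref{as.z_etahat} to handle it matches what the paper does implicitly.
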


\Cref{th.clt_h} assumes the existence of a consistent estimator of $\Psidot_{\etastarP}$. 
If $\psi_{\theta, \eta}(w)$ is differentiable in $\theta$, this assumption is satisfied by the plug-in estimator defined in \cref{eq.Psi_dot_hat} in \Cref{section.proofs_section.diff_performance} under a uniform integrability condition on this derivative. 
Otherwise, consistent estimators of $\Psidot_{\etastarP}$ can typically be constructed on a case-by-case basis \citep{hansen2022econometrics}. 
Note that the probability in \Cref{th.clt_h} is taken over both the random estimand $h(\theta_\etahat)$ and the CI. 

\Cref{th.clt_h} implies that \cref{eq.ci_h} contains $h(\theta_\etahat)$ with probability approaching $1 - \alpha$ in many settings. 
However, in some cases, \cref{eq.ci_h} may not cover $h(\theta_\etahat)$ with nominal probability, as illustrated in the two examples below. 

\begin{example} \label{ex.cov_prediction}
    Consider a dataset with covariates $X$, outcome $Y \in \R$, and moment function $\psi_{\theta, \eta}(y, x) = y \eta(x) - \theta$, so 
    $$\thetahat_\etahat = \frac{1}{M K} \sum_{r \in \cR} \frac{1}{n} \sum_{\s \in r} \sum_{i \in \s} Y_i \etahat_{\stilde}(X_i).$$ 
    The limit variance in \cref{eq.clt_h} is 
    $$\sigma^2_{\etastarP} = \Var[P]{Y \etastarP(X)}.$$ 
    If $\sigma^2_{\etastarP} > 0$, \cref{eq.ci_h} contains $h(\theta_\etahat)$ with probability approaching $1 - \alpha$. 
    However, if $\etastarP(x) = 0$ for all $x$, $\sigma^2_{\etastarP} = 0$, 
    $$\sqrt{n} \lrp{h(\thetahat_\etahat) - h(\theta_\etahat)} \Pto 0,$$
    and \cref{eq.ci_h} may not contain $h(\theta_\etahat)$ with nominal probability. 
\end{example}

\begin{example} \label{ex.cov_reg}
    Consider a dataset with covariates $X$, outcome $Y \in \R$, and moment function 
    $$\psi_{\theta, \eta}(y, x) = \begin{pmatrix}
    y - \theta_0 - \theta_1 \eta(x) \\
    (y - \theta_0 - \theta_1 \eta(x)) \eta(x)
    \end{pmatrix},$$
    that is, for each subsample $\s$, $\thetahat_{\stilde}$ is the OLS estimator for $(\theta_{0,\stilde}, \theta_{1,\stilde})$ in the regression 
    $$Y_i = \theta_{0,\stilde} + \theta_{1,\stilde} \etahat_{\stilde}(X_i) + \varepsilon_i$$
    using observations $i \in \s$. 
    Focusing on the slope coefficient, the final estimator can be, for example, 
    $$\thetahat_{1,\etahat}^{(1)} = \frac{1}{M K} \sum_{r \in \cR} \sum_{\s \in r} \thetahat_{1,\stilde}.$$ 
    If $E[Z^T Z]$ is positive definite, where $Z = (1, \etastarP(X))^T$, the conditions of \Cref{th.clt_z} are met, and \cref{eq.ci_h} contains $\theta_{1,\etahat}^{(1)}$ with probability approaching $1 - \alpha$. 
    However, if $\etastarP(x)$ is constant in $x$, $E[Z^T Z]$ is not invertible, the conditions of \Cref{th.clt_z} are not met, and \cref{eq.ci_h} may contain $h(\theta_\etahat)$ with probability below the nominal level. 
\end{example}

\Cref{ex.cov_prediction,ex.cov_reg} have two features in common: the normal approximation CI may undercover $\theta_\etahat$ only when $\etastarP(x)$ is constant, and one of the empirical moment equations evaluated at the true parameter converges to zero at a rate faster than $n^{-1/2}$: 
\begin{equation} \label{eq.fast_moment}
    \min_{j \in \{1, \dots, d \}} \lrm{\frac{1}{\sqrt{n}} \sum_{i \in \s} \psi_{\theta_{\etahat_{\stilde}},\etahat_{\stilde},j}(W_i)} \Pto 0,
\end{equation}
where $\psi_{\theta,\eta,j}$ is the $j$-th entry of the vector $\psi_{\theta,\eta}$. 
In \Cref{section.diff_performance}, I develop an approach that can be used to test whether $\etastarP(x)$ is constant, and although I focus on the particular case of comparing the performance between two models, the arguments apply more broadly and could be used to provide a valid CI for the problems in \Cref{ex.cov_prediction,ex.cov_reg} under the same conditions of \Cref{th.clt_h}. 
In \Cref{section.fast_convergence}, I establish a general approach to inference on $\theta_\etahat$ that allows \cref{eq.fast_moment} to happen. 
The approach explores the faster-than-$\sqrt{n}$ convergence rate to provide an asymptotically uniformly valid CI by introducing a tuning parameter.

\subsection{Inference on Model Comparisons} \label{section.diff_performance}

In several applications, the goal is not only to create a new model $\etahat$ and assess some property $\theta_\etahat$, but to compare such properties between two models. 
For example, if $\theta_\etahat$ is a measure of accuracy such as the mean squared error (\Cref{example.mse}), one might want to infer if $\etahat$ has better performance than a baseline model that predicts the sample mean of $Y$ for all observations. 
This is the case in the application of \Cref{section.application_ghana}, where the goal is to assess whether a random forest model has predictive power for poverty, that is, whether it achieves smaller MSE than using the sample average. 
Alternatively, one might want to compare the performance of using different machine learning algorithms, such as training $\etahat$ with neural networks versus random forests. 
I show that the CLTs of the previous sections give a valid inference approach when both models do not have similar performances in large samples. 
However, if the models have similar performance, the asymptotic distribution of the difference in performance is degenerate at the $\sqrt{n}$ rate, and CIs based on the asymptotic approximation may fail to control size. 
In this section, I build on the CLT of \Cref{section.z_estimators} to develop an inference method that is valid for both cases. 
Although this section focuses on the particular case of comparing two models, I discuss in the end of \Cref{section.inference_normal} that the arguments developed in this section apply more broadly. 

The setting is as follows. 
$\thetahat_\etahat$ denotes any of the estimators $(\thetahat_\etahat^{(1)},\thetahat_\etahat^{(2)},\thetahat_\etahat^{(3)})$ of \Cref{section.z_estimators}, assumed to be a scalar ($d=1$) (alternatively, one could consider a scalar transformation $h(\thetahat_\etahat)$ as in \Cref{section.inference_normal}). 
I refer to the parameter $\theta_\etahat$ (defined analogously) as a \textit{performance} measure for expositional convenience, though the results apply more generally. 
I focus on comparing $\theta_\etahat$ to the performance $\theta_{\bhat}$ of a baseline model $\bhat \in H$ computed using the entire sample, that is, without forms of sample-splitting. 
$\bhat$ is assumed to come from a parametric model, and it can be, for example, the sample average $\bhat(x) = n^{-1} \sum_{i=1}^{n} Y_i$ in \Cref{example.classif_prob,example.mse}. 
Following the notation of \Cref{section.z_estimators}, $\theta_{\bhat}$ is the unique solution for $\theta$ in $\Psi_{\bhat}(\theta) = 0$, i.e., 
$$\Psi_{\bhat}(\theta_{\bhat}) = 0.$$
Similarly, the estimator $\thetahat_{\bhat}$ is a (near) zero of the empirical moment condition, 
$$\thetahat_{\bhat} \in \arg\min_{\theta \in \Theta} \norm{\frac{1}{n} \sum_{i=1}^{n} \psi_{\theta,\bhat}(W_i)}.$$
In \Cref{appendix.comparison}, I discuss how to extend the current setting for comparing $\theta_\etahat$ to the performance of another model $\etahat'$ computed with the same split-sample approach as $\etahat$. 
Let 
$$\cS = \lrp{\s_{m,k}}_{m \in \lrbk{M}, k \in \lrbk{K}}$$
be a collection of $M K$ splits of the sample, that is, a vectorization of $\cR$ defined in \Cref{section.setup}. 
Notice that each $\s \in \cS$ is associated with a model $\etahat_{\stilde}$, as in \cref{eq.theta_z_1}. 

To see the challenge of conducting inference based on $\thetahat_\etahat - \thetahat_{\bhat}$, consider a simplified setting where each $\thetahat_{\etahat_{\stilde}}$ (as in \cref{eq.thetahat_z_1}) is a sample average, that is, $\psi_{\theta,\eta}(w) = f_\eta(w) - \theta$ for some $f_\eta$ and $\thetahat_{\etahat_{\stilde}} = \lrm{\s}^{-1} \sum_{i \in \s} f_{\etahat_{\stilde}}(W_i)$. 
The CLT in \Cref{th.clt_z} gives 
$$\sqrt{n} \lrp{\thetahat_\etahat - \theta_\etahat} = \frac{1}{\sqrt{n}} \sum_{i=1}^{n} \lrp{f_{\etastarP}(W_i) - Pf_{\etastarP}} + o_P(1),$$
and the normal approximation gives an asymptotically valid CI for $\theta_\etahat$. 
Similarly, if $\bhat$ converges to some model $b_P \in H$, 
$$\sqrt{n} \lrp{\thetahat_{\bhat} - \theta_{\bhat}} = \frac{1}{\sqrt{n}} \sum_{i=1}^{n} \lrp{f_{b_P}(W_i) - Pf_{b_P}} + o_P(1),$$
and these two results can be combined to construct a CI for $\theta_\etahat - \theta_{\bhat}$ based on a normal approximation.
However, if the baseline model $b_P$ is the same as $\etastarP$, 
both estimators have the same limit, the difference 
\begin{equation} \label{eq.diff_degenerate}
    \sqrt{n} \lrbk{\lrp{\thetahat_{\etahat} - \thetahat_{\bhat}} - \lrp{\theta_{\etahat} - \theta_{\bhat}}} = o_P(1)
\end{equation}
has a degenerate limit in probability, and the CLT of \Cref{section.z_estimators} does not inform how to compute a CI for $\theta_{\etahat} - \theta_{\bhat}$. 

First, I develop a test for whether any of the models $\etahat_{\stilde}$ perform better than $\bhat$, then show how this test can be used to construct a CI for $\theta_{\etahat} - \theta_{\bhat}$. 
Both results build on my CLT for Z-estimators. 

\subsubsection{A Multivariate One-sided Test for Model Differences} \label{section.diff_onesided_test}

From \cref{eq.diff_degenerate}, the asymptotic distribution of $\thetahat_{\etahat} - \thetahat_{\bhat}$ centered around the parameter of interest $\theta_{\etahat} - \theta_{\bhat}$ is degenerate at the $n^{-1/2}$ rate if $b_P = \etastarP$. 
Yet, for each split $\s \in \cS$, 
\begin{align}
    & \sqrt{n} \lrbk{\lrp{\thetahat_{\etahat_{\stilde}} - \thetahat_{\bhat}} - \lrp{\theta_{\etahat_{\stilde}} - \theta_{\bhat}}} \nonumber \\
    & = \frac{\sqrt{n}}{|\s|} \sum_{i \in \s} \lrp{f_{\etastarP}(W_i) - Pf_{\etastarP}} - \frac{\sqrt{n}}{n} \sum_{i=1}^{n} \lrp{f_{\etastarP}(W_i) - Pf_{\etastarP}} + o_P(1) \label{eq.diff_decomp}
\end{align}
has a non-degenerate limit since the first average does not include observations $i \in \stilde$. 
I explore this fact to construct a test of whether any model $\etahat_{\stilde}$ has better performance than $\bhat$, then develop a CI for $\theta_{\etahat} - \theta_{\bhat}$ in the following subsection. 

Consider the hypothesis test 
\begin{equation} \label{eq.diff_onesided_test}
    \begin{cases}
    H_{0,\etahat}: \theta_{\etahat_{\stilde}} - \theta_{\bhat} \ge 0 & \text{ for all } \s \in \cS, \\
    H_{A,\etahat}: \theta_{\etahat_{\stilde}} - \theta_{\bhat} < 0 & \text{ for some } \s \in \cS. 
\end{cases}
\end{equation}
If $\theta_{\eta}$ is a measure of performance such as the mean squared error, having $\theta_{\etahat_{\stilde}} - \theta_{\bhat} < 0$ means that $\etahat_{\stilde}$ performs better than $\bhat$. 
The hypotheses $H_{0,\etahat}$ and $H_{A,\etahat}$ depend on $\etahat$ due to the data-dependent parameter of interest $\theta_\etahat$. 
Testing such hypotheses is analogous to constructing a confidence interval for a data-dependent parameter as in \cref{eq.goal}. 
Let 
$$\delta_\etahat = \lrp{\theta_{\etahat_{\stilde}} - \theta_{\bhat}}_{\s \in \cS},$$
and similarly define 
$$\deltahat_\etahat = \lrp{\thetahat_{\etahat_{\stilde}} - \thetahat_{\bhat}}_{\s \in \cS}.$$
An application of \Cref{th.clt_z} gives 
$$\sqrt{n} \lrp{\deltahat_\etahat - \delta_\etahat} \leadsto \cN \lrp{0, \Sigma},$$
for some nonzero $\Sigma$ that can be consistently estimated with $\Sigmahat$ (see equation \ref{eq.diff_Sigmahat} in \Cref{section.proofs_section.diff_performance}). 
Since splits reuse observations, the off-diagonal terms of $\Sigma$ explicitly incorporate the dependence across splits. 

Denote by $\sigmahat^2_\s$ the entry of the main diagonal of $\Sigmahat$ associated with $\s \in \cS$, that is, with the term $\thetahat_{\etahat_{\stilde}} - \thetahat_{\bhat}$. 
I propose computing the test-statistic 
$$T(\deltahat_\etahat, n^{-1} \Sigmahat) = \sum_{\s \in \cS} \lrp{\min \lrbc{\sqrt{n} \frac{\thetahat_{\etahat_{\stilde}} - \thetahat_{\bhat}}{\sigmahat_{\s}}, 0}}^2.$$
This type of test statistic has been considered for example in \citet{chernozhukov2007estimation,romano2008inference,andrews2009validity,romano2010inference} in the context of moment inequalities. 
Critical values $\hat{c}_{1 - \alpha}$ can be computed via Monte Carlo: simulate $Z \sim \cN(0,\Sigmahat)$ and estimate $\hat{c}_{1 - \alpha}$ as the $1-\alpha$ quantile of $T(Z, n^{-1} \Sigmahat)$. 
I note that, alternatively, one could use the likelihood ratio test statistic. 

Asymptotic exactness of this test under the least favorable null follows from similar conditions to \Cref{th.clt_z}, established below. 

\begin{assumption} \label{as.diff_onesided_test} 
    Assumptions \ref{as.z_estimator} and \ref{as.diff_onesided_test_technical} hold with scalar $\thetahat_\etahat$ $(d=1)$. 
    Additionally,
    \begin{assumptionenum}
        \item \label{as.diff_onesided_test_pos_var} $V_{\etastarP} > 0$;
        \item \label{as.diff_onesided_test_bP} For some $b_P \in H$, 
        $$\sqrt{n} \lrp{\thetahat_{\bhat} - \theta_{\bhat}} - \sqrt{n} \lrp{\thetahat_{b_P} - \theta_{b_P}} \Pto 0$$
        and
        $$\sqrt{n} \lrp{\theta_{\bhat} - \theta_{b_P}} \Pto 0$$
        uniformly in $P \in \cP$. 
    \end{assumptionenum}
\end{assumption}

\Cref{as.diff_onesided_test_technical} consists of more technical conditions, which are delayed to the appendix for ease of exposition. 
For example, they extend the Z-estimator assumptions on $\etastarP$ to $b_P$. 
\Cref{as.diff_onesided_test_pos_var} requires the limiting variance of $\sqrt{n} (\thetahat_\etahat - \theta_\etahat)$ to be positive, and \Cref{as.diff_onesided_test_bP} defines the requirements on the baseline (parametric) estimator $\bhat$. 
It holds, for example, if $\bhat$ belongs to a Donsker class with probability approaching one, which typically happens for parametric models such as the sample average $\bhat(x) = n^{-1} \sum_{i=1}^{n} Y_i$. 

\begin{theorem} \label{th.diff_onesided_test}
    \hyperlink{proof.th.diff_onesided_test}{(Size control of multivariate one-sided test for model differences)} \, \\
    Let \Cref{as.diff_onesided_test} hold. 
    Then, for any $\bar{c}_2 > 0$, 
    $$\limsup_{n \to \infty} \sup_{P \in \lrbc{P \in \cP : P(\delta_\etahat \ge 0) > \bar{c}_2}} P \lrp{T(\deltahat_\etahat, n^{-1} \Sigmahat) > \hat{c}_{1 - \alpha} \Bigm| \delta_\etahat \ge 0} = \alpha.$$
    For any sequence $(P_n)_{n \ge 1} \subseteq \cP$ with $\lim_{n \to \infty} P_n(\delta_\etahat = 0) > 0$, 
    $$\lim_{n \to \infty} P_n \lrp{T(\deltahat_\etahat, n^{-1} \Sigmahat) > \hat{c}_{1 - \alpha} \Bigm| \delta_\etahat = 0} = \alpha.$$
\end{theorem}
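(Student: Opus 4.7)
The plan is to combine a joint CLT for $\sqrt{n}(\deltahat_\etahat - \delta_\etahat)$, consistency of $\Sigmahat$, and a continuous mapping argument for the test statistic, identifying $\delta_\etahat \equiv 0$ as the least favorable null configuration. Extending the influence-function linearization underlying \Cref{th.clt_z} to each split-specific estimator $\thetahat_{\etahat_{\stilde}}$, and combining with the parametric expansion of $\sqrt{n}(\thetahat_{\bhat}-\theta_{\bhat})$ that follows from \Cref{as.diff_onesided_test_bP} and standard parametric Z-estimator theory, gives
$$\sqrt{n}(\deltahat_\etahat - \delta_\etahat) \leadsto Z \sim \cN(0,\Sigma)$$
uniformly in $P \in \cP$, where the off-diagonals of $\Sigma$ encode the overlap across splits. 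Consistency $\Sigmahat \Pto \Sigma$, with diagonal entries $\sigma_\s > 0$ by \Cref{as.diff_onesided_test_pos_var}, follows from the same plug-in/Slutsky arguments used in \Cref{th.clt_h}; together with Polya's theorem and continuity of the distribution function of $T_0 := \sum_{\s \in \cS}\lrp{\min\lrbc{Z_\s/\sigma_\s, 0}}^2$ at $c_{1-\alpha}$ (ensured by positive-definiteness of $\Sigma$), this yields $\hat c_{1-\alpha} \Pto c_{1-\alpha}$, the $(1-\alpha)$ quantile of $T_0$.

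Next, I would decompose $\sqrt{n}\deltahat_{\etahat,\s} = \sqrt{n}\delta_{\etahat,\s} + Z_{n,\s}$ with $Z_n \leadsto Z$. Under the conditioning event $\lrbc{\delta_\etahat \ge 0}$, each $\sqrt{n}\delta_{\etahat,\s}$ is nonnegative; along any subsequence the coordinates partition into those diverging to $+\infty$, which contribute $0$ to $T_n$ in the limit, and those remaining bounded, whose limiting contribution takes the form $\lrp{\min\lrbc{(\lambda_\s + Z_\s)/\sigma_\s, 0}}^2$ for some $\lambda_\s \ge 0$. Since these contributions are nonnegative and weakly decreasing in $\lambda_\s$, the conditional limit of $T_n$ is stochastically dominated by $T_0$, giving $\limsup P(T_n > \hat c_{1-\alpha}\mid\delta_\etahat \ge 0) \le \alpha$ along any sequence $(P_n)$ in the feasible set; the requirement $P(\delta_\etahat \ge 0) > \bar c_2$ keeps the conditional laws from degenerating and enables the subsequence extraction uniformly in $P$.

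For the reverse inequality in the first claim and the second claim simultaneously, I would exhibit a sequence attaining $\alpha$: take any $(P_n) \subseteq \cP$ with $\liminf P_n(\delta_\etahat = 0) > 0$, which eventually satisfies $P_n(\delta_\etahat \ge 0) > \bar c_2$ for any $\bar c_2$ below that liminf. Conditional on $\delta_\etahat = 0$, $\sqrt{n}\delta_\etahat = 0$ deterministically, so $T_n = \sum_{\s \in \cS}\lrp{\min\lrbc{Z_{n,\s}/\sigmahat_\s, 0}}^2 \leadsto T_0$ by Slutsky. Combined with $\hat c_{1-\alpha} \Pto c_{1-\alpha}$ and continuity of the distribution of $T_0$ at $c_{1-\alpha}$, this yields $P_n(T_n > \hat c_{1-\alpha}\mid\delta_\etahat = 0) \to P(T_0 > c_{1-\alpha}) = \alpha$, proving the second claim and showing that the $\limsup\sup$ in the first claim is no smaller than $\alpha$.

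The main obstacle is handling the conditioning on the data-dependent event $\lrbc{\delta_\etahat \ge 0}$ (or $\lrbc{\delta_\etahat = 0}$) while preserving uniform convergence in $P$. One must argue that the joint asymptotic law of $(Z_n, \sqrt{n}\delta_\etahat, \Sigmahat)$ is not disrupted by this conditioning, so that the Slutsky and continuous mapping steps go through. Heuristically, $\delta_\etahat$ is measurable with respect to the training samples, while the dominant noise in $Z_n$ comes from the evaluation folds, giving enough asymptotic separation to carry the arguments, but verifying this carefully, together with the uniformity across $P \in \cP$ in both the CLT and the critical-value consistency, is the technical core of the proof.
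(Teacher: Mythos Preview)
Your roadmap is essentially the paper's: joint CLT for $\sqrt{n}(\deltahat_\etahat-\delta_\etahat)$, consistency of $\Sigmahat$ and $\hat c_{1-\alpha}$, monotonicity of $T$ giving the inequality under $\delta_\etahat\ge 0$, and exactness on $\delta_\etahat=0$. Two points where you diverge from the paper are worth noting.

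First, the obstacle you flag at the end --- preserving the asymptotics under conditioning on the data-dependent event --- is resolved in the paper by a much simpler device than ``asymptotic separation between training and evaluation folds.'' The argument is purely probabilistic: if $E_n$ is any event with $P_n(E_n)\to 0$ (e.g.\ that an $o_P(1)$ remainder exceeds $\varepsilon$) and $A_n=\{\delta_\etahat=0\}$ (or $\{\delta_\etahat\ge 0\}$) has $P_n(A_n)>\bar c_2$, then
\[
P_n(E_n\mid A_n)\le \frac{P_n(E_n\cap A_n)}{P_n(A_n)}\le \frac{P_n(E_n)}{\bar c_2}\to 0.
\]
This is precisely why the lower bound $\bar c_2$ appears in the statement, and it is all that is needed to transfer the unconditional CLT (and $\Sigmahat\Pto\Sigma$) to the conditional laws. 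Your heuristic about measurability with respect to training folds is not the operative mechanism and would be hard to make rigorous; you should replace it with this elementary inequality.

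Second, your subsequence argument partitioning coordinates into diverging versus bounded ones is more elaborate than necessary. On the event $\{\delta_\etahat\ge 0\}$ one has $\sqrt{n}\,\deltahat_{\etahat,\s}\ge \sqrt{n}(\deltahat_{\etahat,\s}-\delta_{\etahat,\s})$ coordinatewise, and since $x\mapsto(\min\{x/\sigma,0\})^2$ is nonincreasing, this directly gives $T(\deltahat_\etahat,n^{-1}\Sigmahat)\le T(\deltahat_\etahat-\delta_\etahat,n^{-1}\Sigmahat)$, whose conditional law converges to that of $T_0$ by the previous paragraph. This one-line domination replaces your subsequence extraction.
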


\Cref{th.diff_onesided_test} appears to be new. 
It establishes size control: the probability of rejecting the null hypothesis, conditional on it being true, does not exceed $\alpha$ in large samples. 
Note that the probabilities in \Cref{th.diff_onesided_test} are not random objects, they integrate over the distribution of the data conditional on the events $\delta_\etahat \ge 0$ or $\delta_\etahat = 0$. 
Alternative approaches for testing across multiple splits of the sample typically aggregate p-values or confidence intervals computed separately for each split, without accounting for the dependence structure across splits \citep[see, e.g.,][]{chernozhukov2025generic,gasparin2025combining}. 
For example, \citet{chernozhukov2025generic} propose aggregating the median of p-values or CIs across splits. 
Because these methods do not incorporate the correlation across splits, they are conservative in most data-generating processes, as they guard against the worst-case dependence structure. 
In contrast, my approach explicitly accounts for the dependence across splits, which enables the test to achieve exactness under the least favorable null $\delta_\etahat = 0$ in a uniform sense across DGPs. 
The proof is made possible by the decomposition in \cref{eq.diff_decomp}, which follows from the new CLT in \Cref{section.z_estimators}. 

The result above requires the probability of the conditioning event to be bounded away from zero using the constant $\bar{c}_2 > 0$. 
This could lead to an apparent uniformity issue for sequences of DGPs $(P_n)_{n \ge 1}$ with $P_n(\delta_\etahat \ge 0) \to 0$, for example. 
For such sequences, the probability of rejecting the null conditional on the null being true could be greater than $\alpha$. 
This is not, however, an issue for empirical practice: for such sequences the probability of being under the null itself converges to zero. 
Incorrectly rejecting the null is not a concern when the probability of the null being true is zero.

\subsubsection{A Confidence Interval for the Average Performance}

I construct a new confidence interval for $\theta_\etahat - \theta_{\bhat}$ based on two insights from the previous subsections.
The first is that a CI based on the normal approximation using \Cref{th.clt_z} is asymptotically exact if $\theta_\etahat - \theta_{\bhat}$ converges in probability to a value different from zero, since in this case the terms in \cref{eq.diff_degenerate} do not cancel out.
The second insight is that the case $\theta_\etahat - \theta_{\bhat} \Pto 0$ is closely connected with the null hypothesis of the one-sided test developed in the previous subsection. 
Hence, my CI consists of using the normal approximation if the one-sided test is rejected, and an extended CI in case it is not. 

Define the normal approximation CI 
$$\widehat{\rm CI}_{\alpha,\cN} = \lrbk{\lrp{\thetahat_\etahat - \thetahat_{\bhat}} - z_{1 - \alpha / 2} \frac{\sigmahat_\deltahat}{\sqrt{n}}, \lrp{\thetahat_\etahat - \thetahat_{\bhat}} + z_{1 - \alpha / 2} \frac{\sigmahat_\deltahat}{\sqrt{n}}},$$
where $\sigmahat_\deltahat$ is a standard error for $\thetahat_\etahat - \thetahat_{\bhat}$ (see equation \ref{eq.diff_sigmahat_deltahat} in \Cref{section.proofs_section.diff_performance}), 
and an extended CI 
$$\widehat{\rm CI}_{\alpha,{\rm ext}} = \operatorname{Conv}\lrp{\widehat{\rm CI}_{\alpha,\cN} \cup \{0\}},$$
where $\operatorname{Conv}(\cdot)$ denotes the convex hull, that is, $\widehat{\rm CI}_{\alpha,{\rm ext}}$ has all the elements in $\widehat{\rm CI}_{\alpha,\cN}$, $0$, and all elements in between. 
The final CI is given by 
$$\CIalpha = \begin{cases}
    \widehat{\rm CI}_{\alpha,\cN}, \text{ if } T(\deltahat_\etahat, \Sigmahat) > \hat{c}_{1 - \alpha} \\
    \widehat{\rm CI}_{\alpha,{\rm ext}}, \text{ otherwise}. 
\end{cases}$$
$\CIalpha$ is based on a pre-test, using different inference approaches depending on whether the one-sided test is rejected or not. 
This construction is motivated by the following facts, which are formalized in \Cref{th.diff_pointwise,th.diff_ci1,th.diff_ci1_power}. 
If $\theta_\etahat - \theta_{\bhat}$ converges in probability to a negative value, $P(T(\deltahat_\etahat, \Sigmahat) > \hat{c}_{1 - \alpha}) \to 1$, and $\widehat{\rm CI}_{\alpha,\cN}$ is used, which is asymptotically exact. 
If $\theta_\etahat - \theta_{\bhat}$ converges in probability to a positive value, $P(T(\deltahat_\etahat, \Sigmahat) > \hat{c}_{1 - \alpha}) \to 0$, $\widehat{\rm CI}_{\alpha,\cN}$ is asymptotically exact but $\widehat{\rm CI}_{\alpha,{\rm ext}}$ is used, which is valid since it is wider than $\widehat{\rm CI}_{\alpha,\cN}$, although conservative. 
This asymmetric construction is a choice, which reflects the motivating problem of this section of learning whether the new model $\etahat$ performs better (instead of worse) than the baseline model $\bhat$. 
Finally, if $\theta_\etahat - \theta_{\bhat} \Pto 0$, intuitively $P(T(\deltahat_\etahat, \Sigmahat) > \hat{c}_{1 - \alpha})$ should be close to $\alpha$ given \Cref{th.diff_onesided_test}. 
If that happens, $\widehat{\rm CI}_{\alpha,{\rm ext}}$ covers $\theta_\etahat - \theta_{\bhat}$ with high probability since it includes $0$, the limit of $\theta_\etahat - \theta_{\bhat}$. 
However, this guarantee depends on additional conditions as I discuss next, since $P (\delta_\etahat \ge 0)$ may not converge to one even if $\delta_\etahat \Pto 0$. 

First, I show that $\CIalpha$ is valid pointwise in $P \in \cP$, assuming that if $\etastarP = b_P$, then the parametric model is well-specified in the sense that it minimizes the error $\theta_\eta$ in $\eta$, that is, $\theta_\eta \ge \theta_{b_P}$ for all $\eta \in H$. 
Then, I establish conditions under which $\CIalpha$ is valid uniformly in $P \in \cP$. 

\begin{theorem} \label{th.diff_pointwise}
    \hyperlink{proof.th.diff_pointwise}{(Pointwise Asymptotic Validity of $\CIalpha$)} \, \\
    Let \Cref{as.diff_onesided_test} hold. 
    Then, for any $P \in \cP$ such that either
    \begin{itemize}
        \item[(i)] $\theta_\etastarP \neq \theta_{b_P}$, or 
        \item[(ii)] $\theta_{b_P} \le \inf_{\eta \in H} \theta_\eta$, 
    \end{itemize} 
    $$\liminf_{n \to \infty} P \lrp{\lrp{\theta_\etahat - \theta_{\bhat}} \in \CIalpha} \ge 1 - \alpha.$$
\end{theorem}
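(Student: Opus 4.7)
The plan is to decompose the coverage probability using the pretest event $A_n := \{T(\deltahat_\etahat, n^{-1} \Sigmahat) > \hat c_{1-\alpha}\}$ and then split by the sign of the limiting gap $\theta_\etastarP - \theta_{b_P}$. Write $y := \theta_\etahat - \theta_{\bhat}$ throughout.

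First I would combine \Cref{th.clt_z} for the split-sample piece with \Cref{as.diff_onesided_test_bP} for the parametric baseline to obtain the joint CLT
\begin{equation*}
    \sqrt{n}\lrbk{(\thetahat_\etahat - \thetahat_{\bhat}) - y} \leadsto \cN(0, \sigma_\delta^2)
\end{equation*}
for some $\sigma_\delta^2 \ge 0$, together with consistency of $\sigmahat_\delta$. Whenever $\sigma_\delta^2 > 0$, this makes $\widehat{\rm CI}_{\alpha,\cN}$ asymptotically exact for $y$. Since $\widehat{\rm CI}_{\alpha,{\rm ext}} \supseteq \widehat{\rm CI}_{\alpha,\cN}$ by construction, I get the general bound
\begin{equation*}
    P(y \notin \CIalpha) = P(y \notin \widehat{\rm CI}_{\alpha,\cN}, A_n) + P(y \notin \widehat{\rm CI}_{\alpha,{\rm ext}}, A_n^c) \le P(y \notin \widehat{\rm CI}_{\alpha,\cN}),
\end{equation*}
which collapses any sub-case with $\sigma_\delta^2 > 0$ to checking normal-CI exactness.

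Next I would dispatch case (i) by sign. If $\theta_\etastarP < \theta_{b_P}$, every $\theta_{\etahat_{\stilde}} - \theta_{\bhat}$ tends in probability to a strictly negative limit, each clipped entry of $T$ diverges at rate $\sqrt n$, so $P(A_n) \to 1$; the decomposition in \cref{eq.diff_decomp} delivers $\sigma_\delta^2 > 0$ (the split-sample and full-sample influence functions disagree), and the reduction gives coverage tending to $1-\alpha$. If $\theta_\etastarP > \theta_{b_P}$, the $\min\{\cdot,0\}$ clips in $T$ collapse to zero, $P(A_n) \to 0$, so $\CIalpha = \widehat{\rm CI}_{\alpha,{\rm ext}}$ with probability approaching one; containment plus normal-CI exactness again yields at least $1-\alpha$. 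Case (ii) with $\etastarP \neq b_P$ reduces to case (i).

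The main obstacle is case (ii) with $\etastarP = b_P$. Condition (ii) forces $\theta_{\etahat_{\stilde}} \ge \theta_{b_P}$ and $\theta_{\bhat} \ge \theta_{b_P}$ pathwise, so \Cref{as.diff_onesided_test_bP} bounds each coordinate of $\delta_\etahat$ below by $-o_P(n^{-1/2})$. When $K = 1$, the decomposition in \cref{eq.diff_decomp} still yields a strictly positive $\sigma_\delta^2$ of order $(\pi^{-1} - 1)\Var[P]{f_{\etastarP}}$ and the reduction closes the argument. Cross-fitting ($K > 1$), however, telescopes the within-repetition sums back to the full-sample sum, forcing $\sigma_\delta^2 = 0$ and degenerating $\widehat{\rm CI}_{\alpha,\cN}$. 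In this sub-case I would switch to the lower bound
\begin{equation*}
    P(y \in \CIalpha) \ge P(A_n^c) - P(y \notin \widehat{\rm CI}_{\alpha,{\rm ext}}),
\end{equation*}
obtain $\liminf P(A_n^c) \ge 1-\alpha$ by applying \Cref{th.diff_onesided_test} to an $o_P(n^{-1/2})$ perturbation of $\delta_\etahat$ that absorbs the slack via a Slutsky-style argument, and show $P(y \notin \widehat{\rm CI}_{\alpha,{\rm ext}}) \to 0$ by using that $\widehat{\rm CI}_{\alpha,{\rm ext}}$ is the convex hull of $\widehat{\rm CI}_{\alpha,\cN}$ with $\{0\}$ together with the fact that both $y$ and the center $\thetahat_\etahat - \thetahat_{\bhat}$ concentrate at $0$. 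The crux is that the joint CLT degenerates exactly when the pretest null is borderline true, so the proof cannot be reduced to the normal CI; the extension to $\{0\}$ in $\widehat{\rm CI}_{\alpha,{\rm ext}}$ is what makes the result possible, and exploiting it requires simultaneously controlling the pretest size and the geometric position of $y$ relative to $0$ under the faster-than-$\sqrt n$ concentration.
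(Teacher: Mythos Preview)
Your overall strategy matches the paper's: case (i) is dispatched by normal-CI exactness (with the containment $\widehat{\rm CI}_{\alpha,{\rm ext}}\supseteq\widehat{\rm CI}_{\alpha,\cN}$), and case (ii) with $\theta_{\etastarP}=\theta_{b_P}$ is handled by noting that condition (ii) plus $\sqrt{n}(\theta_{\bhat}-\theta_{b_P})\to 0$ from \Cref{as.diff_onesided_test_bP} forces $\sqrt{n}\,\delta_\etahat$ to be bounded below by an $o_P(1)$ term, after which \Cref{th.diff_onesided_test} controls the pretest. The paper's own proof is two sentences and does not separate $K=1$ from $K>1$; your write-up is considerably more detailed.

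There is, however, a genuine gap in your degenerate sub-case ($K>1$ with $\etastarP=b_P$). Your bound $P(y\in\CIalpha)\ge P(A_n^c)-P(y\notin\widehat{\rm CI}_{\alpha,{\rm ext}})$ is correct, and controlling $P(A_n^c)$ via \Cref{th.diff_onesided_test} with a Slutsky adjustment is fine. But the claim ``$P(y\notin\widehat{\rm CI}_{\alpha,{\rm ext}})\to 0$ because both $y$ and the center $c=\thetahat_\etahat-\thetahat_{\bhat}$ concentrate at $0$'' is not a valid deduction. Writing $\widehat{\rm CI}_{\alpha,{\rm ext}}=[\min(0,L_\cN),\max(0,U_\cN)]$, on the event $y>0$ you need $y\le U_\cN$, which is $\sqrt{n}(y-c)\le z_{1-\alpha/2}\,\sigmahat_\delta$. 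In the degenerate case both sides are $o_P(1)$ (since $\sqrt{n}(c-y)=o_P(1)$ by telescoping and $\sigmahat_\delta\to 0$), and nothing you wrote determines which vanishes faster; mere concentration of $y$ and $c$ at $0$ is irrelevant to this comparison. The paper's proof is too terse to see how (or whether) it avoids this, but your stated justification does not close the step. A smaller point: your assertion that $K=1$ always yields $\sigma_\delta^2>0$ is only correct when $\Mbar<\infty$; for $K=1$ with $\Mbar=\infty$ one has $V_{\Mbar,K}=1$ and the same degeneracy as cross-fitting, so the split should be $V_{\Mbar,K}>1$ versus $V_{\Mbar,K}=1$.
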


Further, I show that $\CIalpha$ is asymptotically valid for most sequences of $\theta_{\etahat} - \theta_{\bhat}$, and discuss why it may fail for specific sequences.  
Then, I establish that the additional condition \Cref{as.diff_ci1} is sufficient for $\CIalpha$ to be asymptotically uniformly valid in $P \in \cP$. 
Later, I propose a modification to $\CIalpha$ that gives uniform validity under only \Cref{as.diff_onesided_test}. 

\begin{assumption} \label{as.diff_ci1} 
    For any sequence $(P_n)_{n \ge 1} \subseteq \cP$ such that $\theta_\etastarPn - \theta_{b_{P_n}} \to 0$, 
    $$\sqrt{n} \lrp{\theta_\etahat - \theta_{\bhat}} \Pnto 0.$$
\end{assumption}

\begin{theorem} \label{th.diff_ci1}
    \hyperlink{proof.th.diff_ci1}{(Uniform Asymptotic Validity of $\CIalpha$)} \, \\
    Let \Cref{as.diff_onesided_test} hold. 
    For any $\bar{c}_3 > 0$ and $\bar{c}_4 > 0$, define 
    $$\cP_{\bar{c}_3,\bar{c}_4} = \lrbc{P \in \cP: P \Bigl( \lrp{\theta_\etahat - \theta_{\bhat}} \ge 0 \lor \lrp{\theta_\etahat - \theta_{\bhat}} \le \bar{c}_3 \Bigr) > \bar{c}_4}.$$
    Then, 
    $$\liminf_{n \to \infty} \inf_{P \in \cP_{\bar{c}_3,\bar{c}_4}} P \lrp{\lrp{\theta_\etahat - \theta_{\bhat}} \in \CIalpha \Bigm| \lrp{\theta_\etahat - \theta_{\bhat}} \ge 0 \lor \lrp{\theta_\etahat - \theta_{\bhat}} \le \bar{c}_3} = 1 - \alpha.$$
    Moreover, if \Cref{as.diff_ci1} holds, 
    $$\liminf_{n \to \infty} \inf_{P \in \cP} P \lrp{\lrp{\theta_\etahat - \theta_{\bhat}} \in \CIalpha} = 1 - \alpha.$$
\end{theorem}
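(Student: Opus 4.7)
The plan is to prove both parts of Theorem~\ref{th.diff_ci1} via a subsequence argument combined with a case analysis on the limit $L \in [-\infty, +\infty]$ of $\theta_{\etastarPn} - \theta_{b_{P_n}}$ along an arbitrary sequence $(P_n)_{n \ge 1}$ in $\cP_{\bar{c}_3,\bar{c}_4}$ (for the first claim) or $\cP$ (for the second). A uniform liminf lower bound is equivalent to showing that every such sequence admits a subsequence along which the relevant (conditional or unconditional) coverage tends to at least $1 - \alpha$. By compactness in the extended reals, pass to such a subsequence; then by \Cref{as.z_etahat} and \Cref{as.diff_onesided_test_bP}, the data-dependent $\theta_\etahat - \theta_{\bhat}$ concentrates around $L$ in $P_n$-probability, and the split-specific $\theta_{\etahat_{\stilde}} - \theta_{\bhat}$ do as well.

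In the non-degenerate cases $L \ne 0$ the analysis is direct. When $L > 0$ (including $+\infty$), each split-specific standardized difference $\sqrt{n}(\thetahat_{\etahat_{\stilde}} - \thetahat_{\bhat})/\sigmahat_{\s}$ diverges to $+\infty$, so $T(\deltahat_\etahat, n^{-1}\Sigmahat) \Pnto 0$ and the pretest does not reject with probability tending to one; $\widehat{\rm CI}_{\alpha,{\rm ext}} \supseteq \widehat{\rm CI}_{\alpha,\cN}$ is used, and \Cref{th.clt_z} applied to the joint Z-estimator $(\thetahat_\etahat, \thetahat_{\bhat})$ yields asymptotic exactness of the normal CI (its asymptotic variance for the difference is nondegenerate along the subsequence since $\theta_{\etastarP} \ne \theta_{b_P}$), so coverage is at least $1 - \alpha + o(1)$. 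For $L < 0$, the symmetric argument shows that the pretest rejects with probability tending to one and $\widehat{\rm CI}_{\alpha,\cN}$ is itself asymptotically exact. The first claim then follows by splitting the conditioning event (which, by the role of $\bar{c}_3$, excludes a neighborhood of zero) into its two disjoint pieces, applying these $L \ne 0$ analyses on each piece, and using $P(A_n) \ge \bar{c}_4$ from the definition of $\cP_{\bar{c}_3,\bar{c}_4}$ to convert the unconditional bounds into the conditional coverage.

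The main obstacle is the boundary case $L = 0$ required only for the second claim. Here \Cref{as.diff_ci1} supplies $\sqrt{n}(\theta_\etahat - \theta_{\bhat}) \Pnto 0$, and combined with \cref{eq.diff_degenerate} it also yields $\sqrt{n}(\thetahat_\etahat - \thetahat_{\bhat}) \Pnto 0$ and the super-consistency $\sqrt{n}[(\thetahat_\etahat - \thetahat_{\bhat}) - (\theta_\etahat - \theta_{\bhat})] \Pnto 0$. Decomposing
\[
P\bigl(\theta_\etahat - \theta_{\bhat} \notin \CIalpha\bigr) = P\bigl(T > \hat{c}_{1-\alpha},\ \theta_\etahat - \theta_{\bhat} \notin \widehat{\rm CI}_{\alpha,\cN}\bigr) + P\bigl(T \le \hat{c}_{1-\alpha},\ \theta_\etahat - \theta_{\bhat} \notin \widehat{\rm CI}_{\alpha,{\rm ext}}\bigr),
\]
the first term is bounded by $P(T > \hat{c}_{1-\alpha})$, which tends to at most $\alpha$ by extending \Cref{th.diff_onesided_test} to the asymptotically near-null regime ($\delta_\etahat \Pnto 0$ making $\delta_\etahat = 0$ the least-favorable configuration). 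The second term vanishes: whenever the pretest does not reject, $\widehat{\rm CI}_{\alpha,{\rm ext}}$ is the convex hull of $\{0\}$ with the normal CI and hence contains the entire segment between $0$ and $\thetahat_\etahat - \thetahat_{\bhat}$; the super-consistency traps $\theta_\etahat - \theta_{\bhat}$ inside this hull up to an excess of order $o_{P_n}(1/\sqrt{n})$ that is absorbed by the half-width $z_{1-\alpha/2}\sigmahat_{\deltahat}/\sqrt{n}$ of the enclosing normal CI. Combining the two bounds and synthesizing across subsequences delivers the uniform liminf coverage $1 - \alpha$, closing the proof.
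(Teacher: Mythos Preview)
Your overall strategy---subsequence extraction plus a case analysis on $L = \lim_n(\theta_{\etastarPn} - \theta_{b_{P_n}})$---matches the paper's, and your treatment of $L \neq 0$ (pretest resolves deterministically, normal CI is nondegenerate and exact) is the same. For the second claim at $L = 0$, your decomposition of the non-coverage probability and the bound $P(T > \hat c_{1-\alpha}) \to \alpha$ via the near-null distribution of $\sqrt{n}\,\deltahat_\etahat$ is also how the paper proceeds, though you invoke \cref{eq.diff_degenerate} for the super-consistency where the paper uses the unconditional linearization \cref{eq.proof.th.diff_ci1.normality}; be aware that \cref{eq.diff_degenerate} was derived under $b_P = \etastarP$, which is strictly stronger than $\theta_{\etastarP} = \theta_{b_P}$.

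There is, however, a genuine gap in your argument for the \emph{first} claim. You assert that the conditioning event ``excludes a neighborhood of zero'' and propose to ``apply these $L \ne 0$ analyses on each piece.'' But the conditioning is on the \emph{random} quantity $\theta_\etahat - \theta_{\bhat}$, not on the deterministic limit $L$: the piece $\{\theta_\etahat - \theta_{\bhat} \ge 0\}$ contains the boundary value and places no constraint on $L$. Sequences $(P_n) \subseteq \cP_{\bar c_3,\bar c_4}$ with $L = 0$ certainly exist (e.g.\ $\theta_{\etastarPn} = \theta_{b_{P_n}}$ with $P_n(\theta_\etahat - \theta_{\bhat} \ge 0) > \bar c_4$), and for these your ``$L > 0$ analysis'' breaks down: the split-specific $\sqrt{n}(\thetahat_{\etahat_{\stilde}} - \thetahat_{\bhat})/\sigmahat_{\s}$ need not diverge, the pretest can reject with non-vanishing probability, and $\widehat{\rm CI}_{\alpha,\cN}$ may then be used in its degenerate regime. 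The paper does not split the conditioning event; instead it first establishes---by the same device as in the proof of \Cref{th.diff_onesided_test}---that \cref{eq.proof.th.diff_ci1.deltahat} holds \emph{conditional on the full event} uniformly over $\cP_{\bar c_3,\bar c_4}$, and then treats all $L \ge 0$ (boundary included) in one stroke using that conditional weak limit together with \Cref{prop.diff_Sigmahat}. That conditional CLT, not the dichotomy $L > 0$ versus $L < 0$, is the missing ingredient for your first claim at $L = 0$.
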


Under \Cref{as.diff_onesided_test}, $\CIalpha$ covers $\theta_\etahat - \theta_{\bhat}$ when this difference is positive or ``sufficiently'' negative, with $\cP_{\bar{c}_3,\bar{c}_4}$ only requiring this event to happen with positive probability. 
If $\theta_\etahat - \theta_{\bhat}$ converges to any negative value, coverage is asymptotically exact (\Cref{th.diff_ci1_power}). 
If it converges to a positive value, similarly, the normal approximation CI is exact, and the extended $\widehat{\rm CI}_{\alpha,{\rm ext}}$ is conservative. 
Failure of coverage may happen only if $\theta_\etahat - \theta_{\bhat} \Pto 0^-$, that is, it converges to zero ``from the left''. 
For such sequences, the components of $\delta_{\etahat}$, $\theta_{\etahat_{\stilde}} - \theta_{\bhat}$, may be enough negative so that the one-sided test rejects the null with high probability, but since they converge to zero, the terms in \cref{eq.diff_degenerate} cancel out, and the normal approximation CI may undercover. 
Importantly, $\CIalpha$ is valid in the case of interest $\theta_\etahat - \theta_{\bhat} \ge 0$, that is, when $\etahat$ performs equally or worse than the baseline model $\bhat$. 
This is the case, for example, when the parametric model is well-specified, as in \Cref{th.diff_pointwise}, since $\sqrt{n} \lrp{\theta_{\bhat} - \theta_{b_P}} \Pto 0$ from \Cref{as.diff_onesided_test}. 
Hence, $\CIalpha$ may overstate the advantage of $\etahat$ when it slightly outperforms $\bhat$, but not when their performances are equal or when $\etahat$ performs worse.

\Cref{as.diff_ci1} rules out the problematic sequences by ensuring that if $\theta_\etahat - \theta_{\bhat} \Pto 0^-$, $\theta_\etahat$ is close enough to $\theta_{\bhat}$ in large samples so that the one-sided test does not reject with probability higher than $\alpha$. 
It is motivated by the fact that machine learning algorithms typically penalize deviations from the mean. 
If there is little signal to be learned by $\etahat$, that is, $\theta_\etastarPn$ is close to $\theta_{b_{P_n}}$, it may be reasonable to expect that regularization will make the estimates $\etahat$ closer to $\bhat$ than to $\etastarPn$. 
For example, in the case of estimating a linear model with the Lasso, if the true coefficients are very small, penalization leads to estimated coefficients exactly equal to $0$ with high probability \citep{zhao2006model,zhang2008sparsity,wuthrich2023omitted}. 
However, this assumption may not lead to a good approximation for the behavior of DGPs where $\theta_\etastarPn$ is sufficiently distant from $\theta_{b_{P_n}}$ and $\etahat$ is estimated with no or little regularization. 

Next, I provide an alternative, more conservative CI that gives uniform coverage without relying on \Cref{as.diff_ci1}. 
It deals with sequences with $\theta_\etahat - \theta_{\bhat} \Pto 0^{-}$ by modifying $\CIalpha$ to be more conservative in the one-sided test. 
For any $\bar{c}_5 > 0$, consider the modified version of the test in \cref{eq.diff_onesided_test}: 
\begin{equation*}
    \begin{cases}
    H_{0,\etahat}: \theta_{\etahat_{\stilde}} - \theta_{\bhat} \ge - \bar{c}_5 & \text{ for all } \s \in \cS, \\
    H_{A,\etahat}: \theta_{\etahat_{\stilde}} - \theta_{\bhat} < - \bar{c}_5 & \text{ for some } \s \in \cS. 
\end{cases}
\end{equation*}
$\bar{c}_5$ represents a degree of slackness on how large $-(\theta_{\etahat_{\stilde}} - \theta_{\bhat})$ has to be to reject the null hypothesis. 
The final CI is given by  
$$\CIalpha' = \begin{cases}
    \widehat{\rm CI}_{\alpha,\cN}, \text{ if } T(\deltahat_\etahat + \bar{c}_5, \Sigmahat) > \hat{c}_{1 - \alpha} \\
    \widehat{\rm CI}_{\alpha,{\rm ext}}, \text{ otherwise}, 
\end{cases}$$
and the critical value $\hat{c}_{1 - \alpha}$ is the same as before. 
A large $\bar{c}_5$ gives more robustness in finite samples in the sense that 
$$P \lrp{\lrp{\theta_\etahat - \theta_{\bhat}} \in \CIalpha'}$$
is (weakly) increasing in $\bar{c}_5$. 
On the other hand, a large $\bar{c}_5$ leads to less power. 
Importantly, this approach is not necessary if the goal is to test the null $H_{0,\etahat}: \theta_{\etahat} - \theta_{\bhat} = 0$, since this cased is covered by \Cref{th.diff_ci1}. 
The modified confidence interval $\CIalpha'$ is intended for researchers who may want to be careful not to overestimate the magnitude of $\theta_\etahat - \theta_{\bhat}$ when it is small but negative.

\begin{theorem} \label{th.diff_ci2}
    \hyperlink{proof.th.diff_ci2}{(Uniform Asymptotic Validity of $\CIalpha'$)} \, \\
    Let \Cref{as.diff_onesided_test} hold and fix any $\bar{c}_5 > 0$. 
    Then, 
    $$\liminf_{n \to \infty} \inf_{P \in \cP} P \lrp{\lrp{\theta_\etahat - \theta_{\bhat}} \in \CIalpha'} \ge 1 - \alpha.$$
\end{theorem}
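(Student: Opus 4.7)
The plan rests on the observation that $\CIalpha' \supseteq \widehat{\rm CI}_{\alpha,\cN}$ unconditionally, since acceptance of the pretest only enlarges the interval by including $\{0\}$. Hence
\begin{equation*}
P\bigl((\theta_\etahat - \theta_{\bhat}) \in \CIalpha'\bigr) \;\ge\; P\bigl((\theta_\etahat - \theta_{\bhat}) \in \widehat{\rm CI}_{\alpha,\cN}\bigr),
\end{equation*}
and whenever the normal CI has asymptotic coverage at least $1-\alpha$ the conclusion follows. The work of the proof is therefore to handle the degenerate regimes where the underlying CLT collapses, using the inclusion of $\{0\}$ together with the slackness $\bar{c}_5>0$ to compensate. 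Following the standard reduction of a uniform claim to a sequential one, I would suppose for contradiction that some $(P_n)\subseteq\cP$ violates the conclusion, and by compactness extract a subsequence along which $a_n:=\theta_{\etastarPn}-\theta_{b_{P_n}}\to a^*\in[-\infty,\infty]$ and the limiting variance $V_n^*$ of $\sqrt{n}\bigl[(\thetahat_\etahat-\thetahat_{\bhat})-(\theta_\etahat-\theta_{\bhat})\bigr]$ converges to some $V^*\in[0,\infty)$.

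I would then split into three regimes according to the position of $a^*$ relative to $-\bar{c}_5$. If $a^*>-\bar{c}_5$, the augmented null $\theta_{\etahat_{\stilde}}-\theta_{\bhat}+\bar{c}_5\ge 0$ holds strictly in the limit, so $T(\deltahat_\etahat+\bar{c}_5,\Sigmahat)\Pto 0$ while $\hat{c}_{1-\alpha}$ stays bounded; thus $P_n(\text{reject})\to 0$ and $\CIalpha'=\widehat{\rm CI}_{\alpha,\mathrm{ext}}$ with high probability, and coverage follows from \Cref{th.clt_z} when $V^*>0$, and from the fact that $\widehat{\rm CI}_{\alpha,\mathrm{ext}}$ contains the segment between $0$ and $\thetahat_\etahat-\thetahat_{\bhat}$ (which absorbs $\theta_\etahat-\theta_{\bhat}$) in the degenerate $V^*=0$ case. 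If $a^*<-\bar{c}_5$, the augmented null is strictly violated so the test statistic diverges, $P_n(\text{reject})\to 1$, and $\CIalpha'=\widehat{\rm CI}_{\alpha,\cN}$ with high probability; since $a^*\neq 0$ generically forces $V^*>0$, the CLT yields coverage $\to 1-\alpha$.

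The main obstacle is the boundary case $a^*=-\bar{c}_5$. Here, an argument analogous to \Cref{th.diff_onesided_test} applied to the shifted moment vector $\delta_\etahat+\bar{c}_5$, whose asymptotic Gaussian distribution comes from the split-sample empirical process CLT underlying \Cref{th.clt_z}, gives $P_n(\text{reject})\to\alpha$. By the structural inclusion $\CIalpha'\supseteq\widehat{\rm CI}_{\alpha,\cN}$ it then suffices to show that $\widehat{\rm CI}_{\alpha,\cN}$ covers $\theta_\etahat-\theta_{\bhat}$ with probability tending to at least $1-\alpha$, which follows directly from the CLT when $V^*>0$. The most delicate remaining step is the pathological subcase $V^*=0$ with $a^*\neq 0$, corresponding to $f_{\etastarP}-f_{b_P}$ being a non-zero constant, where the normal CI width shrinks faster than necessary; here I would combine a faster-than-$n^{-1/2}$ rate analysis of $\hat{D}_n-D_n$ in the spirit of \Cref{section.fast_convergence} with the observation that $\widehat{\rm CI}_{\alpha,\mathrm{ext}}$ still contains $\{0\}$ and that $\theta_\etahat-\theta_{\bhat}$ lies on the same side of $0$ as $\thetahat_\etahat-\thetahat_{\bhat}$ in the limit, to conclude that the extension absorbs $\theta_\etahat-\theta_{\bhat}$ with probability tending to one despite the collapse of the usual CI width.
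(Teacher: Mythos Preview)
Your overall strategy—reduce to a sequence, use the inclusion $\CIalpha'\supseteq\widehat{\rm CI}_{\alpha,\cN}$ so that only degenerate regimes need separate treatment, and split by the limit $a^*$ of $\theta_{\etastarPn}-\theta_{b_{P_n}}$—matches the paper's. Two points, however, should be corrected.

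First, your boundary claim at $a^*=-\bar{c}_5$ that $P_n(\text{reject})\to\alpha$ is not correct. \Cref{th.diff_onesided_test} delivers size control only \emph{conditional} on the shifted null $\delta_\etahat+\bar{c}_5\ge 0$ holding, and because no rate is assumed on $\etahat$, the drift $\sqrt{n}(\delta_\etahat+\bar{c}_5)$ is uncontrolled at this boundary, so the unconditional rejection probability can land anywhere in $[0,1]$. The error is harmless, though: since $a^*=-\bar{c}_5\neq 0$, the asymptotic linearity \cref{eq.proof.th.diff_ci1.normality} applies, and your own inclusion $\CIalpha'\supseteq\widehat{\rm CI}_{\alpha,\cN}$ already delivers coverage irrespective of the pretest outcome. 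The paper does not isolate $-\bar{c}_5$ as a separate case at all; it simply notes that the proof of \Cref{th.diff_ci1} carries over verbatim except on the range $[-\bar{c}_5,0]$, and there invokes \cref{eq.proof.th.diff_ci1.deltahat} and \Cref{prop.diff_Sigmahat}.

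Second, your $V^*=0$, $a^*\neq 0$ paragraph is a detour the paper does not take. The paper tracks only $a^*$, not a separate variance limit: for $a^*\neq 0$ it relies on \cref{eq.proof.th.diff_ci1.normality} together with the inclusion; for $a^*=0$—the only place the normal CI can collapse under the paper's framing—the slackness $\bar{c}_5>0$ forces $\sqrt{n}(\deltahat_\etahat+\bar{c}_5)\to+\infty$ because $\deltahat_\etahat\Pto 0$, so the pretest never rejects and the extended CI (which contains $0$) is used. Your attempted ``faster-than-$n^{-1/2}$'' argument is vague and does not obviously close the gap it purports to address; better to follow the paper's simpler decomposition and avoid introducing $V^*$ altogether.
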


\section{Reproducibility} \label{section.reproducibility}

The split-sample estimators and estimands defined in \Cref{section.z_estimators} depend not only on the algorithm used to estimate $\etahat$, but also on the specific splits of the sample $\cR$. 
In applications, this may lead to the undesirable phenomenon that different researchers with the same dataset, using different random splits $\cR$, may reach different conclusions in terms of statistical significance. 
Intuitively, by averaging over multiple splits, this phenomenon becomes less likely. 
In this section, I first formalize this intuition by establishing basic reproducibility properties of split-sample Z-estimators. 
Then, I develop a measure that quantifies the reproducibility of p-values from hypothesis tests based on Z-estimators for a given number of repetitions $M$.

\subsection{Basic Reproducibility Properties} \label{section.reproducibility_basic}

I establish two basic reproducibility properties for the three versions of split-sample Z-estimators defined in \Cref{section.z_estimators}. 
The two results formalize the notion that, for fixed $n$, choosing to use a larger number of repetitions $M$ improves reproducibility of the estimators. 
The results exploit the fact that $\thetahat_\etahat^{(1)}$ and $\thetahat_\etahat^{(3)}$ are averages over $M$ independent repetitions $r \in \cR$. 
For the second estimator, I use the fact that, conditional on the data $D$, $\thetahat_\etahat^{(2)}$ is still a Z-estimator where the ``observations'' are the splits $r \in \cR$ and the target parameter is the value of $\theta$ that solves the moment condition averaged over all possible splits. 
This allows me to explore large $M$ properties of $\thetahat_\etahat^{(2)}$ using arguments similar to those applied to Z-estimators (e.g., Theorem 5.9 in \citealp{van2000asymptotic}). 
For $\thetahat_\etahat^{(2)}$, I require an additional technical condition which I delay to \Cref{appendix.proofs_reproducibility_basic}. 
This assumption is analogous to standard conditions for proving consistency of Z-estimators, and holds, for example, if $\Theta'$ is bounded, $\psi_{\theta,\eta}$ is Lipschitz in $\theta$ with a Lipschitz constant that does not depend on $\eta$ or $w$, and if the solution to the moment condition averaged over all possible splits is unique. 

The first reproducibility property is that, for fixed $n$, the variance of the Z-estimators conditional on the data converge to zero as $M$ grows. 
Conditional on the data, the estimators vary only due to the random partitioning. 
This approximates the behavior of the estimators when the number of repetitions $M$ is chosen to be large. 
This guarantees that two researchers with the same dataset and different splits will calculate estimators that are arbitrarily close to each other with high probability for large enough $M$. 

\begin{proposition} \label{prop.reproducibility_basic}
    Let \ref{as.z_estimator} hold, $\pi, K$ be arbitrary, and $j \in \lrbc{1,2,3}$. 
    Additionally, let \ref{as.reproducibility_thetahat2} hold if $j=2$. 
    Then, 
    $$\Var[P]{\thetahat_{\etahat}^{(j)} \bigm| D} \Pto 0$$
    as $M \to \infty$ with $n$ fixed.
\end{proposition}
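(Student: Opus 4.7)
The plan is to handle $\thetahat_\etahat^{(1)}, \thetahat_\etahat^{(3)}$ directly via the law of large numbers conditional on $D$, since they are explicit averages over the $M$ iid repetitions, and to handle $\thetahat_\etahat^{(2)}$ by a Z-estimator consistency argument carried out conditional on $D$, with the random splits playing the role of ``observations.''

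For $j \in \{1,3\}$, first I would condition on the data $D$ and note that the randomness in $\thetahat_\etahat^{(j)}$ comes only from the $M$ iid repetitions $r \in \cR$. Specifically, writing
$$\thetahat_\etahat^{(1)} = \frac{1}{M}\sum_{r \in \cR} \Bigl(\frac{1}{K}\sum_{\s \in r}\thetahat_{\etahat_{\stilde}}^{(1)}\Bigr), \qquad \thetahat_\etahat^{(3)} = \frac{1}{M}\sum_{r \in \cR}\thetahat_{\etahat_r}^{(3)},$$
each is an average of $M$ iid random variables (conditional on $D$), since the repetitions $r$ are drawn iid. For fixed $n$ there are only finitely many possible splits, so each per-repetition summand takes values in a finite set contained in $\Theta'$ and is in particular almost surely bounded (using \Cref{as.z_2plusdelta} to rule out a measure-zero exceptional set). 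Hence $\Var_P[\thetahat_\etahat^{(j)}\mid D]$ equals the per-repetition conditional variance divided by $M$, which is finite with probability one (over $D$) and converges to zero as $M\to\infty$, giving the claim in probability.

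For $j=2$, the estimator is defined implicitly, so I would argue in two steps. Define $\bar\Psi_r(\theta) = K^{-1}\sum_{\s \in r}\Psihat_{\s,\etahat_{\stilde}}(\theta)$ and $\bar\Psi(\theta) = \E{\bar\Psi_r(\theta)\mid D}$, the expectation being over the uniform draw of a single split/cross-split, with $D$ held fixed. Conditional on $D$, the $\bar\Psi_r$ for $r \in \cR$ are iid, so by the uniform law of large numbers applied pointwise in $\theta$ together with the uniform control in Assumption \ref{as.reproducibility_thetahat2} (which is the analogue of standard uniform-in-$\theta$ convergence and well-separated-zero conditions for Z-estimator consistency), one gets
$$\sup_{\theta \in \Theta'} \Bigl\|\,M^{-1}\!\sum_{r\in\cR}\bar\Psi_r(\theta) - \bar\Psi(\theta)\Bigr\| \to 0$$
in probability over $\cR$ as $M \to \infty$. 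Combined with the uniqueness/identification part of \ref{as.reproducibility_thetahat2} and the standard argmin continuous mapping argument (Theorem 5.9 in van der Vaart), this yields $\thetahat_\etahat^{(2)} \to \theta^*(D)$ in probability over splits, where $\theta^*(D)$ is the unique zero of $\bar\Psi$ (a function of $D$ alone). Since $\thetahat_\etahat^{(2)} \in \Theta'$ and $\Theta'$ is bounded, the bounded convergence theorem upgrades convergence in probability to convergence in $L^2$ (over splits, conditional on $D$), so $\Var_P[\thetahat_\etahat^{(2)}\mid D] \to 0$ almost surely in $D$, and therefore in probability.

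The main obstacle is the $j=2$ case: the implicit definition forces a consistency-type argument at the inner level, and care is needed to apply the uniform LLN conditionally on $D$ over a class of functions indexed by $\theta \in \Theta'$. This is exactly what the extra technical condition \ref{as.reproducibility_thetahat2} is designed to handle, and it sidesteps the need for any rate or empirical process machinery beyond a uniform LLN, since $n$ is held fixed throughout.
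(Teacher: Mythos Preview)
Your proposal takes essentially the same approach as the paper: for $j\in\{1,3\}$ you exploit that $\thetahat_\etahat^{(j)}$ is an average of $M$ iid (conditional on $D$) terms, and for $j=2$ you invoke Z-estimator consistency (Theorem~5.9 in van der Vaart) conditional on $D$, with the splits playing the role of observations---exactly as the paper does. Your $j\in\{1,3\}$ argument is in fact slightly more direct than the paper's, since you compute the conditional variance explicitly as (per-repetition variance)$/M$ rather than going through a LLN followed by a uniform-integrability upgrade.

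One small point to flag for $j=2$: you write ``Since $\thetahat_\etahat^{(2)} \in \Theta'$ and $\Theta'$ is bounded,'' but neither of these is given by the stated hypotheses. The estimator is defined as an argmin over $\Theta$, not $\Theta'$, and boundedness of $\Theta'$ is mentioned in the paper only as a sufficient condition for \Cref{as.reproducibility_thetahat2}, not as a standing assumption. The paper's own proof is equally terse here---it appeals to uniform square integrability via \Cref{as.unif_integrability}, which is also not formally among the proposition's hypotheses---so this is a shared loose end rather than a flaw specific to your argument. The core logic is correct and matches the paper.
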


For the estimators $\thetahat_{\etahat}^{(1)}$ and $\thetahat_{\etahat}^{(3)}$, I show that the conditional variance is strictly decreasing in $M$. 
This establishes a stronger property than the asymptotic result in \Cref{prop.reproducibility_basic}: not only does reproducibility improve as $M \to \infty$, but every increase in $M$ strictly reduces variance and thus improves reproducibility. 

\begin{proposition} \label{prop.reproducibility_monotonic}
    Let \ref{as.z_estimator} hold, $n$ be fixed, $M,\pi, K$ be arbitrary, and $j \in \lrbc{1,3}$. 
    Then, if 
    $$\Var[P]{\thetahat_{\etahat}^{(j)} \bigm| D} > 0,$$
    $\Var[P]{\thetahat_{\etahat}^{(j)} \bigm| D}$ is strictly decreasing in $M$. 
\end{proposition}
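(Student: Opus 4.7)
The plan is to exploit the fact that, for $j \in \lrbc{1,3}$, the estimator $\thetahat_\etahat^{(j)}$ is already an exact unweighted average over the $M$ repetitions in $\cR$, so that conditioning on the data $D$ reduces the problem to the classical variance-of-a-sample-mean identity. Specifically, I would define, for each repetition $r_m \in \cR$,
$$Z_m^{(1)} = \frac{1}{K} \sum_{\s \in r_m} \thetahat_{\etahat_{\stilde}}^{(1)}, \qquad Z_m^{(3)} = \thetahat_{\etahat_{r_m}}^{(3)},$$
so that $\thetahat_\etahat^{(j)} = M^{-1} \sum_{m=1}^M Z_m^{(j)}$ from the definitions in \eqref{eq.thetahat_z_1} and \eqref{eq.thetahat_z_3}.

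The next step is to verify that, conditional on $D$, the $Z_m^{(j)}$ are iid. This follows because, by construction of the split-sample procedure in \Cref{section.setup}, the repetitions $r_1, \dots, r_M$ are drawn independently with repetition from the uniform distribution over all admissible $K$-fold partitions of $\bkn$. Once $D$ is fixed, each $Z_m^{(j)}$ is a measurable function of $D$ and $r_m$ alone (using a measurable selection when the argmin defining $\thetahat_{\etahat_{\stilde}}^{(j)}$ is not unique), so conditional on $D$ the $Z_m^{(j)}$ inherit the iid structure of the $r_m$. Applying the standard variance-of-a-mean identity then gives
$$\Var[P]{\thetahat_\etahat^{(j)} \Bigm| D} = \frac{1}{M} \Var[P]{Z_1^{(j)} \Bigm| D},$$
and, crucially, $\Var[P]{Z_1^{(j)} \mid D}$ does not depend on $M$ because the marginal distribution of a single repetition $r_1$ is the same regardless of how many total repetitions are drawn.

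From here the conclusion is immediate: if $\Var[P]{\thetahat_\etahat^{(j)} \mid D} > 0$ for some $M$, then $\Var[P]{Z_1^{(j)} \mid D} > 0$, and since $1/M$ is strictly decreasing in the positive integer $M$, so is the conditional variance. The integrability needed to make these variances well-defined follows from \Cref{as.z_2plusdelta} combined with the fact that the $Z_m^{(j)}$ are finite averages of the per-split estimators.

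The only mild obstacle is a bookkeeping one: ensuring measurability of the map $r \mapsto Z_m^{(j)}(D, r)$ when the argmin that defines $\thetahat_{\etahat_{\stilde}}^{(j)}$ or $\thetahat_{\etahat_{r}}^{(j)}$ is not unique. This is handled by invoking a measurable selection from the argmin correspondence, which is standard under \Cref{as.z_estimator_technical}. Note that the argument does not go through for $j=2$, since $\thetahat_\etahat^{(2)}$ in \eqref{eq.thetahat_z_2} is not an additive average over repetitions but rather a zero of an averaged empirical moment condition, so the variance-of-a-mean decomposition fails and only the asymptotic result in \Cref{prop.reproducibility_basic} is available.
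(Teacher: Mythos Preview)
Your proposal is correct and takes essentially the same approach as the paper: define the per-repetition statistic ($X(r)$ in the paper, $Z_m^{(j)}$ for you), observe that $\thetahat_\etahat^{(j)}$ is its average over $M$ repetitions, use that the repetitions are iid conditional on $D$ to get $\Var[P]{\thetahat_\etahat^{(j)}\mid D}=M^{-1}\Var[P]{Z_1^{(j)}\mid D}$, and conclude strict monotonicity. Your extra remarks on measurable selection and why $j=2$ fails are reasonable side comments, though note that the proposition only assumes \Cref{as.z_estimator}, not \Cref{as.z_estimator_technical}, so the measurability point would need to be handled (or simply left implicit, as the paper does) without invoking the latter.
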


\subsection{A Reproducibility Measure} \label{section.reproducibility_measure}

I propose a reproducibility measure for p-values from hypothesis tests based on transformations of split-sample Z-estimators.
Specifically, I study reproducibility of the p-value for testing $H_{0,\etahat}: h(\theta_{\etahat}) = \tau$ versus $H_{A,\etahat}: h(\theta_{\etahat}) \neq \tau$ (and its one-sided versions) for $h: \Theta \to \R$ differentiable. 
The hypotheses $H_{0,\etahat}$ and $H_{A,\etahat}$ depend on $\etahat$ since the parameter of interest, $\theta_\etahat$, depends on $\etahat$. 
Testing this hypothesis is analogous to constructing a CI for $\theta_\etahat$: in fact, inverting this test for all values of $\tau$ at significance level $\alpha$ gives the confidence interval of \Cref{section.inference_normal}.

I begin by defining the reproducibility measure, then describe the asymptotic framework I use and the technical challenges involved. 
Finally, I establish the limit distribution of the difference of t-statistics constructed from different random splits, and apply this result to construct the reproducibility measure. 
As in \Cref{section.z_estimators}, I consider $M$ repetitions of sample-splitting with $K$ folds ($K=1$ denotes repeated sample-splitting).

The goal of this section is to construct a measure $\hat{\delta}(\beta)$, for $\beta \in (0, 0.5)$, that satisfies 
$$P \lrp{p_2 > p_1 + \hat{\delta}(\beta) \Biggm| D} = \beta + o_P(1),$$
where $p_1$ and $p_2$ are p-values for $H_{0,\etahat}$ calculated with separate, independent splits. 
This measure provides the following guarantee: if a researcher calculates a p-value $p_1$ using one set of random splits, then a second researcher using the same dataset, but different splits, will obtain a p-value exceeding $p_1 + \hat{\delta}(\beta)$ with probability approximately $\beta$. 
This allows researcher 1 to assess whether their result would remain statistically significant without the computational cost of re-running the analysis. 
For example, if $p_1 < 0.05$ but $p_1 + \hat{\delta}(\beta) > 0.05$ for some small $\beta$, the researcher may need to increase $M$ to guarantee reproducibility of their finding. 

I consider an asymptotic regime where both the number of repetitions $M$ and the sample size $n$ grow to infinity, which is the main technical challenge for proving validity of my reproducibility measure. 
An alternative framework is to consider the data $D$ fixed, let $M \to \infty$, and treat each repetition as an independent observation. 
Although this alternative regime facilitates statistical analysis, it provides asymptotic guarantees only when $M$ is large relative to $n$. 
In practice, choosing $M$ much larger than $n$ is often computationally intractable. 
My asymptotic framework better reflects much of empirical practice by allowing $M$ to grow slower than $n$, so that $M$ can be, for instance, a small fraction of $n$. 
The proofs of my results under this asymptotic regime rely on the CLT of \Cref{section.z_estimators}. 

I focus on the estimator $\thetahatetahat = \thetahat_{\etahat}^{(2)}$ from \Cref{section.z_estimators}, and similar results can be extended to $\thetahat_{\etahat}^{(1)}$ and $\thetahat_{\etahat}^{(3)}$ using similar techniques. 
The $\thetahat_{\etahat}^{(2)}$ case is much more challenging because, unlike $\thetahat_{\etahat}^{(1)}$ and $\thetahat_{\etahat}^{(3)}$, $\thetahat_{\etahat}^{(2)}$ is not an average of $M$ independent terms conditional on the data. 

The setting follows \Cref{section.z_estimators}. 
Additionally, let $\cR_1$ and $\cR_2$ be independent collections of $M$ splits of the data with $K$ folds (uniformly at random). 
Let $\etahat_1$ and $\etahat_2$ be calculated with $\cR_1$ and $\cR_2$ respectively, which leads to analogous definitions of $\thetahat_{\etahat_{j}}$, $\theta_{\etahat_{j}}$, and $\sigmahat_{\etahat_j}$ for $j=1,2$. 
Under the null hypothesis and the conditions of \Cref{th.clt_z}, the t-statistic 
$$t_{\etahat_{j}} = \frac{\sqrt{n} (h(\thetahat_{\etahat_{j}}) - \tau)}{\sigmahat_{\etahat_{j}}} \leadsto \cN(0,1),$$
where $\sigmahat_{\etahat_{j}}$ is given as in \Cref{eq.sigmahat}, $\hdot(\theta)$ is a row vector with the partial derivatives of $h(\theta)$ evaluated at $\theta$, and $\hat{V}_{\etahat}$ is a plug-in estimator for $V_{\etastarP}$ defined in \Cref{appendix.proofs_z_estimators}.  Based on this result, one can calculate p-values 
$$p_j^{\pm} = 2 \Phi \lrp{- \lrm{\frac{\sqrt{n} (h(\thetahat_{\etahat_j}) - \tau)}{\sigmahat_{\etahat_j}}}},$$
$$p_j^+ = \Phi \lrp{\frac{\sqrt{n} (h(\thetahat_{\etahat_j}) - \tau)}{\sigmahat_{\etahat_j}}}, \quad p_j^- = \Phi \lrp{-\frac{\sqrt{n} (h(\thetahat_{\etahat_j}) - \tau)}{\sigmahat_{\etahat_j}}},$$
for $H_{0,\etahat_j}: h(\theta_{\etahat_j}) = \tau$ versus $H_{A,\etahat}: h(\theta_{\etahat}) \neq \tau$ and its one-sided versions. 

The asymptotic regime assumes $M^{-1} n \sigma^2_D = O_P(1)$, where $\sigma^2_D$ (defined in \cref{eq.sigma2D} in the appendix) reflects the variance of $t_{\etahat_{1}}$ conditional on the data. 
Since $h(\thetahat_{\etahat})$ and $\sigmahat_{\etahat}$ converge to non-random quantities $h(\theta_{\etastarP})$ and $\sigma_{\etastarP}$ respectively, $\sigma^2_D \Pto 0$. 
Hence, the asymptotic regime requires $M \to \infty$ at a rate slower than $n$. 
The rate of convergence of $\sigma^2_D$ depends on the rate at which $\etatilde = \cA(D)$ converges to $\etastarP$, and may be slow especially when $\etatilde$ is estimated nonparametrically. 
In \Cref{th.reproducibility_m_fast}, I show that a safe guideline for achieving the reproducibility guarantees established below is to choose $M$ of comparable magnitude to $n$.

I characterize below a central limit theorem for the difference of t-statistics constructed using different splits, which is the main ingredient for deriving my reproducibility measure in \Cref{th.reproducibility_pvalue}. 
Both results rely on the fairly technical \Cref{as.reproducibility}, stated in \Cref{section.proofs_reproducibility_measure}. 
The key condition is a Donsker-type requirement on $\lrbc{\Psi_{\etahat_{\stilde}} : \s \subseteq \bkn}$ and $\psi_{\theta, \eta, i} \psi_{\theta, \eta, j}$. 
This condition holds, for example, if $\Theta'$ and $\psi_{\theta,\eta}$ are bounded and the cross products of the entries of $\psi_{\theta,\eta}$ are Lipschitz. 
Importantly, \Cref{as.reproducibility} does not restrict the complexity of $\etahat$, it only restricts the complexity of the function classes over $\theta \in \Theta'$, and not over $\eta \in H$. 

\begin{theorem} \label{th.reproducibility_clt}
    \hyperlink{proof.th.reproducibility_clt}{(Reproducibility of t-statistics based on Z-estimators)} \, \\
    Let Assumptions \ref{as.z_estimator} and \ref{as.reproducibility} hold. 
    Then, for any $\tau \in \R$, 
    $$\lrp{\frac{\sqrt{n} \hat{\sigma}_{D}}{\sqrt{M}}}^{-1} \lrp{\frac{\sqrt{n} (h(\thetahat_{\etahat_{1}}) - \tau)}{\sigmahat_{\etahat_{1}}} - \frac{\sqrt{n} (h(\thetahat_{\etahat_{2}}) - \tau)}{\sigmahat_{\etahat_{2}}}} \leadsto \cN (0, 1)$$
    conditional on $D$ with probability approaching one. 
\end{theorem}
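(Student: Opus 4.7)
I would condition throughout on the data $D$ and treat the two independent collections of splits $\cR_1, \cR_2$ as the only remaining source of randomness, then apply a CLT over splits after a standard Z-estimator linearization. Two structural facts are essential: (i) conditional on $D$, the $M$ splits inside each $\cR_j$ are iid and $\cR_1$ is independent of $\cR_2$; and (ii) the estimator $\thetahat_{\etahat_j}^{(2)}$ is itself a Z-estimator whose effective ``observations'' are the splits $r \in \cR_j$, with moment function $\theta \mapsto K^{-1}\sum_{\s \in r} \Psi_{\etahat_{\stilde}}(\theta)$ up to an empirical-process remainder controlled by \Cref{as.reproducibility}.

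\textbf{Main steps.} First, I would linearize each $\thetahat_{\etahat_j}^{(2)}$ around a common ``all-splits'' target $\bar{\theta}_D$, the value of $\theta$ that zeros the split-average moment averaged over \emph{every} partition, obtaining
$$\thetahat_{\etahat_j}^{(2)} - \bar{\theta}_D = -\dot{\bar{\Psi}}_D^{-1} \cdot \frac{1}{M}\sum_{r \in \cR_j} \xi_r(D) + R_{n,M,j},$$
where $\xi_r(D)$ is the conditionally centered per-split contribution and $R_{n,M,j} = o_P\lrp{M^{-1/2}\hat{\sigma}_D + n^{-1/2}}$. The Donsker-type condition in \Cref{as.reproducibility} on $\lrbc{\Psi_{\etahat_{\stilde}} : \s \subseteq \bkn}$ is what lets me pass from $\Psihat_{\s,\etahat_{\stilde}}$ to $\Psi_{\etahat_{\stilde}}$ uniformly in $\s$ and $\theta$ at this rate. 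A delta-method step at $\bar{\theta}_D$ (using \Cref{th.clt_z} to control $\bar{\theta}_D - \theta_{\etastarP}$) then yields
$$h(\thetahat_{\etahat_1}^{(2)}) - h(\thetahat_{\etahat_2}^{(2)}) = -\hdot(\bar{\theta}_D) \dot{\bar{\Psi}}_D^{-1} \cdot \frac{1}{M}\lrbk{\sum_{r \in \cR_1}\xi_r(D) - \sum_{r \in \cR_2}\xi_r(D)} + o_P\lrp{M^{-1/2}\hat{\sigma}_D}.$$
Conditional on $D$, the $2M$ variables $\lrbc{\xi_r(D) : r \in \cR_1 \cup \cR_2}$ are iid mean zero with finite $(2+\delta)$-th moment by \Cref{as.z_2plusdelta}, so a Lyapunov CLT applies conditionally on $D$ for almost every realization. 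Combining this with $\sigmahat_{\etahat_j} \Pto \sigma_{\etastarP}$ from \Cref{th.clt_z}, and with the consistency of $\hat{\sigma}_D$ that \Cref{as.reproducibility} is designed to deliver through the definition in \cref{eq.sigma2D}, I can identify the conditional variance of the studentized difference as $n\hat{\sigma}_D^2/M$ up to a $1+o_P(1)$ factor, yielding the claimed $\cN(0,1)$ limit.

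\textbf{Main obstacle.} The hardest step is the conditional linearization. The estimator $\thetahat_{\etahat}^{(2)}$ couples two distinct sources of randomness (the sample $D$ and the split draws $\cR_j$), and the joint regime requires controlling both as $n$ and $M$ diverge together with $M^{-1} n \hat{\sigma}_D^2 = O_P(1)$. Crucially, the remainder $R_{n,M,j}$ must be negligible relative to $M^{-1/2}\hat{\sigma}_D$, which can be much finer than the usual $n^{-1/2}$ scale because $\hat{\sigma}_D \Pto 0$; this is exactly what forces the Donsker condition indexed by subsamples $\s \subseteq \bkn$ and precludes a simple pointwise-in-$\s$ argument. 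A related delicate point is that the Lyapunov CLT is applied conditionally on $D$, which demands that the $(2+\delta)$-moment and variance of $\xi_r(D)$ be uniformly controlled with probability tending to one — manageable but nontrivial, and the reason \Cref{as.reproducibility} imposes a Donsker condition on the cross products $\psi_{\theta,\eta,i}\psi_{\theta,\eta,j}$ as well.
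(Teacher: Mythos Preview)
Your linearization of $h(\thetahat_{\etahat_1}) - h(\thetahat_{\etahat_2})$ around an all-splits target and the conditional Lyapunov CLT over splits are broadly correct and match the paper's Step~One. However, there is a genuine gap: you treat the denominators $\sigmahat_{\etahat_1}, \sigmahat_{\etahat_2}$ as replaceable by the common limit $\sigma_{\etastarP}$ via ``$\sigmahat_{\etahat_j} \Pto \sigma_{\etastarP}$,'' but this is not sufficient at the scale of the theorem. Writing
\[
\frac{h(\thetahat_{\etahat_1}) - \tau}{\sigmahat_{\etahat_1}} - \frac{h(\thetahat_{\etahat_2}) - \tau}{\sigmahat_{\etahat_2}}
= \frac{h(\thetahat_{\etahat_1}) - h(\thetahat_{\etahat_2})}{\sigmahat_{\etahat_1}}
- (\sigmahat_{\etahat_1} - \sigmahat_{\etahat_2}) \frac{h(\thetahat_{\etahat_2}) - \tau}{\sigmahat_{\etahat_1} \sigmahat_{\etahat_2}},
\]
the second term is \emph{not} $o_P(M^{-1/2}\sigma_D)$ in general: $\sqrt{M}(\sigmahat_{\etahat_1} - \sigmahat_{\etahat_2})$ is $O_P(\zeta_D)$ conditional on $D$, and $\zeta_D/\sigma_D = O_P(1)$ (this is exactly \Cref{lemma.sigma_Op}). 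For any $\tau \neq h(\theta_{\etastarP})$, the factor $h(\thetahat_{\etahat_2}) - \tau$ does not vanish, so the standard-error fluctuations contribute at the leading order. This is why $\sigma_D^2 = 2(v_D^2 + \zeta_D^2 + 2\rho_{v,\zeta})$ in \cref{eq.sigma2D} has three pieces, not just the $2v_D^2$ your numerator-only linearization delivers; studentizing your expression by $\hat{\sigma}_D$ would therefore not give $\cN(0,1)$.

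The paper's proof has a separate Step~Two establishing a linearization for $\sqrt{M}(\sigmahat_{\etahat_1} - \sigmahat_{\etahat_2})$ in terms of $\cV_{\etahat_1}(\thetahat_{\etahat_1}) - \cV_{\etahat_2}(\thetahat_{\etahat_2})$, and Step~Three combines the two pieces before applying the conditional CLT. You also misidentify the role of the Donsker condition on the cross products $\psi_{\theta,\eta,i}\psi_{\theta,\eta,j}$ in \Cref{as.reproducibility}: it is not there to control Lyapunov moments for your $\xi_r(D)$, but precisely to obtain asymptotic equicontinuity of $\hat{\cV}_{\etahat}(\theta)$ so that Step~Two goes through. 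Your argument would be salvageable only under the null $\tau = h(\theta_{\etastarP})$, where $\zeta_D = 0$; for the stated ``any $\tau \in \R$'' you need to carry the denominator term explicitly.
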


I introduce my reproducibility measure for each of the three tests (two-sided and both one-sided tests), where $\Phi$ is the standard normal cdf, and formalize their guarantees in \Cref{th.reproducibility_pvalue}. 
$$\hat{\delta}^\pm(\beta) = 2 \Phi \lrp{-\lrm{\frac{\sqrt{n} (h(\thetahat_{\etahat_1}) - \tau)}{\sigmahat_{\etahat_1}}} - \frac{\sqrt{n} \hat{\sigma}_{D}}{\sqrt{M}} \Phi^{-1}(\beta/2)} - 2 \Phi \lrp{-\lrm{\frac{\sqrt{n} (h(\thetahat_{\etahat_1}) - \tau)}{\sigmahat_{\etahat_1}}}},$$
$$\hat{\delta}^+(\beta) = \Phi \lrp{\frac{\sqrt{n} (h(\thetahat_{\etahat_1}) - \tau)}{\sigmahat_{\etahat_1}} - \frac{\sqrt{n} \hat{\sigma}_{D}}{\sqrt{M}} \Phi^{-1}(\beta)} - \Phi \lrp{\frac{\sqrt{n} (h(\thetahat_{\etahat_1}) - \tau)}{\sigmahat_{\etahat_1}}},$$
$$\hat{\delta}^-(\beta) = \Phi \lrp{-\frac{\sqrt{n} (h(\thetahat_{\etahat_1}) - \tau)}{\sigmahat_{\etahat_1}} - \frac{\sqrt{n} \hat{\sigma}_{D}}{\sqrt{M}} \Phi^{-1}(\beta)} - \Phi \lrp{-\frac{\sqrt{n} (h(\thetahat_{\etahat_1}) - \tau)}{\sigmahat_{\etahat_1}}}.$$

\begin{theorem} \label{th.reproducibility_pvalue}
    \hyperlink{proof.th.reproducibility_pvalue}{(Reproducibility of p-values based on Z-estimators)} \, \\
    Let Assumptions \ref{as.z_estimator} and \ref{as.reproducibility} hold, and $\tau \in \R$. 
    For any $\beta \in (0, 0.5)$ and 
    $$(p_j, \hat{\delta}(\beta)) \in \lrbc{(p_j^+, \hat{\delta}^+(\beta)),(p_j^-, \hat{\delta}^-(\beta)),(p_j^\pm, \hat{\delta}^\pm(\beta))},$$
    it follows that 
    \begin{equation} \label{eq.reproducibility_guarantee}
        P \lrp{p_2 > p_1 + \hat{\delta}(\beta) \Biggm| D} \le \beta + o_P(1),
    \end{equation}
    with equality if $p_j \in \lrbc{p_j^+, p_j^-}$. 
\end{theorem}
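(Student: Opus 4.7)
The plan is to reduce everything to Theorem \ref{th.reproducibility_clt} using the fact that all three p-values are strictly monotone transformations of the corresponding t-statistics. Let $T_j = \sqrt{n}(h(\thetahat_{\etahat_j}) - \tau)/\sigmahat_{\etahat_j}$ and $c_M = \sqrt{n}\,\hat{\sigma}_D/\sqrt{M}$. Theorem \ref{th.reproducibility_clt} gives $c_M^{-1}(T_1 - T_2) \mid D \leadsto \cN(0,1)$ conditional on the data with probability approaching one, and by symmetry together with the continuous mapping theorem the same is true for $c_M^{-1}(T_2 - T_1)$ and for $c_M^{-1}|T_1 - T_2|$ (with limit $|Z|$, $Z \sim \cN(0,1)$). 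Consequently, for any fixed threshold $a \in \R$, the conditional probabilities of events of the form $\{T_1 - T_2 < c_M a\}$ and $\{|T_1 - T_2| > c_M a\}$ converge in probability to $\Phi(a)$ and $2(1 - \Phi(a))$ respectively, since $\Phi$ is continuous at every point.

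For the one-sided case, monotonicity of $\Phi$ together with the definition $\hat{\delta}^+(\beta) = \Phi(T_1 - c_M \Phi^{-1}(\beta)) - \Phi(T_1)$ directly rewrites the event $\{p_2^+ > p_1^+ + \hat{\delta}^+(\beta)\}$ as $\{T_1 - T_2 < c_M \Phi^{-1}(\beta)\}$, so the first conditional limit above with $a = \Phi^{-1}(\beta)$ yields $\beta + o_P(1)$, with equality. The $p^-$ case is symmetric: the event becomes $\{T_2 - T_1 < c_M \Phi^{-1}(\beta)\}$, and the same conclusion follows. For the two-sided case, the same monotonicity rewrites $\{p_2^\pm > p_1^\pm + \hat{\delta}^\pm(\beta)\}$ as $\{|T_2| - |T_1| < c_M \Phi^{-1}(\beta/2)\}$, where the right-hand threshold is negative because $\beta/2 < 1/2$. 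The reverse triangle inequality $|T_2| - |T_1| \geq -|T_1 - T_2|$ then embeds this event in $\{|T_1 - T_2| > -c_M \Phi^{-1}(\beta/2)\}$; applying the $|T_1 - T_2|$ limit above with $a = \Phi^{-1}(\beta/2)$ gives the upper bound $2(1 - \Phi(-\Phi^{-1}(\beta/2))) + o_P(1) = \beta + o_P(1)$.

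Essentially all of the technical work sits in Theorem \ref{th.reproducibility_clt}; the remainder is bookkeeping. Two items warrant a little care. First, Theorem \ref{th.reproducibility_clt} provides weak convergence of the conditional distribution only on a set of probability approaching one, and to turn this into convergence in probability of conditional probabilities at a fixed threshold one invokes the Portmanteau theorem together with continuity of $\Phi$ at that threshold, which is the source of the $o_P(1)$ term. Second, the reason the two-sided bound is not claimed as equality is the slack in the reverse triangle inequality $|T_2| - |T_1| \geq -|T_1 - T_2|$, which is strict whenever $T_1$ and $T_2$ have opposite signs; for the one-sided p-values no such slack exists, so equality holds. These two observations are the only substantive steps beyond what is already established in Theorem \ref{th.reproducibility_clt}, and no additional conditions on the joint law of $(T_1, T_2)$ are needed.
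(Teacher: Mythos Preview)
Your proposal is correct and follows essentially the same route as the paper: invoke Theorem~\ref{th.reproducibility_clt}, use monotonicity of $\Phi$ to translate the p-value events into events on $T_1-T_2$, and for the two-sided case bound $|T_1|-|T_2|$ by $|T_1-T_2|$ via the reverse triangle inequality. One small slip: in the two-sided step you write ``with $a=\Phi^{-1}(\beta/2)$'' but then (correctly) compute $2(1-\Phi(-\Phi^{-1}(\beta/2)))$; the threshold in $\{|T_1-T_2|>c_M a\}$ is $a=-\Phi^{-1}(\beta/2)>0$, so just fix the sign in that sentence.
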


\Cref{th.reproducibility_pvalue} gives a novel measure of reproducibility for p-values based on split-sample Z-estimators. 
The guarantee of reproducibility in \cref{eq.reproducibility_guarantee} is inspired by the definition of $(\xi, \beta)$-reproducibility in \citet{ritzwoller2023reproducible}. 
They provide an algorithm for deciding how many repetitions $M$ of the sample-splitting procedure are necessary to guarantee reproducibility of the average across split-sample statistics. 
This covers, for example, the estimators $\thetahat_{\etahat}^{(1)}$ and $\thetahat_{\etahat}^{(3)}$. 
My approach complements theirs by focusing on reproducibility of inference, examining p-value rather than average statistics. 
My results hold for $\thetahat_{\etahat}^{(2)}$, and the arguments can easily be extended to $\thetahat_{\etahat}^{(1)}$ and $\thetahat_{\etahat}^{(3)}$. 
\citet{ritzwoller2023reproducible}'s procedure takes as input the desired level of reproducibility, and outputs the required number of repetitions $M$ that guarantees such reproducibility. 
My approach takes $M$ as input (assumed ``large''), and outputs a measure of how much reproducibility is guaranteed by such $M$. 
The asymptotic regimes also differ: \citet{ritzwoller2023reproducible} takes the data as fixed and considers that the desired threshold for the variability of the average split-sample statistic is small, while my framework considers $n$ and $M$ large.

The result in \Cref{th.reproducibility_pvalue} relies on choosing $M$ such that $M^{-1} n \sigma^2_D = O_P(1)$. 
In practice, it may be hard to choose $M$ that satisfies this condition since the rate at which $\sigma^2_D \Pto 0$ is in general unknown. 
I show that if $M$ grows too fast, i.e., if $M^{-1} n \sigma^2_D \Pto 0$, the distribution in \Cref{th.reproducibility_clt} collapses and the guarantees in \Cref{th.reproducibility_pvalue} hold conservatively. 
This gives a safe guideline for empirical implementation: choose $M$ to be at least a small fraction of $n$, such as $M= 0.1 n$, and the guarantee in \Cref{th.reproducibility_pvalue} will hold conservatively. 

\begin{theorem} \label{th.reproducibility_m_fast}
    \hyperlink{proof.th.reproducibility_m_fast}{(Reproducibility under $M^{-1} n \sigma^2_D \Pto 0$)} \, \\
    Let Assumptions \ref{as.z_estimator} and \ref{as.reproducibility} hold, replacing \ref{as.reproducibility_rate} with $M^{-1} n \sigma^2_D \Pto 0$. 
    Then, for any $\tau \in \R$,
    $$\lrp{\frac{\sqrt{n} \hat{\sigma}_{D}}{\sqrt{M}}}^{-1} \lrp{\frac{\sqrt{n} (h(\thetahat_{\etahat_{1}}) - \tau)}{\sigmahat_{\etahat_{1}}} - \frac{\sqrt{n} (h(\thetahat_{\etahat_{2}}) - \tau)}{\sigmahat_{\etahat_{2}}}} \Pto 0.$$
    For 
    $$(p_j, \hat{\delta}(\beta)) \in \lrbc{(p_j^+, \hat{\delta}^+(\beta)),(p_j^-, \hat{\delta}^-(\beta)),(p_j^\pm, \hat{\delta}^\pm(\beta))},$$
    and $\beta \in (0, 0.5)$,
    $$P \lrp{p_2 > p_1 + \hat{\delta}(\beta) \Biggm| D} \Pto 0.$$ 
\end{theorem}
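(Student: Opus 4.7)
The plan is to adapt the proof of \Cref{th.reproducibility_clt} to the faster rate regime, and then apply an argument parallel to that of \Cref{th.reproducibility_pvalue} to conclude the bound on p-values. In \Cref{th.reproducibility_clt}, the normalized difference $N_n = t_{\etahat_{1}} - t_{\etahat_{2}}$ is analyzed by expanding each t-statistic via the linearization of the split-sample Z-estimator from \Cref{th.clt_z}. Because \Cref{th.clt_z} targets the data-dependent parameter $\theta_\etahat$, the dominant $1/\sqrt{n}$ empirical-process contribution in $t_{\etahat_{j}}$ depends only on $D$ and cancels in the difference $N_n$ conditional on $D$. What survives is a term driven by fluctuations of $\theta_{\etahat_{1}} - \theta_{\etahat_{2}}$ across the two independent collections of splits $\cR_1,\cR_2$, whose conditional variance given $D$ is matched by the square of the normalizer $\sqrt{n}\hat\sigma_D/\sqrt{M}$, producing the $\cN(0,1)$ limit under the original rate.

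For the first claim, I would show that under $M^{-1} n \sigma_D^2 \Pto 0$ the normalizer itself satisfies $\sqrt{n}\hat\sigma_D/\sqrt{M} \Pto 0$ using consistency of $\hat\sigma_D$ for $\sigma_D$. The main task is to verify that the conditional second moment of $(\sqrt{n}\hat\sigma_D/\sqrt{M})^{-1} N_n$ given $D$ is $o_P(1)$ in this regime. Using the decomposition above, the leading term in $N_n$ is of conditional order $\sqrt{n\sigma_D^2/M}$, and the ratio $(\sqrt{n}\hat\sigma_D/\sqrt{M})^{-1} N_n$ is only conditionally $O_P(1)$ when the rate condition holds with equality; under the strictly faster rate, the sharpened version of the Lindeberg/Chebyshev argument used in the proof of \Cref{th.reproducibility_clt}, combined with $\hat\sigma_D/\sigma_D \Pto 1$, yields an $o_P(1)$ conditional second moment for the ratio. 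Unconditional convergence in probability then follows via bounded convergence applied to the conditional probabilities.

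For the second claim, I would follow the mean-value expansion used in the proof of \Cref{th.reproducibility_pvalue}. Expanding each $p_j$ around its t-statistic shows that the event $\lrbc{p_2 > p_1 + \hat\delta(\beta)}$ reduces, up to $o_P(1)$ corrections, to an event of the form $\lrbc{(\sqrt{n}\hat\sigma_D/\sqrt{M})^{-1}(t_{\etahat_{1}} - t_{\etahat_{2}}) < \Phi^{-1}(\beta)}$, with an analogous two-sided version for $p_j^\pm$. Since the first claim gives $(\sqrt{n}\hat\sigma_D/\sqrt{M})^{-1}(t_{\etahat_{1}} - t_{\etahat_{2}}) \Pto 0$ while $\Phi^{-1}(\beta) < 0$ for $\beta \in (0, 0.5)$, the probability of this event vanishes, which yields $P(p_2 > p_1 + \hat\delta(\beta) \mid D) \Pto 0$.

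The main obstacle is the first step: demonstrating actual collapse of the conditional distribution rather than degeneracy at a non-trivial law. This requires carefully comparing the linearization remainders from \Cref{th.clt_z} against the now-vanishing normalizer $\sqrt{n}\hat\sigma_D/\sqrt{M}$, and using the Donsker-type conditions in \Cref{as.reproducibility} to ensure that neither those remainders nor the estimation error in $\hat\sigma_D$ introduces a spurious non-vanishing contribution once the denominator shrinks strictly faster than in \Cref{th.reproducibility_clt}.
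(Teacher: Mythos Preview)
Your argument for the first claim has the direction of the normalizer backwards, and this breaks the proof. You assert that $\sqrt{n}\hat\sigma_D/\sqrt{M}\Pto 0$ via consistency $\hat\sigma_D/\sigma_D\Pto 1$, and then try to argue that the ratio $(\sqrt{n}\hat\sigma_D/\sqrt{M})^{-1}(t_{\etahat_1}-t_{\etahat_2})$ nevertheless collapses to zero. But these two assertions are mutually inconsistent: if $\hat\sigma_D/\sigma_D\Pto 1$, the decomposition in the proof of \Cref{th.reproducibility_clt} (which normalizes by $\sigma_D$, not by the rate assumption itself) would still produce a nondegenerate $\cN(0,1)$ limit for the ratio, not an $o_P(1)$ one. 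Your own observation that the leading term of $N_n=t_{\etahat_1}-t_{\etahat_2}$ is of conditional order $\sqrt{n\sigma_D^2/M}$ confirms this: dividing by $\sqrt{n}\hat\sigma_D/\sqrt{M}\approx\sqrt{n\sigma_D^2/M}$ yields something $O_P(1)$, and no ``sharpened Lindeberg/Chebyshev'' can turn a properly scaled CLT into a degenerate limit.

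The paper's route is the opposite one. It extracts from the proof of \Cref{th.reproducibility_clt} only that the \emph{un-normalized} difference $t_{\etahat_1}-t_{\etahat_2}$ is $O_P(1)$, and then uses that under the present regime $\sqrt{n}\hat\sigma_D/\sqrt{M}\Pto+\infty$ (made explicit in the p-value step, where $(\sqrt{n}\hat\sigma_D/\sqrt{M})\Phi^{-1}(\beta)\Pto-\infty$). The first display follows because an $O_P(1)$ quantity divided by a divergent normalizer is $o_P(1)$. For the second claim, your reduction of $\{p_2>p_1+\hat\delta(\beta)\}$ to a threshold event on $t_{\etahat_1}-t_{\etahat_2}$ matches the paper's, but the paper concludes by comparing $t_{\etahat_1}-t_{\etahat_2}=O_P(1)$ directly against the divergent threshold $(\sqrt{n}\hat\sigma_D/\sqrt{M})\Phi^{-1}(\beta)\Pto-\infty$, rather than invoking the first display. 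Thus the key missing ingredient in your plan is the behavior of $\hat\sigma_D$ in this regime: it does not track $\sigma_D$, and the normalizer diverges rather than vanishes.
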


\section{Application 1: Poverty Prediction in Ghana} \label{section.application_ghana}

Understanding the drivers of poverty is at the root of much of Development Economics. For research, being able to better predict poverty dynamics is of first-order importance to both form hypotheses and then validate theories that explain poverty and poverty dynamics. For policy, accurate predictions of current or future poverty could enable better targeting of interventions (ideally then combined with causal inference on policies and interventions).

Using a sample of 319 households in urban Accra from the ISSER-Northwestern-Yale Long Term Ghana Socioeconomic Panel Survey (GSPS) \citep{ghanapaneldataset}, I examine how well I can predict which households will be below the poverty line 13 years ahead. 
The outcome of interest is an indicator for whether a household is below the poverty line in the fourth wave of GSPS (2022/2023), and I use covariates measured in wave 1 (2009/2010), that is, 13 years before. 
Of the 319 households, 22 were below the poverty line in wave 4 (around 7\%). 
I use predictive covariates including household demographics, parental education, religion, political and traditional leadership experience, asset holdings, and financial indicators (see \Cref{appendix.ghana} for details). 
Although I focus on the binary indicator of below the poverty line, the approach applies more broadly and could use other outcomes such as level of consumption or assets. 

I estimate two quantities: the mean squared error (MSE) and the fraction in poverty by tercile of predicted probability of being below the poverty line. 
In both cases, I use repeated cross-fitting with $K=3$ and $M=200$, and fit random forest models using the \texttt{R} package \texttt{ranger} implemented through \texttt{mlr3}.
Let $i \in \lrbc{1,\dots,319}$, $Y_i$ denote the indicator of whether household $i$ is below the poverty line in wave 4 of GSPS and $X_i$ the set of covariates measured in wave 1. 
The estimated MSE is given by 
$$\thetahat_{\etahat, {\rm MSE}} = \frac{1}{M} \sum_{r \in \cR} \frac{1}{n} \sum_{i=1}^{n} \lrp{Y_i - \etahat_{\stilde}(X_i)}^2.$$
For $j \in \lrbc{1,2}$, let $\hat{t}_{j,\stilde}$ be the first and second terciles of $(\etahat_{\stilde}(X_i))_{i=1}^n$, that is, 
$$\hat{t}_{j,\stilde} = \inf\left\{t : \frac{1}{|\s|} \sum_{i \in \s} \mathbf{1}\{\etahat_{\stilde}(X_i) \le t\} \ge \frac{j}{3}\right\},$$
and let $\hat{t}_{0,\stilde} = -\infty$, $\hat{t}_{3,\stilde} = \infty$. 
For $j \in \lrbc{1,2,3}$, the fraction in poverty in tercile $j$ of predicted probability of being below the poverty line is given by 
$$\thetahat_{\etahat, {\rm Frac} j} = \frac{1}{M K} \sum_{r \in \cR} \sum_{\s \in r} \frac{\sum_{i \in \s} Y_i \I{\hat{t}_{j - 1,\stilde} < \etahat_{\stilde}(X_i) \le \hat{t}_{j,\stilde}}}{\sum_{i \in \s} \I{\hat{t}_{j - 1,\stilde} < \etahat_{\stilde}(X_i) \le \hat{t}_{j,\stilde}}}.$$
I show in \Cref{appendix.ghana} that $\thetahat_{\etahat, {\rm Frac} j}$ is a Z-estimator. 

I also compare the MSE of the models estimated with random forests to the MSE of using the sample average, as described in \Cref{section.diff_performance}. 
In particular, I report p-values for the test of \Cref{section.diff_onesided_test}. 

I calculate the MSE estimators and the one-sided test both in the real data and in two Monte Carlo designs, described in \Cref{appendix.ghana}. 
The data generating processes are designed to be similar to the original dataset, preserving the empirical marginals and rank-based dependence structure of the observed data. 
In the first design, denoted \textit{Correlated}, the outcome $Y$ is correlated to the covariates $X$. 
In the second design, denoted \textit{Uncorrelated}, the outcome is independent of the covariates. 
I run around 5,000 Monte Carlo iterations for each of the three designs -- real data, ``correlated'' and ``uncorrelated'' simulated data --, drawing 200 new random splits of the sample at each Monte Carlo iteration. 
For the real data, the only source of randomness are the 200 splits, while for the simulation designs I draw a new dataset at each iteration (with 200 splits for each dataset). 
For each simulated dataset and split, I also calculate the difference between top and bottom terciles, $\thetahat_{\etahat, {\rm Frac} 3} - \thetahat_{\etahat, {\rm Frac} 1}$. 

I compare the estimates and p-values of using repeated cross-fitting (RCF) with three alternatives. 
The first is the standard ``twice the median'' (TTM) rule \citep{ruger1978maximale,gasparin2025combining,chernozhukov2025generic}: calculate the p-value (for difference in MSE or ``top minus bottom'' estimator) separately for each fold, that is, using a third of the data, take the median of the 600 p-values (200 repetitions, 3 folds) and multiply it by 2. 
The second is the Sequential Aggregation (Seq) approach of \citet{luedtke2016statistical} and \citet{wager2024sequential}: train a random forest using only fold 1, compute the t-statistic using fold 2, then train a random forest using folds 1 and 2 and compute the t-statistic in fold 3. 
The p-value for each repetition of cross-fitting uses as final t-statistic $\sqrt{2}$ times the average of the two t-statistics. 
Finally, the final p-value for each Monte Carlo iteration is twice the median over the 200 p-values coming from the 200 repetitions, similar to \citet{chernozhukov2025reply}. 
The third method is standard sample-splitting (SS): train a random forest using two thirds of the data, calculate p-value in the excluded third, not aggregating across repetitions.

\begin{figure}[!ht]
{\centering
\includegraphics[width=.99\textwidth]{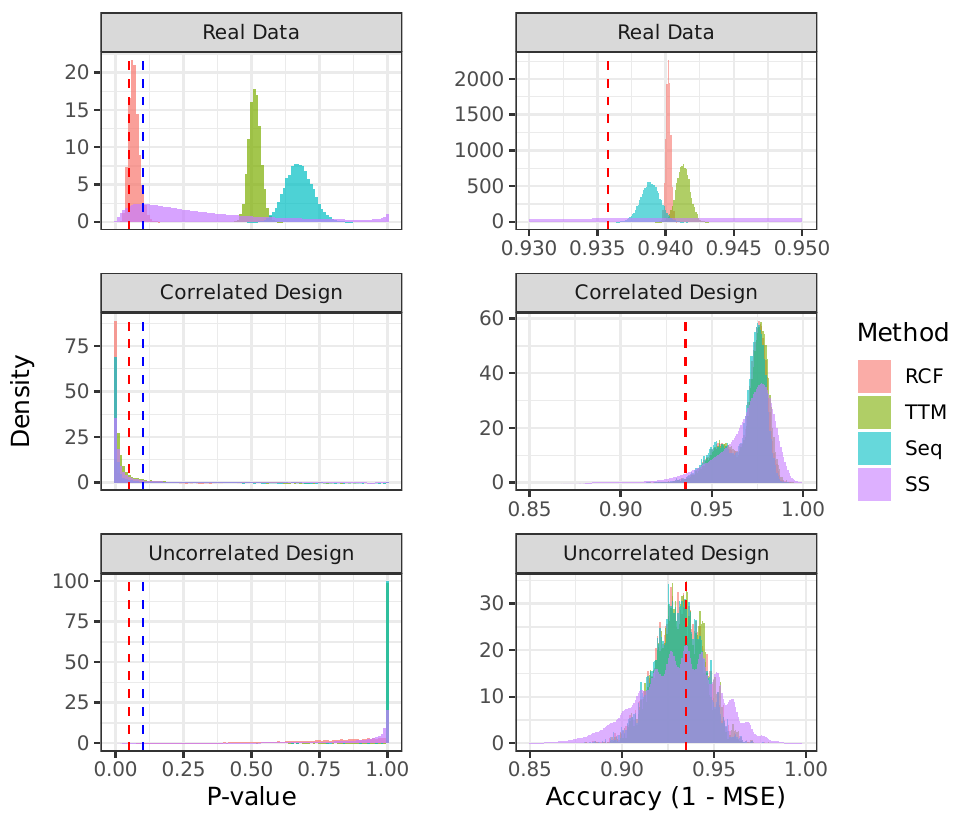}
\caption{Accuracy Comparison Across Methods and Datasets}
\label{fig.mse_combined}}
\vspace{0.3cm}
\footnotesize
Notes: Left panels show distribution across Monte Carlo iterations of p-values for testing whether random forest MSE is lower than sample average MSE. 
Vertical red and blue lines are respectively 0.05 and 0.10. 
Right panels show distribution of accuracy ($1 - MSE$) of the random forest, and vertical red lines are the accuracy of the sample average. 
Rows show results for real data (top), simulations from correlated design (middle), and simulations from uncorrelated design (bottom). 
Methods: RCF (repeated cross-fitting), TTM (twice-the-median), Seq (sequential aggregation), SS (standard sample-splitting).
Top-right panel excludes sample-splitting observations higher than 0.95 or smaller than 0.93 to improve visualization. 
The number of iterations for the real dataset, correlated design and uncorrelated design are, respectively, 11841, 28335, and 7485. 
SS uses the same number of iterations, multiplied by 600 (200 repetitions, 3 folds). 
\end{figure}

\Cref{fig.mse_combined} presents the p-values for whether random forest MSE is lower than sample average MSE, and accuracy ($1 - MSE$) point estimates across Monte Carlo iterations for the two simulation designs as well as for the real data. 
In the uncorrelated design, all methods exhibit similar accuracy on average, with sample-splitting having larger variance since it does not aggregate across multiple splits. 
All methods are conservative: the p-values concentrate around 1. 
For sample-splitting, this happens since the sample average is the best predictor of $Y$ in this design, and the random forests are noisy estimates that have larger MSE. 
The other methods are conservative for the same reason, and TTM and Seq are more conservative since they take twice the median p-value, which guards against the worst DGP. 
For the correlated design, all methods correctly give small p-values, with RCF being more concentrated around zero. 

In the real dataset, Seq often has the smallest accuracy, and TTM the highest, while RCF stands in between. 
Seq has smaller accuracy since one the two models that it averages over is trained with only a third of the data. 
RCF and TTM, on the other hand, always use two thirds of the data for training. 
The only difference between the two numbers is that RCF averages model performances over 200 repetitions while TTM takes the median. 
Hence, the higher accuracy of TTM reflects the distribution of model accuracies being left-skewed. 
Only RCF manages to consistently reject the null, concluding that poverty can be predicted from the observed covariates using a random forest model. 
TTM and Seq are more conservative, with Seq having larger p-values than TTM due to its lower accuracy. 

A comparison similar to \Cref{fig.mse_combined} for the top minus bottom estimator $\thetahat_{\etahat, {\rm Frac} 3} - \thetahat_{\etahat, {\rm Frac} 1}$ is presented in \Cref{fig.tmb_combined}. 

\begin{table}[ht]
\caption{Poverty Prediction by Tercile in Real Dataset}
\label{table.poverty_terciles}
{\centering
\begin{tabular}{lccccc}
\hline\hline
Method & Variable & Estimate & 95\% CI & p-value \\
\hline
RCF & Bottom tercile & 0.046 & [0.007, 0.085] & \\
    & Top tercile & 0.122 & [0.059, 0.184] & \\
    & Top minus bottom & 0.076 & [0.002, 0.150] & 0.023 \\
\addlinespace
TTM & Bottom tercile & 0.056 & [$-0.021$, 0.133] & \\
    & Top tercile & 0.114 & [0.006, 0.223] & \\
    & Top minus bottom & 0.083 & [$-0.052$, 0.208] & 0.228 \\
\addlinespace
Seq & Top minus bottom & -- & -- & 0.150 \\
\hline\hline
\end{tabular}
\par}

\vspace{0.3cm}
\footnotesize
Notes: Estimates of fraction below poverty line by tercile of predicted probability of being below the poverty line. 
Fraction below poverty line in entire sample is around 7\%. 
Bottom tercile corresponds to $\thetahat_{\etahat, {\rm Frac} 1}$ and top tercile to $\thetahat_{\etahat, {\rm Frac} 3}$. 
RCF, TTM and Seq correspond respectively to repeated cross-fitting, twice-the-median, and sequential aggregation. 
All estimates aggregate over 7,104,600 splits. 
\end{table} 

\Cref{table.poverty_terciles} shows the point estimates and CIs for the estimators $\thetahat_{\etahat, {\rm Frac} 1}$, $\thetahat_{\etahat, {\rm Frac} 3}$, and their difference using RCF and TTM in the real dataset, as well as p-values for testing whether the difference between top and bottom groups is positive (that is, top tercile has a larger fraction below the poverty line than the bottom tercile). 
These final estimates aggregate over all the 7,104,600 splits displayed in \Cref{fig.mse_combined}, averaging for RCF and taking the median for TTM. 
I do not display the point estimates for Seq since \citet{wager2024sequential} focuses on testing, but the p-value indicates that the difference between top and bottom groups is not significant. 
\Cref{table.poverty_terciles} shows that the difference between top and bottom terciles is statistically significant only for RCF.

\section{Application 2: Heterogeneous Treatment Effects in Charitable Giving} \label{section.application_hte}

There has been growing interest in the literature for learning features of heterogeneous treatment effects using machine learning (\citealp{chernozhukov2025generic,wager2024sequential,imai2025statistical}; for applications, see, e.g., \citealp{bryan2024big,athey2025machine,johnson2023improving}). 
I revisit the Generic Machine Learning framework of \citet{chernozhukov2025generic} (henceforth CDDF), and propose a new \textit{ensemble} estimator that uses the entire sample for calculating confidence intervals, more data for training machine learning algorithms, and aggregates predictions over multiple ML predictors into an ensemble. 
I first revisit CDDF's approach, and second introduce my ensemble estimator. 
Theoretical properties are delayed to \Cref{appendix.hte}. 
Finally, I compare my estimator to the approaches of CDDF and of \citet{wager2024sequential} in a Monte Carlo design and in an empirical application using data from \citet{karlan2007does}. 
The simulation exercise shows gains in power using the ensemble method, and the ensemble approach is the only to detect statistically significant treatment effect heterogeneity in the empirical application. 

\subsection{The Generic ML Approach of \texorpdfstring{\citet{chernozhukov2025generic}}{Chernozhukov et al.\ (2025b)}} \label{section.revisit_cddf}

CDDF proposed a method for learning features of treatment effect heterogeneity in randomized trials. 
In this section, I focus on their Sorted Group Average Treatment Effects (GATES) estimand. 
This approach consists of using a machine learning (ML) algorithm and pre-treatment covariates to find groups of individuals with larger and smaller average treatment effects (ATEs). 
If such groups exist, this means that treatment effect is heterogeneous and that this heterogeneity can be explained at least in part by observable characteristics. 
Moreover, one can explore how these groups differ in terms of these characteristics. 
They call this last step Classification Analysis (CLAN), and although I focus on GATES to simplify exposition, my results also hold for CLAN. 

First, I define some notation. 
Let $D = \lrp{Y_i, T_i, X_i}_{i=1}^n$ denote the data, where $Y$ is a scalar outcome, $T$ is the treatment assignment indicator, and $X$ is a vector of pre-treatment covariates. 
I assume that $(Y_i, T_i, X_i)$ are drawn i.i.d. from a distribution $P \in \cP$. 
Let $\cA$ denote an ML algorithm, a function that takes a dataset as input, and outputs an estimate of the Conditional Average Treatment Effect (CATE) function,
$$\eta_P(x) = \E[P]{Y(1) - Y(0) | X = x}.$$
For example, $\cA$ could be Causal Forests \citep{wager2018estimation}, or based on Random Forests, Neural Networks, or Gradient Boosting.\footnote{For example, one could use any of these three algorithms to estimate separately the functions $\E[P]{Y(1) | X = x}$ and $\E[P]{Y(0) | X = x}$, and use the difference of the two estimated functions as an estimate of the CATE.}
For any subsample $\s \subseteq \lrbc{1,\dots,n}$, let $D_{\s} = \lrbc{Y_i, T_i, X_i}_{i \in s}$, $\stilde = \lrbc{1,\dots,n} \setminus \s$, and $\etahat_{\stilde} = \cA(D_{\stilde})$, that is, $\etahat_{\stilde}$ is the model trained with algorithm $\cA$ using the subsample $D_{\stilde}$. 

The procedure is given as follows. 
First, take $M$ random subsets of $\lrbc{1,\dots,n}$ of size $\pi n$. 
For each $m=1,\dots,M$, denote the subsample by $\s_m$, where $\s_m \subseteq \lrbc{1,\dots,n}$ and $\lrm{\s_m} = \pi n$. 
For each repetition $m$, call $\s_m$ the main sample, and $\stilde_m = \lrbc{1,\dots,n} \setminus \s_m$ the auxiliary sample. 
For $m=1,\dots,M$, train the model 
\begin{equation} \label{eq.gates_etahat}
    \etahat_{\stilde_m} = \cA(D_{\stilde_m}) 
\end{equation}
using data from the auxiliary sample. 
In the main sample, calculate predicted individual treatment effects (ITEs) $\tauhat_i = \etahat_{\stilde_m}(X_i)$. 
Sort $\lrp{\tauhat_{i}}_{i \in \s}$ into $J$ quantile groups $G_1, \dots, G_J$, where 
\begin{equation} \label{eq.def_gates_groups}
    G_{j} = \lrbc{i \in \lrbc{1,\dots,n} : \tauhat_{i} \in I_j}, 
\end{equation}
with $I_j = [\hat{d}_{j-1}, \hat{d}_{j})$, $-\infty = \hat{d}_0 < \hat{d}_1 < \dots < \hat{d}_J = \infty$, and $(\hat{d}_j)_{j=0}^J$ are calculated such that the number of observations in $(G_j)_{j=1}^J$ is balanced or nearly balanced. 
For example, with $J=4$, $(G_j)_{j=1}^J$ is a partition of the sample into quartiles of $\lrp{\tauhat_{i}}_{i \in s}$.
Calculate the split-specific GATES estimator by running the weighted regression 
\begin{equation} \label{eq.gates_reg}
    Y_i = \alpha Z_i + \sum_{j=1}^{J} \gamma_{j}^{(m)} \lrbk{T_i - p(X_i)} \I{i \in G_j} + \varepsilon_i, \qquad i \in \s_m,
\end{equation}
with weights $\omega_i = \lrbc{p(X_i) \lrbk{1 - p(X_i)}}^{-1}$, where $p(x) = P(T = 1 | X = x)$ is the (known) propensity score. 
These weights guarantee correct identification of ATEs when the propensity score is not constant, that is, it ensures 
$$\gamma_{j}^{(m)} = \E[P]{Y_i(1) - Y_i(0) | i \in G_j}.$$
Denote the estimates by $(\hat{\gamma}_{j}^{(m)})_{j=1}^J$. 
A frequent parameter of interest is 
$$\delta^{(m)} = \gamma_{J}^{(m)} - \gamma_{1}^{(m)},$$
the difference in ATEs between the top and bottom groups of predicted ITEs. 
This parameter can be estimated with the analogue 
$$\hat{\delta}^{(m)} = \hat{\gamma}_{J}^{(m)} - \hat{\gamma}_{1}^{(m)},$$
and a CI can be calculated as usual, 
\begin{equation} \label{eq.gates_ci}
    (L^{(m)}, U^{(m)}) = (\hat{\delta}^{(m)} - z_{1-\alpha/2} \sigmahat^{(m)} / \sqrt{\pi n}, \hat{\delta}^{(m)} + z_{1-\alpha/2} \sigmahat^{(m)} / \sqrt{\pi n}),
\end{equation}
where $\sigmahat^{(m)} / \sqrt{\pi n}$ is a heteroscedasticity-robust standard error for $\hat{\delta}^{(m)}$ calculated as usual from the OLS regression \cref{eq.gates_reg}, and $z_{1-\alpha/2}$ is the $1-\alpha/2$ quantile of the standard normal distribution. 
Finally, the final estimators and CIs are given by 
$$\hat{\delta} = {\rm Med}(\hat{\delta}^{(m)})$$
and
$$(L, U) = \lrp{{\rm Med}(L^{(m)}), {\rm Med}(U^{(m)})},$$
where ${\rm Med}$ denotes the median across repetitions $m$. 
Conditions for the validity of this CI are established in Theorem 4.3 of CDDF. 

This approach carries a tradeoff that's not present in my method, and it considers a single ML algorithm $\cA$. 
The tradeoff regards the choice of $\pi$: a larger $\pi$ means more data is used to estimate the regression \cref{eq.gates_reg}, leading to narrower CIs in \cref{eq.gates_ci}; but fewer data are used to train the ML model in \cref{eq.gates_etahat}, likely yielding a worse estimate of the CATE. 
Moreover, regularity condition R3 in CDDF requires $\pi$ to be relatively small to guarantee that the CI $[L, U]$ covers the median of $\delta^{(m)}$ across all possible splits. 
My ensemble approach presented next avoids this tradeoff since it uses the entire sample for estimation and a larger sample for training. 
The ensemble estimator also incorporates more than one ML algorithm, which is important if one does not want to commit beforehand to any specific algorithm. 
Although CDDF's approach can be repeated with different algorithms, that comes with potential issues of multiple hypothesis testing. 

In the next subsection I propose a new GATES estimator that (i) uses the entire sample to calculate $(\hat{\gamma}_{j})_{j=1}^J$ in \cref{eq.gates_reg}, and (ii) combines predictions from multiple ML algorithms to form an \textit{ensemble}, eliminating the need for algorithm selection. 

\subsection{An Ensemble Estimator} \label{section.ensemble_estimator}

Before defining my ensemble estimator, I introduce some additional notation. 
Theoretical properties are delayed to \Cref{appendix.hte}. 
Let $A$ denote the number of machine learning algorithms that will be used for predicting ITEs. 
For $a=1,\dots,A$, let $\cA_a$ denote an ML algorithm, that is, a function that takes a dataset as input, and outputs an estimate of the CATE. 
For example, one could choose $\cA_1$ to use Random Forests, $\cA_2$ Neural Nets, and $\cA_3$ Gradient Boosting. 
For $\s \subseteq \lrbk{1,\dots,n}$ and $a=1,\dots,A$, let 
$$\etahat_{\s,a} = \cA_a(D_{\s}),$$
that is, $\etahat_{\s,a}$ is the model trained with algorithm $\cA_a$ using the subsample $D_\s$.

The ensemble approach is summarized in \Cref{algo.ensemble_gates}. 
The first difference is that instead of splitting the sample into two sets, I split it into $K$ roughly equal-sized folds $(\s_k)_{k=1}^K$, again repeating the process $M$ times. 
I calculate $A$ predicted ITEs for each individual using the $A$ ML algorithms, trained using all folds except the one that contains observation $i$. 
I denote the predicted ITEs by $\tauhat_{i,a} = \etahat_{\stilde_{k(i)}}(X_i)$, where $k(i)$ is such that $i \in \s_{k(i)}$. 
Then, to calibrate the weights for combining the multiple ML predictions into one, I split the sample again into $L$ different folds, for each repetition $m=1,\dots,M$. 
Let $\{\s'_{\ell}\}_{\ell=1}^L$ denote the $L$ folds ($m$ is not incorporated in the notation to simplify exposition). 
For $\ell=1,\dots,L$, estimate the weighted regression 
\begin{equation} \label{eq.ensemble_weights}
    Y_i = \alpha_1 + \sum_{a = 1}^A \beta_a (\tauhat_{i,a} - \taubar_{a}) \lrbk{T_i - p(X_i)} + \alpha_2 Z_i + \varepsilon_i, \qquad i \in \stilde'_\ell, 
\end{equation}
with weights $\omega_i = \lrbc{p(X_i) \lrbk{1 - p(X_i)}}^{-1}$. 
In \cref{eq.ensemble_weights}, $\taubar_{a} = \frac{1}{n - \lrm{\s'_{\ell}}} \sum_{i \not\in \s'_{\ell}} \tauhat_{i,a}$, $p(X_i)$ is the propensity score, and $Z_i$ is a vector of functions of $X_i$, for example $Z_i = (X_{1,i}, p(X_i))'$, where $X_{1,i}$ is a subset of $X_i$. 
The role of $Z_i$ is only reducing noise in estimation, so this term can be omitted if desired. 
Denote the estimates of $(\beta_{\ell, a})_{a=1}^A$ by $(\betahat_{\ell, a})_{a=1}^A$. 
The final predicted ITE is then given by 
$$\tauhat_{i} = \sum_{a = 1}^A \betahat_{\ell,a} \tauhat_{i,a}, i \in \s'_\ell.$$
Repeating this process for $\ell=1,\dots,L$ gives $\tauhat_{i}$ for every observation. 
I sort $\lrp{\tauhat_{i}}_{i \in \s}$ into groups separately by fold. 
That is, for $k=1,\dots,K$, 
\begin{equation*}
    G_{j,k} = \lrbc{i \in \s_k : \tauhat_{i} \in I_{j,k}}, 
\end{equation*}
with $I_{j,k} = [\hat{d}_{j-1,k}, \hat{d}_{j,k})$, $-\infty = \hat{d}_{0,k} < \hat{d}_{1,k} < \dots < \hat{d}_{J,k} = \infty$, and $(\hat{d}_{j,k})_{j=0}^J$ are calculated such that the number of observations in $(G_{j,k})_{j=1}^J$ is balanced or nearly balanced. 
Finally, the split-specific GATES estimator uses the whole sample, defining 
\begin{equation} \label{eq.def_gates_groups_ensemble}
    G_{j} = \bigcup_{k=1}^K G_{j,k}, 
\end{equation}
and running the weighted regression 
\begin{equation} \label{eq.gates_reg_ensemble}
    Y_i = \alpha Z_i + \sum_{j=1}^{J} \gamma_{j}^{(m)} \lrbk{T_i - p(X_i)} \I{i \in G_j} + \varepsilon_i, \qquad i \in \lrbc{1,\dots,n},
\end{equation}
with weights $\omega_i = \lrbc{p(X_i) \lrbk{1 - p(X_i)}}^{-1}$. 

\begin{algorithm}[!ht]
\caption{Ensemble Method for GATES}
\label{algo.ensemble_gates}
\textbf{Input:} Dataset $D = (Y_i, T_i, X_i)_{i=1}^n$, ML algorithms $(\mathcal{A}_a)_{a=1}^A$, repetitions $M$, number of folds $K$ (training) and $L$ (calibration), number of groups $J$. \\
\textbf{Output:} GATES estimates $(\hat{\gamma}_j)_{j=1}^J$ and standard errors $(\sigmahat_j)_{j=1}^J$
\begin{algorithmic}[1]
\For{$m = 1, \ldots, M$}
    \State \textbf{Train ML models:} Split $D$ into $K$ folds $(\s_{k})_{k=1}^K$
    \For{$k = 1, \ldots, K$ and $a = 1, \ldots, A$}
        \State Train $\etahat_{\stilde_{k},a} = \mathcal{A}_a(D_{\stilde_{k}})$; compute $\tauhat_{i,a} = \etahat_{\stilde_{k},a}(X_i)$ for $i \in \s_k$
    \EndFor
    \State \textbf{Calibrate ensemble:} Split $D$ into $L$ different folds $(\s'_{\ell})_{\ell=1}^L$
    \For{$\ell = 1, \ldots, L$}
        \State Estimate $(\betahat_{\ell, a})_{a=1}^A$ using $D_{\stilde'_{\ell}}$ as in \cref{eq.ensemble_weights}
        \State Compute $\tauhat_{i} = \sum_{a=1}^{A} \betahat_{\ell, a} \tauhat_{i,a}$ for $i \in \s'_\ell$
    \EndFor
    \State \textbf{Compute GATES:} Sort $(\tauhat_i)_{i=1}^n$ into $(G_j)_{j=1}^J$ as in \cref{eq.def_gates_groups_ensemble}
    \State Estimate $(\hat{\gamma}_j^{(m)}, \sigmahat_j^{(m)})_{j=1}^J$ with \cref{eq.gates_reg_ensemble}
\EndFor
\State Compute: $(\hat{\gamma}_j)_{j=1}^J = \frac{1}{M} \sum_{m=1}^M (\hat{\gamma}_j^{(m)})_{j=1}^J$, $(\sigmahat_j)_{j=1}^J = \frac{1}{M} \sum_{m=1}^M (\sigmahat_j^{(m)})_{j=1}^J$
\State \Return $(\hat{\gamma}_j, \sigmahat_j)_{j=1}^J$
\end{algorithmic}
\end{algorithm}

\cref{eq.ensemble_weights} is very close to the Best Linear Predictor (BLP) regression of CDDF, except that it uses the $A$ predicted ITEs instead of just one. 
The intuition behind \cref{eq.ensemble_weights} is that $(\beta_a)_{a=1}^A$ are the best linear predictor coefficients of a regression where the true CATE $\eta_P(X_i)$ is the response variable, and $(\tauhat_{i,a})_{a=1}^A$ are the independent variables (see Theorem 3.1 of CDDF). 
Hence, $\sum_{a = 1}^A \beta_a \tauhat_{i,a}$ is the best linear approximation of $\eta_P(X_i)$ given $(\tauhat_{i,a})_{a=1}^A$. 

The final estimator averages over repetitions, 
$$\hat{\delta}_\etahat = \hat{\delta} = \frac{1}{M} \sum_{m=1}^M \hat{\delta}^{(m)},$$ 
where, as before, $\hat{\delta}^{(m)} = \hat{\gamma}_{J}^{(m)} - \hat{\gamma}_{1}^{(m)}$, with $\hat{\gamma}_{J}^{(m)}$ and $\hat{\gamma}_{1}^{(m)}$ being the estimates from \cref{eq.gates_reg_ensemble}. 
The final standard error is 
\begin{equation} \label{eq.ensemble_se}
    \sigmahat_\etahat = \sigmahat = \frac{1}{M} \sum_{m=1}^M \frac{\sigmahat^{(m)}}{\sqrt{n}},
\end{equation}
where $\sigmahat^{(m)} / \sqrt{n}$ is a heteroscedasticity-robust standard error for $\hat{\delta}^{(m)}$ calculated as usual from the OLS regression \cref{eq.gates_reg_ensemble}. 
The parameter of interest is 
$$\delta_\etahat = \delta = \frac{1}{M} \sum_{m=1}^M \gamma_{J}^{(m)} - \gamma_{1}^{(m)},$$
where $\gamma_{J}^{(m)}$ and $\gamma_{1}^{(m)}$ are defined in \cref{eq.gates_reg_ensemble}. 

\subsection{Application to Charitable Giving and Monte Carlo Experiments}

I compare my new ensemble approach to two alternative methods in an empirical application and in Monte Carlo experiments. 
I revisit \citet{karlan2007does}, which sent fundraising letters to prior donors of a liberal nonprofit organization in the United States, randomizing the match ratio offered (1:1, 2:1, or 3:1) versus no match for a control group. 
I pool all match treatments into a single treatment group, focusing on the binary treatment of receiving any match offer versus none.
The outcome of interest is the amount donated in dollars. 
The predictive covariates I use include individual donation history (frequency, recency, amount), gender, state-level political variables (Bush vote share, count of court cases in which the organization was either a party to or filed a brief), and zip code-level demographics and economics (race, age, household size, income, homeownership, education, urbanization) (see \Cref{appendix.hte} for details).
I focus on the subset of 6,419 donors who donated within the last two months, as they were more responsive to the solicitation and the smaller sample facilitates computation of the Monte Carlo experiments. 

I compare the ensemble with CDDF's approach, described in \Cref{section.revisit_cddf}, and the sequential aggregation approach of \citet{luedtke2016statistical}, \citet{wager2024sequential}, and \citet{chernozhukov2025reply}. 
Sequential aggregation (Seq) consists of splitting the sample into $K$ folds, for $k=2,\dots,K$ train an ML model using folds $1$ through $k-1$, and compute GATES in the $K$-th fold. 
The final estimator is the average over the $K-1$ estimates, and the p-value uses the final t-statistic equal to $\sqrt{K-1}$ times the average of the fold-specific t-statistics. 
This approach uses more data for calculating GATES and p-values ($n (K-1)/K$ observations), but trains some ML models using fewer data (the first model uses $n/K$ observations). 
I aggregate the final estimates and p-values taking the median over $M$ repetitions as in \citet{chernozhukov2025reply}. 

I compute the three approaches across four designs: (i) using the real data (real), (ii) using the real data but shuffling the treatment assignment indicator at random (so there is no treatment effect heterogeneity) (real-shuffled), (iii) drawing from a DGP where treatment effect is partially predictable using covariates (mc-hte), (iv) drawing from a DGP where treatment effect heterogeneity is independent of covariates (mc-nohte). 
The two DGPs are meant to be similar to the real data, preserving the marginal distributions of covariates and rank-correlation structure, as described in \Cref{appendix.hte}. 
Across all methods and datasets, at each Monte Carlo iteration I use 100 repetitions of sample-splitting, take random samples (without replacement) of sizes $n=500,1000,2000,6419$ (entire dataset), and compare the number of folds $K=2,3,5,10$ (for CDDF, the ML is trained with $n(K-1)/K$ observations and GATES calculated in the remaining sample). 
For Ensemble, I draw at random between 1 and 4 ML algorithms among 10 popular algorithms available in \texttt{R}'s \texttt{mlr3verse}: XGBoost (\texttt{xgboost}), Random Forest (\texttt{ranger}), Neural Networks (\texttt{nnet}), Elastic Net (\texttt{glmnet}), k-Nearest Neighbors (\texttt{kknn}), Linear Regression (\texttt{lm}), Decision Trees (\texttt{rpart}), Fast Nearest Neighbors (\texttt{fnn}), Multivariate Adaptive Regression Splines (\texttt{earth}), and Gradient Boosting (\texttt{gbm}).
For CDDF and Seq, I draw one of the same ten algorithms at random, for each Monte Carlo iteration. 
I show the number of iterations used for each specification in \Cref{tab.iterations} in the appendix. 

\begin{figure}[!ht]
\includegraphics[width=\textwidth]{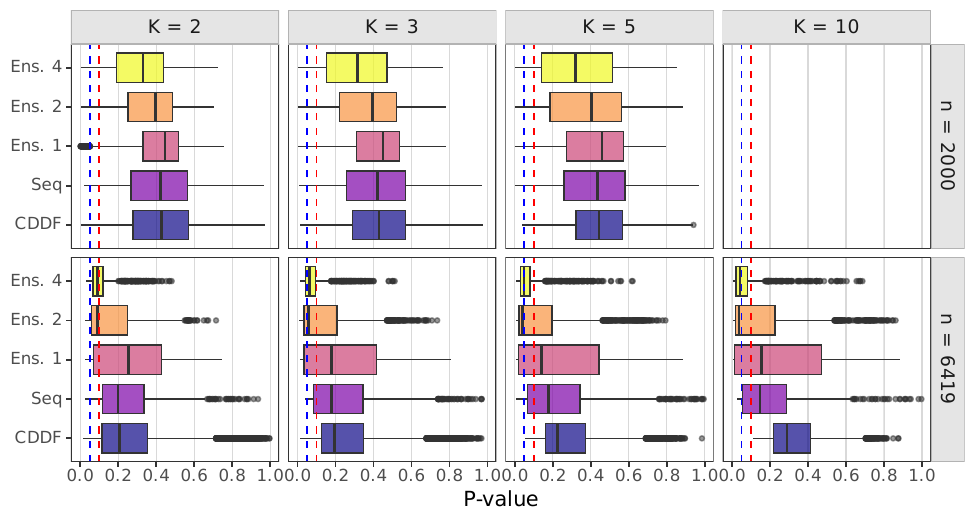}
\caption{Distribution of p-values for Top - Bottom GATES Groups -- Real Dataset}
\label{fig.hte_real}
\vspace{0.3cm}
\footnotesize
Notes: Distribution of one-sided p-values for testing whether the top tercile has a larger ATE than the bottom tercile across Monte Carlo iterations using the real dataset. 
Rows show different sample sizes ($n = 2000, 6419$), columns show different numbers of folds ($K = 2, 3, 5, 10$). 
Each box represents the distribution across Monte Carlo iterations with 100 repetitions of sample-splitting per iteration. 
Sources of randomness are the subsample when $n=2000$, which ML algorithms are used, and how the data are split. 
Red dashed line at 0.1, blue dashed line at 0.05. 
Specifications with $K = 10, n = 2000$ are excluded. 
\end{figure}

\Cref{fig.hte_real} shows the gains in power of using the ensemble method in the real dataset. 
It displays boxplots of one-sided p-values for testing whether the top tercile of predicted treatment effects has a larger ATE than the bottom tercile. 
A small p-value means rejecting the null hypothesis of no detectable treatment effect heterogeneity. 
With $n=6419$ (the entire dataset), Ensemble with 4 algorithms detects treatment effect heterogeneity at the 10\% level in more than 75\% of the iterations. 
Seq and CDDF give p-values above 10\% in most iterations. 
None of the methods are powered enough to reject the null consistently with $n = 2000$. 

\begin{figure}[!ht]
\includegraphics[width=\textwidth]{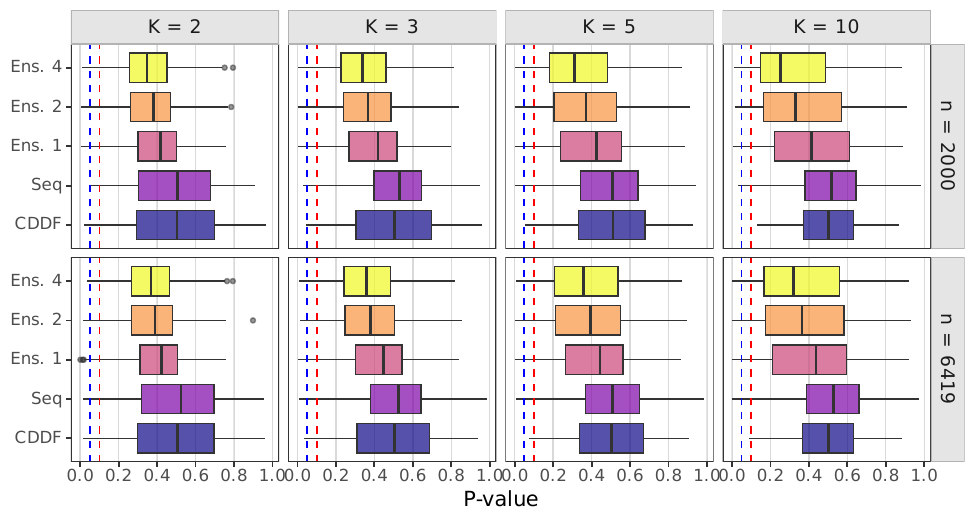}
\caption{Distribution of p-values for Top - Bottom GATES Groups -- Synthetic DGP with no Heterogeneity}
\label{fig.hte_fake_nohte}
\vspace{0.3cm}
\footnotesize
Notes: Distribution of one-sided p-values for testing whether the top tercile has a larger ATE than the bottom tercile across Monte Carlo iterations using the real dataset. 
Rows show different sample sizes ($n = 2000, 6419$), columns show different numbers of folds ($K = 2, 3, 5, 10$). 
Ens. 1, Ens. 2, and Ens. 4 represent the Ensemble method using respectively 1, 2, and 4 algorithms. 
Each box represents the distribution across Monte Carlo iterations with 100 repetitions of sample-splitting per iteration. 
Boxplots show the median (center line), interquartile range (box), and whiskers extending to 1.5 times the IQR, with points beyond shown as outliers. 
Data is generated from a synthetic DGP where there is no explainable treatment effect heterogeneity (\Cref{appendix.hte}). 
Red dashed line at 0.1, blue dashed line at 0.05. 
Specifications with $K = 10, n = 2000$ are excluded.
\end{figure}

\begin{figure}[!ht]
\includegraphics[width=\textwidth]{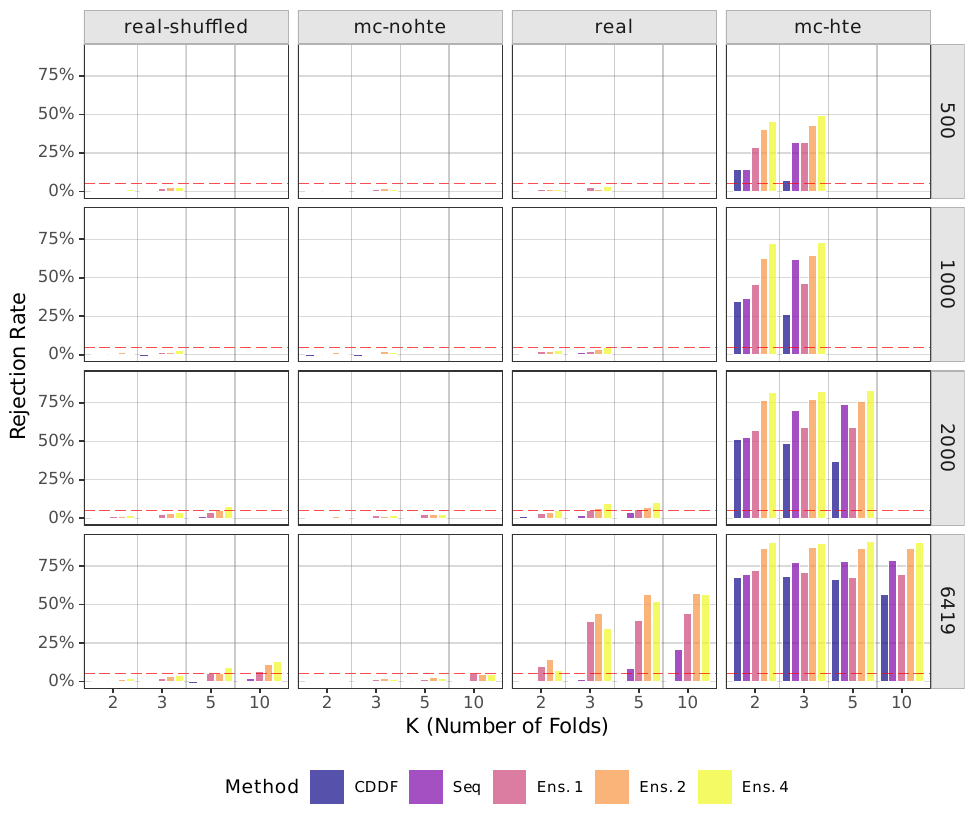}
\caption{Rejection probabilities for Top - Bottom GATES Groups at 5\% Significance Level}
\label{fig.hte_rej}
\vspace{0.3cm}
\footnotesize
Notes: Percentage of Monte Carlo iterations with p-value below 5\% for testing whether the top tercile has a larger ATE than the bottom tercile. 
Rows show different sample sizes ($n = 500, 1000, 2000, 6419$), columns show what simulation design is used. 
Specifications with $K \ge 5, n \le 1000$ and $K = 10, n = 2000$ are excluded.
\end{figure}

\Cref{fig.hte_fake_nohte} is similar to \Cref{fig.hte_real}, except that it uses the synthetic DGP where there is no detectable heterogeneity. 
It shows that all methods correctly fail to reject the null in most iterations. 
Similar figures for designs real-shuffled and mc-hte are presented in \Cref{appendix.hte}. 

\Cref{fig.hte_rej} shows the rejection probabilities at the 5\% significance level, that is, the percentage of iterations with p-value below 5\%. 
For the two datasets with no detectable heterogeneity, real-shuffled and mc-nohte, all methods are conservative when $K=2$ or $K=3$, they yield rejection probabilities below the nominal level. 
In the real-shuffle design with $n=6419$ and $K=5$ or $K=10$, the ensemble methods reject the null with probability slightly higher than nominal, but smaller than 10\%. 
With $n=2000$, only Ensemble 4 rejects the null with probability higher than nominal with $K \ge 5$ in the real-shuffled design. 
In the real dataset, CDDF almost never detects HTE, and Seq detects in less than 20\% of iterations with $K=10$ and $n=6419$. 
The ensemble methods have higher power especially in the specifications using the entire dataset. 
For example, Ensemble 2 detects heterogeneity in around 50\% of iterations with $K=3$ folds. 
In the synthetic dataset where there is detectable heterogeneity, mc-hte, as well as in the real data, Ensemble 2 and 4 have higher power across all specifications. 

As I discuss in \Cref{appendix.hte}, the rejection probability under the null of no detectable heterogeneity could in principle be above the nominal level when using the normal approximation CI. 
In \Cref{appendix.test_hte}, I propose an alternative CI that controls size under the null, at the expense of being more conservative and requiring more computational time. 
However, I note that extensive simulation experiments, including but not limited to the design of \Cref{fig.hte_rej}, suggest that Ensemble 4 is conservative for relatively small values of $K$. 
Hence, my recommendation for empirical practice is to use the normal approximation CI with Ensemble 4 and $K=3$.

\section{Conclusion} \label{section.conclusion}

As predictive algorithms become increasingly popular, using the same dataset to both train and test a new model has become routine across research, policy, and industry. 
I derived a new inference approach on model properties that averages across several splits of the sample, where at each split one part is used to train a model and the remaining to evaluate it. 
Compared to a standard 50-50 sample-splitting, my approach improves statistical and modeling power by using more data for training and evaluating, and improves reproducibility, so two researchers using different splits are more likely to reach the same conclusion about statistical significance.
Although the practice of averaging over multiple splits is not new, the confidence intervals and establishing their validity appears to be new. 

I addressed the main technical challenge, the dependence created by reusing observations across splits, by proving a central limit theorem for the large class of split-sample Z-estimators. 
Leveraging the data-dependent parameter of interest, my CLT does not require restricting the complexity of the model or its convergence rate, unlike in the classic semiparametrics problem that used cross-fitting and focused on a different parameter that is not data-dependent. 
This generality is important as it allows the model to be learned with potentially complex machine learning algorithms, as is commonly done across research, policy, and industry. 

Using the CLT, I constructed CIs based on the normal approximation that are valid in a large class of problems, and documented cases where this approximation may fail to cover the parameter of interest at nominal rate. 
I provided a new approach to inference for such problems, focusing on the particular case of inference when comparing the performance between two models.
The approach builds on my CLT, and I discussed how the arguments can be extended to other problems.
I also provided a general approach that allows the moment functions to have zero limit variance in \Cref{section.fast_convergence}, by exploring the faster-than-$\sqrt{n}$ convergence of the empirical moment equations and a tuning parameter. 

In \Cref{section.reproducibility}, I derived a new reproducibility measure for p-values calculated with split-sample Z-estimators. 
This measure is especially useful when computational resources are limited, quantifying whether a given number of split-sample repetitions suffices for two researchers using different splits to reach similar conclusions about statistical significance with high probability. 

Finally, I illustrated the empirical implications of my results by revisiting two important problems in development and public economics: predicting poverty and learning heterogeneous treatment effects in randomized experiments. 
Using a panel from Ghana \citep{ghanapaneldataset} and Monte Carlo experiments, repeated cross-fitting performed better than previous alternatives in detecting predictive power for being below the poverty line 13 years ahead. 
For the heterogeneous treatment effects application, I developed a new \textit{ensemble} method that uses the entire sample for evaluation, more data for training, and combines multiple machine learning predictors. 
I revisited \citet{karlan2007does}'s experiment on charitable giving and conducted Monte Carlo simulations. 
In both cases, my ensemble method achieved improved power for detecting heterogeneous treatment effects compared to previous alternatives.

\clearpage
\renewcommand{\baselinestretch}{1}
\bibliography{refs.bib}
\clearpage

\begin{appendices}

\crefalias{section}{appendix}
\crefalias{subsection}{appendix}
\crefalias{subsubsection}{appendix}

\section{Bounding the Performance of Average Model} \label{appendix.etabar}

Let $Y$ be a scalar outcome, $X$ a set of covariates, and $(\etahat_{\stilde})_{\s \in \cS}$ be a collection of models estimated through multiple splits of the sample, where $\stilde$ is the complement of $\s$, as in \Cref{section.setup}. 
For example, $\cS$ can be a vectorization of $\cR$ defined in \Cref{section.setup}, $\cS = \lrp{\s_{m,k}}_{m \in \lrbk{M}, k \in \lrbk{K}}$. 
Denote $\etabar(x) = \frac{1}{|\cS|} \sum_{\s \in \cS} \etahat_{\stilde}(x)$. 
If $Y$ is binary, some algebra manipulation gives the following equalities:
\begin{align*}
    \theta_{\etabar,1} = \int \lrm{y - \etabar(x)} dP(y,x) & = \frac{1}{|\cS|} \sum_{\s \in \cS} \int \lrm{y - \etahat_{\stilde}(x)} dP(y,x) = \theta_{\etahat,1}, \\
    \theta_{\etabar,2} = \int \lrp{y - \etabar(x)}^2 dP(y,x) & = \frac{1}{|\cS|} \sum_{\s \in \cS} \int \lrp{y - \etahat_{\stilde}(x)}^2 dP(y,x) = \theta_{\etahat,2}.
\end{align*}
Hence, one can use either $\etabar$ or a model $\etatilde(x)$ that takes value in $(\etahat_{\stilde}(x))_{\s \in \cS}$ uniformly at random, and both will yield the same out-of-sample mean absolute deviation and mean squared error. 

For the general case, if $Y$ is continuous, an application of the triangle inequality establishes a risk-contraction property for $\etabar$:
\begin{align*}
    \theta_{\etabar,1} = \int \lrm{y - \etabar(x)} dP(y,x) & \le \frac{1}{|\cS|} \sum_{\s \in \cS} \int \lrm{y - \etahat_{\stilde}(x)} dP(y,x) = \theta_{\etahat,1}, \\
    \theta_{\etabar,2} = \sqrt{\int \lrp{y - \etabar(x)}^2 dP(y,x)} & \le \frac{1}{|\cS|} \sum_{\s \in \cS} \sqrt{\int \lrp{y - \etahat_{\stilde}(x)}^2 dP(y,x)} = \theta_{\etahat,2}.
\end{align*}
Similar results hold for other distance-based functional forms where the triangle inequality applies. 
Although my framework does not cover the parameters $\theta_{\etabar,1}$ and $\theta_{\etabar,2}$, it covers $\theta_{\etahat,1}$ and $\theta_{\etahat,2}$, which are upper bounds on the error rate of using model $\etabar$. 
Hence, if one uses model $\etabar$ for out-of-sample prediction, they have the guarantee that its accuracy will be at least as large (error at least as small) as the error they can estimate, $\theta_{\etahat,1}$ or $\theta_{\etahat,2}$. 
Note that the root mean squared error estimand $\theta_{\etahat,2}$ is similar although different from the one discussed in \Cref{section.intro}. 
In this case, the estimator is also covered by \Cref{section.z_estimators} and given by 
$$\thetahat_{\etahat, 2} = \frac{1}{|\cS|} \sum_{\s \in \cS} \sqrt{\frac{1}{|\s|} \sum_{i \in \s} (Y_i - \etahat_{\stilde}(X_i))^2}.$$

\section{Proofs and Extra Definitions} \label{app.proofs}

The following notation is used throughout the proofs. 
If unspecified, $X \Pto Y$ denotes convergence in probability uniformly in $P \in \cP$, that is, for every $\varepsilon > 0$, $\sup_{P \in \cP} P \lrp{\lrm{X - Y} > \varepsilon} \to 0$. 
$X_n = o_P(a_n) \iff X_n / a_n \Pto 0$. 
$X_n = O_P(a_n) \iff (\forall \varepsilon > 0, \exists M > 0 \text{ and } N > 0 \text{ s.t. } n > N \implies \sup_{P \in \cP} P \lrp{\lrm{\frac{X_n}{a_n}} > M} < \varepsilon)$. 
$\leadsto$ means weak convergence uniformly in $P \in \cP$.

\subsection{Proofs and Extra Definitions of Section \ref{section.z_estimators}} \label{appendix.proofs_z_estimators}

Define 
$$\Psi_{\etahat}(\theta) = \frac{1}{M K} \sum_{r \in \cR} \sum_{\s \in r} \Psi_{\etahat_{\stilde}}(\theta),$$
$$\Psihat_{\etahat}(\theta) = \frac{1}{M K} \sum_{r \in \cR} \sum_{\s \in r} \Psihat_{\s,\etahat_{\stilde}}(\theta),$$
$$\Psihat_{\etastarP}(\theta) = \frac{1}{M K} \sum_{r \in \cR} \sum_{\s \in r} \Psihat_{\s,\etastarP}(\theta),$$
$$\Psidot_{\etahat}(\theta) = \frac{1}{M K} \sum_{r \in \cR} \sum_{\s \in r} \Psidot_{\etahat_{\stilde}}(\theta),$$
$$\Psidot_{\etahat} = \Psidot_{\etahat}(\theta_{\etahat}),$$
where $\Psidot_{\etahat_{\stilde}}(\theta)$ is the Jacobian matrix of $\Psi_{\eta}(\theta)$, its derivative in $\theta$. 

\begin{assumption} \label{as.z_estimator_technical}
    For some $\Theta' \subseteq \Theta$, the following conditions hold: 
    \begin{assumptionenum}
        \item \label{as.z_donsker} $\lrbc{\theta_{\etastarP} \in \Theta: P \in \cP} \subseteq \text{int}\lrp{\Theta'}$, and the classes $\cF_\eta = \lrbc{\psi_{\theta, \eta, j} : \theta \in \Theta'}$ are P-Donsker uniformly in $P \in \cP$ and $\eta \in H$ in the sense defined in \Cref{as.emp_proc} with $T = \Theta'$, where $j = 1, \dots, d$, and $\psi_{\theta, \eta, j}$ is the $j$-th coordinate of $\psi_{\theta, \eta}$;
        \item \label{as.z_approx_0} The estimators $\thetahat_{\etahat}^{(1)},\thetahat_{\etahat}^{(2)},\thetahat_{\etahat}^{(3)}$ satisfy 
        
        $$\sqrt{n} \norm{\Psihat_{\s,\etahat_{\stilde}}(\thetahat_{\etahat_{\stilde}}^{(1)})} \Pto 0 \;\; \forall \s \in \lrp{\s_{m,k}}_{m \in \lrbk{M}, k \in \lrbk{K}},$$
        $$\sqrt{n} \norm{\frac{1}{M K} \sum_{r \in \cR} \sum_{\s \in r} \Psihat_{\s,\etahat_{\stilde}}(\thetahat_{\etahat}^{(2)})} \Pto 0,$$
        $$\sqrt{n} \norm{\frac{1}{K} \sum_{\s \in r} \Psihat_{\s,\etahat_{\stilde}}(\thetahat_{\etahat_{r}}^{(3)})} \Pto 0 \;\; \forall r \in \cR,$$
        uniformly in $P \in \cP$; 
        \item \label{as.z_unique_0} For every $\varepsilon > 0$, 
        $$\sup_{P \in \cP} \sup_{\norm{\theta - \theta_{\etastarP}} > \varepsilon} - \norm{\Psi_{\etastarP}(\theta)} < 0 = \norm{\Psi_{\etastarP}(\theta_{\etastarP})};$$
        \item \label{as.z_psidot_etastar} For $\etatilde = \cA(D)$, 
        $$\norm{\Psidot_{\etatilde} - \Psidot_{\etastarP}} \Pto 0$$
        uniformly in $P \in \cP$;
        \item \label{as.z_jacobian} $\Psi_{\eta}$ is differentiable at $\theta_{\eta}$ for $\eta \in H$, and for some $\bar{c}_1 > 0$, 
        $$\inf_{P \in \cP} \lrm{\det \lrp{\Psidot_{\etastarP}}} \ge \bar{c}_1.$$
    \end{assumptionenum}
\end{assumption}

\Cref{as.z_donsker} is a Donsker condition for a subset $\Theta'$ that contains $\theta_{\etastarP}$ in its interior. 
Importantly, \Cref{as.emp_proc}, defined in \Cref{section.general}, does not restrict the complexity of the class of trained models $H$, and it allows $\etahat$ to be estimated with any machine learning algorithm as long as \Cref{as.z_etahat} holds. 
It restricts the complexity of $\psi_{\theta,\eta}$ only along $\theta \in \Theta'$, and not along $\eta \in H$. 
\Cref{as.z_donsker} holds, for example, if $\Theta'$ is bounded and $\psi_{\theta,\eta}$ is Lipschitz in $\theta$ with a Lipschitz constant that does not depend on $\eta$ or $w$. 
\Cref{as.z_approx_0} allows for approximate Z-estimators which nearly solve the moment condition, and is immediately satisfied for exact Z-estimators, for example when 
$$\frac{1}{M K} \sum_{r \in \cR} \sum_{\s \in r} \Psihat_{\s,\etahat_{\stilde}}(\thetahat_{\etahat}^{(2)})=0$$
in the case of $\thetahat_{\etahat}^{(2)}$. 
\Cref{as.z_unique_0} requires $\theta_{\etastarP}$ to be a unique and well-separated zero of $\Psi_{\etastarP}$, and can be replaced by the higher-level condition that $\lVert \thetahat_{\etahat}^{(j)} - \theta_{\etahat}^{(j)} \rVert \Pto 0$ uniformly in $P \in \cP$ for $j \in \lrbc{1,2,3}$. 
\Cref{as.z_psidot_etastar} holds under the condition that $\Psidot_{\etastarP}$ is continuous in $\eta$ around $\etastarP$. 
Finally, \Cref{as.z_jacobian} requires the absolute determinant of the Jacobian to be bounded away from zero, which guarantees its invertibility in a uniform sense over $P \in \cP$.

\begin{lemma} \label{lemma.z_unif_conv}
    Let Assumptions \ref{as.z_estimator} and \ref{as.z_estimator_technical} hold. 
    Then, uniformly in $P \in \cP$,  
    \begin{align}
        \sup_{\theta \in \Theta'} \norm{\Psihat_{\etahat}(\theta) - \Psi_{\etahat}(\theta)} & \Pto 0 \label{eq.z_unif_conv.1} \\
        \sup_{\theta \in \Theta'} \norm{\Psihat_{\etastarP}(\theta) - \Psi_{\etastarP}(\theta)} & \Pto 0 \label{eq.z_unif_conv.2} \\
        \sup_{\theta \in \Theta'} \norm{\Psi_{\etahat}(\theta) - \Psi_{\etastarP}(\theta)} & \Pto 0 \label{eq.z_unif_conv.3}
    \end{align}
\end{lemma}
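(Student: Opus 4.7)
I would prove the three statements in the order (1), (2), (3), where (2) falls out as a special case of (1) and (3) requires a separate argument. The unifying idea for (1) and (2) is to exploit, within each split, the independence between the ``evaluation'' observations $\{W_i\}_{i \in \s}$ and the trained model $\etahat_{\stilde}$: conditioning on $D_{\stilde}$ freezes $\etahat_{\stilde}$ and reduces each split's contribution to a classical empirical process indexed by $\theta \in \Theta'$ at a fixed element of $H$. For (3), the argument is purely at the population level and hinges on turning the pointwise stability condition \Cref{as.z_etahat} into uniform convergence over $\theta \in \Theta'$ via equicontinuity.

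For (1), fix a split $\s \in r \in \cR$ and condition on $D_{\stilde}$. Then $\etahat_{\stilde}$ is nonrandom, and $\{W_i\}_{i \in \s}$ is an iid sample of size $|\s|$ from $P$ independent of $\etahat_{\stilde}$. By \Cref{as.z_donsker}, $\cF_\eta = \{\psi_{\theta,\eta,j}: \theta \in \Theta'\}$ is Donsker uniformly in $P \in \cP$ and $\eta \in H$, which implies a uniform Glivenko--Cantelli property; combined with the envelope provided by \Cref{as.z_2plusdelta}, this yields $\E \sup_{\theta \in \Theta'} \norm{\Psihat_{\s,\etahat_{\stilde}}(\theta) - \Psi_{\etahat_{\stilde}}(\theta)} \to 0$ uniformly in $P$, conditional on $D_{\stilde}$, at a rate depending only on $|\s|$. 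Taking unconditional expectations preserves the bound, and a triangle inequality over the $MK$ splits followed by Markov's inequality delivers (1). This works even if $M \to \infty$ because all summands are identically distributed and bounded in $L^1$ uniformly in $P$. Statement (2) is the same argument with $\etahat_{\stilde}$ replaced throughout by $\etastarP$, which is deterministic once $P$ is fixed, so no conditioning is needed.

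For (3), I would first prove pointwise convergence and then upgrade it to uniform convergence. Fix $\theta \in \Theta'$ and let $\etatilde = \cA(D)$ with $W \perp D$, $W \sim P$. By \Cref{as.z_etahat} the random variable $|\psi_{\theta,\etatilde}(W) - \psi_{\theta,\etastarP}(W)|$ converges to $0$ in probability uniformly in $P$; by \Cref{as.z_2plusdelta} its $(1+\delta/2)$-th absolute moment is uniformly bounded, so it is uniformly integrable. Vitali's convergence theorem then yields $\E |\psi_{\theta,\etatilde}(W) - \psi_{\theta,\etastarP}(W)| \to 0$ uniformly in $P$; Fubini and Jensen give $|\Psi_{\etatilde}(\theta) - \Psi_{\etastarP}(\theta)| \Pto 0$. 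The same holds with $\etatilde$ replaced by each $\etahat_{\stilde}$ since $|\stilde| \to \infty$ under both sample-splitting and cross-fitting. To upgrade to uniform convergence in $\theta$, I would use the equicontinuity that accompanies the Donsker property in \Cref{as.z_donsker}: $\Theta'$ is totally bounded in the $L^2(P)$-pseudo-metric induced by $\cF_\eta$ \emph{uniformly} in $\eta \in H$, so finitely covering $\Theta'$ by $\varepsilon$-balls at centers $\theta_1,\dots,\theta_N$ and applying the pointwise convergence at each $\theta_k$ suffices. Averaging over splits via the triangle inequality concludes (3).

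\textbf{Main obstacle.} The hardest step is the uniform-in-$\theta$ upgrade in (3): \Cref{as.z_etahat} is stated only pointwise in $\theta$, so I must lean on the equicontinuity of the uniform Donsker class in \Cref{as.z_donsker} and verify that the modulus of continuity of $\theta \mapsto P\psi_{\theta,\eta}$ is controlled independently of $\eta \in H$ and $P \in \cP$. A secondary subtlety is ensuring that every intermediate stochastic bound (Glivenko--Cantelli, Vitali's theorem, Markov) holds uniformly in $P$, which is ultimately delivered by the uniform versions of Assumptions \ref{as.z_2plusdelta}, \ref{as.z_etahat}, and \ref{as.z_donsker}, but requires care when stringing them together over the (possibly diverging) number of splits.
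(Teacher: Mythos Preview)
Your proposal is correct and, for \cref{eq.z_unif_conv.3}, essentially identical to the paper's argument: both combine pointwise-in-$\theta$ convergence (from \Cref{as.z_etahat}, upgraded to $L^1$ via uniform integrability) with the uniform equicontinuity of $\theta \mapsto P\psi_{\theta,\eta}$ that is built into the Donsker assumption (specifically \Cref{as.equicontinuity}), exactly as the paper does.

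For \cref{eq.z_unif_conv.1} and \cref{eq.z_unif_conv.2} you take a more elementary route than the paper. You condition on $D_{\stilde}$, invoke a uniform Glivenko--Cantelli property split-by-split, and average via triangle and Markov inequalities. The paper instead appeals directly to \Cref{th.clt_general}: that theorem already establishes asymptotic equicontinuity (hence tightness) of the full split-sample empirical process $\G_{n,\etahat}(\theta) = \sqrt{n}(\Psihat_{\etahat}(\theta) - \Psi_{\etahat}(\theta))$, so $\sup_\theta |\G_{n,\etahat}(\theta)| = O_P(1)$ and dividing by $\sqrt{n}$ gives the result immediately. Your approach is self-contained and uses only the weaker Glivenko--Cantelli consequence of the Donsker condition; the paper's is a one-line corollary of machinery it has already built. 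Both are valid, and in fact the conditioning-on-$D_{\stilde}$ trick you use is precisely what underlies \Cref{lemma.asy_equic_etahat} in the proof of \Cref{th.clt_general}, so the difference is really one of packaging rather than substance.
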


\begin{proof}[\hypertarget{proof.th.clt_z}{Proof of \Cref{lemma.z_unif_conv}}] 

    \cref{eq.z_unif_conv.1,eq.z_unif_conv.2} follow from asymptotic equicontinuity established in \Cref{th.clt_general}. 
    \cref{eq.z_unif_conv.3} follows from asymptotic equicontinuity of $\Psi_{\etahat}(\theta) - \Psi_{\etastarP}(\theta)$ (follows from \Cref{as.equicontinuity}) and pointwise in $\theta$ convergence (\Cref{as.etahat}). 
\end{proof}

\begin{lemma} \label{lemma.z_consistency}
    Let Assumptions \ref{as.z_estimator} and \ref{as.z_estimator_technical} hold. 
    Then, 
    $$\norm{\thetahat_{\etahat} - \theta_{\etahat}} \Pto 0.$$
\end{lemma}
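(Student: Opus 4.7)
The plan is to show separately that $\thetahat_{\etahat}^{(j)} \Pto \theta_{\etastarP}$ and $\theta_{\etahat}^{(j)} \Pto \theta_{\etastarP}$, each uniformly in $P \in \cP$, and then combine the two via the triangle inequality. This is the standard consistency strategy for Z-estimators (see, e.g., Theorem~5.9 of \citealp{van2000asymptotic}), adapted to the split-sample setting by exploiting the uniform convergence result of \Cref{lemma.z_unif_conv}.

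First, I would treat $j = 2$, which is the cleanest case. By \Cref{as.z_approx_0}, $\thetahat_{\etahat}^{(2)}$ nearly solves $\Psihat_{\etahat}(\theta) = 0$ in the sense $\|\Psihat_{\etahat}(\thetahat_{\etahat}^{(2)})\| = o_P(1)$ uniformly in $P \in \cP$. Combining \eqref{eq.z_unif_conv.1} and \eqref{eq.z_unif_conv.3} from \Cref{lemma.z_unif_conv} gives $\sup_{\theta \in \Theta'} \|\Psihat_{\etahat}(\theta) - \Psi_{\etastarP}(\theta)\| \Pto 0$, so
\[
  \|\Psi_{\etastarP}(\thetahat_{\etahat}^{(2)})\| \le \|\Psihat_{\etahat}(\thetahat_{\etahat}^{(2)})\| + \sup_{\theta \in \Theta'}\|\Psihat_{\etahat}(\theta) - \Psi_{\etastarP}(\theta)\| \Pto 0.
\]
The well-separated zero condition \Cref{as.z_unique_0} then forces $\thetahat_{\etahat}^{(2)} \Pto \theta_{\etastarP}$ uniformly in $P \in \cP$: for any $\varepsilon > 0$, on the event $\|\thetahat_{\etahat}^{(2)} - \theta_{\etastarP}\| > \varepsilon$ one has $\|\Psi_{\etastarP}(\thetahat_{\etahat}^{(2)})\| \ge \inf_{\|\theta - \theta_{\etastarP}\| > \varepsilon} \|\Psi_{\etastarP}(\theta)\|$, which is bounded away from zero uniformly over $\cP$ by \Cref{as.z_unique_0}. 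An identical argument, applied to the \emph{population} object and using that $\Psi_{\etahat}(\theta_{\etahat}^{(2)}) = 0$ by definition together with \eqref{eq.z_unif_conv.3}, yields $\theta_{\etahat}^{(2)} \Pto \theta_{\etastarP}$. A triangle inequality then gives $\|\thetahat_{\etahat}^{(2)} - \theta_{\etahat}^{(2)}\| \Pto 0$.

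For $j = 1$, I would apply the same argument split-by-split: for each $\s$, \Cref{as.z_approx_0} gives $\|\Psihat_{\s,\etahat_{\stilde}}(\thetahat_{\etahat_{\stilde}}^{(1)})\| = o_P(1)$, \Cref{lemma.z_unif_conv} (applied to a single split, which is the relevant building block behind it) gives uniform convergence of $\Psihat_{\s,\etahat_{\stilde}}(\cdot)$ to $\Psi_{\etastarP}(\cdot)$, and \Cref{as.z_unique_0} yields $\thetahat_{\etahat_{\stilde}}^{(1)} \Pto \theta_{\etastarP}$ and $\theta_{\etahat_{\stilde}}^{(1)} \Pto \theta_{\etastarP}$. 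Averaging over the finitely many $\s$ preserves convergence in probability, giving the result. The case $j = 3$ is handled identically at the level of a single repetition $r$ (with the average across $K$ folds in $r$ playing the role of the full average in $j = 2$), followed by averaging over the $M$ repetitions.

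The main obstacle is verifying that each step is \emph{uniform} in $P \in \cP$: the well-separated zero step in particular requires translating uniform convergence in probability through the minorant $\inf_{\|\theta - \theta_{\etastarP}\| > \varepsilon} \|\Psi_{\etastarP}(\theta)\| > 0$. \Cref{as.z_unique_0} is stated precisely to give this bound uniformly in $P$, so the standard Z-estimator argument goes through without modification once \Cref{lemma.z_unif_conv} is in hand; the remaining work is only the bookkeeping for the three variants.
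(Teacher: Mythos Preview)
Your proposal is correct and follows essentially the same route as the paper: show $\thetahat_{\etahat}\Pto\theta_{\etastarP}$ and $\theta_{\etahat}\Pto\theta_{\etastarP}$ via the well-separated zero condition and the uniform convergence statements of \Cref{lemma.z_unif_conv}, then combine by the triangle inequality. The paper's proof is written for the $j=2$ case and decomposes $\norm{\Psi_{\etastarP}(\thetahat_{\etahat})}$ through the intermediate $\Psi_{\etahat}$ using \cref{eq.z_unif_conv.1} and \cref{eq.z_unif_conv.3} separately, whereas you combine those two bounds first; this is a purely cosmetic difference. One small caution on your $j=1$ sketch: the phrase ``averaging over the finitely many $\s$'' is imprecise when $M\to\infty$, so you would need a Markov-type argument using that the splits are identically distributed rather than a finite-sum bound, but the paper's own lemma proof does not explicitly treat $j=1,3$ either.
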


\begin{proof}[\hypertarget{proof.th.clt_z}{Proof of \Cref{lemma.z_consistency}}] 
    By \Cref{as.z_unique_0}, for any $\varepsilon > 0$, there is $\gamma > 0$ such that 
    $$\norm{\theta - \theta_{\etastarP}} > \varepsilon \implies \norm{\Psi_{\etastarP}(\theta)} > \gamma.$$
    Hence, 
    $$\sup_{P \in \cP} P \lrp{\norm{\thetahat_{\etahat} - \theta_{\etastarP}} > \varepsilon} \le \sup_{P \in \cP}  P \lrp{\norm{\Psi_{\etastarP}(\thetahat_{\etahat})} > \gamma} \to 0,$$
    since 
    \begin{align*}
        \norm{\Psi_{\etastarP}(\thetahat_{\etahat})} 
        & \le \norm{\Psi_{\etastarP}(\thetahat_{\etahat}) - \Psi_{\etahat}(\thetahat_{\etahat})} + \norm{\Psihat_{\etahat}(\thetahat_{\etahat}) - \Psi_{\etahat}(\thetahat_{\etahat})} + o_P(1) \\
        & = o_P(1),
    \end{align*}
    by \Cref{as.z_approx_0}, \cref{eq.z_unif_conv.3}, and \cref{eq.z_unif_conv.1}. 
    This implies $\norm{\thetahat_{\etahat} - \theta_{\etastarP}} \Pto 0$ uniformly in $P \in \cP$. 

    Similar happens for $\norm{\theta_{\etahat} - \theta_{\etastarP}}$. 
    For any $\varepsilon > 0$, there is $\gamma > 0$ such that 
    $$\sup_{P \in \cP} P \lrp{\norm{\theta_{\etahat} - \theta_{\etastarP}} > \varepsilon} \le \sup_{P \in \cP}  P \lrp{\norm{\Psi_{\etastarP}(\theta_{\etahat})} > \gamma} \to 0,$$
    since $\Psi_{\etahat}(\theta_{\etahat}) = 0$ and 
    $$\norm{\Psi_{\etastarP}(\theta_{\etahat})} = \norm{\Psi_{\etastarP}(\theta_{\etahat}) - \Psi_{\etahat}(\theta_{\etahat})} \Pto 0$$
    uniformly in $P \in \cP$ by \cref{eq.z_unif_conv.3}. 

    The result follows from the triangle inequality.  
\end{proof}

\begin{proof}[\hypertarget{proof.th.clt_z}{Proof of \Cref{th.clt_z}}] 

    I first show the result for the case of $\theta_{\etahat} = \theta_{\etahat}^{(2)}$ (and $\thetahat_{\etahat} = \thetahat_{\etahat}^{(2)}$). 
    Differentiability of $\Psi_{\etahat}$ and \Cref{as.z_psidot_etastar} gives 
    \begin{align}
        \Psi_{\etahat}(\thetahat_{\etahat}) - \Psi_{\etahat} \lrp{\theta_{\etahat}} \nonumber
        & = \Psidot_{\etahat} \lrp{\thetahat_{\etahat} - \theta_{\etahat}} + o_P \lrp{\norm{\thetahat_{\etahat} - \theta_{\etahat}}} \nonumber \\
        & = \Psidot_{\etastarP} \lrp{\thetahat_{\etahat} - \theta_{\etahat}} + o_P \lrp{\norm{\thetahat_{\etahat} - \theta_{\etahat}}}. \label{eq.proof.th.clt_z.differentiability}
    \end{align}

    Asymptotic equicontinuity gives 
    \begin{align}
        \sqrt{n} \lrp{\Psi_{\etahat}(\thetahat_{\etahat}) - \Psi_{\etahat}(\theta_{\etahat})}
        & = - \sqrt{n} \lrp{\Psihat_{\etahat}(\thetahat_{\etahat}) - \Psi_{\etahat}(\thetahat_{\etahat})} + o_P(1) \label{eq.proof.th.clt_z.asympequic_1} \\
        & = - \sqrt{n} \lrp{\Psihat_{\etahat}(\theta_{\etahat}) - \Psi_{\etahat}(\theta_{\etahat})} + o_P(1) \label{eq.proof.th.clt_z.asympequic_2} \\
        & = - \sqrt{n} \Psihat_{\etahat}(\theta_{\etahat}) + o_P(1) \label{eq.proof.th.clt_z.asympequic_3} \\
        & = O_P(1), \label{eq.proof.th.clt_z.asympequic_4}
    \end{align}
    where \cref{eq.proof.th.clt_z.asympequic_1} uses $\sqrt{n} \Psihat_{\etahat}(\thetahat_{\etahat}) = o_P(1)$ (\Cref{as.z_approx_0}) and $\Psi_{\etahat}(\theta_{\etahat}) = 0$, and \cref{eq.proof.th.clt_z.asympequic_2} uses \Cref{as.z_donsker} and \Cref{th.clt_general}, and 
    $$\norm{\thetahat_{\etahat} - \theta_{\etahat}} \Pto 0$$
    uniformly in $P \in \cP$, established in \Cref{lemma.z_consistency}. 
    Note that \Cref{as.z_etahat}, used for \Cref{th.clt_general}, is stronger than \Cref{as.etahat} (see proof of \Cref{th.clt_avg}). 

    By invertibility of $\Psidot_{\etastarP}$, 
    \begin{equation*}
        \sqrt{n} \norm{\thetahat_{\etahat} - \theta_{\etahat}} \le \norm{\Psidot^{-1}_{\etastarP}} \norm{\Psidot_{\etastarP} \sqrt{n} \lrp{\thetahat_{\etahat} - \theta_{\etahat}}}.
    \end{equation*}
    Plugging \cref{eq.proof.th.clt_z.differentiability} in the right-hand side gives 
    $$\sqrt{n} \norm{\thetahat_{\etahat} - \theta_{\etahat}} \norm{\Psidot^{-1}_{\etastarP}}^{-1} \le \norm{\sqrt{n} \lrp{\Psi_{\etahat}(\thetahat_{\etahat}) - \Psi_{\etahat} \lrp{\theta_{\etahat}}} + o_P\lrp{\sqrt{n} \norm{\thetahat_{\etahat} - \theta_{\etahat}}}},$$
    which implies 
    $$\sqrt{n} \norm{\thetahat_{\etahat} - \theta_{\etahat}} \lrp{\norm{\Psidot^{-1}_{\etastarP}}^{-1} + o_P(1)} \le \norm{\sqrt{n} \lrp{\Psi_{\etahat}(\thetahat_{\etahat}) - \Psi_{\etahat} \lrp{\theta_{\etahat}}}} = O_P(1),$$
    where the equality follows from \cref{eq.proof.th.clt_z.asympequic_4} and \Cref{as.z_jacobian}. 
    As a consequence, 
    \begin{equation} \label{eq.proof.th.clt_z.little_op}
        o_P \lrp{\sqrt{n} \norm{\thetahat_{\etahat} - \theta_{\etahat}}} = o_P(1). 
    \end{equation}

    Finally, combining \cref{eq.proof.th.clt_z.differentiability} and \cref{eq.proof.th.clt_z.asympequic_3} gives 
    \begin{align*}
        \Psidot_{\etastarP} \sqrt{n} \lrp{\thetahat_{\etahat} - \theta_{\etahat}}
        & = - \sqrt{n} \Psihat_{\etahat}(\theta_{\etahat}) + o_P \lrp{\sqrt{n} \norm{\thetahat_{\etahat} - \theta_{\etahat}}} + o_P(1) \\
        & = - \sqrt{n} \Psihat_{\etahat}(\theta_{\etahat}) + o_P(1). 
    \end{align*}
    Hence, 
    \begin{align*}
        \sqrt{n} \lrp{\thetahat_{\etahat} - \theta_{\etahat}} & = - \Psidot_{\etastarP}^{-1} \sqrt{n} \Psihat_{\etahat}(\theta_{\etahat}) + o_P(1), \\
        & = - \Psidot_{\etastarP}^{-1} \sqrt{n} \Psihat_{\etastarP}(\theta_{\etastarP}) + o_P(1)
    \end{align*}
    by applying \Cref{th.clt_general}, and the result follows for $\theta_{\etahat}^{(2)}$. 
    Note that \Cref{as.z_etahat} is stronger than \Cref{as.etahat} (see proof of \Cref{th.clt_avg}). 

    The results for $\theta_{\etahat}^{(1)}$ and $\theta_{\etahat}^{(3)}$ follow similarly. 
    For $\theta_{\etahat}^{(1)}$, applying the same arguments above with $K=1$ and $M=1$ gives
    $$\sqrt{n} \lrp{\thetahat_{\etahat_{\stilde}} - \theta_{\etahat_{\stilde}}} = - \Psidot_{\etastarP}^{-1} \sqrt{n} \Psihat_{\etahat_{\stilde}}(\theta_{\etahat_{\stilde}}) + o_P(1)$$
    for any $\stilde \in \lrp{\stilde_{m,k}}_{m \in \lrbk{M},k \in \lrbk{K}}$, and the result follows for $j=1$ by summing over $\s \in \lrp{\s_{m,k}}_{m \in \lrbk{M},k \in \lrbk{K}}$: 
    $$\sqrt{n} \lrp{\thetahat_{\etahat}^{(1)} - \theta_{\etahat}^{(1)}} = - \Psidot_{\etastarP}^{-1} \sqrt{n} \Psihat_{\etastarP}(\theta_{\etastarP}) + o_P(1).$$ 
    Similar holds for $j=3$ applying the arguments above with $M=1$ and $K>1$ and summing over $r \in \cR$. 
\end{proof}

\subsection{Proofs and Extra Definitions of Section \ref{section.inference}} \label{section.proofs_section.diff_performance}

If $\psi_{\theta,\eta}$ is differentiable in $\theta$, let $\dot{\psi}_{\theta,\eta}(w)$ be the Jacobian matrix of $\psi_{\theta,\eta}(w)$, where the derivatives are taken in respect to $\theta$. 
In that case, $\widehat{\dot{\Psi}}_{\etahat}$ can be given by 
\begin{equation} \label{eq.Psi_dot_hat}
    \widehat{\dot{\Psi}}_{\etahat} = \frac{1}{M K} \sum_{r \in \cR} \sum_{\s \in r} \frac{1}{b} \sum_{i \in \s} \dot{\psi}_{\thetahat_{\etahat},\etahat_{\stilde}}(W_i).
\end{equation}
Define
$$V_{M, K} = \begin{cases} M^{-1} \lrp{n/b + M - 1}, 
    & \text{if } K = 1 \\ 1, & \text{otherwise}, \end{cases}$$
\begin{equation} \label{eq.Vhat_etahat}
    \hat{V}_{\etahat} = V_{M, K} \widehat{\dot{\Psi}}_{\etahat}^{-1} \lrp{\frac{1}{M K} \sum_{r \in \cR} \sum_{\s \in r} \frac{1}{b} \sum_{i \in \s} \psi_{\thetahat_{\etahat},\etahat_{\stilde}}(W_i) \psi^{T}_{\thetahat_{\etahat},\etahat_{\stilde}}(W_i)} \lrp{\widehat{\dot{\Psi}}_{\etahat}^{-1}}^{T}.
\end{equation}

\begin{proof}[\hypertarget{proof.th.clt_h}{Proof of \Cref{th.clt_h}}] 
    Under the conditions of the theorem, for $j \in \lrbc{1,2,3}$, 
    $$\sqrt{n} \hat{V}_{\etahat}^{-1/2} \lrp{\thetahat_{\etahat}^{(j)} - \theta_{\etahat}^{(j)}} \leadsto \cN\lrp{0, I_d}$$
    uniformly in $P \in \cP$, where $I_d$ is the identity matrix. 
    Consistency of the inner term to $\lrp{P \psi_{\theta_{\etastarP},\etastarP} \psi^{T}_{\theta_{\etastarP},\etastarP}}$ follows similarly to the proof of \Cref{th.avg_ci}, and the result follows from the continuous mapping theorem, \Cref{th.clt_z} and the delta method. 
\end{proof}

\begin{assumption} \label{as.diff_onesided_test_technical} 
    The following conditions hold: 
    \begin{enumerate}[label=\upshape(\roman*), ref=\Cref{as.diff_onesided_test_technical}(\roman*)]
        \item \label{as.diff_onesided_test_technical1} There exists a consistent estimator $\hat{V}_{\etahat} \Pto V_{\etastarP}$ uniformly in $P \in \cP$; 
        \item \label{as.diff_onesided_test_technical2} $\norm{\Psidot_{\bhat} - \Psidot_{b_P}} \Pto 0$ uniformly in $P \in \cP$; 
        \item \label{as.diff_onesided_test_technical3} $\lrm{\thetahat_{\bhat} - \theta_{\bhat}} \Pto 0$ uniformly in $P \in \cP$; 
        \item \label{as.diff_onesided_test_technical4} $\sup_{P \in \cP} \Psidot_{b_P}^{-1} < \infty$.
    \end{enumerate}
\end{assumption}
\Cref{as.diff_onesided_test_technical1} requires $V_{\etastarP}$ to be consistently estimable, which can typically be verified as in \Cref{th.clt_h}. 
\Cref{as.diff_onesided_test_technical2} through \Cref{as.diff_onesided_test_technical4} adapt conditions \Cref{as.z_psidot_etastar} through \Cref{as.z_jacobian} to $b_P$ instead of $\etastarP$. 

I give below a formula for $\Sigmahat$ for the case of sample averages, that is, $\psi_{\theta,\eta}(w) = f_\eta(w) - \theta$. 
Analogous estimators can be defined for the general case using the fact that $\thetahat_{\etahat} - \theta_{\etahat}$ is asymptotically linear:
$$\sqrt{n} \lrp{\thetahat_{\etahat} - \theta_{\etahat}} = - \Psidot_{\etastarP}^{-1} \sqrt{n} \Psihat_{\etastarP}(\theta_{\etastarP}) + o_P(1),$$
from \Cref{th.clt_z}. 

\begin{equation} \label{eq.diff_Sigmahat}
\Sigmahat = \lrp{\hat{\Sigma}_{j,\ell}}_{j,\ell=1}^{MK},
\end{equation}
where for splits $\s_j, \s_\ell \in \cS$ with complements $\stilde_j, \stilde_\ell$,
\begin{align*}
\hat{\Sigma}_{j,j} & = \frac{1}{n^2} \sum_{i \in \stilde_j} \lrp{f_{\bhat}(W_i) - \bar{f}_{\bhat,\stilde_j}}^2 + \frac{1}{n^2} \sum_{i \in \s_j} \lrp{\tilde{f}_{j}(W_i) - \bar{\tilde{f}}_{j}}^2, \\
\hat{\Sigma}_{j,\ell} & = \frac{1}{n^2} \sum_{i \in \stilde_j \cap \stilde_\ell} \lrp{f_{\bhat}(W_i) - \bar{f}_{\bhat,\stilde_j \cap \stilde_\ell}}^2 \\
& \quad + \frac{1}{n^2} \sum_{i \in \stilde_j \cap \s_\ell} \lrp{f_{\bhat}(W_i) - \bar{f}_{\bhat,\stilde_j \cap \s_\ell}} \lrp{\tilde{f}_{\ell}(W_i) - \bar{\tilde{f}}_{\ell,\stilde_j \cap \s_\ell}} \\
& \quad + \frac{1}{n^2} \sum_{i \in \s_j \cap \stilde_\ell} \lrp{\tilde{f}_{j}(W_i) - \bar{\tilde{f}}_{j,\s_j \cap \stilde_\ell}} \lrp{f_{\bhat}(W_i) - \bar{f}_{\bhat,\s_j \cap \stilde_\ell}} \\
& \quad + \frac{1}{n^2} \sum_{i \in \s_j \cap \s_\ell} \lrp{\tilde{f}_{j}(W_i) - \bar{\tilde{f}}_{j,\s_j \cap \s_\ell}} \lrp{\tilde{f}_{\ell}(W_i) - \bar{\tilde{f}}_{\ell,\s_j \cap \s_\ell}} \quad \text{for } j \neq \ell,
\end{align*}
where $\tilde{f}_{j}(W_i) = f_{\bhat}(W_i) - \frac{n}{|\s_j|} f_{\etahat_{\stilde_j}}(W_i)$, and for any set $\s \subseteq \{1,\ldots,n\}$, $\bar{f}_{\bhat,\s} = |\s|^{-1} \sum_{i \in \s} f_{\bhat}(W_i)$ and $\bar{\tilde{f}}_{j,\s} = |\s|^{-1} \sum_{i \in \s} \tilde{f}_{j}(W_i)$, with $\bar{\tilde{f}}_{j} = \bar{\tilde{f}}_{j,\s_j}$.

Again, I give a standard error for the case of sample averages, and analogous estimators can be constructed for the general case following, e.g., \Cref{th.clt_h}. 
\begin{equation} \label{eq.diff_sigmahat_deltahat}
    \sigmahat_\deltahat^2 = \sigmahat_{\etahat}^2 + \sigmahat_{\bhat}^2 - 2 \frac{1}{M K} \sum_{r \in \cR} \sum_{\s \in r} \frac{1}{|\s|} \sum_{i \in \s} \lrp{f_{\etahat_{\stilde}}(W_i) - \thetahat_{\stilde}} \lrp{f_{\etahat_{\bhat}}(W_i) - \thetahat_{\bhat}},
\end{equation}
where $\sigmahat_{\etahat}$ is defined as in \Cref{eq.sigmahat} and 
$$\sigmahat_{\bhat}^2 = \frac{1}{n} \sum_{i=1}^{n} \lrp{f_{\bhat}(W_i) - \thetahat_{\bhat}}^2.$$

\begin{proposition} \label{prop.diff_Sigmahat}
    $\Sigmahat \Pto \Sigma$ uniformly in $P \in \cP$. 
\end{proposition}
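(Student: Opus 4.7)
The plan is to reduce the matrix convergence to entrywise convergence $\hat{\Sigma}_{j,\ell} \Pto \Sigma_{j,\ell}$ uniformly in $P \in \cP$ for each pair $(j,\ell)$, which suffices since $MK$ is finite. The strategy proceeds in three steps: (i) identify $\Sigma_{j,\ell}$ from the asymptotic linearization given by \Cref{th.clt_z}; (ii) recognize $\hat{\Sigma}_{j,\ell}$ as its natural sample analog; and (iii) establish convergence of each sample average by exploiting a conditional-independence structure that arises from the split design.

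For identification, I would start from the algebraic identity
\[
\thetahat_{\etahat_{\stilde_j}} - \thetahat_{\bhat} = -\frac{1}{n}\sum_{i \in \stilde_j} f_{\bhat}(W_i) - \frac{1}{n}\sum_{i \in \s_j}\tilde{f}_{j}(W_i),
\]
which decomposes the difference into sample averages over the disjoint index sets $\stilde_j$ and $\s_j$. Combined with \Cref{th.clt_z} and the linearization of $\thetahat_{\bhat}$ in \Cref{as.diff_onesided_test_bP}, this yields
\[
\sqrt{n}\lrbk{(\thetahat_{\etahat_{\stilde_j}} - \thetahat_{\bhat}) - (\theta_{\etahat_{\stilde_j}} - \theta_{\bhat})} = -\frac{1}{\sqrt{n}}\sum_{i \in \stilde_j}\lrp{f_{b_P}(W_i) - Pf_{b_P}} - \frac{1}{\sqrt{n}}\sum_{i \in \s_j}\lrp{\tilde{f}_{P,j}(W_i) - P\tilde{f}_{P,j}} + o_P(1),
\]
with $\tilde{f}_{P,j}(w) = f_{b_P}(w) - (n/|\s_j|)f_{\etastarP}(w)$. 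Computing covariances across pairs of splits then decomposes the $(j,\ell)$ expression into four iid-type sums indexed by the disjoint overlap regions $\stilde_j \cap \stilde_\ell$, $\stilde_j \cap \s_\ell$, $\s_j \cap \stilde_\ell$, $\s_j \cap \s_\ell$, weighted by their (deterministic) limit proportions of $n$. The formula for $\hat{\Sigma}_{j,\ell}$ displays the same four-region structure, but with $f_{\bhat}$ and $\tilde{f}_j$ in place of $f_{b_P}$ and $\tilde{f}_{P,j}$, making it the natural sample analog.

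The convergence step will exploit one key structural observation: whenever $\tilde{f}_j$ (which depends on $\etahat_{\stilde_j}$) appears in a sum indexed by a set $S$, by construction $S \subseteq \s_j$, so $\{W_i : i \in S\}$ is independent of the training subsample $D_{\stilde_j}$. Conditioning on the collection of training subsamples, every sample average appearing in $\hat{\Sigma}_{j,\ell}$ is a sum of conditionally iid summands whose $(1 + \delta/2)$-moments are uniformly bounded by \Cref{as.z_2plusdelta}. A conditional Chebyshev argument then gives $n^{-1}\sum_{i \in S} g(W_i) - (|S|/n)\,Pg \Pto 0$, and \Cref{as.z_etahat} together with the uniform integrability afforded by \Cref{as.z_2plusdelta} will allow replacing $\bhat$ by $b_P$ and $\etahat_{\stilde_j}$ by $\etastarP$ inside the averages. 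The centering terms $\bar{f}_{\bhat,\cdot}$ and $\bar{\tilde{f}}_{\cdot,\cdot}$ converge to the corresponding population means by the same route, so each sample (co)variance converges to its population counterpart.

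The main obstacle will be upgrading pointwise-$P$ to uniform-in-$P$ convergence, particularly because the parametric component $\bhat$ and the nonparametric component $\etahat_{\stilde}$ satisfy convergence of different natures: $\bhat$ via the Donsker-type stability implicit in \Cref{as.diff_onesided_test_bP}, and $\etahat_{\stilde}$ via the weaker pointwise condition \Cref{as.z_etahat}. However, all of the relevant assumptions are already stated uniformly in $P$, and the conditional Chebyshev bound depends only on the uniform $2+\delta$ moment; combining these with a union bound across the finite collection of entries $(j,\ell)$ then yields $\Sigmahat \Pto \Sigma$ uniformly in $P \in \cP$.
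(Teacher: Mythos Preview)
Your proposal is correct and follows the same route as the paper, which dispatches the proposition in a single line: ``follows from a law of large numbers and \Cref{as.z_2plusdelta}.'' Your sketch is simply a careful unpacking of that sentence---the four-region decomposition, the conditional-iid structure arising from the split design, and the replacement of $\bhat,\etahat_{\stilde_j}$ by $b_P,\etastarP$ via uniform integrability are exactly the ingredients implicit in the paper's one-line justification.
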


\begin{proposition} \label{prop.diff_sigmahat_deltahat}
    $\sigmahat_\deltahat \Pto \sigma_\deltahat$ uniformly in $P \in \cP$. 
\end{proposition}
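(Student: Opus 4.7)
The plan is to expand $\sigmahat_\deltahat^2$ according to its three additive pieces and show that each converges to the corresponding population quantity uniformly in $P \in \cP$, then combine them using the continuous mapping theorem. That is, I will show
\begin{align*}
\sigmahat_{\etahat}^2 &\Pto \sigma_{\etahat}^2, \\
\sigmahat_{\bhat}^2 &\Pto \sigma_{\bhat}^2, \\
\widehat{C} &\Pto C,
\end{align*}
where $\widehat{C}$ is the cross-term in \cref{eq.diff_sigmahat_deltahat}, $C = P\lrbk{(f_{\etastarP} - \theta_{\etastarP})(f_{b_P} - \theta_{b_P})}$, and $\sigma_\deltahat^2 = \sigma_{\etahat}^2 + \sigma_{\bhat}^2 - 2C$ is precisely the limiting variance of $\sqrt{n}(\thetahat_\etahat - \thetahat_\bhat)$ obtained from \Cref{th.clt_z} applied jointly to the stacked estimator $(\thetahat_\etahat, \thetahat_\bhat)$.

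For the first term, consistency $\sigmahat_\etahat^2 \Pto \sigma_\etahat^2$ is exactly the content specialized from \Cref{th.clt_h} / \cref{eq.Vhat_etahat} to the sample-average case, using \Cref{as.z_2plusdelta} for uniform integrability, \Cref{as.z_etahat} for the convergence $f_{\etahat_{\stilde}} \Pto f_{\etastarP}$, the Donsker condition \ref{as.z_donsker}, and \Cref{lemma.z_consistency} for $\thetahat_{\stilde} \Pto \theta_\etastarP$. For the second term, $\sigmahat_\bhat^2$ is a full-sample plug-in variance estimator for a parametric class; under \Cref{as.diff_onesided_test_bP} we have $\bhat \to b_P$ and $\thetahat_\bhat \to \theta_{b_P}$, and a standard argument using the Donsker condition on the baseline class together with \ref{as.z_2plusdelta} gives $\sigmahat_\bhat^2 \Pto P(f_{b_P} - \theta_{b_P})^2 = \sigma_\bhat^2$.

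The main obstacle is the cross-term, since for each split $\s$ the sum mixes the split-sample model $\etahat_{\stilde}$ with the full-sample baseline $\bhat$, so the summands within a single split are not independent of $\bhat$. My plan is to decompose
\[
\widehat{C} = \frac{1}{MK}\sum_{r \in \cR}\sum_{\s \in r}\frac{1}{|\s|}\sum_{i \in \s}\bigl(f_{\etahat_{\stilde}}(W_i) - \thetahat_{\stilde}\bigr)\bigl(f_{\bhat}(W_i) - \thetahat_{\bhat}\bigr)
\]
by first adding and subtracting $f_{\etastarP}$, $\theta_\etastarP$, $f_{b_P}$, and $\theta_{b_P}$. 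The cross-terms involving deviations $f_{\etahat_{\stilde}} - f_{\etastarP}$, $f_{\bhat} - f_{b_P}$, $\thetahat_{\stilde} - \theta_\etastarP$, and $\thetahat_\bhat - \theta_{b_P}$ each vanish in probability by \Cref{as.z_etahat}, \Cref{as.diff_onesided_test_bP}, \Cref{lemma.z_consistency}, and \Cref{as.diff_onesided_test_technical3}, combined with the uniform moment bound \ref{as.z_2plusdelta} and Cauchy--Schwarz to control products. The leading term
\[
\frac{1}{MK}\sum_{r \in \cR}\sum_{\s \in r}\frac{1}{|\s|}\sum_{i \in \s}\bigl(f_{\etastarP}(W_i) - \theta_\etastarP\bigr)\bigl(f_{b_P}(W_i) - \theta_{b_P}\bigr)
\]
is a weighted average of sample means of i.i.d.\ variables with finite second moments (by \ref{as.z_2plusdelta} and Cauchy--Schwarz), and each inner average converges uniformly in $P \in \cP$ to $C$ by a uniform weak law of large numbers applied to the product class, which is Donsker (and hence Glivenko--Cantelli) under \ref{as.z_donsker}. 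The hard part is handling the fact that the same observation $W_i$ feeds both $\bhat$ (through the entire sample) and $f_{\etahat_{\stilde}}(W_i)$; this is handled by the empirical-process localization around $(\etastarP, b_P)$ enabled by \Cref{th.clt_general} in the same way it was used in the proof of \Cref{th.clt_z}. Combining the three pieces via continuous mapping and taking a square root yields the claim.
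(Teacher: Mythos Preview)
Your proposal is correct and matches the paper's approach, which the paper itself states in one line: ``follow from a law of large numbers and \Cref{as.z_2plusdelta}.'' Your three-piece decomposition and the add-and-subtract of $(\etastarP,\theta_{\etastarP},b_P,\theta_{b_P})$ are exactly how one would flesh out that sketch. One comment: you invoke more machinery than needed---Donsker conditions and the empirical-process localization via \Cref{th.clt_general} are tools for $\sqrt{n}$-level statements, whereas here you only need consistency. In particular, the ``hard part'' you flag for the cross-term (that $\bhat$ depends on all of $D$, including $W_i$) is not an obstacle for a law of large numbers: once you replace $f_{\bhat}$ by $f_{b_P}$ and $f_{\etahat_{\stilde}}$ by $f_{\etastarP}$ (controlling the replacement errors via \Cref{as.z_2plusdelta}, \Cref{as.z_etahat}, \Cref{as.diff_onesided_test_bP}, and Cauchy--Schwarz, exactly as you wrote), the leading term is a plain average of i.i.d.\ products and a uniform weak LLN suffices.
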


The two propositions above follow from a law of large numbers and \Cref{as.z_2plusdelta} (assumed in \Cref{as.diff_onesided_test}). 

Coverage of $\CIalpha$ is exact along any sequences where $\theta_\etastarPn < \theta_{b_{P_n}}$ in the limit, without relying on \Cref{as.diff_ci1}. 

\begin{theorem} \label{th.diff_ci1_power}
    \hyperlink{proof.th.diff_ci1_power}{(Asymptotic exactness of $\CIalpha$)} \, \\
    Let \Cref{as.diff_onesided_test} hold. 
    Then, for any sequence $(P_n)_{n \ge 1} \subseteq \cP$ such that $\lim_{n \to \infty} \theta_\etastarPn - \theta_{b_{P_n}} < 0$, 
    $$\lim_{n \to \infty} P_n \lrp{\lrp{\theta_\etahat - \theta_{\bhat}} \in \CIalpha} = 1 - \alpha.$$
\end{theorem}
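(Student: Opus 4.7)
The plan is to show that along any sequence $(P_n) \subseteq \cP$ with $c := \lim_{n} (\theta_\etastarPn - \theta_{b_{P_n}}) < 0$, the pre-test rejects with probability tending to one, so that $\CIalpha$ coincides with the normal-approximation CI $\widehat{\rm CI}_{\alpha,\cN}$ with high probability, and that $\widehat{\rm CI}_{\alpha,\cN}$ is asymptotically exact.

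\emph{Step 1 (Pre-test rejects with probability approaching one).} For each fixed $\s \in \cS$, apply \Cref{th.clt_z} to $\thetahat_{\etahat_{\stilde}}$ and use \Cref{as.diff_onesided_test_bP} for $\thetahat_{\bhat}$ to get $\thetahat_{\etahat_{\stilde}} - \thetahat_{\bhat} = \theta_\etastarPn - \theta_{b_{P_n}} + o_{P_n}(1)$. Since \Cref{as.diff_onesided_test_pos_var} together with \Cref{as.diff_onesided_test_technical1} gives $\sigmahat_\s = O_{P_n}(1)$ and bounded away from zero, we obtain
\[
\sqrt{n}\,\frac{\thetahat_{\etahat_{\stilde}} - \thetahat_{\bhat}}{\sigmahat_{\s}} \Pnto -\infty,
\]
so $T(\deltahat_\etahat, n^{-1}\Sigmahat) \to \infty$ in probability. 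By \Cref{prop.diff_Sigmahat}, $\Sigmahat \Pto \Sigma$, so the Monte Carlo critical value $\hat{c}_{1-\alpha}$ (the $(1-\alpha)$-quantile of $T(Z, n^{-1}\Sigmahat)$ for $Z \sim \cN(0,\Sigmahat)$) converges to a finite limit, the $(1-\alpha)$-quantile of $T$ applied to a $\cN(0,\Sigma)$ draw. Hence $P_n(T > \hat{c}_{1-\alpha}) \to 1$, and therefore $P_n(\CIalpha = \widehat{\rm CI}_{\alpha,\cN}) \to 1$.

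\emph{Step 2 (Exactness of the normal-approximation CI).} Combine the asymptotically linear expansion for $\thetahat_\etahat$ from the proof of \Cref{th.clt_z} with the linearization of $\thetahat_\bhat$ implied by \Cref{as.diff_onesided_test_bP} to write
\[
\sqrt{n}\bigl[(\thetahat_\etahat - \thetahat_\bhat) - (\theta_\etahat - \theta_\bhat)\bigr] = \frac{1}{\sqrt{n}}\sum_{i=1}^n \bigl(\phi_{\etastarPn}(W_i) - \phi_{b_{P_n}}(W_i)\bigr) + o_{P_n}(1),
\]
where $\phi_\eta$ denotes the (scalar) influence function associated with the Z-estimator at nuisance $\eta$. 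A triangular-array Lyapunov CLT then delivers
\[
\sqrt{n}\bigl[(\thetahat_\etahat - \thetahat_\bhat) - (\theta_\etahat - \theta_\bhat)\bigr] \leadsto \cN(0,\sigma_\delta^2),
\]
with $\sigma_\delta^2$ the limit of $\Var_{P_n}(\phi_{\etastarPn} - \phi_{b_{P_n}})$. Because $c < 0$ bounds the estimands away from each other, the two influence functions cannot coincide in the limit; combined with \Cref{as.diff_onesided_test_pos_var} this yields $\sigma_\delta^2 > 0$. Proposition \ref{prop.diff_sigmahat_deltahat} gives $\sigmahat_\deltahat \Pnto \sigma_\delta$, so by Slutsky,
\[
\frac{\sqrt{n}\bigl[(\thetahat_\etahat - \thetahat_\bhat) - (\theta_\etahat - \theta_\bhat)\bigr]}{\sigmahat_\deltahat} \leadsto \cN(0,1),
\]
which gives $P_n(\theta_\etahat - \theta_\bhat \in \widehat{\rm CI}_{\alpha,\cN}) \to 1-\alpha$.

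\emph{Step 3 (Combining).} Writing
\[
P_n\bigl((\theta_\etahat - \theta_\bhat) \in \CIalpha\bigr) = P_n\bigl((\theta_\etahat - \theta_\bhat) \in \widehat{\rm CI}_{\alpha,\cN}\bigr) + o(1)
\]
using $P_n(\CIalpha = \widehat{\rm CI}_{\alpha,\cN}) \to 1$ from Step 1 and the exact coverage from Step 2 delivers the limit $1-\alpha$.

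\emph{Main obstacle.} The key technical step is Step 2: proving the joint asymptotic linearity of $(\thetahat_\etahat,\thetahat_\bhat)$ as a single empirical-process expansion, and in particular ruling out the degenerate case $\sigma_\delta^2 = 0$. Establishing the strict positivity of $\sigma_\delta^2$ from $c < 0$ alone is delicate, because one could in principle have different estimands arising from different moment conditions yet sharing the same influence function; I expect this to follow from the uniform regularity imposed in \Cref{as.diff_onesided_test} together with the strict separation $\theta_\etastarPn \to \theta_{b_{P_n}} + c$, $c < 0$, but the argument needs to be made carefully using the explicit form of $\phi_\eta$ from the Z-estimator expansion.
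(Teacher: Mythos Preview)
Your approach is essentially the same as the paper's, only far more explicit. The paper's entire proof is the one line ``Follows from \cref{eq.proof.th.diff_ci1.normality} and \Cref{prop.diff_sigmahat_deltahat}'': the asymptotic linear representation of the difference of estimators together with consistency of $\sigmahat_\deltahat$ gives exact coverage of $\widehat{\rm CI}_{\alpha,\cN}$. Your Step~1 (the pre-test rejects with probability tending to one, so $\CIalpha=\widehat{\rm CI}_{\alpha,\cN}$ w.p.a.1) is a detail the paper leaves entirely implicit; you are right that it is needed, and your argument for it is correct.

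On your flagged obstacle about $\sigma_\delta^2>0$: this is a genuine subtlety, and the paper does not address it either. In the companion proof of \Cref{th.diff_ci1} the paper simply asserts ``\cref{eq.proof.th.diff_ci1.normality} is asymptotically normal (nondegenerate)'' without justification, and the one-line proof of \Cref{th.diff_ci1_power} inherits that unstated premise. Your intuition that $c<0$ alone does not force the influence functions apart is correct in general (e.g.\ $\psi_{\theta,\eta}(w)=g(w)-\theta-c(\eta)$ gives identical influence functions for any $\eta$ while $\theta_\eta$ varies with $c(\eta)$). So this is not a divergence from the paper but a point that both treatments leave open; if you wish to match the paper you may simply state nondegeneracy as part of the maintained setting, as the paper effectively does.
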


\begin{proof}[\hypertarget{proof.th.diff_ci1_power}{Proof of \Cref{th.diff_ci1_power}}] 
    Follows from \cref{eq.proof.th.diff_ci1.normality} and \Cref{prop.diff_sigmahat_deltahat}. 
\end{proof}

For the proof of \Cref{th.diff_onesided_test}, define 
$$\delta_\etastarP = \lrp{\theta_{\etastarP} - \theta_{b}}_{\s \in \cS},$$
and 
$$\deltahat_\etastarP = \lrp{\thetahat_{\etastarP} - \thetahat_{b}}_{\s \in \cS}.$$

\begin{proof}[\hypertarget{proof.th.diff_onesided_test}{Proof of \Cref{th.diff_onesided_test}}] 
    I first show the result for the case $\delta_\etahat = 0$. 
    Let $C_2 > 0$, $(P_n)_{n \ge 1} \subseteq \cP$ arbitrary such that $P_n(\delta_\etahat = 0) > C_2$. 
    For any $\varepsilon > 0$ and $\s \in \cS$, denote the event 
    $$E_{\s} = \lrm{\sqrt{n} \lrbk{\lrp{\thetahat_{\etahat_{\stilde}} - \thetahat_{\bhat}} - \lrp{\theta_{\etahat_{\stilde}} - \theta_{\bhat}}} - 
    \sqrt{n} \lrbk{\lrp{\thetahat_{\etastarP} - \thetahat_{b}} - \lrp{\theta_{\etastarP} - \theta_{b}}}} > \varepsilon.$$ 
    By \Cref{th.clt_z} and \Cref{as.diff_onesided_test_bP}, 
    $$P_n \lrp{E_{\s}} \to 0,$$
    which implies 
    \begin{align*}
        P_n \lrp{E_{\s} | \delta_\etahat = 0} & \le P_n \lrp{E_{\s} | \delta_\etahat = 0} P_n(\delta_\etahat = 0) C_2^{-1} \\
        & \le \lrp{P_n \lrp{E_{\s} | \delta_\etahat = 0} P_n(\delta_\etahat = 0) + P_n \lrp{E_{\s} | \delta_\etahat \neq 0} P_n(\delta_\etahat \neq 0)} C_2^{-1} \\
        & P_n \lrp{E_{\s}} C_2^{-1} \to 0.
    \end{align*}
    Hence, 
    $$P_n \lrp{\lrm{\sqrt{n} \lrp{\deltahat_\etahat - \delta_\etahat} - \sqrt{n} \lrp{\deltahat_\etastarP - \delta_\etastarP}} > \varepsilon \Bigm| \delta_\etahat = 0} \to 0,$$
    and 
    $$\sqrt{n} \lrp{\deltahat_\etahat - \delta_\etahat} \leadsto \cN(0, \Sigma)$$
    conditional on $\delta_\etahat = 0$. 
    Together with \Cref{prop.diff_Sigmahat} and the continuous mapping theorem, this implies 
    $$T(\deltahat_\etahat, n^{-1} \Sigmahat) \leadsto T(Z, \Sigma),$$
    where $Z \sim \cN(0,\Sigma)$. 
    The result follows since the quantiles of $\cN(0,\Sigmahat)$ converge to those of $\cN(0,\Sigma)$ by the continuous mapping theorem and \Cref{prop.diff_Sigmahat}. 

    Similar happens for the case $\delta_\etahat \ge 0$. 
    The inequality comes from the fact that 
    $$\sqrt{n} \deltahat_\etahat \ge \sqrt{n} \lrp{\deltahat_\etahat - \delta_\etahat} \leadsto \cN(0, \Sigma).$$
\end{proof}

\begin{proof}[\hypertarget{proof.th.diff_pointwise}{Proof of \Cref{th.diff_pointwise}}] 
    Follows from \Cref{th.diff_onesided_test}, using 
    $$\sqrt{n} \lrp{\thetahat_{\bhat} - \theta_{\bhat}} = \sqrt{n} \lrp{\thetahat_{b_P} - \theta_{b_P}} + o_P(1)$$ 
    from \Cref{as.diff_onesided_test_bP}, so that $\sqrt{n} \lrp{\thetahat_{\etahat} - \theta_{\bhat}} \ge o_P(1)$ when $\theta_\etastarP = \theta_{b_P}$. 
\end{proof}

\begin{proof}[\hypertarget{proof.th.diff_ci1}{Proof of \Cref{th.diff_ci1}}] 
    For the first result, an argument similar to the proof of \Cref{th.diff_onesided_test} conditional on 
    \begin{equation} \label{eq.proof.th.diff_ci1.condition}
        \lrp{\theta_\etahat - \theta_{\bhat}} \ge 0 \lor \lrp{\theta_\etahat - \theta_{\bhat}} \le \bar{c}_3
    \end{equation}
    implies 
    \begin{equation} \label{eq.proof.th.diff_ci1.deltahat}
        \sqrt{n} \lrp{\deltahat_\etahat - \delta_\etahat} \leadsto \cN(0, \Sigma)
    \end{equation}
    and 
    \begin{equation} \label{eq.proof.th.diff_ci1.normality}
        \sqrt{n} \lrp{\thetahat_{\etahat} - \theta_{\etahat}} - \sqrt{n} \lrp{\thetahat_{\bhat} - \theta_{\bhat}} = - \Psidot_{\etastarP}^{-1} \sqrt{n} \Psihat_{\etastarP}(\theta_{\etastarP}) + \Psidot_{b_{P}}^{-1} \sqrt{n} \Psihat_{b_{P}}(\theta_{b_{P}}) + o_{P}(1)
    \end{equation}
    conditional on \cref{eq.proof.th.diff_ci1.condition}, uniformly in $P \in \cP_{\bar{c}_3,\bar{c}_4}$. 
    \cref{eq.proof.th.diff_ci1.normality} uses \Cref{as.diff_onesided_test_bP}, \Cref{th.clt_z}, and \Cref{prop.diff_sigmahat_deltahat}. 
    If $(P_n)_{n \ge 1} \subseteq \cP_{\bar{c}_3,\bar{c}_4}$ is such that $\lim_{n \to \infty} \theta_{\etastarPn} - \theta_{b_{P_n}} \le \bar{c_3}$, the result follows \Cref{prop.diff_sigmahat_deltahat} since \cref{eq.proof.th.diff_ci1.normality} is asymptotically normal (nondegenerate). 
    $\bar{c_3} < \lim_{n \to \infty} \theta_{\etastarPn} - \theta_{b_{P_n}} < 0$ is ruled out since that implies  
    $$P_n \lrp{\lrp{\theta_\etahat - \theta_{\bhat}} \ge 0 \lor \lrp{\theta_\etahat - \theta_{\bhat}} \le \bar{c}_3} \to 0.$$ 
    If $\lim_{n \to \infty} \theta_{\etastarPn} - \theta_{b_{P_n}} \ge 0$, the result follows from \cref{eq.proof.th.diff_ci1.deltahat} and \Cref{prop.diff_Sigmahat}. 

    For the second result, note that \cref{eq.proof.th.diff_ci1.deltahat} and \cref{eq.proof.th.diff_ci1.normality} also hold unconditionally. 
    For any sequence with $\lim_{n \to \infty} \theta_{\etastarPn} - \theta_{b_{P_n}} < 0$, the result follows from \cref{eq.proof.th.diff_ci1.normality}, and for sequences with $\lim_{n \to \infty} \theta_{\etastarPn} - \theta_{b_{P_n}} > 0$ it holds from \cref{eq.proof.th.diff_ci1.deltahat}. 
    If $\lim_{n \to \infty} \theta_{\etastarPn} - \theta_{b_{P_n}} = 0$, \Cref{as.diff_ci1} implies 
    $$\sqrt{n} \delta_\etahat \Pnto 0,$$
    and the result follows from \cref{eq.proof.th.diff_ci1.deltahat}. 
\end{proof}

\begin{proof}[\hypertarget{proof.th.diff_ci2}{Proof of \Cref{th.diff_ci2}}] 
    Follows as in the proof of \Cref{th.diff_ci1}, except for sequences with $-\bar{c}_{5} \le \lim_{n \to \infty} \theta_{\etastarPn} - \theta_{b_{P_n}} \le 0$, where the result follows from using \cref{eq.proof.th.diff_ci1.deltahat} and \Cref{prop.diff_Sigmahat}. 
\end{proof}

\subsection{Proofs and Extra Definitions of Section \ref{section.reproducibility}} 

\subsubsection{Proofs and Extra Definitions of Section \ref{section.reproducibility_basic}} \label{appendix.proofs_reproducibility_basic}

\begin{assumption} \label{as.reproducibility_thetahat2}
    The following conditions hold: 
    \begin{enumerate}[label=\upshape(\roman*), ref=\Cref{as.reproducibility_thetahat2}(\roman*)]
        \item For every $\varepsilon > 0$,
        \begin{equation*} 
            P \lrp{\sup_{\theta \in \Theta'} \norm{\frac{1}{M K} \sum_{r \in \cR} \sum_{\s \in r} \Psihat_{\s,\etahat_{\stilde}}(\theta) - \Psihat_{D}(\theta)} > \varepsilon \Biggm| D} \Pto 0,
        \end{equation*}
        where 
        $$\Psihat_{D}(\theta) = \E[P]{\Psihat_{\xi, \etahat_{\xitilde}}(\theta) \bigm| D},$$
        and $\xi$ is a random subset of $\bkn$ of size $b$ (as defined in \Cref{section.setup});
        \item For every $\varepsilon > 0$, 
    \begin{equation*} 
        \sup_{\norm{\theta - \thetahat_{D}} > \varepsilon} - \norm{\Psihat_{D}(\theta)} < 0 = \norm{\Psihat_{D}(\thetahat_{D})}
    \end{equation*}
    with probability $1$, where $\thetahat_{D}$ uniquely solves $\norm{\Psihat_{D}(\thetahat_{D})} = 0$. 
    \end{enumerate}
\end{assumption}

\begin{proof}[\hypertarget{proof.prop.reproducibility_basic}{Proof of \Cref{prop.reproducibility_basic}}]
    For $j \in \lrbc{1,3}$, the result follows from a Law of Large Numbers since $\thetahat_{\etahat}^{(j)}$ is an average of $M$ iid observations (conditional on data). 
    Note that convergence in probability to a point implies convergence of the variance to zero given uniform square integrability (\Cref{as.unif_integrability}). 
    For $j=2$, consistency follows from consistency of M-estimators (for example, Theorem 5.9 in \citet{van2000asymptotic}). 
\end{proof}

\begin{proof}[\hypertarget{proof.prop.reproducibility_monotonic}{Proof of \Cref{prop.reproducibility_monotonic}}]
    
    Let $X(r) = \frac{1}{K} \sum_{\s \in r} \thetahat_{\etahat_{\stilde}}^{(1)}$ if $j=1$ and $X(r) = \thetahat_{\etahat_r}^{(2)}$ if $j=3$. 
    Then, 
    $$\thetahat_{\etahat}^{(j)} = \frac{1}{M} \sum_{r \in \cR} X(r),$$
    and $X(r) \perp X(r')$ conditional on $D$ for $r \neq r'$. 
    It follows that 
    $$\Var[P]{\thetahat_{\etahat}^{(j)} \bigm| D} = \frac{1}{M} \Var[P]{X(r) \bigm| D}$$
    is strictly decreasing in $M$ as long as $\Var[P]{X(r) \bigm| D} > 0$. 
\end{proof}

\subsubsection{Proofs and Extra Definitions of Section \ref{section.reproducibility_measure}} \label{section.proofs_reproducibility_measure}

I first define some objects used in the proofs. 

$$g_{\theta}(r) = \frac{1}{K} \sum_{\s \in r} v_{D}^{-1} \Psi_{\etahat_{\stilde}}(\theta), \quad G_{\etahat}(\theta) = \frac{1}{M} \sum_{r \in \cR} g_{\theta}(r), \quad G_{\etabar}(\theta) = \E[P]{g_{\theta}(r) | D},$$
and $\theta_{\etabar}$ uniquely solves $G_{\etabar}(\theta_{\etabar}) = 0$. 
Note that $\theta_{\etahat}$ uniquely solves $G_{\etahat}(\theta_{\etahat}) = 0$. 
$$\dot{G}_{\etahat} = \frac{1}{M} \sum_{r \in \cR} \frac{1}{K} \sum_{\s \in r} v_{D}^{-1} \Psidot_{\etahat_{\stilde}}(\theta_{\etahat_{\stilde}});$$
$$\dot{G}_{\etabar} = \E[P]{\frac{1}{K} \sum_{\s \in r} v_{D}^{-1} \Psidot_{\etahat_{\stilde}}(\theta_{\etabar}) \Bigm| D};$$
$$\hat{V}_{G} = \lrp{\frac{1}{M K} \sum_{r \in \cR_1} \sum_{\s \in r} \Psihat_{\s, \etahat_{\stilde}}(\thetahat_{\etahat_1})} \lrp{\frac{1}{M K} \sum_{r \in \cR_1} \sum_{\s \in r} \Psihat_{\s, \etahat_{\stilde}}(\thetahat_{\etahat_1})}^{T};$$
$$\hat{\cV}_{\etahat}(\theta) = \lrp{\frac{1}{M K} \sum_{r \in \cR} \sum_{\s \in r} \frac{1}{b} \sum_{i \in \s} \psi_{\theta,\etahat_{\stilde}}(W_i) \psi^{T}_{\theta,\etahat_{\stilde}}(W_i)};$$
$$\hat{\cV}_{\etastarP}(\theta) = \lrp{\frac{1}{n} \sum_{i = 1}^n \psi_{\theta,\etastarP}(W_i) \psi^{T}_{\theta,\etastarP}(W_i)};$$
$\cV_{\eta}(\theta) = P \hat{\cV}_{\eta}(\theta)$. 
Note that 
$$\sqrt{n} \lrp{\frac{1}{M K} \sum_{r \in \cR} \sum_{\s \in r} \frac{1}{b} \sum_{i \in \s} \psi_{\theta,\etastarP}(W_i) \psi^{T}_{\theta,\etastarP}(W_i)} = \sqrt{n} \hat{\cV}_{\theta, \etastarP} + o_P(1)$$
from \cref{eq.rep_Minf} (the equality holds without $o_P(1)$ if $K > 1$). 
$$\sigma_{\etahat}^2 = V_{M, K} \hdot(\thetahatetahat) \widehat{\dot{\Psi}}_{\etahat}^{-1} \cV_{\etahat}(\thetahat_{\etahat}) \lrp{\widehat{\dot{\Psi}}_{\etahat}^{-1}}^{T} \hdot(\thetahatetahat)^{T};$$
$$v_D^2 = \Var[P]{\sigma_{\etastarP}^{-1} \hdot(\theta_{\etabar}) \dot{G}^{-1}_{\etabar} \sqrt{M} G_{\etahat}(\theta_{\etabar}) \bigm| D};$$
$$\hat{v}_D^2 = \sigmahat_{\etahat_1}^{-2} \hdot(\thetahat_{\etahat_1}) \widehat{\dot{\Psi}}_{\etahat_1}^{-1} \hat{V}_{G} \lrp{\widehat{\dot{\Psi}}_{\etahat_1}^{-1}}^{T} \hdot(\thetahat_{\etahat_1})^{T};$$
$$\zeta_D^2 = \Var[P]{2^{-1} \sigma_{\etastarP}^{-3} (h(\theta_{\etastarP}) - \tau) V_{M, K} \hdot(\theta_\etastarP) \dot{\Psi}_{\etastarP}^{-1} \sqrt{M} \cV_{\etahat}(\thetahat_{\etahat}) \lrp{\dot{\Psi}_{\etastarP}^{-1}}^{T} \hdot(\theta_{\etastarP})^{T} \Bigm| D};$$
$$\hat{a} = (\hat{a}_{1} \; \cdots \; \hat{a}_{d}) = \hdot(\thetahat_{\etahat_1}) \widehat{\dot{\Psi}}_{\etahat_1}^{-1};$$
$\hat{v}_{(i,j)}$ are the entries of $\hat{\cV}_{\etahat_1}(\thetahat_{\etahat_1})$,
\begin{align*}
    \hat{c}_{(i,j),(i',j')} & = \frac{1}{M K} \sum_{r \in \cR} \sum_{\s \in r} \lrp{\frac{1}{b} \sum_{\ell \in s} \psi_{\theta,\etahat_{\stilde},i}(W_\ell) \psi_{\theta,\etahat_{\stilde},j}(W_\ell) - \hat{v}_{(i,j)}} \\
    & \qquad \times \lrp{\frac{1}{b} \sum_{\ell \in s} \psi_{\theta,\etahat_{\stilde},i'}(W_\ell) \psi_{\theta,\etahat_{\stilde},j'}(W_\ell) - \hat{v}_{(i',j')}};
\end{align*}
$$\hat{\zeta}_D^2 = 2^{-2} \sigmahat_{\etahat_1}^{-6} (h(\thetahat_{\etahat_1}) - \tau)^2 V_{M, K}^2 \sum_{i=1}^d \sum_{j=1}^d \sum_{i'=1}^d \sum_{j'=1}^d \hat{a}_{i} \hat{a}_{j} \hat{a}_{i'} \hat{a}_{j'} \hat{c}_{(i,j),(i',j')};$$
\begin{align*}
    \rho_{v, \zeta} & = \Cov[P] \Biggl[\sigma_{\etastarP}^{-1} \hdot(\theta_{\etabar}) \dot{G}^{-1}_{\etabar} \sqrt{M} G_{\etahat}(\theta_{\etabar}), \\
    & 2^{-1} \sigma_{\etastarP}^{-3} (h(\theta_{\etastarP}) - \tau) V_{M, K} \hdot(\theta_\etastarP) \dot{\Psi}_{\etastarP}^{-1} \sqrt{M} \cV_{\etahat}(\thetahat_{\etahat}) \lrp{\dot{\Psi}_{\etastarP}^{-1}}^{T} \hdot(\theta_{\etastarP})^{T} \Biggm| D \Biggr];
\end{align*}
$$\hat{d}_{i, (j, \ell)} = \frac{1}{M K} \sum_{r \in \cR} \sum_{\s \in r} \lrp{\Psihat_{\s,\etahat_{\stilde},i}(\thetahat_{\etahat_1}) - \Psihat_{\etahat_1,i}(\thetahat_{\etahat_1})} \lrp{\frac{1}{b} \sum_{i' \in s} \psi_{\theta,\etahat_{\stilde},j}(W_{i'}) \psi_{\theta,\etahat_{\stilde},\ell}(W_{i'}) - \hat{v}_{(j,\ell)}};$$
$$\hat{\rho}_{v, \zeta} = 2^{-1} \sigmahat_{\etahat_1}^{-4} (h(\thetahat_{\etahat_1}) - \tau) V_{M, K} \sum_{i=1}^d \sum_{j=1}^d \sum_{\ell=1}^d \hat{a}_{i} \hat{a}_{j} \hat{a}_{\ell} \hat{d}_{i, (j, \ell)};$$
\begin{equation} \label{eq.sigma2D}
    \sigma^2_D = 2 (v_D^2 + \zeta_D^2 + 2 \rho_{v, \zeta});
\end{equation}
$$\hat{\sigma}^2_D = 2 (\hat{v}_D^2 + \hat{\zeta}_D^2 + 2 \hat{\rho}_{v, \zeta}).$$

\begin{assumption} \label{as.reproducibility}
    The following conditions hold: 
    \begin{assumptionenum}
        \item \label{as.reproducibility_asymp_equic} For any $\delta_n \downarrow 0$ and $\varepsilon > 0$,
        $$P \lrp{\sup_{\norm{\theta - \theta'} < \delta_n} \norm{\sqrt{M} \lrp{G_{\etahat} - G_{\etabar}}(\theta) - \sqrt{M} \lrp{G_{\etahat} - G_{\etabar}}(\theta')} > \varepsilon \Biggm| D} \Pto 0$$
        uniformly in $P \in \cP$; 
        \item \label{as.reproducibility_donsker_sigma} 
        For $(i,j) \in \lrbk{d}^2$, \Cref{as.emp_proc} holds with $T = \Theta'$ and 
        $$\cF_\eta = \lrbc{\psi_{\theta, \eta, i} \psi_{\theta, \eta, j} : \theta \in \Theta'},$$
        where $\psi_{\theta, \eta, i}$ is the $i$-th coordinate of $\psi_{\theta, \eta}$;
        \item \label{as.reproducibility_psidot} There exists an estimator $\widehat{\dot{\Psi}}_{\etahat}$ such that 
        $$\norm{\widehat{\dot{\Psi}}_{\etahat} - \dot{\Psi}_{\etastarP}} \Pto 0$$
        uniformly in $P \in \cP$;
        \item \label{as.reproducibility_sigma2} For some $\underbar{v} > 0$, 
        $$\sigma^2_{\etastarP} = \hdot(\theta_{\etastarP}) V_{\etastarP} \hdot(\theta_{\etastarP})^{T} \ge \underbar{v};$$ 
        \item \label{as.reproducibility_rate} $M^{-1} n \sigma^2_D = O_P(1)$. 
        \item \label{as.reproducibility_correlation} Either 
        $$v_D^{-1} \zeta_D \Pto c_1 \neq 1$$
        or
        $$2 \frac{\rho_{v, \zeta}}{\zeta_D v_D} \Pto c_2 \neq -1.$$
    \end{assumptionenum}
\end{assumption}

\Cref{as.reproducibility_asymp_equic} is a Donsker condition on $\lrbc{v_{D}^{-1} \Psi_{\etahat_{\stilde}} : s \subseteq \bkn}$ conditional on the data. 
It is similar to \Cref{as.z_estimator_technical}, and can typically be verified using arguments similar to the ones used to verify \Cref{as.entropy}. 
It holds, for example, if $\Theta'$ is bounded and $\psi_{\theta,\eta}$ is Lipschitz in $\theta$ with a Lipschitz constant that does not depend on $\eta$ or $w$ (see, e.g., Example 19.7 in \citealp{van2000asymptotic}). 
\Cref{as.reproducibility_donsker_sigma} is a Donsker condition similar to \Cref{as.z_donsker}, but in terms of the product $\psi_{\theta, \eta, i} \psi_{\theta, \eta, j}$ instead of $\psi_{\theta, \eta, i}$. 
It is used to derive asymptotic normality of the standard errors $\sigmahat_{\etahat_{1}}, \sigmahat_{\etahat_{2}}$. 
If $\psi_{\theta, \eta, i}(w) \le \bar{C}$ for some $\bar{C} < \infty$, that is, if the functions $\psi_{\theta, \eta, i}$ are uniformly bounded, then \Cref{as.reproducibility_donsker_sigma} is implied by \Cref{as.z_donsker} (see, e.g., Example 2.10.10 in \citealp{van2023weak}). 
\Cref{as.reproducibility_psidot} assumes the existence of a consistent estimator of $\Psidot_{\etastarP}$. 
If $\psi_{\theta, \eta}(w)$ is differentiable in $\theta$, the plug-in estimator defined in \cref{eq.Psi_dot_hat} satisfies this assumption under a uniform integrability condition on this derivative. 
Otherwise, consistent estimators can typically be constructed on a case-by-case basis \citep{hansen2022econometrics}. 
\Cref{as.reproducibility_sigma2} requires the asymptotic variance of $h(\thetahatetahat)$ to be lower bounded. 
\Cref{as.reproducibility_rate} establishes the asymptotic regime. 
Finally, \Cref{as.reproducibility_correlation} restricts a corner case where the variance of the t-statistic $\sigmahat_{\etahat_{1}}^{-1} \sqrt{n} (h(\thetahat_{\etahat_{1}}) - \tau)$ is zero because of perfect negative correlation between $\sigmahat_{\etahat_{1}}^{-1}$ and $h(\thetahat_{\etahat_{1}})$. 
Note that the quantities $\rho_{v, \zeta},\zeta_D,v_D$ can all be consistently estimated with $\hat{\rho}_{v, \zeta},\hat{\zeta}_D,\hat{v}_D$ defined previously. 

Before proving \Cref{th.reproducibility_clt}, I establish some key intermediary results.

\begin{lemma} \label{lemma.sigma_Op}
    Let the conditions of \Cref{th.reproducibility_clt} hold. 
    Then, 
    $$\sigma_{D}^{-1} v_{D} = O_P(1), \quad \sigma_{D}^{-1} \zeta_{D} = O_P(1).$$
\end{lemma}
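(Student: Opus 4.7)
The plan is to exploit the quadratic structure of $\sigma_D^2 = 2(v_D^2 + \zeta_D^2 + 2\rho_{v,\zeta})$ as a function of the ratio $\zeta_D/v_D$, together with the Cauchy--Schwarz bound on the covariance, and then invoke \Cref{as.reproducibility_correlation} to rule out the single degenerate configuration in which $\sigma_D$ can collapse relative to $v_D$ or $\zeta_D$.

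First, I would normalize. Set $r = \zeta_D / v_D \ge 0$ and $\tilde\rho = \rho_{v,\zeta}/(v_D \zeta_D) \in [-1,1]$ (well-defined on the event $v_D\zeta_D > 0$, and I would handle the zero variance case separately as trivial). Then
\[
\sigma_D^2 / v_D^2 \;=\; 2\bigl(1 + r^2 + 2\tilde\rho\, r\bigr).
\]
Viewing the right-hand side as a quadratic in $r$ at fixed $\tilde\rho$, Cauchy--Schwarz gives $|\tilde\rho|\le 1$, so the quadratic is nonnegative and vanishes only at $(r,\tilde\rho)=(1,-1)$. Thus $\sigma_D^{-1}v_D$ can blow up only along sequences with $r\to 1$ and $\tilde\rho\to -1$ jointly. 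An identical rescaling by $\zeta_D^2$ handles $\sigma_D^{-1}\zeta_D$, showing that the same degenerate configuration is the only obstruction for both claims simultaneously.

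Next, I would rule out this configuration using \Cref{as.reproducibility_correlation}. In Case~A, $v_D^{-1}\zeta_D \Pto c_1 \neq 1$, so $r \not\Pto 1$; the elementary bound $1 + r^2 + 2\tilde\rho r \ge (1-r)^2$ yields $\sigma_D^2/v_D^2 \ge 2(1-r)^2 \Pto 2(1-c_1)^2 > 0$, so $\sigma_D^{-1}v_D = O_P(1)$, and $\sigma_D^{-1}\zeta_D = (\zeta_D/v_D)\cdot \sigma_D^{-1}v_D = O_P(1)$ by the continuous mapping theorem. In Case~B, $2\tilde\rho \Pto c_2 \neq -1$, so $\tilde\rho$ is bounded away from the problematic extreme; combining this with the factorization $1 + r^2 + 2\tilde\rho r = (1+\tilde\rho r)^2 + r^2(1-\tilde\rho^2)$ gives a lower bound that is uniformly positive whenever $|\tilde\rho|$ stays strictly below $1$, delivering $\sigma_D^2/v_D^2$ and $\sigma_D^2/\zeta_D^2$ bounded away from zero in probability.

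The main obstacle is the boundary case where both $r \to 1$ and $\tilde\rho \to -1$; this is precisely where the trivial Cauchy--Schwarz lower bound $\sigma_D^2 \ge 2(v_D - \zeta_D)^2$ becomes useless, and where an exact cancellation can occur. The role of \Cref{as.reproducibility_correlation} is exactly to exclude this single pathological asymptote, and once it is excluded the argument reduces to continuous-mapping and elementary quadratic inequalities. The remainder is routine: measurability and the handling of the event $\{v_D \zeta_D = 0\}$ follow from the uniform integrability and variance definitions already imposed in \Cref{as.z_estimator} and \Cref{as.reproducibility}.
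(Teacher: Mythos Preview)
Your approach mirrors the paper's: rewrite $\sigma_D^2/v_D^2 = 2(1 + r^2 + 2\tilde\rho r)$ with $r=\zeta_D/v_D$ and $\tilde\rho=\rho_{v,\zeta}/(v_D\zeta_D)$, use $|\tilde\rho|\le 1$ to locate the unique zero at $(r,\tilde\rho)=(1,-1)$, and invoke \Cref{as.reproducibility_correlation} to exclude it. Your Case~A is in fact cleaner than the paper's treatment, via the explicit bound $1+r^2+2\tilde\rho r\ge(1-r)^2$.

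There is a genuine slip in your Case~B, though it is one you share with the paper. You assert that $2\tilde\rho\Pto c_2\neq -1$ keeps $\tilde\rho$ ``bounded away from the problematic extreme,'' and then use $|\tilde\rho|$ strictly below $1$ in your factorization. But the problematic extreme you yourself identified is $\tilde\rho=-1$, i.e.\ $2\tilde\rho=-2$; the hypothesis $c_2\neq-1$ does not exclude $c_2=-2$. Concretely, $r\equiv 1,\ \tilde\rho\equiv -1$ satisfies the second disjunct of \Cref{as.reproducibility_correlation} (with $c_2=-2\neq-1$) yet gives $\sigma_D^2/v_D^2=0$, so your lower bound collapses. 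The paper's own ``iff'' display has the same constant mismatch (it writes $2\rho_{v,\zeta}/(\zeta_D v_D)\to-1$ where $-2$ is what forces $r(r+2\tilde\rho)=-1$), so this appears to be a typo in the statement of \Cref{as.reproducibility_correlation}; under the evidently intended condition $c_2\neq -2$, your factorization $1+r^2+2\tilde\rho r=(1+\tilde\rho r)^2+r^2(1-\tilde\rho^2)$ does the job as written.
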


\begin{proof}
    I show $\sigma_{D}^{-1} v_{D} = O_P(1)$ and the second result follows analogously. 

    $$\sigma^2_D = 2 (v_D^2 + \zeta_D^2 + 2 \rho_{v, \zeta}),$$

    $$\sigma^{-2}_D v_D^2 = 2^{-1} (1 + v_D^{-2} \zeta_D^2 + 2 v_D^{-2} \rho_{v, \zeta})^{-1},$$
    \begin{align*}
        v_D^{-2} \zeta_D^2 + 2 v_D^{-2} \rho_{v, \zeta} & = v_D^{-2} \zeta_D^2 + 2 v_D^{-1} \zeta_D \frac{\rho_{v, \zeta}}{\zeta_D v_D} \\
        & = v_D^{-1} \zeta_D \lrp{v_D^{-1} \zeta_D + 2 \frac{\rho_{v, \zeta}}{\zeta_D v_D}} \\
        & = \begin{cases}
            O_P(1) \text{, if } v_D^{-1} \zeta_D = O_P(1), \\ 
            o_P(1) \text{, if } v_D \zeta_D^{-1} = o_P(1),
        \end{cases}
    \end{align*}
    since $\lrm{\frac{\rho_{v, \zeta}}{\zeta_D v_D}} \le 1$. 
    Note that 
    $$v_D^{-1} \zeta_D \lrp{v_D^{-1} \zeta_D + 2 \frac{\rho_{v, \zeta}}{\zeta_D v_D}} \Pto -1 \iff v_D^{-1} \zeta_D \Pto 1 \land 2 \frac{\rho_{v, \zeta}}{\zeta_D v_D} \Pto -1,$$
    which is ruled out by \Cref{as.reproducibility_correlation}. 
\end{proof}

\begin{theorem} \label{th.reproducibility_clt_h_theta}
    Let \Cref{as.reproducibility} hold. 
    Then, for any $\varepsilon > 0$, 
    $$P \lrp{\norm{v_{D}^{-1} \sqrt{M} \lrp{\theta_{\etahat} - \theta_{\etabar}} - \lrp{- v_{D}^{-1} \dot{G}^{-1}_{\etabar} \sqrt{M} G_{\etahat}(\theta_{\etabar})}} > \varepsilon \Bigm| D} \Pto 0$$
    uniformly in $P \in \cP$, and hence 
    $$\sup_{P \in \cP} P \lrp{\norm{v_{D}^{-1} \sqrt{M} \lrp{\theta_{\etahat} - \theta_{\etabar}} - \lrp{- v_{D}^{-1} \dot{G}^{-1}_{\etabar} \sqrt{M} G_{\etahat}(\theta_{\etabar})}} > \varepsilon } \to 0.$$
    Moreover, 
    $$v_{D}^{-1} \dot{G}^{-1}_{\etabar} \sqrt{M} G_{\etahat}(\theta_{\etabar}) = O_P(1).$$
\end{theorem}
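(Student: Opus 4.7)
The plan is to prove this result via a Z-estimator linearization argument conducted conditionally on the data $D$, where the remaining randomness comes from the $M$ iid repetitions $r \in \cR$. Conditional on $D$, $G_{\etahat}(\theta) = M^{-1}\sum_{r \in \cR} g_{\theta}(r)$ is a sample average of $M$ iid terms with conditional mean $G_{\etabar}(\theta)$, so $\theta_{\etahat}$ (the zero of $G_{\etahat}$) plays the role of a classical Z-estimator with ``sample size'' $M$ targeting $\theta_{\etabar}$ (the zero of $G_{\etabar}$). This mirrors the structure of the proof of \Cref{th.clt_z}, only now with the roles of $P$ and $\cR$ played by the conditional distribution of $r \mid D$.

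First I would establish conditional consistency $\norm{\theta_{\etahat} - \theta_{\etabar}} \Pto 0$. Uniform (in $\theta \in \Theta'$) convergence of $G_{\etahat}(\theta) - G_{\etabar}(\theta)$ to zero conditional on $D$ should follow from the asymptotic equicontinuity in \Cref{as.reproducibility_asymp_equic} combined with a pointwise conditional LLN across $r \in \cR$. Coupled with a conditional well-separation condition for $\theta_{\etabar}$ as the unique zero of $G_{\etabar}$ (inheritable from \Cref{as.z_unique_0} extended to the convex combination $G_{\etabar}$), standard arguments as in Theorem 5.9 of \citet{van2000asymptotic}, reapplied conditionally, yield consistency.

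Next I would apply the linearization. Using $G_{\etahat}(\theta_{\etahat}) = 0 = G_{\etabar}(\theta_{\etabar})$ together with the asymptotic equicontinuity from \Cref{as.reproducibility_asymp_equic} and the just-proved consistency,
\begin{equation*}
    -\sqrt{M}\, G_{\etahat}(\theta_{\etabar}) = \sqrt{M}\lrbk{G_{\etabar}(\theta_{\etahat}) - G_{\etabar}(\theta_{\etabar})} + o_{P}(1)
\end{equation*}
conditional on $D$. A Taylor expansion of $G_{\etabar}$, whose differentiability at $\theta_{\etabar}$ is inherited from that of the $\Psi_{\etahat_{\stilde}}$ via \Cref{as.z_jacobian}, then gives
\begin{equation*}
    G_{\etabar}(\theta_{\etahat}) - G_{\etabar}(\theta_{\etabar}) = \dot{G}_{\etabar} (\theta_{\etahat} - \theta_{\etabar}) + o_{P}\lrp{\norm{\theta_{\etahat} - \theta_{\etabar}}}.
\end{equation*}
Invertibility of $\dot{G}_{\etabar}$, obtained from \Cref{as.z_jacobian} and \Cref{as.z_psidot_etastar} (both of which control each $\Psidot_{\etahat_{\stilde}}$), lets me absorb the $o_{P}\lrp{\sqrt{M}\norm{\theta_{\etahat}-\theta_{\etabar}}}$ term exactly as in the proof of \Cref{th.clt_z}, yielding
\begin{equation*}
    \sqrt{M}(\theta_{\etahat} - \theta_{\etabar}) = -\dot{G}_{\etabar}^{-1}\sqrt{M}\, G_{\etahat}(\theta_{\etabar}) + o_{P}(1).
\end{equation*}
Premultiplying by the scalar $v_{D}^{-1}$ gives the main claim, and the unconditional form follows by integrating over $D$.

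For the moreover claim, conditional on $D$ the vector $\sqrt{M}\, G_{\etahat}(\theta_{\etabar})$ is a normalized average of $M$ iid mean-zero terms whose conditional second moments are bounded entrywise via \Cref{as.z_2plusdelta}. The definition of $v_{D}^{2}$ is built precisely from the conditional covariance of $\dot{G}_{\etabar}^{-1}\sqrt{M}\, G_{\etahat}(\theta_{\etabar})$ projected along $\hdot(\theta_{\etabar})$ and scaled by $\sigma_{\etastarP}^{-1}$, so conditional Chebyshev delivers $O_{P}(v_{D})$ for each entry of $\dot{G}_{\etabar}^{-1}\sqrt{M}\, G_{\etahat}(\theta_{\etabar})$; \Cref{as.reproducibility_sigma2} and \Cref{lemma.sigma_Op} ensure the normalizations do not blow up. The main obstacle will be carefully tracking the conditional $o_{P}$ rates uniformly in $P \in \cP$ while handling the $v_{D}$ normalization, which is itself a data-dependent random scaling that appears implicitly in the definitions of $g_{\theta}(r)$, $G_{\etahat}$, and $\dot{G}_{\etabar}$; cross-checking that these $v_{D}$'s cancel cleanly inside the ratio $v_{D}^{-1}\dot{G}_{\etabar}^{-1}\sqrt{M}\, G_{\etahat}(\theta_{\etabar})$ is the key bookkeeping step.
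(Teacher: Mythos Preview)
Your approach is essentially the same as the paper's: a Z-estimator linearization argument conducted conditionally on $D$, using the asymptotic equicontinuity in \Cref{as.reproducibility_asymp_equic}, a Taylor expansion of $G_{\etabar}$, and the $v_D$ cancellation inside $v_D^{-1}\dot{G}_{\etabar}^{-1}$ that you correctly flag as the key bookkeeping step.

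One point of sloppiness to tighten: your step ``premultiplying by the scalar $v_D^{-1}$ gives the main claim'' does not work as written. Since $v_D \Pto 0$, multiplying an $o_P(1)$ remainder by $v_D^{-1}$ yields $v_D^{-1}\,o_P(1)$, which need not be $o_P(1)$. What actually saves the argument is that the remainder in your display
\[
\sqrt{M}(\theta_{\etahat}-\theta_{\etabar}) = -\dot{G}_{\etabar}^{-1}\sqrt{M}\,G_{\etahat}(\theta_{\etabar}) + o_P(1)
\]
is not merely $o_P(1)$ but $o_P(v_D)$, because it arises as $\dot{G}_{\etabar}^{-1}$ applied to an $o_{P|D}(1)$ term and $\norm{\dot{G}_{\etabar}^{-1}} = O_P(v_D)$ (the $v_D^{-1}$ baked into $g_\theta$ and hence into $\dot{G}_{\etabar}$ inverts to a $v_D$). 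The paper avoids this subtlety by never separating the factors: it carries the combination $v_D^{-1}\dot{G}_{\etabar}^{-1} = \Psidot_{\etastarP}^{-1}(\theta_{\etastarP}) + o_P(1) = O_P(1)$ through the entire linearization, so the remainder is directly $o_{P|D}(\norm{v_D^{-1}\dot{G}_{\etabar}^{-1}}) = o_{P|D}(1)$. For the moreover claim, the paper's argument is also more direct than your Chebyshev route: it simply bounds $\norm{v_D^{-1}\dot{G}_{\etabar}^{-1}}\cdot\norm{\sqrt{M}\,G_{\etahat}(\theta_{\etabar})} = O_P(1)\cdot O_{P|D}(1)$, the second factor coming from a conditional Lindeberg CLT.
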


\begin{proof}[\hypertarget{proof.th.reproducibility_clt_h_theta}{Proof of \Cref{th.reproducibility_clt_h_theta}}] 

    For a random variable $X_M$ and a deterministic (conditional on $D$) sequence $a_M(D)$, I use $X_M = o_{P|D}(a_M(D))$ to denote 
    $$P \lrp{\norm{\frac{X_M}{a_M(D)}} > \varepsilon \Bigm| D} \Pto 0$$
    uniformly in $P \in \cP$ for any $\varepsilon > 0$, 
    and analogously define $O_{P|D}(a_M(D))$ similar to the $O_P$ notation.

    By differentiability of $G$, 
    \begin{align} 
        & v_{D}^{-1} \sqrt{M} \lrp{G_{\etabar}(\theta_{\etahat}) - G_{\etabar}(\theta_{\etabar})} \nonumber \\
        & = v_{D}^{-1} \sqrt{M} \dot{G}_{\etabar} \lrp{\theta_{\etahat} - \theta_{\etabar}} + o_{P|D} \lrp{\norm{v_{D}^{-1} \dot{G}^{-1}_{\etabar} \sqrt{M} \lrp{\theta_{\etahat} - \theta_{\etabar}}}}. \label{eq.proof.th.reproducibility_clt_h_theta.differentiability}
    \end{align}

    Further, 
    \begin{align}
        \sqrt{M} \lrp{G_{\etabar}(\theta_{\etahat}) - G_{\etabar}(\theta_{\etabar})}
        & = - \sqrt{M} \lrp{G_{\etahat}(\theta_{\etahat}) - G_{\etabar}(\theta_{\etahat})} \label{eq.proof.th.reproducibility_clt_h_theta.asympequic_1} \\
        & = - \sqrt{M} \lrp{G_{\etahat}(\theta_{\etabar}) - G_{\etabar}(\theta_{\etabar})} + o_{P|D}(1) \label{eq.proof.th.reproducibility_clt_h_theta.asympequic_2} \\
        & = - \sqrt{M} G_{\etahat}(\theta_{\etabar}) + o_{P|D}(1) \label{eq.proof.th.reproducibility_clt_h_theta.asympequic_3} \\
        & = O_{P|D}(1). \label{eq.proof.th.reproducibility_clt_h_theta.asympequic_4}
    \end{align}
    \cref{eq.proof.th.reproducibility_clt_h_theta.asympequic_1} uses the definitions $G_{\etabar}(\theta_{\etabar}) = G_{\etahat}(\theta_{\etahat}) = 0$, and \cref{eq.proof.th.reproducibility_clt_h_theta.asympequic_2} uses \Cref{as.reproducibility_asymp_equic}. 
    \cref{eq.proof.th.reproducibility_clt_h_theta.asympequic_4} follows from the Lindeberg CLT. 

    Combining \cref{eq.proof.th.reproducibility_clt_h_theta.differentiability,eq.proof.th.reproducibility_clt_h_theta.asympequic_3} gives 
    \begin{align*}
        & v_{D}^{-1} \sqrt{M} \lrp{\theta_{\etahat} - \theta_{\etabar}} \\
        & = - v_{D}^{-1} \dot{G}^{-1}_{\etabar} \sqrt{M} G_{\etahat}(\theta_{\etabar}) + o_{P|D} \lrp{\norm{v_{D}^{-1} \dot{G}^{-1}_{\etabar} \sqrt{M} \lrp{\theta_{\etahat} - \theta_{\etabar}}}} + o_{P|D} \lrp{\norm{v_{D}^{-1} \dot{G}^{-1}_{\etabar}}} \\
        & = - v_{D}^{-1} \dot{G}^{-1}_{\etabar} \sqrt{M} G_{\etahat}(\theta_{\etabar}) + o_{P|D} \lrp{1}, 
    \end{align*}
    since 
    $$v_{D}^{-1} \dot{G}^{-1}_{\etabar} = \E[P]{\frac{1}{K} \sum_{\s \in r} \Psidot_{\etahat_{\stilde}}(\theta_{\etabar}) \Bigm| D}^{-1} = \Psidot_{\etastarP}^{-1}(\theta_{\etastarP}) + o_P(1) = O_P(1)$$
    by \Cref{as.z_jacobian}, and an argument similar to \cref{eq.proof.th.clt_z.little_op}, exploring \cref{eq.proof.th.reproducibility_clt_h_theta.asympequic_4}, gives
    $$\sqrt{M} \norm{\theta_{\etahat} - \theta_{\etabar}} = O_{P|D}(1).$$ 
    The second result follows since, for any events $A$ and $B$,
    $$P(A | B) = o_P(1) \implies \sup_{P \in \cP} P(A) = \sup_{P \in \cP} \E[P]{P(A | B)} \to 0.$$

    Finally, 
    $$\norm{v_{D}^{-1} \dot{G}^{-1}_{\etabar} \sqrt{M} G_{\etahat}(\theta_{\etabar})} \le \norm{v_{D}^{-1} \dot{G}^{-1}_{\etabar}} \norm{\sqrt{M} G_{\etahat}(\theta_{\etabar})} = O_P(1) O_{P|D}(1).$$
\end{proof}

\begin{proof}[\hypertarget{proof.th.reproducibility_clt}{Proof of \Cref{th.reproducibility_clt}}] 

    The proof is divided into three main steps. 
    First, I show that 
    \begin{equation} \label{eq.proof.th.reproducibility.decomp_thetahat}
        v_{D}^{-1} \sqrt{M} (h(\thetahat_{\etahat_{1}}) - h(\thetahat_{\etahat_{2}})) = - \hdot(\theta_{\etabar}) v_{D}^{-1} \dot{G}^{-1}_{\etabar} \sqrt{M} \lrp{G_{\etahat_{1}}(\theta_{\etabar}) - G_{\etahat_{2}}(\theta_{\etabar})} + o_P \lrp{1}.
    \end{equation}
    Second, I show that 
    \begin{align} 
        & \zeta_{D}^{-1} \sqrt{M} \lrp{\sigmahat_{\etahat_{1}} - \sigmahat_{\etahat_{2}}} \label{eq.proof.th.reproducibility.decomp_sigmahat} \\
        & = (2 \sigma_{\etastarP})^{-1} V_{M, K} \hdot(\theta_\etastarP) \dot{\Psi}_{\etastarP}^{-1} \zeta_{D}^{-1} \sqrt{M} \lrp{\cV_{\etahat_1}(\thetahat_{\etahat_1}) - \cV_{\etahat_2}(\thetahat_{\etahat_2})} \lrp{\dot{\Psi}_{\etastarP}^{-1}}^{T} \hdot(\theta_{\etastarP})^{T} + o_P(1). \nonumber
    \end{align}
    Finally, I combine the previous steps to reach the result. 

    \paragraph{Step one.} 
    \begin{align*}
        & v_{D}^{-1} \sqrt{M} (h(\thetahat_{\etahat_{1}}) - h(\thetahat_{\etahat_{2}})) \\
        & = v_{D}^{-1} \sqrt{M} (h(\thetahat_{\etahat_{1}}) - h(\theta_{\etahat_{1}})) - v_{D}^{-1} \sqrt{M} (h(\thetahat_{\etahat_{2}}) - h(\theta_{\etahat_{2}})) + v_{D}^{-1} \sqrt{M} (h(\theta_{\etahat_{1}}) - h(\theta_{\etahat_{2}})) \\
        & = v_{D}^{-1} \sqrt{M} (h(\theta_{\etahat_{1}}) - h(\theta_{\etahat_{2}})) + o_P(1),
    \end{align*}
    since 
    \begin{align*}
        \sqrt{n} \lrp{h(\thetahat_{\etahat_{1}}) - h(\theta_{\etahat_{1}})} & = \hdot(\theta_{\etahat_{1}}) \sqrt{n} (\thetahat_{\etahat_{1}} - \theta_{\etahat_{1}}) + o_P\lrp{\sqrt{n} \norm{\thetahat_{\etahat_{1}} - \theta_{\etahat_{1}}}} \\
        & = \hdot(\theta_{\etahat_{1}}) \Psidot_{\etastarP}^{-1} \sqrt{n} \Psihat_{\etahat_1}(\theta_{\etahat_1}) + o_P(1) \\
        & = \hdot(\theta_{\etastarP}) \Psidot_{\etastarP}^{-1} \sqrt{n} \lrp{\Psihat_{\etastarP}(\theta_{\etastarP}) - \Psi_{\etastarP}(\theta_{\etastarP})} + o_P(1),
    \end{align*}
    where $o_P\lrp{\sqrt{n} \norm{\thetahat_{\etahat_{1}} - \theta_{\etahat_{1}}}} = o_P(1)$ by \Cref{th.clt_z}, the second equality holds from \Cref{th.clt_z}, and the last equality from \Cref{th.clt_general}, using the fact that $\norm{\theta_{\etahat_1} - \theta_{\etastarP}} = o_P(1)$. 
    Note that $v_{D}^{-1} \sqrt{M}/\sqrt{n} = O_P(1)$ from \Cref{lemma.sigma_Op}. 

    By differentiability of $h$, 
    $$v_{D}^{-1} \sqrt{M} \lrp{h(\theta_{\etahat_{1}}) - h(\theta_{\etabar})} = v_{D}^{-1} \sqrt{M} \hdot(\theta_{\etabar}) (\theta_{\etahat_{1}} - \theta_{\etabar}) + o_P \lrp{1},$$
    since $v_{D}^{-1} \sqrt{M} \norm{\theta_{\etahat_{1}} - \theta_{\etabar}} = O_P(1)$ from \Cref{th.reproducibility_clt_h_theta}. 
    This implies 
    $$v_{D}^{-1} \sqrt{M} \lrp{h(\theta_{\etahat_{1}}) - h(\theta_{\etahat_{2}})} = \hdot(\theta_{\etabar}) v_{D}^{-1} \sqrt{M} (\theta_{\etahat_{1}} - \theta_{\etahat_{2}}) + o_P \lrp{1}.$$
    \Cref{th.reproducibility_clt_h_theta} gives 
    \begin{align*}
        v_{D}^{-1} \sqrt{M} \lrp{\theta_{\etahat_1} - \theta_{\etahat_2}} & = v_{D}^{-1} \sqrt{M} \lrp{\theta_{\etahat_1} - \theta_{\etabar}} - v_{D}^{-1} \sqrt{M} \lrp{\theta_{\etahat_2} - \theta_{\etabar}} \\ 
        & = - \dot{G}^{-1}_{\etabar} v_{D}^{-1} \sqrt{M} \lrp{G_{\etahat_{1}}(\theta_{\etabar}) - G_{\etahat_{2}}(\theta_{\etabar})} + o_P(1).
    \end{align*}
    \cref{eq.proof.th.reproducibility.decomp_thetahat} follows from combining the two previous displays. 

    \paragraph{Step two.} 
    \begin{align*}
        \zeta_{D}^{-1} \sqrt{M} (\sigmahat_{\etahat_1}^2 - \sigmahat_{\etahat_2}^2) & = \zeta_{D}^{-1} \sqrt{M} (\sigmahat_{\etahat_1}^2 - \sigma_{\etahat_1}^2) - \zeta_{D}^{-1} \sqrt{M} (\sigmahat_{\etahat_2}^2 - \sigma_{\etahat_2}^2) + \zeta_{D}^{-1} \sqrt{M} (\sigma_{\etahat_1}^2 - \sigma_{\etahat_2}^2) \\
        & = \zeta_{D}^{-1} \sqrt{M} (\sigma_{\etahat_1}^2 - \sigma_{\etahat_2}^2) + o_P(1),
    \end{align*}
    since 
    \begin{align*}
        & \zeta_{D}^{-1} \sqrt{M} (\sigmahat_{\etahat_1}^2 - \sigma_{\etahat_1}^2) - \zeta_{D}^{-1} \sqrt{M} (\sigmahat_{\etahat_2}^2 - \sigma_{\etahat_2}^2) \\
        & = \lrp{\frac{\sqrt{M}}{\sqrt{n}} \zeta_{D}^{-1}} \lrp{\sqrt{n} (\sigmahat_{\etahat_1}^2 - \sigma_{\etahat_1}^2) - \sqrt{n} (\sigmahat_{\etahat_2}^2 - \sigma_{\etahat_2}^2)} \\
        & = O_P(1) \lrp{\sqrt{n} (\sigmahat_{\etahat_1}^2 - \sigma_{\etahat_1}^2) - \sqrt{n} (\sigmahat_{\etahat_2}^2 - \sigma_{\etahat_2}^2)},
    \end{align*}
    and 
    \begin{align*}
        & \sqrt{n} (\sigmahat_{\etahat_1}^2 - \sigma_{\etahat_1}^2) - \sqrt{n} (\sigmahat_{\etahat_2}^2 - \sigma_{\etahat_2}^2) \\
        & = V_{M, K} \hdot(\thetahat_{\etahat_1}) \widehat{\dot{\Psi}}_{\etahat_1}^{-1} \sqrt{n} \lrp{\hat{\cV}_{\etahat_1}(\thetahat_{\etahat_1}) - \cV_{\etahat_1}(\thetahat_{\etahat_1})} \lrp{\widehat{\dot{\Psi}}_{\etahat_1}^{-1}}^{T} \hdot(\thetahat_{\etahat_1})^{T} \\
        & \quad - V_{M, K} \hdot(\thetahat_{\etahat_2}) \widehat{\dot{\Psi}}_{\etahat_2}^{-1} \sqrt{n} \lrp{\hat{\cV}_{\etahat_2}(\thetahat_{\etahat_2}) - \cV_{\etahat_2}(\thetahat_{\etahat_2})} \lrp{\widehat{\dot{\Psi}}_{\etahat_2}^{-1}}^{T} \hdot(\thetahat_{\etahat_2})^{T} \\
        & = V_{M, K} \hdot(\thetahat_{\etahat_1}) \widehat{\dot{\Psi}}_{\etahat_1}^{-1} \sqrt{n} \lrp{\hat{\cV}_{\etastarP}(\theta_{\etabar}) - \cV_{\etastarP}(\theta_{\etabar})} \lrp{\widehat{\dot{\Psi}}_{\etahat_1}^{-1}}^{T} \hdot(\thetahat_{\etahat_1})^{T} \\
        & \quad - V_{M, K} \hdot(\thetahat_{\etahat_2}) \widehat{\dot{\Psi}}_{\etahat_2}^{-1} \sqrt{n} \lrp{\hat{\cV}_{\etastarP}(\theta_{\etabar}) - \cV_{\etastarP}(\theta_{\etabar})} \lrp{\widehat{\dot{\Psi}}_{\etahat_2}^{-1}}^{T} \hdot(\thetahat_{\etahat_2})^{T} + o_P(1) \\
        & = o_P(1),
    \end{align*}
    where the second equality follows from \Cref{as.reproducibility_donsker_sigma} and \Cref{th.clt_general}, and the last equality uses $\sqrt{n} \lrp{\hat{\cV}_{\etastarP}(\theta_{\etabar}) - \cV_{\etastarP}(\theta_{\etabar})} = O_P(1)$. 
        
    Finally, 
    $$\zeta_{D}^{-1} \sqrt{M} (\sigma_{\etahat_1} - \sigma_{\etahat_2}) = \frac{\zeta_{D}^{-1} \sqrt{M} (\sigma_{\etahat_1}^2 - \sigma_{\etahat_2}^2)}{\sigma_{\etahat_1} + \sigma_{\etahat_2}} = (2 \sigma_{\etastarP})^{-1} \zeta_{D}^{-1} \sqrt{M} (\sigma_{\etahat_1}^2 - \sigma_{\etahat_2}^2) + o_P(1),$$
    and 
    \begin{align*}
        & \zeta_{D}^{-1} \sqrt{M} (\sigma_{\etahat_1}^2 - \sigma_{\etahat_2}^2) \\
        & = V_{M, K} \hdot(\thetahat_{\etahat_1}) \widehat{\dot{\Psi}}_{\etahat_1}^{-1} \zeta_{D}^{-1} \sqrt{M} \lrp{\cV_{\etahat_1}(\thetahat_{\etahat_1}) - \cV_{\etahat_2}(\thetahat_{\etahat_2})} \lrp{\widehat{\dot{\Psi}}_{\etahat_1}^{-1}}^{T} \hdot(\thetahat_{\etahat_1})^{T} + o_P(1) \\
        & = V_{M, K} \hdot(\theta_\etastarP) \dot{\Psi}_{\etastarP}^{-1} \zeta_{D}^{-1} \sqrt{M} \lrp{\cV_{\etahat_1}(\thetahat_{\etahat_1}) - \cV_{\etahat_2}(\thetahat_{\etahat_2})} \lrp{\dot{\Psi}_{\etastarP}^{-1}}^{T} \hdot(\theta_{\etastarP})^{T} + o_P(1),
    \end{align*}
    using the fact that $\zeta_{D}^{-1} \sqrt{M} \lrp{\cV_{\etahat_1}(\thetahat_{\etahat_1}) - \cV_{\etahat_2}(\thetahat_{\etahat_2})} = O_P(1)$. 

    \paragraph{Step three.} 
    \begin{align*}
        & \lrp{\frac{\sqrt{n} \hat{\sigma}_{D}}{\sqrt{M}}}^{-1} \lrp{\frac{\sqrt{n} (h(\thetahat_{\etahat_{1}}) - \tau)}{\sigmahat_{\etahat_{1}}} - \frac{\sqrt{n} (h(\thetahat_{\etahat_{2}}) - \tau)}{\sigmahat_{\etahat_{2}}}} \\
        & = \hat{\sigma}_{D}^{-1} \sqrt{M} \lrp{\frac{h(\thetahat_{\etahat_{1}}) - \tau}{\sigmahat_{\etahat_{1}}} - \frac{ h(\thetahat_{\etahat_{2}}) - \tau}{\sigmahat_{\etahat_{2}}}} \\
        & = \hat{\sigma}_{D}^{-1} \sqrt{M} \frac{h(\thetahat_{\etahat_{1}}) - h(\thetahat_{\etahat_{2}})}{\sigmahat_{\etahat_{1}}} - \hat{\sigma}_{D}^{-1} \sqrt{M} (\sigmahat_{\etahat_{1}} - \sigmahat_{\etahat_{2}}) \frac{h(\thetahat_{\etahat_{2}}) - \tau}{\sigmahat_{\etahat_{1}} \sigmahat_{\etahat_{2}}} \\
        & = \sigma_{D}^{-1} \sqrt{M} \frac{h(\thetahat_{\etahat_{1}}) - h(\thetahat_{\etahat_{2}})}{\sigma_{\etastarP}} - \sigma_{D}^{-1} \sqrt{M} (\sigmahat_{\etahat_{1}} - \sigmahat_{\etahat_{2}}) \frac{h(\theta_{\etastarP}) - \tau}{\sigma_{\etastarP}^2} + o_P(1) \\
        & = - \sigma_{\etastarP}^{-1} \hdot(\theta_{\etabar}) \dot{G}^{-1}_{\etabar} \sigma_{D}^{-1} \sqrt{M} \lrp{G_{\etahat_{1}}(\theta_{\etabar}) - G_{\etahat_{2}}(\theta_{\etabar})} + o_P \lrp{\sigma_{D}^{-1} v_{D}} \\
        & \quad - 2^{-1} \sigma_{\etastarP}^{-3} (h(\theta_{\etastarP}) - \tau) V_{M, K} \hdot(\theta_\etastarP) \dot{\Psi}_{\etastarP}^{-1} \frac{\sqrt{M}}{\sigma_{D}} \lrp{\cV_{\etahat_1}(\thetahat_{\etahat_1}) - \cV_{\etahat_2}(\thetahat_{\etahat_2})} \lrp{\dot{\Psi}_{\etastarP}^{-1}}^{T} \hdot(\theta_{\etastarP})^{T} \\
        & \quad + o_P \lrp{\sigma_{D}^{-1} \zeta_D} + o_P(1) \\
        & \leadsto \cN(0, 1),
    \end{align*}
    conditional on $D$ with probability approaching one, by Lindeberg's CLT, by definition of $\sigma_{D}$, and since $G_{\etahat_{1}}(\theta_{\etabar}) \perp G_{\etahat_{2}}(\theta_{\etabar}), \cV_{\etahat_2}(\thetahat_{\etahat_2})$ and $\cV_{\etahat_1}(\thetahat_{\etahat_1}) \perp G_{\etahat_{2}}(\theta_{\etabar}), \cV_{\etahat_2}(\thetahat_{\etahat_2})$ conditional on $D$. 
    Note that $o_P \lrp{\sigma_{D}^{-1} v_{D}}, o_P \lrp{\sigma_{D}^{-1} \zeta_D} = o_P(1)$ by \Cref{lemma.sigma_Op}. 
\end{proof}

\begin{proof}[\hypertarget{proof.th.reproducibility_pvalue}{Proof of \Cref{th.reproducibility_pvalue}}] 

For $(p_j, \hat{\delta}(\beta)) = (p_j^+, \hat{\delta}^+(\beta))$,
\begin{align*}
    & P \lrp{p_2^+ > p_1^+ + \hat{\delta}^+(\beta) \Biggm| D} \\
    & = P \lrp{\Phi \lrp{\frac{\sqrt{n} (h(\thetahat_{\etahat_2}) - \tau)}{\sigmahat_{\etahat_2}}} > \Phi \lrp{\frac{\sqrt{n} (h(\thetahat_{\etahat_1}) - \tau)}{\sigmahat_{\etahat_1}}} + \hat{\delta}^+(\beta) \Biggm| D} \\
    & = P \lrp{\Phi \lrp{\frac{\sqrt{n} (h(\thetahat_{\etahat_2}) - \tau)}{\sigmahat_{\etahat_2}}} > \Phi \lrp{\frac{\sqrt{n} (h(\thetahat_{\etahat_1}) - \tau)}{\sigmahat_{\etahat_1}} - \frac{\sqrt{n} \hat{\sigma}_{D}}{\sqrt{M}} \Phi^{-1}(\beta)} \Biggm| D} \\
    & = P \lrp{\frac{\sqrt{n} (h(\thetahat_{\etahat_2}) - \tau)}{\sigmahat_{\etahat_2}} - \frac{\sqrt{n} (h(\thetahat_{\etahat_1}) - \tau)}{\sigmahat_{\etahat_1}} > - \frac{\sqrt{n} \hat{\sigma}_{D}}{\sqrt{M}} \Phi^{-1}(\beta) \Biggm| D} \\
    & = P \lrp{\lrp{\frac{\sqrt{n} \hat{\sigma}_{D}}{\sqrt{M}}}^{-1} \lrp{\frac{\sqrt{n} (h(\thetahat_{\etahat_1}) - \tau)}{\sigmahat_{\etahat_1}} - \frac{\sqrt{n} (h(\thetahat_{\etahat_2}) - \tau)}{\sigmahat_{\etahat_2}}} < \Phi^{-1}(\beta) \Biggm| D} \\
    & = \beta + o_P(1),
\end{align*}
where the last equality follows from \Cref{th.reproducibility_clt}. 

For $(p_j, \hat{\delta}(\beta)) = (p_j^-, \hat{\delta}^-(\beta))$,
\begin{align*}
    & P \lrp{p_2^- > p_1^- + \hat{\delta}^-(\beta) \Biggm| D} \\
    & = P \lrp{\Phi \lrp{-\frac{\sqrt{n} (h(\thetahat_{\etahat_2}) - \tau)}{\sigmahat_{\etahat_2}}} > \Phi \lrp{-\frac{\sqrt{n} (h(\thetahat_{\etahat_1}) - \tau)}{\sigmahat_{\etahat_1}}} + \hat{\delta}^-(\beta) \Biggm| D} \\
    & = P \lrp{\Phi \lrp{-\frac{\sqrt{n} (h(\thetahat_{\etahat_2}) - \tau)}{\sigmahat_{\etahat_2}}} > \Phi \lrp{-\frac{\sqrt{n} (h(\thetahat_{\etahat_1}) - \tau)}{\sigmahat_{\etahat_1}} - \frac{\sqrt{n} \hat{\sigma}_{D}}{\sqrt{M}} \Phi^{-1}(\beta)} \Biggm| D} \\
    & = P \lrp{\frac{\sqrt{n} (h(\thetahat_{\etahat_1}) - \tau)}{\sigmahat_{\etahat_1}} - \frac{\sqrt{n} (h(\thetahat_{\etahat_2}) - \tau)}{\sigmahat_{\etahat_2}} > - \frac{\sqrt{n} \hat{\sigma}_{D}}{\sqrt{M}} \Phi^{-1}(\beta) \Biggm| D} \\
    & = P \lrp{\lrp{\frac{\sqrt{n} \hat{\sigma}_{D}}{\sqrt{M}}}^{-1} \lrp{\frac{\sqrt{n} (h(\thetahat_{\etahat_1}) - \tau)}{\sigmahat_{\etahat_1}} - \frac{\sqrt{n} (h(\thetahat_{\etahat_2}) - \tau)}{\sigmahat_{\etahat_2}}} > -\Phi^{-1}(\beta) \Biggm| D} \\
    & = 1 - \Phi \lrp{-\Phi^{-1}(\beta)} + o_P(1) \\
    & = \beta + o_P(1).
\end{align*}

For $(p_j, \hat{\delta}(\beta)) = (p_j^\pm, \hat{\delta}^\pm(\beta))$, 
\begin{align*}
    & P \lrp{p_2^\pm > p_1^\pm + \hat{\delta}^\pm(\beta) \Biggm| D} \\
    & = P \lrp{2 \Phi \lrp{-\lrm{\frac{\sqrt{n} (h(\thetahat_{\etahat_2}) - \tau)}{\sigmahat_{\etahat_2}}}} > 2 \Phi \lrp{-\lrm{\frac{\sqrt{n} (h(\thetahat_{\etahat_1}) - \tau)}{\sigmahat_{\etahat_1}}}} + \hat{\delta}^\pm(\beta) \Biggm| D} \\
    & = P \lrp{2 \Phi \lrp{-\lrm{\frac{\sqrt{n} (h(\thetahat_{\etahat_2}) - \tau)}{\sigmahat_{\etahat_2}}}} > 2 \Phi \lrp{-\lrm{\frac{\sqrt{n} (h(\thetahat_{\etahat_1}) - \tau)}{\sigmahat_{\etahat_1}}} - \frac{\sqrt{n} \hat{\sigma}_{D}}{\sqrt{M}} \Phi^{-1}(\beta/2)} \Biggm| D} \\
    & = P \lrp{\lrm{\frac{\sqrt{n} (h(\thetahat_{\etahat_1}) - \tau)}{\sigmahat_{\etahat_1}}} - \lrm{\frac{\sqrt{n} (h(\thetahat_{\etahat_2}) - \tau)}{\sigmahat_{\etahat_2}}} > - \frac{\sqrt{n} \hat{\sigma}_{D}}{\sqrt{M}} \Phi^{-1}(\beta/2) \Biggm| D} \\
    & \le P \lrp{\lrm{\frac{\sqrt{n} (h(\thetahat_{\etahat_2}) - \tau)}{\sigmahat_{\etahat_2}} - \frac{\sqrt{n} (h(\thetahat_{\etahat_1}) - \tau)}{\sigmahat_{\etahat_1}}} > - \frac{\sqrt{n} \hat{\sigma}_{D}}{\sqrt{M}} \Phi^{-1}(\beta/2) \Biggm| D} \\
    & = 2 P \lrp{\frac{\sqrt{n} (h(\thetahat_{\etahat_2}) - \tau)}{\sigmahat_{\etahat_2}} - \frac{\sqrt{n} (h(\thetahat_{\etahat_1}) - \tau)}{\sigmahat_{\etahat_1}} < \frac{\sqrt{n} \hat{\sigma}_{D}}{\sqrt{M}} \Phi^{-1}(\beta/2) \Biggm| D} \\
    & = 2 P \lrp{\lrp{\frac{\sqrt{n} \hat{\sigma}_{D}}{\sqrt{M}}}^{-1} \lrp{\frac{\sqrt{n} (h(\thetahat_{\etahat_2}) - \tau)}{\sigmahat_{\etahat_2}} - \frac{\sqrt{n} (h(\thetahat_{\etahat_1}) - \tau)}{\sigmahat_{\etahat_1}}} < \Phi^{-1}(\beta/2) \Biggm| D} \\
    & = 2 \Phi \lrp{\Phi^{-1}(\beta/2)} + o_P(1) \\
    & = \beta + o_P(1).
\end{align*}

\end{proof}

\begin{proof}[\hypertarget{proof.th.reproducibility_m_fast}{Proof of \Cref{th.reproducibility_m_fast}}] 

    The first result follows since, from the proof of \Cref{th.reproducibility_clt}, 
    $$\lrp{\frac{\sqrt{n} (h(\thetahat_{\etahat_{1}}) - \tau)}{\sigmahat_{\etahat_{1}}} - \frac{\sqrt{n} (h(\thetahat_{\etahat_{2}}) - \tau)}{\sigmahat_{\etahat_{2}}}} = O_P(1).$$

    For $(p_j, \hat{\delta}(\beta)) = (p_j^+, \hat{\delta}^+(\beta))$, from the proof of \Cref{th.reproducibility_pvalue}, 
    \begin{align*}
        & P \lrp{p_2^+ > p_1^+ + \hat{\delta}^+(\beta) \Biggm| D} \\
        & = P \lrp{\frac{\sqrt{n} (h(\thetahat_{\etahat_1}) - \tau)}{\sigmahat_{\etahat_1}} - \frac{\sqrt{n} (h(\thetahat_{\etahat_2}) - \tau)}{\sigmahat_{\etahat_2}} < \lrp{\frac{\sqrt{n} \hat{\sigma}_{D}}{\sqrt{M}}} \Phi^{-1}(\beta) \Biggm| D},
    \end{align*}
    which converges to zero since 
    $$\frac{\sqrt{n} (h(\thetahat_{\etahat_1}) - \tau)}{\sigmahat_{\etahat_1}} - \frac{\sqrt{n} (h(\thetahat_{\etahat_2}) - \tau)}{\sigmahat_{\etahat_2}} = O_P(1)$$
    from \Cref{th.reproducibility_clt}, and 
    $$\lrp{\frac{\sqrt{n} \hat{\sigma}_{D}}{\sqrt{M}}} \Phi^{-1}(\beta) \Pto - \infty$$
    since $\Phi^{-1}(\beta) < 0$. 
    Analogous results follow for $(p_j^-, \hat{\delta}^-(\beta))$ and $(p_j^\pm, \hat{\delta}^\pm(\beta))$. 
\end{proof}

\subsection{Details of Section \ref{section.application_ghana}} \label{appendix.ghana}

\subsubsection{Covariates Description}

The following variables from the Ghana Socioeconomic Panel Survey are used as predictive covariates for poverty prediction in \Cref{section.application_ghana}:

\paragraph{Household Demographics}
\begin{itemize}
    \item \texttt{children}: Number of children in household
    \item \texttt{adults}: Number of adults in household
    \item \texttt{female\_head}: Indicator for female household head
    \item \texttt{married\_head}: Indicator for married household head
    \item \texttt{spouse\_in}: Indicator for spouse living in the household
\end{itemize}

\paragraph{Religion}
\begin{itemize}
    \item \texttt{christian}: Proportion Christian
    \item \texttt{muslim}: Proportion Muslim
    \item \texttt{traditional}: Proportion traditional religion
\end{itemize}

\paragraph{Political and Traditional Leadership}
\begin{itemize}
    \item \texttt{ever\_political\_office}: Indicator for ever holding political office
    \item \texttt{today\_political\_office}: Indicator for currently holding political office
    \item \texttt{ever\_traditional\_office}: Indicator for ever holding traditional office
    \item \texttt{today\_traditional\_office}: Indicator for currently holding traditional office
\end{itemize}

\paragraph{Parental Education}
\begin{itemize}
    \item \texttt{father\_primary}: Indicator for father completed primary education
    \item \texttt{father\_middle}: Indicator for father completed middle school
    \item \texttt{father\_secondary}: Indicator for father completed secondary education
    \item \texttt{father\_tertiary}: Indicator for father completed tertiary education
    \item \texttt{mother\_primary}: Indicator for mother completed primary education
    \item \texttt{mother\_middle}: Indicator for mother completed middle school
    \item \texttt{mother\_secondary}: Indicator for mother completed secondary education
    \item \texttt{mother\_tertiary}: Indicator for mother completed tertiary education
\end{itemize}

\paragraph{Asset Holdings}
\begin{itemize}
    \item \texttt{plot\_acreage}: Total land holdings in acres
    \item \texttt{livestock\_value}: Total value of livestock
    \item \texttt{livestock\_expenses}: Annual livestock maintenance expenses
\end{itemize}

\paragraph{Financial Resources}
\begin{itemize}
    \item \texttt{health\_insurance}: Proportion of household members covered by health insurance
    \item \texttt{savings\_home}: Amount of savings kept at home
    \item \texttt{d\_saving\_bank}: Distance to nearest bank (in km)
    \item \texttt{savings\_bank}: Amount of savings in bank account
\end{itemize}

\subsubsection{Fraction Per Tercile as a Z-Estimator}

For a given split $\s$, the vector 
$$\lrp{\lrp{\frac{\sum_{i \in \s} Y_i \I{\hat{t}_{j - 1,\stilde} < \etahat_{\stilde}(X_i) \le \hat{t}_{j,\stilde}}}{\sum_{i \in \s} \I{\hat{t}_{j - 1,\stilde} < \etahat_{\stilde}(X_i) \le \hat{t}_{j,\stilde}}}}_{j=1}^3, (\hat{t}_{j,\stilde})_{j=1}^2}^T$$
is a Z-estimator with the moment functions 
$$\psi_{(\theta, t), \eta}(y, x) = \begin{pmatrix}
y \I{t_0 < \eta(x) \le t_1} - \theta_1 \I{t_0 < \eta(x) \le t_1} \\
y \I{t_1 < \eta(x) \le t_2} - \theta_2 \I{t_1 < \eta(x) \le t_2} \\
y \I{t_2 < \eta(x) \le t_3} - \theta_3 \I{t_2 < \eta(x) \le t_3} \\
\I{\eta(x) \le t_1} - \frac{1}{3} \\
\I{\eta(x) \le t_2} - \frac{2}{3}
\end{pmatrix}.$$
Hence, the final estimators $\thetahat_{\etahat, {\rm Frac} j}$ are averages over split-specific estimators as in \cref{eq.thetahat_z_1}. 

Note that the conditions in \Cref{th.clt_z} are met whenever $\etastarP(x)$ is not flat in $x$.
This condition is testable, for example using the one-sided test for the accuracy in \Cref{fig.mse_combined}.

\subsubsection{Monte Carlo Designs}

I simulate outcome and covariates by (i) converting each observed column to rank-based uniforms $U=\text{rank}(X)/(n+1)$, (ii) Gaussianizing to $Z=\Phi^{-1}(U)$ and estimating the latent normal correlation $\Sigma^*$, (iii) drawing $Z^\ast\sim\mathcal N(0,\Sigma^*)$ and mapping back to uniforms $U^\ast=\Phi(Z^\ast)$, and (iv) inverting each margin with the empirical CDF of the corresponding variable.
For the correlated design, I modify $\Sigma^*$ by multiplying by 3 the first row/column, the one corresponding to the correlation between outcome and covariates, and use as correlation matrix its nearest positive definite matrix in case the modified $\Sigma^*$ is no longer positive definite. 
For the uncorrelated design, I sample covariates the same way, and the outcome is sampled independently from a binomial distribution with probability $0.07$. 

\subsubsection{Comparison of Top-Bottom Estimates}

\Cref{fig.tmb_combined} compares the top minus bottom estimates across datasets and methods, similar to \Cref{fig.mse_combined}. 

\begin{figure}[!ht]
\includegraphics[width=\textwidth]{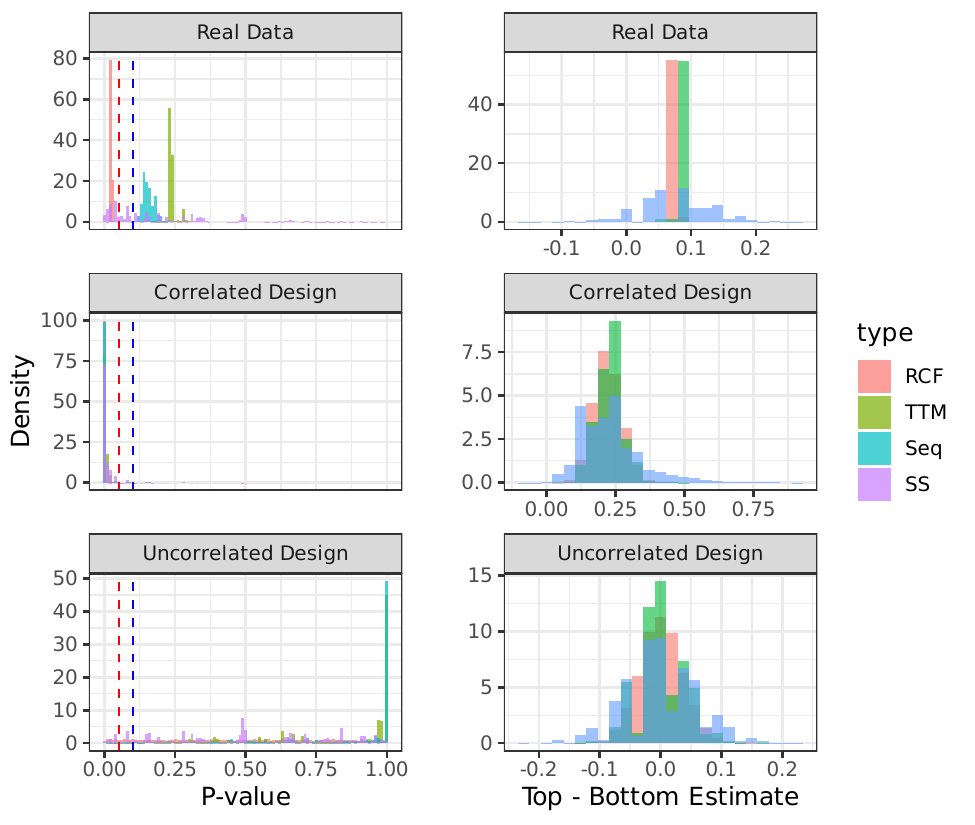}
\caption{Comparison of Top-Bottom Estimates Across Methods and Datasets}
\label{fig.tmb_combined}
\vspace{0.3cm}
\small
\textbf{Notes:} Left panels show distribution across Monte Carlo iterations of p-values for testing whether the top tercile has a higher fraction below the poverty line than the bottom tercile. 
Vertical red and blue lines are respectively 0.05 and 0.10. 
Right panels show distribution of point estimates for the difference between top and bottom terciles. 
Rows show results for real data (top), simulations from correlated design (middle), and simulations from uncorrelated design (bottom). 
Methods: RCF (repeated cross-fitting), TTM (twice-the-median), Seq (sequential aggregation), SS (sample-splitting).
\end{figure}

\subsection{\texorpdfstring{Details of \Cref{section.application_hte}}{Details of Section~\ref{section.application_hte}}} \label{appendix.hte}

\subsubsection{Covariates Description}

\textbf{Donation History Variables:}
\begin{itemize}
    \item \texttt{hpa}: Highest previous contribution
    \item \texttt{freq}: Number of prior donations
    \item \texttt{years}: Number of years since initial donation
    \item \texttt{mrm2}: Number of months since last donation
\end{itemize}

\textbf{Individual Demographics:}
\begin{itemize}
    \item \texttt{female}: Female indicator
\end{itemize}

\textbf{State-Level Political Variables:}
\begin{itemize}
    \item \texttt{cases}: Count of court cases between 2002 and 2005 in which the organization was either a party to or filed a brief
    \item \texttt{perbush}: State vote share for Bush
    \item \texttt{nonlit}: Count of incidences relevant to this organization from each state reported in 2004-5 (values range from zero to six) in the organization's monthly newsletter to donors
\end{itemize}

\textbf{Zip Code Demographics and Economics:}
\begin{itemize}
    \item \texttt{pwhite}: Proportion white within zip code
    \item \texttt{pblack}: Proportion black within zip code
    \item \texttt{page18\_39}: Proportion age 18-39 within zip code
    \item \texttt{ave\_hh\_sz}: Average household size within zip code
    \item \texttt{median\_hhincome}: Median household income within zip code
    \item \texttt{powner}: Proportion house owner within zip code
    \item \texttt{psch\_atlstba}: Proportion who finished college within zip code
    \item \texttt{pop\_propurban}: Proportion of population urban within zip code
\end{itemize}

\subsubsection{Monte Carlo Designs}

The designs are explicitly calibrated to the observed data so that simulated covariates and outcomes are distributionally aligned with the original sample. 

\emph{Treatment assignment.} 
I draw the treatment assignment indicator from a Bernoulli distribution with mean $0.5$. 

\emph{Covariates and potential outcome under control.} 
Starting from the observed outcome and covariate matrix for the control sample, I form pseudo-uniforms for each column by ranking within sample and scaling, \(U=\mathrm{rank}(X)/(n{+}1)\). 
I then Gaussianize to \(Z=\Phi^{-1}(U)\) and estimate the latent normal correlation \(\Sigma^*\) on these \(Z\) (taking the nearest positive definite matrix if needed). 
To generate synthetic $Y(0)$ and covariates, I draw \(Z^\ast\sim\mathcal N(0,\Sigma^*)\), map to uniforms \(U^\ast=\Phi(Z^\ast)\), and invert each margin via the empirical CDF of the corresponding original variable. 

\emph{Treatment effect.} 
From the original data, I estimate two arm-specific components as functions of treatment and covariates. 
The first is a logistic regression for whether $Y=0$ (no donation), using treatment, covariates and their interactions. 
The second is a Poisson regression, with amount of donation as outcome and same variables in the model. 
For generating simulated observations, the treatment effect is zero with probability $q_0(x,y_0) - q_1(x,y_0)$ (rounded to zero or one if necessary), where 
$$q_d(x,y_0)=(1-\pi_d(x))\;\hat{P}(Y \ge y_0{+}1\mid X{=}x,D{=}d),$$
with $x$ being the covariates, $y_0$ the value of potential outcome under control, $\pi_d(x)$ the probability that $Y = 0$ coming from the logit model with coefficients associated with treatment $ = 1$ being multiplied by 4, and $\hat{P}(Y \ge y_0{+}1\mid X{=}x,D{=}d)$ coming from the Poisson model with mean multiplied by $0.05$. 
Conditional on the treatment effect being different from zero, I draw $Y(1)$ from a truncated Poisson distribution starting at $Y(0)$ with the same mean coming from the Poisson regression. 

\emph{Final outcome.} 
For the design where treatment effect heterogeneity is predictable, I generate the observed outcome as $Y(1)$ if treatment is $1$, and $Y(0)$ otherwise. 
For the design where treatment effect heterogeneity is not predictable, I generate the entire dataset exactly the same way, but shuffle the treatment assignment indicator at random as the last step. 

\subsubsection{Additional Figures and Table}

\Cref{fig.hte_real_shuffle} displays results with the real dataset with shuffled treatment indicator (at random, so treatment effect is constant and equal to zero), and \Cref{fig.hte_fake_hte} displays results for the synthetic DGP where there is explainable treatment effect heterogeneity. 
\Cref{tab.iterations} gives the number of Monte Carlo iterations used for each specification. 

\begin{figure}[!ht]
\includegraphics[width=\textwidth]{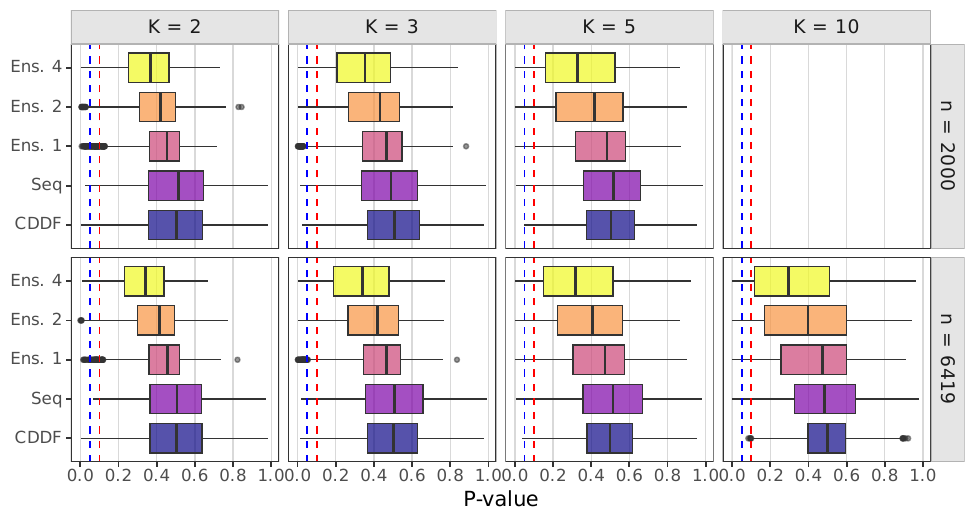}
\caption{Distribution of p-values for Top - Bottom GATES Groups -- Real Data with Shuffled Treatment Assignment}
\label{fig.hte_real_shuffle}
\vspace{0.3cm}
\footnotesize
Notes: Distribution of one-sided p-values for testing whether the top tercile has a larger ATE than the bottom tercile across Monte Carlo iterations using the real dataset. 
Rows show different sample sizes ($n = 2000, 6419$), columns show different numbers of folds ($K = 2, 3, 5, 10$). 
Ens. 1, Ens. 2, and Ens. 4 represent the Ensemble method using respectively 1, 2, and 4 algorithms. 
Each box represents the distribution across Monte Carlo iterations with 100 repetitions of sample-splitting per iteration. 
Boxplots show the median (center line), interquartile range (box), and whiskers extending to 1.5 times the IQR, with points beyond shown as outliers. 
Sources of randomness are the subsample when $n=2000$, which ML algorithms are used, how the data are split, and how the treatment assignment indicator is shuffled. 
Red dashed line at 0.1, blue dashed line at 0.05. 
Specifications with $K = 10, n = 2000$ are excluded.
\end{figure}

\begin{figure}[!ht]
\includegraphics[width=\textwidth]{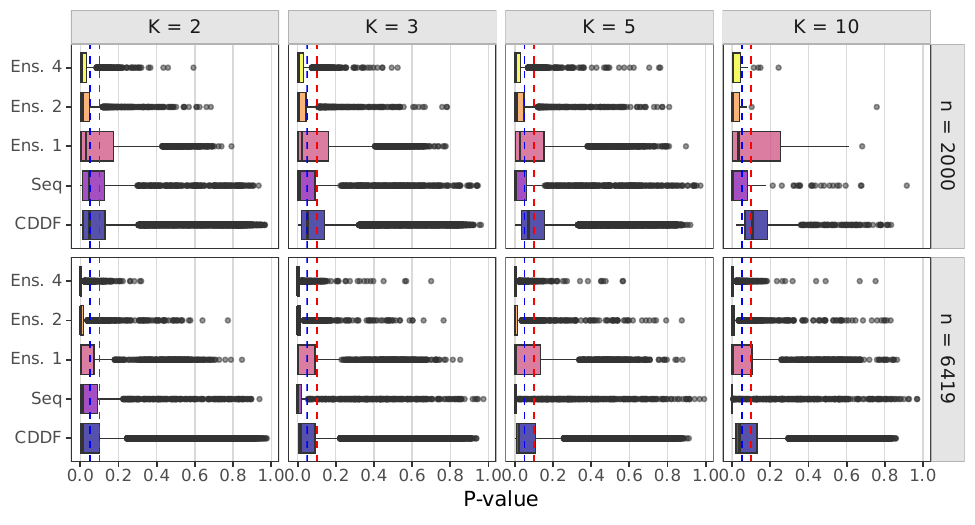}
\caption{Distribution of p-values for Top - Bottom GATES Groups -- Synthetic DGP with Heterogeneity}
\label{fig.hte_fake_hte}
\vspace{0.3cm}
\footnotesize
Notes: Distribution of one-sided p-values for testing whether the top tercile has a larger ATE than the bottom tercile across Monte Carlo iterations using the real dataset. 
Rows show different sample sizes ($n = 2000, 6419$), columns show different numbers of folds ($K = 2, 3, 5, 10$). 
Ens. 1, Ens. 2, and Ens. 4 represent the Ensemble method using respectively 1, 2, and 4 algorithms. 
Each box represents the distribution across Monte Carlo iterations with 100 repetitions of sample-splitting per iteration. 
Boxplots show the median (center line), interquartile range (box), and whiskers extending to 1.5 times the IQR, with points beyond shown as outliers. 
Data is generated from a synthetic DGP where there is explainable treatment effect heterogeneity (\Cref{appendix.hte}). 
Red dashed line at 0.1, blue dashed line at 0.05. 
Specifications with $K = 10, n = 2000$ are excluded.
\end{figure}

\begin{table}[ht]
\centering
\caption{Number of Monte Carlo Iterations by Specification}
\label{tab.iterations}
\scriptsize
\begin{tabular*}{\linewidth}{@{\extracolsep{\fill}}llccccccccccc}
\toprule
 & data & \multicolumn{2}{c}{n500} & \multicolumn{2}{c}{n1000} & \multicolumn{3}{c}{n2000} & \multicolumn{4}{c}{n6419} \\ 
\cmidrule(lr){2-2} \cmidrule(lr){3-4} \cmidrule(lr){5-6} \cmidrule(lr){7-9} \cmidrule(lr){10-13}
Method & Data Type & K2 & K3 & K2 & K3 & K2 & K3 & K5 & K2 & K3 & K5 & K10 \\ 
\midrule\addlinespace[2.5pt]
CDDF & Real (Shuffled) & 26,175 & 6,990 & 27,585 & 20,976 & 16,850 & 17,050 & 17,125 & 13,784 & 14,029 & 13,393 & 11,632 \\ 
CDDF & MC: No HTE & 26,109 & 26,320 & 26,637 & 26,211 & 16,433 & 15,805 & 18,756 & 12,384 & 12,142 & 12,376 & 11,676 \\ 
CDDF & Real Data & 23,324 & 4,327 & 27,026 & 12,845 & 17,204 & 17,131 & 8,579 & 13,765 & 14,005 & 13,420 & 9,417 \\ 
CDDF & MC: With HTE & 21,283 & 18,018 & 27,783 & 25,352 & 17,303 & 17,120 & 18,073 & 13,756 & 13,545 & 13,900 & 11,794 \\ 
Seq & Real (Shuffled) & 644 & 192 & 670 & 502 & 1,316 & 1,164 & 1,040 & 1,054 & 1,092 & 1,338 & 446 \\ 
Seq & MC: No HTE & 668 & 700 & 760 & 570 & 1,182 & 1,246 & 1,454 & 1,230 & 1,150 & 1,296 & 1,214 \\ 
Seq & Real Data & 1,154 & 172 & 1,342 & 650 & 1,134 & 1,380 & 568 & 1,018 & 1,004 & 1,430 & 656 \\ 
Seq & MC: With HTE & 508 & 464 & 696 & 622 & 1,366 & 1,368 & 1,248 & 1,302 & 1,262 & 1,320 & 1,226 \\ 
Ens. 1 & Real (Shuffled) & 3,177 & 901 & 3,390 & 2,638 & 1,969 & 2,084 & 1,990 & 1,493 & 1,543 & 1,580 & 1,505 \\ 
Ens. 1 & MC: No HTE & 3,399 & 3,368 & 3,327 & 3,335 & 2,105 & 2,057 & 2,346 & 1,492 & 1,537 & 1,560 & 1,522 \\ 
Ens. 1 & Real Data & 2,871 & 491 & 3,372 & 1,572 & 2,038 & 2,014 & 1,032 & 1,493 & 1,614 & 1,530 & 1,268 \\ 
Ens. 1 & MC: With HTE & 2,744 & 2,229 & 3,206 & 3,132 & 2,004 & 2,040 & 2,091 & 1,549 & 1,569 & 1,563 & 1,549 \\ 
Ens. 2 & Real (Shuffled) & 3,183 & 841 & 3,433 & 2,664 & 2,096 & 2,078 & 1,974 & 1,543 & 1,552 & 1,571 & 1,485 \\ 
Ens. 2 & MC: No HTE & 3,370 & 3,409 & 3,417 & 3,340 & 2,124 & 2,003 & 2,374 & 1,582 & 1,538 & 1,544 & 1,455 \\ 
Ens. 2 & Real Data & 2,865 & 499 & 3,269 & 1,584 & 2,160 & 1,999 & 987 & 1,574 & 1,556 & 1,561 & 1,265 \\ 
Ens. 2 & MC: With HTE & 2,625 & 2,226 & 3,429 & 3,170 & 2,120 & 2,052 & 2,075 & 1,589 & 1,585 & 1,588 & 1,443 \\ 
Ens. 4 & Real (Shuffled) & 3,261 & 868 & 3,476 & 2,512 & 2,048 & 2,072 & 2,035 & 1,581 & 1,643 & 1,511 & 1,389 \\ 
Ens. 4 & MC: No HTE & 3,367 & 3,405 & 3,421 & 3,410 & 2,069 & 1,996 & 2,319 & 1,524 & 1,438 & 1,525 & 1,420 \\ 
Ens. 4 & Real Data & 2,876 & 546 & 3,375 & 1,614 & 2,052 & 2,089 & 991 & 1,567 & 1,575 & 1,506 & 1,114 \\ 
Ens. 4 & MC: With HTE & 2,569 & 2,196 & 3,451 & 3,081 & 2,050 & 2,073 & 2,116 & 1,591 & 1,547 & 1,614 & 1,484 \\ 
\bottomrule
\end{tabular*}

\end{table}

\subsubsection{Theoretical Properties of Ensemble Approach}

I establish the theoretical properties of the ensemble estimator using the CLTs proven in this paper. 
I show that when there is detectable heterogeneity, i.e., when the ensemble weights $(\betahat_a)_{a=1}^A$ do not converge to zero, the confidence interval based on the normal approximation is asymptotically exact. 
If there is no detectable heterogeneity, however, my theoretical result gives no coverage guarantee to the normal approximation CI. 
Extensive simulation exercises, including but not limited to those of \Cref{section.application_hte}, suggest that the normal approximation CI is actually conservative under the null hypothesis of no heterogeneity for small values of $A$ and $K$ such as $A=4$ and $K=3$. 
Hence, my recommendation for empirical practice is to use the normal approximation CI with no more than 4 algorithms and 5 folds. 
I also propose an adaptive approach using ideas developed in \Cref{section.diff_performance} that is valid even when there is no detectable heterogeneity, at the cost of having smaller power.

First, I introduce additional notation. 
Denote the set of splits 
$$\cS = \lrp{\s_{m,k}}_{m \in \lrbk{M}, k \in \lrbk{K}},$$
and the set of model $\etahat = (\etahat_{\stilde})_{\s \in \cS}$. 
I use $F_{P}(x)$ to denote the cdf of the random variable $\sum_{a=1}^{A} {\betastarP}_a {\etastarP}_a(X)$ and 
$$F_{P}^{-1}(p) = \inf \lrbc{x \in \cX : p \le F_{P}(x)}.$$
For some results, I focus on a set $\cP_{hte} \subseteq \cP$ such that $\lrp{F_{P}^{-1}(t)}_{P \in \cP_{hte}}$ is equicontinuous at points $t=j/J$ for $j =1,\dots,J$. 
This is a collection of DGPs where the $J$ quantiles of the limit predicted ITE $\sum_{a=1}^{A} {\betastarP}_a {\etastarP}_a(X)$ are well-defined. 
This is required so that the groups defined in \cref{eq.def_gates_groups_ensemble} are well-defined in the limit. 
Note that $F_{P}^{-1}(j/J)$ being continuous implies that the limit predictor $\sum_{a=1}^{A} {\betastarP}_a {\etastarP}_a(X)$ is not flat in $X$, so this class essentially excludes DGPs where there is no detectable heterogeneity, that is, where the true CATE $\eta_{P}(x)$ is flat in $x$. 

My first result is that the normal approximation CI is asymptotically exact when there is detectable heterogeneity. 
It relies on \Cref{as.ensemble}, defined in \Cref{section.proofs_section.ensemble}. 
It is a mild but technical assumption that requires: (i) the weights $\betahat_{\ell,a}$ have finite limits, (ii) a standard moments condition, (iii) propensity scores are bounded away from $0$ and $1$, (iv) the variance-covariance matrix of the regressors $Z$ is positive definite, and (v) the models estimated with ML converge to any limit at any rate. 

\begin{theorem} \label{th.ensemble_hte}
    Let \Cref{as.ensemble} hold, and let $\cP_{hte} \subseteq \cP$ be such that $\lrp{F_{P}^{-1}(t)}_{P \in \cP_{hte}}$ is equicontinuous at points $t=j/J$ for $j =1,\dots,J$. 
    Then, for any sequence $(P_n)_{n \ge 1} \subseteq \cP_{hte}$, 
    $$P_n \lrp{\delta_\etahat \in \lrbk{\hat{\delta}_\etahat - z_{1-\alpha/2} \sigmahat_{\etahat}, \hat{\delta}_\etahat + z_{1-\alpha/2} \sigmahat_{\etahat}}} \to 1 - \alpha.$$
\end{theorem}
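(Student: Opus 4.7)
The plan is to recast the split-specific GATES estimator as a split-sample Z-estimator and invoke \Cref{th.clt_z}. Concretely, I bundle all first-stage objects — the base ML predictions $\etahat_{\stilde_k,a}$, the ensemble weights $(\betahat_{\ell,a})_{\ell,a}$, and the fold-specific cutoffs $(\hat d_{j,k})_{j,k}$ — into a single nuisance object $\etahat$, and write the moment function associated with the weighted regression \cref{eq.gates_reg_ensemble} as
$$\psi_{\theta,\eta}(y,t,x) = \omega(x)\begin{pmatrix} Z \\ \bigl\{(T-p(X))\I{\eta(X) \in I_j(\eta)}\bigr\}_{j=1}^J \end{pmatrix}\biggl(Y - \alpha Z - \sum_{j'=1}^{J}\gamma_{j'}(T-p(X))\I{\eta(X) \in I_{j'}(\eta)}\biggr),$$
with $\theta=(\alpha,\gamma)$ and $\omega(x) = \{p(x)(1-p(x))\}^{-1}$. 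The split-specific estimators $(\hat\alpha^{(m)},\hat\gamma^{(m)})$ then coincide with the split-level Z-estimators of type \cref{eq.thetahat_z_3}, and $\hat\delta_\etahat$ equals the linear functional $\gamma_J-\gamma_1$ applied to the averaged Z-estimator. \Cref{th.clt_z} would yield $\sqrt n(\hat\delta_\etahat - \delta_\etahat) \leadsto \cN(0,\sigma^2_{\etastarP})$ along the sequence; consistency of $\sigmahat_\etahat$ for $\sigma_{\etastarP}$ (noting that $V_{\Mbar,K}=1$ since $K>1$) together with Slutsky's theorem would then deliver the stated coverage.

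The main obstacle is verifying \Cref{as.z_etahat} along an arbitrary sequence $(P_n)_{n\ge 1}\subseteq\cP_{hte}$: one must exhibit a limit $\etastarP$ such that $\lrm{\psi_{\theta,\etahat}(W)-\psi_{\theta,\etastarP}(W)} \Pnto 0$. The natural candidate is built from (i) the limits of the base ML predictors supplied by \Cref{as.ensemble}, (ii) the population ensemble weights $\betastarP$, and (iii) the population quantiles $F_P^{-1}(j/J)$ as cutoffs. The difficulty is that $\I{\eta(X)\in I_j(\eta)}$ is a discontinuous functional of $\eta$, so mere pointwise convergence of the ensemble predictor will not suffice. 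I would split the argument in two pieces: first, an empirical-quantile argument showing that each sample cutoff $\hat d_{j,k}$ converges in probability to $F_{P_n}^{-1}(j/J)$, using convergence of the base models, the stability of $\betahat_{\ell}$ from \Cref{as.ensemble}, and the equicontinuity hypothesis on $\lrp{F_P^{-1}(t)}_{P\in\cP_{hte}}$ to transfer convergence of the empirical CDF to convergence of its inverse at $j/J$. Second, I would bound $P_n \lrp{ \sum_a \betahat_{\ell,a}\etahat_{\stilde_k,a}(X)\in [F_{P_n}^{-1}(j/J)-\varepsilon,\,F_{P_n}^{-1}(j/J)+\varepsilon] }$ by the modulus of continuity of $F_{P_n}$ around $F_{P_n}^{-1}(j/J)$, which vanishes uniformly in $n$ by the same equicontinuity. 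Together these show that the group indicator computed at $\etahat$ differs from the one at $\etastarP$ with probability tending to zero, yielding the required convergence of $\psi_{\theta,\etahat}$.

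The remaining conditions of \Cref{as.z_estimator_technical} are routine. \Cref{as.z_donsker} follows because $\psi_{\theta,\eta}$ is linear in $\theta=(\alpha,\gamma)$ for any fixed $\eta$, so the class $\{\psi_{\theta,\eta}:\theta\in\Theta'\}$ lies in a finite-dimensional space of integrable functions and is uniformly Donsker under the moment bound in \Cref{as.ensemble}. \Cref{as.z_approx_0} is immediate since the OLS normal equations are satisfied exactly. Uniqueness of the population zero and invertibility of the Jacobian $\Psidot_{\etastarP}$ (\Cref{as.z_unique_0,as.z_jacobian}) reduce to positive definiteness of the weighted second-moment matrix of the regressor vector, which follows from the bounded-propensity condition, $E_P[Z Z^T]\succ 0$ in \Cref{as.ensemble}, and the fact that the population probabilities $P(\etastarP(X)\in I_j(\etastarP))=1/J$ are bounded away from zero — this is precisely where the restriction to $\cP_{hte}$ is essential, because under a null of no detectable heterogeneity the cutoffs would collapse and the group-indicator block of the regressor matrix would become rank-deficient. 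With all assumptions verified, \Cref{th.clt_z} applies along $(P_n)$, and the delta method applied to the linear functional $(\alpha,\gamma)\mapsto\gamma_J-\gamma_1$ yields the stated asymptotic coverage.
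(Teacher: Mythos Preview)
Your plan to invoke \Cref{th.clt_z} directly does not quite work, because the bundled nuisance you propose does not satisfy the independence structure that theorem relies on. In the split-sample Z-estimator framework, the nuisance $\etahat_{\stilde}$ attached to fold $\s$ must be computed from $\stilde$ alone, so that conditional on $\etahat_{\stilde}$ the summands over $i\in\s$ are i.i.d.\ and independent of it; this is exactly what drives the key variance calculation in step one of the proof of \Cref{th.clt_general}. But the cutoffs $(\hat d_{j,k})_{j}$ are computed \emph{within} the evaluation fold $\s_k$ (they are sample quantiles of $(\tauhat_i)_{i\in\s_k}$), and the ensemble weights $(\betahat_{\ell,a})_a$ come from a \emph{second} cross-fit partition $(\s'_\ell)_{\ell}$ that cuts across the first. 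Neither is a measurable function of $\stilde_k$, so the object you call $\etahat$ is not of the form $\cA(D_{\stilde})$ and \Cref{as.z_etahat} cannot even be stated for it. Your indicator-convergence sketch is reasonable heuristics, but it does not repair this structural mismatch.

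The paper sidesteps this by \emph{not} bundling: it writes the explicit OLS decomposition $\thetahat^{(m)}-\theta^{(m)}=(H^T\Omega H)^{-1}H^T\Omega\varepsilon$ and treats $(\beta,d)$ as a finite-dimensional \emph{index} $t\in T$ for the split-sample empirical process, while only the ML predictions $(\etahat_{\stilde_k,a})$ play the role of the nuisance $\eta$ in \Cref{th.clt_general}. Step one of that proof then replaces $\etahat$ by $\etastarP$ uniformly over $(\beta,d)\in B$; consistency $(\betahat,\dhat)\Pto(\betastarP,\dstarP)$ together with asymptotic equicontinuity of the process $(\beta,d)\mapsto n^{-1/2}H_{\etastarP,\beta,d}^T\Omega\varepsilon_{\etastarP,\beta,d}$ (a standard Donsker/VC argument for indicator classes indexed by finitely many thresholds and linear combinations) lets one plug in the estimated values regardless of which subsamples produced them. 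Lyapunov's CLT on the resulting i.i.d.\ sum and consistency of $\sigmahat_\etahat$ finish the argument. In short, the paper separates the infinite-dimensional ML piece (handled by sample splitting) from the finite-dimensional $(\beta,d)$ piece (handled by equicontinuity), whereas you tried to force both through the same channel.
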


Although \Cref{th.ensemble_hte} does not cover cases when there is no detectable heterogeneity, extensive simulation exercises, including but not limited to the ones of \Cref{section.application_hte}, suggest that the coverage probability is larger than $1 - \alpha$ in those cases at least when $A \le 4$, $K \le 5$, that is, the CI \Cref{th.ensemble_hte} is conservative. 
Next, I consider a test for detectable heterogeneity that can be used, for example, when $A > 4$ and/or $K > 5$. 
If the test rejects no detectable heterogeneity, the normal approximation CI may be used. 

\subsubsection{A Test for Detectable Heterogeneity} \label{appendix.test_hte}

I propose using a version of the test proposed in \Cref{section.diff_onesided_test} for testing whether the models $\etahat = (\etahat_{\stilde_{m,k}})$ have explanatory power for heterogeneous treatment effects. 
Specifically, I first calculate the mean squared of residuals from the BLP regression 
\begin{equation} \label{eq.ensemble_msr}
    Y_i = \alpha_1 + \sum_{a = 1}^A \beta_a (\etahat_{\stilde, a}(X_i) - \taubar_{\s,a}) \lrbk{T_i - p(X_i)} + \alpha_2 Z_i + \varepsilon_i, \qquad i \in \s
\end{equation}
with weights $\omega_i = \lrbc{p(X_i) \lrbk{1 - p(X_i)}}^{-1}$, $\taubar_{\s,a} = |s|^{-1} \sum_{i \in \s} \etahat_{\stilde, a}(X_i)$, for $\s \in \cS$, as in \cref{eq.ensemble_weights} but at the fold level. 
Denote it by  
$$MSR_\s = \frac{1}{|s|} \sum_{i \in \s} \lrp{Y_i - \hat{\alpha}_{1,\s} + \sum_{a = 1}^A \hat{\beta}_{a,\s} (\etahat_{\stilde, a}(X_i) - \taubar_{\s,a}) \lrbk{T_i - p(X_i)} + \hat{\alpha}_{2,\s} Z_i}^2$$
I compare $(MSR_\s)_{\s \in \cS}$ with 
$$MSR_{b} = \frac{1}{n} \sum_{i=1}^{n} \lrp{Y_i - \hat{\alpha}_{1,b} + \hat{\alpha}_{2,b} Z_i}^2,$$
where $\hat{\alpha}_{1,b}$ and $\hat{\alpha}_{2,b}$ are the estimates from the weighted least squares regression 
$$Y_i = \alpha_{1} + \alpha_{2} Z_i + \varepsilon_i, \qquad i \in {1,\dots,n}.$$
Let $\Sigmahat$ be an estimate of the asymptotic variance of $\sqrt{n} \lrp{MSR_\s - MSR_b}_{\s \in \cS}$, and $\sigmahat^2_\s$ are the entries of the main diagonal.  
I propose calculating the test-statistic 
$$\hat{T} = \sum_{\s \in \cS} \lrp{\min \lrbc{\sqrt{n} \frac{MSR_\s - MSR_b}{\sigmahat_{\s}}, 0}}^2.$$
I establish the validity of this test in \Cref{th.ensemble_onesided_test}, where $\hat{c}_{1-\alpha}$ is calculated as in \Cref{section.diff_onesided_test}. 
The result follows from \Cref{th.diff_onesided_test}. 

\begin{theorem} \label{th.ensemble_onesided_test}
    Let \Cref{as.ensemble} hold, and let 
    $$\cP_0 = \lrbc{P \in \cP : \exists c \in \R, \eta_{P}(x) = c}.$$
    Then, for any sequence $(P_n)_{n \ge 1} \subseteq \cP_{0}$, 
    $$P_n \lrp{\hat{T} > \hat{c}_{1-\alpha}} \to 1 - \alpha.$$
\end{theorem}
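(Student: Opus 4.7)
The plan is to recast this test as an instance of the multivariate one-sided test of \Cref{section.diff_onesided_test} and then invoke \Cref{th.diff_onesided_test} directly. Specifically, I would view $(\hat{\alpha}_{1,\s}, (\hat{\beta}_{a,\s})_{a=1}^A, \hat{\alpha}_{2,\s}, MSR_\s)$ as a vector-valued split-sample Z-estimator, with nuisance $\etahat_{\stilde} = (\etahat_{\stilde,a})_{a=1}^A$, defined by the WLS first-order conditions for \cref{eq.ensemble_msr} augmented by the moment equation $MSR_\s - |\s|^{-1}\sum_{i \in \s} \hat\varepsilon_i^2 = 0$. Analogously, $(\hat\alpha_{1,b}, \hat\alpha_{2,b}, MSR_b)$ is a purely parametric Z-estimator computed on the full sample, playing the role of $\thetahat_{\bhat}$ in the notation of \Cref{section.diff_performance}. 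The scalar-valued difference $\thetahat_{\etahat_{\stilde}} - \thetahat_{\bhat}$ used in \Cref{section.diff_onesided_test} is then extracted as the last coordinate, giving $\deltahat_\etahat = (MSR_\s - MSR_b)_{\s \in \cS}$.

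Next, I would verify the regularity conditions of \Cref{as.diff_onesided_test} for this choice of $\psi_{\theta,\eta}$. \Cref{as.ensemble} supplies the $2+\delta$ moment condition, the convergence of each $\etahat_{\stilde,a}$ to a limit $\etastarP_a$, boundedness of the propensity score (which makes the WLS weights and the Jacobian of the BLP normal equations regular), and positive definiteness of the covariance of the regressors (which gives invertibility of the relevant Jacobian block). Donskerness along $\theta$ follows because the augmented moment function is polynomial in $\theta$ with uniformly bounded coefficients. The baseline estimator $\thetahat_{\bhat}$ is an ordinary parametric WLS estimator, which satisfies \Cref{as.diff_onesided_test_bP} in the standard way. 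The key remaining step is showing that, under any $(P_n) \subseteq \cP_0$, $\delta_\etahat \Pnto 0$ in the sense required for the least-favorable null. Because $\eta_{P_n}(x) \equiv c_n$ is constant under $\cP_0$, the population best linear predictor coefficients $\beta^*_{a,P_n}$ on the demeaned interaction regressors $(\etastarPn_a(X) - \E[]{\etastarPn_a(X)})(T - p(X))$ vanish: a constant CATE has no $L^2(\omega dP)$ projection onto these mean-zero interactions after partialling out $(1, Z)$. Consequently the population residuals from \cref{eq.ensemble_msr} coincide with those of the baseline regression, so that the population analogues of $MSR_\s$ and $MSR_b$ agree and the asymptotic linearization from \Cref{th.clt_z} gives $\sqrt{n}\,\delta_\etahat = o_{P_n}(1)$. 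The conclusion then follows by applying the second display of \Cref{th.diff_onesided_test} coordinate-wise across $\s \in \cS$.

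The main obstacle I anticipate is the first step: carefully specifying the full $\psi_{\theta,\eta}$ that jointly encodes the WLS first-order conditions and the MSR identity, and checking that its Jacobian in $\theta$ is uniformly nonsingular in $P \in \cP_0$. The delicate sub-block is the Jacobian of the BLP normal equations, whose invertibility requires the conditional covariance of $(\etastarP_a(X))_{a=1}^A$ after partialling out $(1, Z)$ and weighting by $\{p(X)(1-p(X))\}^{-1}$ to be nondegenerate. If any $\etastarP_a$ becomes collinear with $(1, Z)$ or with the other ML limits along a subsequence, this block fails to be bounded away from singularity and \Cref{as.diff_onesided_test_technical} must be imposed as an additional primitive condition on $\cP_0$; I would state this assumption explicitly and then the reduction to \Cref{th.diff_onesided_test} is mechanical.
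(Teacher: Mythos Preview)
Your approach is correct and essentially identical to the paper's: both reduce directly to \Cref{th.diff_onesided_test}, using the key observation that when the CATE is constant the population BLP coefficients $\beta^*_a$ vanish, so the population MSR from the augmented regression \cref{eq.ensemble_msr} coincides with that of the baseline. The paper's proof is a one-liner asserting this reduction; you supply the Z-estimator scaffolding and correctly flag the Jacobian-invertibility caveat (indeed not guaranteed by \Cref{as.ensemble} alone), though your appeal to the \emph{second display} of \Cref{th.diff_onesided_test} is slightly off---what you actually establish is $\sqrt{n}\,\delta_\etahat = o_{P_n}(1)$ unconditionally under $\cP_0$, which plugs into the \emph{proof} of that theorem rather than into its conditional statement.
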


Denote the normal approximation CI 
$$\widehat{\rm CI}_{\alpha,\cN} = \lrbk{\hat{\delta}_\etahat - z_{1-\alpha/2} \sigmahat_{\etahat}, \hat{\delta}_\etahat + z_{1-\alpha/2} \sigmahat_{\etahat}},$$
and the extended CI 
$$\widehat{\rm CI}_{\alpha,{\rm ext}} = \operatorname{Conv}\lrp{\widehat{\rm CI}_{\alpha,\cN} \cup \{0\}},$$
where $\operatorname{Conv}$ denotes the convex hull, that is, $\widehat{\rm CI}_{\alpha,{\rm ext}}$ has all the elements in $\widehat{\rm CI}_{\alpha,\cN}$, $0$, and all elements in between. 
For a given fixed $\bar{c}_5 \ge 0$, denote the final CI 
$$\CIalpha = \begin{cases}
    \widehat{\rm CI}_{\alpha,\cN}, \text{ if } \hat{T}' > \hat{c}_{1 - \alpha} \\
    \widehat{\rm CI}_{\alpha,{\rm ext}}, \text{ otherwise}, 
\end{cases}$$
\Cref{th.ensemble_onesided_test} implies that this CI is asymptotically valid pointwise in $P \in \cP$ for $\bar{c}_5 = 0$, and uniformly in $P \in \cP$ for any $\bar{c}_5 > 0$.

\subsubsection{Proofs and Extra Definitions} \label{section.proofs_section.ensemble}

Define $\cX \subseteq \R^{d_x}$ as the space that contains the covariates $X \in \cX$ for some integer $d_x > 0$. 
Let $Y^T = (Y_i)_{i=1}^n$. 
For any
$$d = (d_{j,k})_{j \in [J], k \in [K]},$$
$$\beta = (\beta_{\ell,a})_{\ell \in [L], a \in [A]},$$
and $\eta \in H$, 
let 
$$H_{\eta,\beta,d}^T = \lrp{Z_i, \lrbk{\lrbc{T_i - p(X_i)} \I{d_{j-1, k(i)} \le \sum_{a=1}^{A} \beta_{\ell(i),a} \eta_a(X_i) < d_{j, k(i)}}}_{j=1}^J}_{i=1}^n.$$
$$H_{\etahat,\beta,d}^T = \lrp{Z_i, \lrbk{\lrbc{T_i - p(X_i)} \I{d_{j-1, k(i)} \le \sum_{a=1}^{A} \beta_{\ell(i),a} \etahat_{\stilde_{k(i)},a}(X_i) < d_{j, k(i)}}}_{j=1}^J}_{i=1}^n.$$

$\Omega$ is the n-by-n diagonal matrix of weights:
$$\Omega = \operatorname{diag}(\omega_1,\dots,\omega_n).$$
${\betastarP}_{\ell,a}$ is the coefficient of the linear projection with weights $\omega$ of $Y$ on 
$$\lrbk{\lrbc{T_i - p(X_i)} (\etastarP_a(X) - \E[P]{\etastarP_a(X)})}_{j=1}^J$$
when that is well-defined, and zero otherwise. 
Note ${\betastarP}_{\ell,a}$ is the same for all $\ell$ since the limit $\etastarP_a$ does not depend on the data. 
Let $F_{P}(x)$ be the cdf of the random variable $\sum_{a=1}^{A} {\betastarP}_a {\etastarP}_a(X)$ and 
$$F_{P}^{-1}(p) = \inf \lrbc{x \in \cX : p \le F_{P}(x)}.$$
Define 
$${\dstarP}_{j,k} = F_{P}^{-1}(j/J).$$
Similarly, ${\dstarP}_{j,k}$ is the same for all $k$. 
$$\dhat = (\dhat_{j,k})_{j \in [J], k \in [K]},$$
$$\betahat = (\betahat_{\ell,a})_{\ell \in [L], a \in [A]}.$$
Define $\theta_{\eta,\beta,d}$ and column vector $\varepsilon_{\eta,\beta,d}$ such that 
\begin{equation} \label{eq.ensemble_epsilon}
    Y = H_{\eta,\beta,d} \theta_{\eta,\beta,d} + \varepsilon_{\eta,\beta,d}.
\end{equation}
$\thetahat_{\etahat,\betahat,\dhat}^{(m)} = (\hat{\alpha}, \lrp{\hat{\gamma}_{j=1}^J})^T$ are the estimates from \cref{eq.gates_reg_ensemble}, and  
$\theta_{\etahat,\betahat,\dhat}^{(m)}$ denotes $\etahat,\betahat,\dhat$ from the $m$-th repetition. 

\begin{assumption} \label{as.ensemble} 
    The following conditions hold: 
    \begin{assumptionenum}
        \item For some $B = (B_\beta \times B_d) \subset \R^{L A} \times \R^{J K}$ with compact $B_\beta$, 
        $$\bigcup_{P \in \cP} (\betastarP,\dstarP) \subseteq B;$$  
        \item For some $\bar{c}_6 > 0$, 
        $$\sup_{P \in \cP} \sup_{\eta \in H, (\beta, d) \in B} \E[P]{\lrm{H_{\eta,\beta,d,i}^T \varepsilon_{\eta,\beta,d,i}}^{2+\bar{c}_6}} < \infty;$$
        \item For some $\bar{c}_7 > 0$, and all $x \in \cX$, 
        $$\bar{c}_7 < p(x) < 1 - \bar{c}_7;$$
        \item $\inf_{P \in \cP} \det \lrp{\Var[P]{Z}} > 0$. 
        \item There exists $(\etastarP_a)_{a=1}^A$ such that 
        $$\E[P]{\lrm{\etatilde_a(X) - \etastarP_a(X)} | D} \Pto 0$$
        uniformly in $P \in \cP$, where $\etatilde_a = \cA_a(D)$ and $X \perp D$.  
    \end{assumptionenum}
\end{assumption}

\begin{theorem} \label{th.ensemble_decomp}
    Let \Cref{as.ensemble} hold, and $\cP_{hte} \subseteq \cP$ be such that $\lrp{F_{P}^{-1}(t)}_{P \in \cP_{hte}}$ is equicontinuous at points $t=j/J$ for $j =1,\dots,J$.
    Then, 
    $$\sqrt{n} \lrp{\thetahat_{\etahat,\betahat,\dhat} - \theta_{\etahat,\betahat,\dhat}} - \sqrt{n} \E[P]{H_{\etastarP,\betastarP,\dstarP}^T \Omega H_{\etastarP,\betastarP,\dstarP}}^{-1} H_{\etastarP,\betastarP,\dstarP}^T \Omega \varepsilon_{\etastarP,\betastarP,\dstarP} \Pto 0$$
    uniformly in $P \in \cP_{hte}$. 
\end{theorem}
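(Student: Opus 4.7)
The plan is to express the ensemble WLS estimator in closed form and reduce the statement to two substitution lemmas that can be handled by the split-sample empirical process CLT of \Cref{th.clt_general}. Since $\thetahat_{\etahat,\betahat,\dhat} - \theta_{\etahat,\betahat,\dhat} = (H_{\etahat,\betahat,\dhat}^T \Omega H_{\etahat,\betahat,\dhat})^{-1} H_{\etahat,\betahat,\dhat}^T \Omega \varepsilon_{\etahat,\betahat,\dhat}$ by the definition of $\theta_{\eta,\beta,d}$ as the population WLS coefficient, Slutsky reduces the claim to: (i) $n^{-1} H_{\etahat,\betahat,\dhat}^T \Omega H_{\etahat,\betahat,\dhat} \Pto \E[P]{H_{\etastarP,\betastarP,\dstarP}^T \Omega H_{\etastarP,\betastarP,\dstarP}}$, with the limit positive-definite by \Cref{as.ensemble}, and (ii) the substitution identity $n^{-1/2} H_{\etahat,\betahat,\dhat}^T \Omega \varepsilon_{\etahat,\betahat,\dhat} - n^{-1/2} H_{\etastarP,\betastarP,\dstarP}^T \Omega \varepsilon_{\etastarP,\betastarP,\dstarP} \Pto 0$ uniformly in $P \in \cP_{hte}$.

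I would begin by establishing consistency of the data-driven nuisances. Each $\betahat_\ell$ is a split-sample WLS coefficient whose moment condition is continuous in $\beta$, so applying \Cref{th.clt_z} to the auxiliary regression \cref{eq.ensemble_weights} together with the convergence $\etahat_a \Pto \etastarP_a$ from \Cref{as.ensemble}(v) yields $\betahat_{\ell} \Pto \betastarP$ uniformly. The cut-offs $\dhat_{j,k}$ are empirical $j/J$-quantiles of the predicted ITEs $\sum_a \betahat_{\ell(i),a} \etahat_{\stilde_{k(i)},a}(X_i)$; combining a split-sample Glivenko-Cantelli step (a by-product of \Cref{th.clt_general} applied to the VC class of half-intervals) with $\betahat \Pto \betastarP$, $\etahat_a \Pto \etastarP_a$, and the equicontinuity of $F_P^{-1}$ at the points $j/J$ for $P \in \cP_{hte}$, delivers $\dhat_{j,k} \Pto \dstarP_{j,k}$ uniformly. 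Step (i) then follows routinely: the entries of $n^{-1} H^T \Omega H$ are sample averages of products of $Z_i$, $T_i - p(X_i)$, and indicator functions $\I{d_{j-1,k} \le \sum_a \beta_a \eta_a(X_i) < d_{j,k}}$, and the same uniform LLN combined with continuity of $F_P$ near its $j/J$-quantiles converts the consistency of $(\etahat,\betahat,\dhat)$ into the claimed convergence of expectations.

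The main obstacle is step (ii): the indicator components of $H_{\eta,\beta,d}$ are discontinuous in $(\beta,d)$ and $\etahat$ is data-dependent with no specified rate, so a naive Taylor expansion is unavailable. I would instead rely on stochastic equicontinuity of the split-sample empirical process. The class
\[\mathcal{F} = \lrbc{h_{\eta,\beta,d}(w) \coloneqq H_{\eta,\beta,d,i}^T \omega_i \varepsilon_{\eta,\beta,d,i} \bigm| w = W_i,\ (\beta,d) \in B,\ \eta \in H}\]
is Lipschitz in $\beta$ and indexed over $(\beta,d)$ by a VC class of interval indicators, so under the $2{+}\bar{c}_6$-moment bound in \Cref{as.ensemble}(ii) it satisfies the uniform Donsker condition of \Cref{as.emp_proc}. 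Applying \Cref{th.clt_general} to $\mathcal{F}$ yields asymptotic equicontinuity of the split-sample empirical process at $(\etastarP,\betastarP,\dstarP)$; combining this with the consistency from step one and with the identities $\E[P]{H_{\etastarP,\betastarP,\dstarP}^T\Omega\varepsilon_{\etastarP,\betastarP,\dstarP}} = 0$ and $\E[P|\etahat,\betahat,\dhat]{H_{\etahat,\betahat,\dhat}^T\Omega\varepsilon_{\etahat,\betahat,\dhat}} = 0$, which hold by definition of the WLS population coefficients, collapses the difference in (ii) to a negligible remainder. Combining (i) and (ii) via Slutsky then closes the argument.
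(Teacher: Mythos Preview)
Your proposal is correct and follows essentially the same route as the paper: the WLS closed form $(H^T\Omega H)^{-1}H^T\Omega\varepsilon$, a uniform LLN for the Hessian, and the split-sample empirical process result (\Cref{th.clt_general}) to swap $(\etahat,\betahat,\dhat)$ for $(\etastarP,\betastarP,\dstarP)$ in the score. The paper makes the score substitution in two explicit passes---first $\etahat\to\etastarP$ uniformly over $(\beta,d)\in B$ via Step one of \Cref{th.clt_general}, then $(\betahat,\dhat)\to(\betastarP,\dstarP)$ via asymptotic equicontinuity in $(\beta,d)$ at fixed $\etastarP$---whereas you compress these into a single equicontinuity-plus-consistency step; the two-pass version is cleaner because it avoids having to argue that the Donsker condition holds jointly over $\eta\in H$ (which the paper's framework never requires).
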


\begin{proof}[\hypertarget{proof.th.ensemble_decomp}{Proof of \Cref{th.ensemble_decomp}}] 
    First, note that equicontinuity of $\lrp{F_{P}^{-1}(t)}_{P \in \cP_{hte}}$ implies the $J$ quantiles groups to be well-defined, which together with \Cref{as.ensemble} implies 
    $$\inf_{P \in \cP_{hte}} \det \lrp{H_{\etastarP,\betastarP,\dstarP}^T \Omega H_{\etastarP,\betastarP,\dstarP}} > 0.$$

    For each $m=1,\dots,M$, using \cref{eq.ensemble_epsilon} leads to the decomposition 
    $$\thetahat_{\etahat,\betahat,\dhat}^{(m)} - \theta_{\etahat,\betahat,\dhat}^{(m)} = \lrp{H_{\etahat,\betahat,\dhat}^T \Omega H_{\etahat,\betahat,\dhat}}^{-1} H_{\etahat,\betahat,\dhat}^T \Omega \varepsilon_{\etahat,\betahat,\dhat}.$$
    $$\lrp{n^{-1} H_{\etahat,\betahat,\dhat}^T \Omega H_{\etahat,\betahat,\dhat}}^{-1} \Pto \E[P]{H_{\etastarP,\betastarP,\dstarP}^T \Omega H_{\etastarP,\betastarP,\dstarP}}^{-1}$$
    by a uniform law of large numbers. 
    The terms in $n^{-1/2} H_{\etahat,\betahat,\dhat}^T \Omega \varepsilon_{\etahat,\betahat,\dhat}$ are given by 
    $$\lrp{\frac{1}{\sqrt{n}} \sum_{i=1}^n \omega_i \varepsilon_{\etahat,\betahat,\dhat,i} Z_i, \lrbc{\frac{1}{\sqrt{n}} \sum_{i=1}^n \omega_i \varepsilon_{\etahat,\betahat,\dhat,i} \lrbk{T_i - p(X_i)} \I{\dhat_{j-1, k(i)} \le \sum_{a=1}^{A} \beta_{\ell(i),a} \etahat_{\stilde_{k(i)},a}(X_i) < \dhat_{j, k(i)}}}_{j=1}^J }.$$
    These are split-sample empirical processes as in \Cref{th.clt_general}, with functions 
    $$f_{(\beta, d),\eta,1}(y, h,\omega) = \omega (y - h^T \theta_{\eta,\beta,d}) z $$
    and 
    $$f_{(\beta, d),\eta,1+j}(y, h, k, \ell,\omega) = \omega (y - h^T \theta_{\eta,\beta,d}) (t - p(x)) \I{d_{j-1, k} \le \sum_{a=1}^{A} \beta_{\ell,a} \etahat_{\stilde_{k},a}(X_i) < \dhat_{j, k}}.$$
    Step one of the proof of \Cref{th.clt_general} gives 
    $$\sup_{(\beta,d) \in B} \norm{n^{-1/2} H_{\etahat,\beta,d}^T \Omega \varepsilon_{\etahat,\beta,d} - n^{-1/2} H_{\etastarP,\beta,d}^T \Omega \varepsilon_{\etastarP,\beta,d}} \Pto 0.$$
    Together with consistency of $(\betahat, \dhat)$ to $(\betastarP,\dstarP)$, which follows from a uniform law of large numbers, this gives 
    $$n^{-1/2} H_{\etahat,\betahat,\dhat}^T \Omega \varepsilon_{\etahat,\betahat,\dhat} = n^{-1/2} H_{\etastarP,\betahat,\dhat}^T \Omega \varepsilon_{\etastarP,\betahat,\dhat} + o_P(1).$$
    Finally, asymptotic equicontinuity in $(\beta,d)$ gives 
    $$n^{-1/2} H_{\etastarP,\betahat,\dhat}^T \Omega \varepsilon_{\etastarP,\betahat,\dhat} = n^{-1/2} H_{\etastarP,\betastarP,\dstarP}^T \Omega \varepsilon_{\etastarP,\betastarP,\dstarP} + o_P(1).$$
    Summing over $m \in M$ concludes the proof. 
\end{proof}

\begin{proof}[\hypertarget{proof.th.ensemble_hte}{Proof of \Cref{th.ensemble_hte}}] 
    Follows from \Cref{th.ensemble_decomp}, Lyapunov's CLT and consistency of $\sigmahat_{\etahat}$, which follows by a law of large numbers. 
\end{proof}

\begin{proof}[\hypertarget{proof.th.ensemble_onesided_test}{Proof of \Cref{th.ensemble_onesided_test}}] 
    Follows directly from \Cref{th.diff_onesided_test}, noting that 
    $$\E[P]{MSR_\s | \stilde} \ge \E[P]{MSR_b}$$
    always holds when $\eta_{0,P}(x)$ is flat, since in that case the true coefficients $(\beta_a)$ in regression \cref{eq.ensemble_msr} are all zero. 
\end{proof}

\section{Modeling Power} \label{appendix.modeling_power}

I formalize the notion that using a larger sample for training is desirable by the analyst by introducing the concept of modeling power. 
This appendix uses notation introduced in \Cref{section.setup}. 
I say that an estimator has better modeling power than another if its collection of splits has a smaller expected loss. 
Although my results rely on no assumptions on the training algorithm $\cA$ other than a mild stability condition on $\cA(D)$, in practice, $\cA$ typically minimizes some loss function. 
For example, in \Cref{example.classif_binary}, logistic regression minimizes log-likelihood, and neural networks minimize classification error over a class of network architectures. 
Let $\etahat_{\cR} = \lrp{\etahat_{\stilde_{m,k}}}_{m \in \lrbk{M}, k \in \lrbk{K}}$, $\ell_\eta(W)$ be a loss function, 
$$\phi(\eta) = \int \ell_\eta(w) dP(w)$$ 
be the loss value of function $\eta$, and 
$$\phi(\etahat_{\cR}) = (MK)^{-1} \sum_{r \in \cR} \sum_{\s \in r} \phi(\etahat_{\stilde}).$$ 
Note that $\phi(\etahat_{\cR})$ is equal to the expected value of $\phi(\etahat_{\s})$ over $\s \in \lrp{\stilde_{m,k}}_{m \in \lrbk{M}, k \in \lrbk{K}}$ uniformly at random, which is equivalent to the loss value of using a function $\etahat$ that takes value in $\etahat_{\cR}$ uniformly at random. 
The expected loss is defined as $\E[P]{\phi(\etahat_{\cR})}$. 

The expected loss, and thus the modeling power of an estimator depends only on the sample size used to estimate the functions in $\etahat_{\cR}$. 
That is because 
$$\E[P]{\phi(\etahat_{\cR})} = (MK)^{-1} \sum_{r \in \cR} \sum_{s \in r} \E[P]{\phi(\etahat_{\stilde})} = \E[P]{\phi(\etahat_{\xi})},$$
where $\xi$ is a random subset of $\bkn$ of size $n-b$, with $b = n/K$ if $K>1$, and assuming that $n$ is a multiple of $K$ for simplicity. 
If $\etahat_{\xi}$ is calculated with the goal of minimizing the loss $\phi(\eta)$ with respect to $\eta$, it is reasonable to assume that the expected loss $\E[P]{\phi(\etahat_{\xi})}$ decreases with the sample size used to calculate $\etahat_{\xi}$.  
If that is the case, the expected loss increases with $b$, since fewer data are used to estimate each $\etahat_{\stilde_{m,k}}$. 
Hence, to increase modeling power when $K=1$, one can pick a smaller $b$ (and $\pi$). 
However, if $\Mbar < \infty$, a smaller $b$ leads to smaller statistical power, since fewer data are used as evaluation sample at each split. 
When using cross-fitting, modeling power increases with $K$, since $b=n/K$. 
In this case, the returns to increasing $K$ are diminishing. 
For example, if $K=2$, $\etahat_{\stilde_{m,k}}$ is calculated with $50$\% of the sample, and this fraction raises to $90$\% with $K=10$. 
If $K=20$, however, the fraction only raises by another $5$\%. 
Although a large value of $K$ or small value of $\pi$ (when $K=1$) lead to better modelling power, my asymptotic framework takes these quantities as fixed. 
This means that the quality of the asymptotic approximation may be poor if $K$ is large (or $\pi$ small) relative to the sample size. 
For example, my asymptotic framework does not accommodate for leave-one-out cross-fitting, that is, $K=n$. 

\section{CLT for Split-Sample Averages} \label{section.avg}

I derive a CLT for split-sample estimators based on sample averages. 
The objective is to expose my main result in an accessible setting, and discuss the main insights of the proof. 
The result is generalized in \Cref{section.general}, where I derive a functional CLT uniformly over a large set of data generating processes, and in \Cref{section.z_estimators} where I prove a CLT for Z-estimators. 

The notation follows \Cref{section.setup}. 
Additionally, let $f_{\eta}: \cW \to \R$ be measurable functions for $\eta \in H$, and define 
\begin{equation} \label{def.marginal_exp}
    P f_\eta = \int_{w} f_\eta(w) dP(w),
\end{equation}
that is, $P f_\eta$ is a marginal expectation that takes $\eta$ as fixed. 
This is typical notation in the empirical process literature. 

\begin{RevExample}{example.classif_prob}
    In the probabilistic classifiers example, $W=(Y,X)$, $\eta$ is a function that predicts the probability of $Y=1$ given $X$, and 
    $$f_\eta(w) = \eta(x) \I{y = 1} + (1 - \eta(x)) \I{y = 0}.$$
    $P f_\eta$ is the correct classification rate of predictor $\eta$. 
\end{RevExample}

In this section, I consider estimators of the form 
\begin{equation} \label{def.thetahat_avg}
    \thetahat_{\etahat} = \frac{1}{M} \sum_{r \in \cR} \frac{1}{K} \sum_{\s \in r} \frac{1}{b} \sum_{i \in \s} f_{\etahat_{\stilde}}(W_i),
\end{equation}
where $\cR$ is a collection of $M$ random splits or cross-splits of the sample, $K$ is the number of folds ($K=1$ denotes sample-splitting), $b$ is the size of each subsample $\s$ (either the chosen subsample size when $K=1$ or the approximate fold size $n/K$ when $K>1$), and $\etahat = \etahat_{\cR} = \lrp{\lrp{\etahat_{\stilde}}_{\s \in r}}_{r \in \cR}$. 
I show in \Cref{th.clt_avg} that $\thetahat_{\etahat}$ is $\sqrt{n}$-Gaussian when centered around its marginal expectation 
$$\theta_{\etahat} = P \thetahat_{\etahat} = \frac{1}{M K} \sum_{r \in \cR} \sum_{\s \in r} P f_{\etahat_{\stilde}}.$$
In \Cref{example.classif_prob}, $\theta_{\etahat}$ is the fraction of individuals correctly classified under a rule that predicts $Y=1$ with probability 
$$\frac{1}{M K} \sum_{r \in \cR} \sum_{\s \in r} \etahat_{\stilde}(x)$$
for an individual with characteristics $X = x$. 

\Cref{as.avg} establishes sufficient conditions for the CLT in Theorem \ref{th.clt_avg}. 
\begin{assumption} \label{as.avg}
    \begin{assumptionenum}
        \item \label{as.2plusdelta} For some $\delta > 0$,
        $$\sup_{\eta \in H} \E[P]{\lrm{f_\eta(W)}^{2+\delta}} < \infty.$$
        \item \label{as.avg_etahat} For some $\etastar \in H$ and $\etatilde = \cA(D)$, 
        $$f_\etatilde(w) \Pto f_{\etastar}(w)$$
        pointwise for every $w$. 
    \end{assumptionenum}
\end{assumption}

\Cref{as.2plusdelta} is a standard moments condition for CLTs, uniformly over possible values of $\eta$. 
\Cref{as.avg_etahat} is a mild stability condition on $\etatilde$.
Importantly, $\etatilde$ is allowed to converge at any rate and to any limit $\etastar$. 
This condition is more interpretable but stronger than what I use for proving the more general CLTs in \Cref{section.general,section.z_estimators}. 
\Cref{as.avg_etahat} differs from the typical approach in the double machine learning literature where faster convergence rates (often $n^{-1/4}$) are required for nuisance functions, in a context where the target parameter does not depend on the estimated model $\etahat$ \citep[e.g.,][]{chernozhukov2018double}.
\begin{theorem} \label{th.clt_avg}
    Let \Cref{as.avg} hold. 
    Then, 
    $$\sqrt{n} \lrp{\thetahat_{\etahat} - \theta_{\etahat}} \leadsto \cN\lrp{0, V_{\Mbar, K} P \lrp{f_{\etastar} - P f_{\etastar}}^2},$$
    where 
    $$V_{\Mbar, K} = \begin{cases}
    \Mbar^{-1} \lrp{\pi^{-1} + \Mbar - 1}, & \text{if } K = 1 \text{ and } \Mbar < \infty \\
    1, & \text{otherwise}. 
    \end{cases}$$
\end{theorem}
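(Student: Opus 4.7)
}
The plan is to split the difference into a ``stability'' remainder and a linear sum in $(f_{\etastar}(W_i) - Pf_{\etastar})$, handling each piece separately. Write $\thetahat_{\etastar} = \frac{1}{MK}\sum_{r \in \cR}\sum_{\s \in r}\frac{1}{b}\sum_{i \in \s} f_{\etastar}(W_i)$ and $\theta_{\etastar} = Pf_{\etastar}$, and decompose
\begin{equation*}
    \sqrt{n}(\thetahat_{\etahat} - \theta_{\etahat}) = \sqrt{n}(\thetahat_{\etastar} - \theta_{\etastar}) + R_n,
\end{equation*}
where $R_n = \sqrt{n}(\thetahat_{\etahat} - \thetahat_{\etastar}) - \sqrt{n}(\theta_{\etahat} - \theta_{\etastar})$. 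First I would show $R_n = o_P(1)$, then establish weak convergence of $\sqrt{n}(\thetahat_{\etastar} - \theta_{\etastar})$.

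\paragraph{Controlling the stability term $R_n$.} For a single split $\s$, condition on $\stilde$ (so $\etahat_{\stilde}$ is fixed). Then $(W_i)_{i \in \s}$ is iid with $Pf_{\etahat_{\stilde}}$ and the conditional variance of $\sqrt{b}\bigl[\frac{1}{b}\sum_{i \in \s}(f_{\etahat_{\stilde}}(W_i) - f_{\etastar}(W_i)) - (Pf_{\etahat_{\stilde}} - Pf_{\etastar})\bigr]$ equals $\Var[P]{f_{\etahat_{\stilde}}(W) - f_{\etastar}(W)}[\stilde]$. By \Cref{as.avg_etahat}, $f_{\etahat_{\stilde}}(w) - f_{\etastar}(w) \to 0$ in $P$-probability pointwise, and by \Cref{as.2plusdelta} the sequence $(f_{\etahat_{\stilde}}(W) - f_{\etastar}(W))^2$ is uniformly integrable; so Vitali's theorem gives this conditional variance $\Pto 0$. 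A Chebyshev bound conditional on $\stilde$ followed by dominated convergence yields the single-split statement in unconditional probability, and summing the $MK$ terms with the triangle inequality plus Cauchy--Schwarz for the off-diagonal covariances across splits sharing observations gives $R_n \Pto 0$.

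\paragraph{Linearization and asymptotic variance.} Swapping the order of summation,
\begin{equation*}
    \sqrt{n}(\thetahat_{\etastar} - \theta_{\etastar}) = \sqrt{n}\sum_{i=1}^n N_i \bigl(f_{\etastar}(W_i) - Pf_{\etastar}\bigr), \qquad N_i = \frac{1}{MKb}\sum_{r \in \cR}\sum_{\s \in r}\I{i \in \s},
\end{equation*}
where $(N_i)$ depends only on $\cR$ and is independent of the data. Conditional on $\cR$, this is a weighted sum of iid zero-mean random variables with variance $\sigma^2 = P(f_{\etastar} - Pf_{\etastar})^2$ and conditional total variance $n\sigma^2 \sum_i N_i^2$. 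When $K > 1$ each observation lies in exactly one fold per repetition, so deterministically $N_i = 1/n$ and $n\sum_i N_i^2 = 1$. When $K=1$, $\I{i \in \s_r}$ are iid Bernoulli$(\pi)$ across $r \in \cR$, so expanding the square gives
\begin{equation*}
    \E{n\sum_i N_i^2} = \frac{n^2}{M^2 b^2}\E{\Bigl(\sum_{r=1}^M \I{1 \in \s_r}\Bigr)^2} = \frac{n^2 (M\pi + M(M-1)\pi^2)}{M^2 b^2} = \Mbar^{-1}\pi^{-1} + 1 - \Mbar^{-1}
\end{equation*}
in the finite-$\Mbar$ limit, matching $V_{\Mbar,K}$. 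A second-moment calculation in $\cR$ shows $n\sum_i N_i^2 \Pto V_{\Mbar,K}$; for $\Mbar = \infty$ with $K=1$ the law of large numbers across repetitions gives the same limit. Finally I would verify the conditional Lindeberg condition using $\max_i |N_i| = O_P(1/n)$ and the $2+\delta$ moments from \Cref{as.2plusdelta}, then conclude via the conditional CLT and Slutsky.

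\paragraph{Main obstacle.} The delicate step is the stability bound on $R_n$: it involves $MK$ empirical processes indexed by the data-dependent $\etahat_{\stilde}$, whose complexity is unrestricted. The key observation that makes this tractable without a Donsker assumption on $\{f_\eta : \eta \in H\}$ is that conditioning on $\stilde$ freezes $\etahat_{\stilde}$, reducing each split's contribution to a pointwise variance bound on iid summands. The generalization to a process indexed by $\theta$, which does require Donsker-type control, is precisely what \Cref{th.clt_general} in \Cref{section.general} provides.
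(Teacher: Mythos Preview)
Your proposal is correct and follows essentially the same approach as the paper: the same decomposition into a stability remainder $R_n$ plus a weighted linear sum in $f_{\etastar}(W_i)-Pf_{\etastar}$, the same key observation that conditioning on $\stilde$ freezes $\etahat_{\stilde}$ so the split contribution reduces to an iid variance that vanishes by \Cref{as.avg}, and the same computation of the weight variance $n\sum_i N_i^2$ via the binomial second moment to obtain $V_{\Mbar,K}$. The only cosmetic difference is that the paper controls $R_n$ through Markov's inequality and the fact that the $MK$ split contributions are identically distributed (so their average $L^1$ norm equals a single term's, handling $M\to\infty$ cleanly), whereas you bound the variance of $R_n$ directly and use Cauchy--Schwarz on the cross-covariances; both routes work.
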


\Cref{th.clt_avg} can be used to construct confidence intervals with the standard error 
$$\sigmahat_\etahat = \sqrt{V_{M, K}} \frac{1}{MK} \sum_{r \in \cR} \sum_{\s \in r} \sigmahat_{\etahat_{\stilde}},$$
where
$$\sigmahat_{\etahat_{\stilde}}^2 = \frac{1}{b} \sum_{i \in \s} \lrp{f_{\etahat_{\stilde}}(W_i) - \frac{1}{b} \sum_{i \in \s} f_{\etahat_{\stilde}}(W_i)}^2$$
and
$$V_{M, K} = \begin{cases}
    M^{-1} \lrp{n/b + M - 1}, & \text{if } K = 1 \\
    1, & \text{otherwise}. 
    \end{cases}$$
\begin{theorem} \label{th.avg_ci}
    Let \Cref{as.avg} hold and $P \lrp{f_{\etastar} - P f_{\etastar}}^2 > 0$. 
    Then, 
    $$P \lrp{\theta_\etahat \in \lrbk{\thetahat_\etahat - z_{1-\alpha/2} \frac{\sigmahat_\etahat}{\sqrt{n}}, \thetahat_\etahat + z_{1-\alpha/2} \frac{\sigmahat_\etahat}{\sqrt{n}}}} \to 1 - \alpha.$$
\end{theorem}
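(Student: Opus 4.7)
The plan is to reduce the statement to \Cref{th.clt_avg} via Slutsky's theorem, so the main task is to show $\sigmahat_\etahat \Pto \sigma_\infty := \sqrt{V_{\Mbar, K}\, P(f_\etastar - Pf_\etastar)^2}$. Given this, \Cref{th.clt_avg} and the continuous mapping theorem deliver $\sqrt{n}(\thetahat_\etahat - \theta_\etahat)/\sigmahat_\etahat \leadsto \cN(0,1)$, and the coverage statement follows by comparing the studentized statistic against the standard normal quantile $z_{1-\alpha/2}$.

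For consistency of $\sigmahat_\etahat$, I would first verify that the deterministic factor $V_{M,K}$ converges to $V_{\Mbar,K}$: for $K>1$ both equal one; for $K=1$ and $\Mbar<\infty$ one uses $n/b\to\pi^{-1}$ and $M\to\Mbar$ built into the asymptotic framework of \Cref{section.setup}; for $K=1$ and $\Mbar=\infty$, $M^{-1}(n/b+M-1)\to 1$. It then suffices to show that the split-specific estimator $\sigmahat_{\etahat_\stilde}^2$ converges in probability to $\sigma_\etastar^2 := P(f_\etastar - Pf_\etastar)^2$ for a generic split, since the average $(MK)^{-1}\sum_{r\in\cR}\sum_{\s\in r}\sigmahat_{\etahat_\stilde}$ is a convex combination of identically distributed summands and so inherits the same limit. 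The key structural observation is conditional independence: given $D_\stilde$, the observations $\{W_i\}_{i\in\s}$ are iid $P$-draws, independent of the (now fixed) $\etahat_\stilde$. Writing
$$\sigmahat_{\etahat_\stilde}^2 = \frac{1}{b}\sum_{i\in\s} f_{\etahat_\stilde}(W_i)^2 - \lrp{\frac{1}{b}\sum_{i\in\s} f_{\etahat_\stilde}(W_i)}^2,$$
a conditional weak law of large numbers — justified by \Cref{as.2plusdelta}, which gives uniform-in-$\eta$ bounds on the $(1+\delta/2)$-moment of $f_\eta(W)^2$ via Marcinkiewicz–Zygmund — yields that each sample average differs from its conditional mean $Pf_{\etahat_\stilde}^2$ (respectively $Pf_{\etahat_\stilde}$) by $o_P(1)$.

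The main obstacle is upgrading the pointwise-in-$w$ convergence $f_{\etahat_\stilde}(w) \Pto f_\etastar(w)$ from \Cref{as.avg_etahat} to convergence of integrals $Pf_{\etahat_\stilde}^2 \Pto Pf_\etastar^2$ and $Pf_{\etahat_\stilde} \Pto Pf_\etastar$, which does not follow automatically from pointwise convergence alone. My plan is a Vitali-type argument via the subsequence principle: from an arbitrary subsequence, extract a further subsequence along which the pointwise convergence holds almost surely; the uniform $(2+\delta)$-moment bound in \Cref{as.2plusdelta} supplies uniform integrability of $\{f_\eta(W)^2 : \eta \in H\}$, which legitimizes interchanging limit and expectation to obtain the integral convergence along that subsequence; since every subsequence has such a further subsequence, the full sequences converge in probability. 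Combining with the conditional LLN above yields $\sigmahat_{\etahat_\stilde}^2 \Pto Pf_\etastar^2 - (Pf_\etastar)^2 = \sigma_\etastar^2 > 0$, and the continuous mapping theorem followed by Slutsky completes the argument.
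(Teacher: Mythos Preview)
Your proposal is correct and follows essentially the same route as the paper: both reduce to \Cref{th.clt_avg} via Slutsky by showing $\sigmahat_{\etahat_\stilde}^2 \Pto P(f_\etastar - Pf_\etastar)^2$ through a conditional law of large numbers together with a uniform-integrability upgrade of the pointwise convergence in \Cref{as.avg_etahat}; the paper carries out that upgrade by an explicit truncation-plus-H\"older bound rather than invoking Vitali abstractly, and records $V_{M,K}/V_{\Mbar,K}\to 1$ at the end. One small caution on your phrasing: the subsequence extraction should be framed on the joint $(W,D_\stilde)$-space (where \Cref{as.avg_etahat} plus Fubini give $f_{\etahat_\stilde}(W)\Pto f_\etastar(W)$), since pointwise-in-$w$ convergence in probability over $D_\stilde$ does not by itself furnish a single subsequence with almost-sure convergence simultaneously for all $w$.
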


The proof of \Cref{th.clt_avg} relies on four main insights. 
I show them for the case of repeated cross-fitting, assuming that $n$ is a multiple of $K$ for simplicity. 
I provide a more detailed proof in \Cref{appendix.proofs_section_avg}, and a formal proof follows from the more general \Cref{th.clt_general,th.clt_z}. 
The first insight and main argument of the proof is to show that 
\begin{equation} \label{eq.avg_main_argument}
    \sqrt{n} \lrp{\thetahat_{\etahat} - \theta_{\etahat}} = \sqrt{n} \lrp{\frac{1}{n} \sum_{i=1}^n f_{\etastar}(W_i) - P f_{\etastar}} + o_P(1).
\end{equation}
Once this is established, the result follows from Lyapunov's CLT, since $\lrp{f_{\etastar}(W_i)}_{i=1}^n$ are iid. 
The second insight is that an application of Markov and H\"older inequalities gives that a sufficient condition for \cref{eq.avg_main_argument} is that 
\begin{equation} \label{eq.avg_2nd_insight}
    \Var[P]{\frac{1}{\sqrt{b}} \sum_{i \in \xi} \lrbk{f_{\etahat_{\xitilde}}(W_i) - P f_{\etahat_{\xitilde}}} - \lrbk{f_{\etastar}(W_i) - P f_{\etastar}}} \to 0,
\end{equation}
where $\xi$ is a random subset of $\bkn$ of size $b=n/K$ and $\xitilde$ is its complement. 
The third insight is that an application of the Law of Total Variance gives 
\begin{align}
        & \Var[P]{\frac{1}{\sqrt{b}} \sum_{i \in \xi} \lrbk{f_{\etahat_{\xitilde}}(W_i) - P f_{\etahat_{\xitilde}}} - \lrbk{f_{\etastar}(W_i) - P f_{\etastar}}} \nonumber \\
        & = \E[P]{\Var[P]{\frac{1}{\sqrt{b}} \sum_{i \in \xi} \lrp{f_{\etahat_{\xitilde}}(W_i) - P f_{\etahat_{\xitilde}}} - \lrp{f_{\etastar}(W_i) - P f_{\etastar}} \Bigm| D_{\xitilde} }} \label{eq.avg_3rd_1} \\
        & = \E[P]{\Var[P]{\lrp{f_{\etahat_{\xitilde}}(W) - P f_{\etahat_{\xitilde}}} - \lrp{f_{\etastar}(W) - P f_{\etastar}} \Bigm| D_{\xitilde}}}. \label{eq.avg_3rd_2}
\end{align}
Since the summands in \cref{eq.avg_3rd_1} are iid conditional on $D_{\xitilde}$, \cref{eq.avg_3rd_1} equals \cref{eq.avg_3rd_2}, which does not rely on the term $\sqrt{b}$. 
This is the crucial step that enables asymptotic normality without requiring an assumption on the rate at which $f_{\etahat_{\xitilde}}(W)$ converges to $f_{\etastar}(W)$. 

The final insight is that \Cref{as.avg} gives a sufficient condition for \cref{eq.avg_3rd_2} to converge to zero. 
For any $\varepsilon > 0$, 
\begin{align*}
    & \E[P]{\Var[P]{\lrp{f_{\etahat_{\xitilde}}(W) - P f_{\etahat_{\xitilde}}} - \lrp{f_{\etastar}(W) - P f_{\etastar}} \Bigm| D_{\xitilde}}} \\
    & \le \E[P]{\lrp{f_{\etahat_{\xitilde}}(W) - f_{\etastar}(W)}^2 } \\ 
    & = \E[P]{\lrp{f_{\etahat_{\xitilde}}(W) - f_{\etastar}(W)}^2 \I{\lrm{f_{\etahat_{\xitilde}}(W) - f_{\etastar}(W)} \le \varepsilon}} \\
    & \qquad + \E[P]{\lrp{f_{\etahat_{\xitilde}}(W) - f_{\etastar}(W)}^2 \I{\lrm{f_{\etahat_{\xitilde}}(W) - f_{\etastar}(W)} > \varepsilon}}, 
\end{align*}
where the first term is bounded by $\varepsilon^2$. 
By H\"older's inequality, the second term is bounded by 
$$\E[P]{\lrm{f_{\etahat_{\xitilde}}(W) - f_{\etastar}(W)}^{2+\delta}}^{\frac{2}{2+\delta}} P \lrp{\lrm{f_{\etahat_{\xitilde}}(W) - f_{\etastar}(W)} > \varepsilon}^{\frac{\delta}{2 + \delta}}.$$
The first term above is bounded by \Cref{as.2plusdelta}, and the second term can be made arbitrarily small since 
\begin{align*}
    P \lrp{\lrm{f_{\etahat_{\xitilde}}(W) - f_{\etastar}(W)} > \varepsilon} = \E[P]{P \lrp{\lrm{f_{\etahat_{\xitilde}}(W) - f_{\etastar}(W)} > \varepsilon \Bigm| W }}
\end{align*}
converges to zero by the dominated convergence theorem, since 
$$P \lrp{\lrm{f_{\etahat_{\xitilde}}(w) - f_{\etastar}(w)} > \varepsilon \Bigm| W = w} = P \lrp{\lrm{f_{\etahat_{\xitilde}}(w) - f_{\etastar}(w)} > \varepsilon } \to 0$$
from \Cref{as.avg_etahat} and independence of $W$ and $\etahat_{\xitilde}$. 
The result follows since $\varepsilon$ can be made arbitrarily small.

\subsection{Proofs} \label{appendix.proofs_section_avg}

\begin{proof}[\hypertarget{proof.th.clt_avg}{Proof of \Cref{th.clt_avg}}] 
    I provide a detailed proof for the repeated cross-fitting case discussed in \Cref{section.avg}, since that contains the main insights of the proof. 
    A complete and formal proof follows from the more general \Cref{th.clt_general}. 
    
    The argument consists of showing that 
    \begin{equation*} 
        \sqrt{n} \lrp{\thetahat_{\etahat} - \theta_{\etahat}} = \sqrt{n} \lrp{\frac{1}{n} \sum_{i=1}^n f_{\etastar}(W_i) - P f_{\etastar}} + o_P(1)
    \end{equation*}
    and applying Lyapunov's CLT to the first term on the right side of the equality. 

    Define $h(w, \eta) = \lrbk{f_{\eta}(w) - P f_{\eta}} - \lrbk{f_{\etastar}(w) - P f_{\etastar}}$ and note that 
    \begin{align*}
        & \sqrt{n} \lrp{\thetahat_{\etahat} - \theta_{\etahat}} - \sqrt{n} \lrp{\frac{1}{n} \sum_{i=1}^n f_{\etastar}(W_i) - P f_{\etastar}} \\
        & = \sqrt{b K} \frac{1}{M K} \sum_{r \in \cR} \sum_{\s \in r} \frac{1}{b} \sum_{i \in \s} h(W_i, \etahat_{\stilde}),
    \end{align*}
    since $b = n / K$ for cross-fitting. 
    For any $\varepsilon > 0$, it holds that
    \begin{align}
        & P \lrp{\lrm{\sqrt{b K} \frac{1}{M K} \sum_{r \in \cR} \sum_{\s \in r} \frac{1}{b} \sum_{i \in \s} h(W_i, \etahat_{\stilde})} > \varepsilon} \nonumber \\ 
        & \le P \lrp{\frac{\sqrt{K}}{M K} \sum_{r \in \cR} \sum_{\s \in r} \lrm{\frac{1}{\sqrt{b}} \sum_{i \in \s} h(W_i, \etahat_{\stilde})} > \varepsilon} \nonumber \\ 
        & \le \varepsilon^{-1} \frac{\sqrt{K}}{M K} \sum_{r \in \cR} \sum_{\s \in r} \E[P]{\lrm{\frac{1}{\sqrt{b}} \sum_{i \in \s} h(W_i, \etahat_{\stilde})}} \label{eq.proof_th.clt_avg.1} \\ 
        & = \varepsilon^{-1} \sqrt{K} \E[P]{\lrm{\frac{1}{\sqrt{b}} \sum_{i \in \xi} h(W_i, \etahat_{\xitilde})}} \label{eq.proof_th.clt_avg.2} \\ 
        & \le \varepsilon^{-1} \sqrt{K} \E[P]{\lrm{\frac{1}{\sqrt{b}} \sum_{i \in \xi} h(W_i, \etahat_{\xitilde})}^2}^{1/2} \label{eq.proof_th.clt_avg.3} \\ 
        & = \varepsilon^{-1} \sqrt{K} \Var[P]{\frac{1}{\sqrt{b}} \sum_{i \in \xi} h(W_i, \etahat_{\xitilde})}^{1/2}. \label{eq.proof_th.clt_avg.4}
    \end{align}
    \cref{eq.proof_th.clt_avg.1} follows from Markov's inequality. 
    \cref{eq.proof_th.clt_avg.2} defines $\xi$ as a random subset of $\bkn$ of size $b$, and uses the fact that the expected value does not depend on how the sample is (randomly) split. 
    \cref{eq.proof_th.clt_avg.3} follows from H\"older's inequality. 
    \cref{eq.proof_th.clt_avg.4} follows since \begin{align*}
        \E[P]{h(W, \etahat_{\xitilde})} & = \E[P]{\lrp{f_{\etahat_{\xitilde}}(W) - P f_{\etahat_{\xitilde}}} - \lrp{f_{\etastar}(W) - P f_{\etastar}}} \\
        & = \E[P]{ \E[P]{f_{\etahat_{\xitilde}}(W) - P f_{\etahat_{\xitilde}} \Bigm| D_{\xitilde}}} \\
        & = 0
    \end{align*}
    by definition.

    Since $K$ is assumed fixed, it is enough to show that 
    \begin{align}
        & \Var[P]{\frac{1}{\sqrt{b}} \sum_{i \in \xi} h(W_i, \etahat_{\xitilde})} \nonumber \\
        & = \Var[P]{\frac{1}{\sqrt{b}} \sum_{i \in \xi} \lrp{f_{\etahat_{\xitilde}}(W_i) - P f_{\etahat_{\xitilde}}} - \lrp{f_{\etastar}(W_i) - P f_{\etastar}}} \nonumber \\
        & = \E[P]{\Var[P]{\frac{1}{\sqrt{b}} \sum_{i \in \xi} \lrp{f_{\etahat_{\xitilde}}(W_i) - P f_{\etahat_{\xitilde}}} - \lrp{f_{\etastar}(W_i) - P f_{\etastar}} \Bigm| D_{\xitilde} }} \label{eq.proof_th.clt_avg.5} \\
        & = \E[P]{\Var[P]{\lrp{f_{\etahat_{\xitilde}}(W) - P f_{\etahat_{\xitilde}}} - \lrp{f_{\etastar}(W) - P f_{\etastar}} \Bigm| D_{\xitilde} }} \label{eq.proof_th.clt_avg.6} \\
        & = \E[P]{\Var[P]{f_{\etahat_{\xitilde}}(W) - f_{\etastar}(W) \Bigm| D_{\xitilde} }} \label{eq.proof_th.clt_avg.7}
    \end{align}
    converges to zero. 
    \cref{eq.proof_th.clt_avg.5} follows from the Law of Total Variance, since 
    $$\E[P]{\lrp{f_{\etahat_{\xitilde}}(W_i) - P f_{\etahat_{\xitilde}}} - \lrp{f_{\etastar}(W_i) - P f_{\etastar}} \Bigm| D_{\xitilde}} = 0.$$
    \cref{eq.proof_th.clt_avg.6} follows since the observations are iid conditional on $D_{\xitilde}$. 
    
    To show convergence to zero of \cref{eq.proof_th.clt_avg.7}, consider the inequality 
    \begin{align*}
        \E[P]{\Var[P]{f_{\etahat_{\xitilde}}(W) - f_{\etastar}(W) \Bigm| D_{\xitilde}}} & \le \E[P]{\E[P]{\lrp{f_{\etahat_{\xitilde}}(W) - f_{\etastar}(W)}^2 \Bigm| D_{\xitilde}}} \\
        & = \E[P]{\lrp{f_{\etahat_{\xitilde}}(W) - f_{\etastar}(W)}^2 }. 
    \end{align*}
    For any fixed $\varepsilon > 0$, 
    \begin{align*}
        & \E[P]{\lrp{f_{\etahat_{\xitilde}}(W) - f_{\etastar}(W)}^2} \\
        & = \E[P]{\lrp{f_{\etahat_{\xitilde}}(W) - f_{\etastar}(W)}^2 \I{\lrm{f_{\etahat_{\xitilde}}(W) - f_{\etastar}(W)} \le \varepsilon}} \\
        & \qquad + \E[P]{\lrp{f_{\etahat_{\xitilde}}(W) - f_{\etastar}(W)}^2 \I{\lrm{f_{\etahat_{\xitilde}}(W) - f_{\etastar}(W)} > \varepsilon}}. 
    \end{align*}
    The first term is bounded by $\varepsilon^2$. 
    By H\"older's inequality, 
    \begin{align*}
        & \E[P]{\lrp{f_{\etahat_{\xitilde}}(W) - f_{\etastar}(W)}^2 \I{\lrm{f_{\etahat_{\xitilde}}(W) - f_{\etastar}(W)} > \varepsilon}} \\
        & \le \E[P]{\lrm{f_{\etahat_{\xitilde}}(W) - f_{\etastar}(W)}^{2+\delta}}^{\frac{2}{2+\delta}} P \lrp{\lrm{f_{\etahat_{\xitilde}}(W) - f_{\etastar}(W)} > \varepsilon}^{\frac{\delta}{2 + \delta}}.
    \end{align*}
    The first term above is bounded by \Cref{as.2plusdelta}, and the second term can be made arbitrarily small since 
    \begin{align*}
        P \lrp{\lrm{f_{\etahat_{\xitilde}}(W) - f_{\etastar}(W)} > \varepsilon} = \E[P]{P \lrp{\lrm{f_{\etahat_{\xitilde}}(W) - f_{\etastar}(W)} > \varepsilon \Bigm| W }}
    \end{align*}
    converges to zero by the dominated convergence theorem, since 
    $$P \lrp{\lrm{f_{\etahat_{\xitilde}}(w) - f_{\etastar}(w)} > \varepsilon \Bigm| W = w} = P \lrp{\lrm{f_{\etahat_{\xitilde}}(w) - f_{\etastar}(w)} > \varepsilon } \to 0$$
    from \Cref{as.avg_etahat} and independence of $W$ and $\etahat_{\xitilde}$. 
    The result follows since $\varepsilon$ can be made arbitrarily small.     
\end{proof}

\begin{proof}[\hypertarget{proof.th.avg_ci}{Proof of \Cref{th.avg_ci}}] 
    Note 
    $$\sigmahat_{\etahat_{\stilde}}^2 = \frac{1}{b} \sum_{i \in \s} \lrp{f_{\etahat_{\stilde}}(W_i)^2} - \lrp{\frac{1}{b} \sum_{i \in \s} f_{\etahat_{\stilde}}(W_i)}^2.$$
    By a law of large numbers conditional on $\stilde$, 
    $$\frac{1}{b} \sum_{i \in \s} f_{\etahat_{\stilde}}(W_i)^2 - \E[P]{f_{\etahat_{\stilde}}(W)^2 | D_{\stilde}} \Pto 0,$$
    and similarly 
    $$\frac{1}{b} \sum_{i \in \s} f_{\etahat_{\stilde}}(W_i) - \E[P]{f_{\etahat_{\stilde}}(W) | D_{\stilde}} \Pto 0.$$
    Hence, 
    $$\sigmahat_{\etahat_{\stilde}}^2 - \lrp{\E[P]{f_{\etahat_{\stilde}}(W)^2 | D_{\stilde}} - \E[P]{f_{\etahat_{\stilde}}(W) | D_{\stilde}}^2} \Pto 0.$$
    Fix $\varepsilon > 0$ and define $h_{\etahat_{\stilde}}(w) = \lrm{f_{\etahat_{\stilde}}(W) - f_{\etastar}(W)}$. 
    \begin{align*}
        & \E[P]{h_{\etahat_{\stilde}}(W) | D_{\stilde}} \\
        & = \E[P]{h_{\etahat_{\stilde}}(W) \I{h_{\etahat_{\stilde}}(W) \le \varepsilon} | D_{\stilde}} + \E[P]{h_{\etahat_{\stilde}}(W) \I{h_{\etahat_{\stilde}}(W) > \varepsilon} | D_{\stilde}} \\
        & \le \varepsilon + \E[P]{h_{\etahat_{\stilde}}(W)^{1+\delta} | D_{\stilde}}^\frac{1}{1 + \delta} P \lrp{h_{\etahat_{\stilde}}(W) > \varepsilon | D_{\stilde}}^\frac{\delta}{1 + \delta}
    \end{align*}
    by H\"older's inequality. 
    The term $\E[P]{h_{\etahat_{\stilde}}(W)^{1+\delta} | D_{\stilde}}^\frac{1}{1 + \delta}$ is bounded by \Cref{as.2plusdelta}, and I show that $P \lrp{h_{\etahat_{\stilde}}(W) > \varepsilon | D_{\stilde}}^\frac{\delta}{1 + \delta}$ converges in probability to zero. 
    In the proof of \Cref{th.clt_avg}, I established that 
    $$\E[P]{P \lrp{\lrm{f_{\etahat_{\xitilde}}(w) - f_{\etastar}(w)} > \varepsilon \Bigm| D_{\stilde} }} = P \lrp{\lrm{f_{\etahat_{\xitilde}}(w) - f_{\etastar}(w)} > \varepsilon } \to 0.$$
    This implies that $P \lrp{\lrm{f_{\etahat_{\xitilde}}(w) - f_{\etastar}(w)} > \varepsilon \Bigm| D_{\stilde} } \Pto 0$
    since $L_1$ convergence implies convergence in probability. 
    Hence, 
    $$\E[P]{\lrm{f_{\etahat_{\stilde}}(W) - f_\etastar(W)} | D_{\stilde}} \Pto 0,$$
    which implies 
    $$\E[P]{f_{\etahat_{\stilde}}(W)| D_{\stilde}} - \E[P]{f_\etastar(W) } \Pto 0.$$
    A similar argument gives 
    $$\E[P]{f_{\etahat_{\stilde}}(W)^2| D_{\stilde}} - \E[P]{f_\etastar(W)^2 } \Pto 0.$$
    Combining results implies 
    $$\sigmahat_{\etahat_{\stilde}}^2 \Pto P \lrp{f_{\etastar} - P f_{\etastar}}^2.$$
    The result follows from \Cref{th.clt_avg}, since $V_{M, K}/V_{\Mbar, K} \to 1$.  
\end{proof}

\section{CLT for Split-Sample Empirical Processes} \label{section.general}

I derive a CLT for empirical processes based on a broad class of split-sample procedures, uniformly over a large class of probability distributions. 
This section generalizes \Cref{section.avg}, which gives a more accessible exposition focusing on the particular case of sample averages. 
The CLT of this section can be used to prove asymptotic normality for a large class of estimators. 
That is the case for Z-estimators, which I develop in \Cref{section.z_estimators}. 
Moreover, this CLT can be used to establish asymptotic consistency of the bootstrap in several applications, following, for example, the arguments in Chapter 3.7 of \citet{van2023weak}. 

The notation follows \Cref{section.setup}. 
Let $\cP$ be a set of probability distributions, and $D = \{ W_i \}_{i \in \bkn}$, the dataset, be an iid sample of $W \sim P \in \cP$. 
I denote the expected value under $P \in \cP$ by $\E[P]$, and the variance by $\operatorname{Var}_{P}$. 
Given a set $T$, let $f_{t, \eta}: \cW \to \R$ be measurable functions for $t \in T$ and $\eta \in H$, with $H$ defined as in \Cref{section.setup}, and let $\cF_\eta = \lrbc{f_{t, \eta} : t \in T}$. 
$\etahat = \etahat_{\cR} = \lrp{\etahat_{\stilde_{m,k}}}_{m \in \lrbk{M}, k \in \lrbk{K}}$, $\norm{f}_{Q,r} = \lrp{\int |f|^r dQ}^{1/r}$, $L_r(Q) = \norm{\cdot}_{Q,r}$, and $\cQ$ denotes all finitely discrete probability distributions. 
I use $\lrm{ x }$ to denote cardinality when $x$ is a set and absolute value when $x$ is scalar. 
I denote by $N$ and $N_{[\, ]}$ respectively the covering and bracketing numbers, as in Definitions 2.1.5 and 2.1.6 of \citet{van2023weak}. 
For $\s \subseteq \bkn$, define the empirical measure
$$\bP_{\s} f_{t, \eta} = \frac{1}{|\s|} \sum_{i \in \s} f_{t, \eta}(W_i),$$
the marginal expectation
$$P f_{t, \eta} = \int_{w} f_{t, \eta}(w) dP(w),$$
and the empirical process
$$\G_{n, \etahat} (t) = \sqrt{n} \frac{1}{M} \sum_{r \in \cR} \frac{1}{K} \sum_{\s \in r} \lrp{\bP_{\s} f_{t, \etahat_{\stilde}} - P f_{t, \etahat_{\stilde}}}.$$

I establish below sufficient conditions for the CLT for split-sample empirical processes, presented in \Cref{th.clt_general}. 

\begin{assumption} \label{as.emp_proc} 
    The following conditions hold:
    \begin{assumptionenum}
        \item \label{as.total_bounded} $T$ is totally bounded for some semimetric $\rho$;
        \item \label{as.measurability} For every $\eta \in H$ and $t \in T$, $f_{t, \eta}$ is measurable;
        \item \label{as.envelope} For all $\eta \in H$, there exists a measurable envelope function $F_\eta$; 
        That is, $F_\eta : \cW \to \R$ is such that $|f_{t, \eta}(w)| \leq F_\eta(w) < \infty$ for all $t \in T$ and $w \in \cW$; 
        \item \label{as.unif_integrability} $\lim_{B \to \infty} \sup_{P \in \cP} \sup_{\eta \in H} \E[P]{F_\eta(W)^2 \I{F_\eta(W) > B}} = 0$; 
        \item \label{as.equicontinuity} For every $\delta_n \downarrow 0$, $$\sup_{P \in \cP} \sup_{\eta \in H} \sup_{\rho(t, t') < \delta_n} \E[P]{\lrp{f_{t, \eta}(W) - f_{t', \eta}(W)}^2} \to 0;$$
        \item \label{as.entropy} One of the following conditions holds for all $\delta_n \downarrow 0$:
            \begin{equation} \label{eq.entropy1}
                \sup_{\eta \in H} \sup_{Q \in \cQ} \int_{0}^{\delta_n} \sqrt{\log \covering{\varepsilon, \cF_\eta, L_2(Q)}} d \varepsilon \to 0,
            \end{equation}
            or 
            \begin{equation} \label{eq.entropy2}
                \sup_{P \in \cP} \sup_{\eta \in H} \int_{0}^{\delta_n} \sqrt{\log \bracketing{\varepsilon, \cF_\eta, L_2(P)}} d \varepsilon \to 0;
            \end{equation}
    \end{assumptionenum}
\end{assumption}

\begin{assumption} \label{as.etahat}
    There exists $\etastarP \in H$ such that for $\etatilde = \cA(D)$, $W \perp D$, and every $t \in T$, 
    $$\Var[P]{f_{t, \etatilde}(W) - f_{t, \etastarP}(W) \Bigm| D } \Pto 0$$ 
    uniformly in $P \in \cP$. 
\end{assumption}

Although technical, \Cref{as.emp_proc} is a weak condition that is satisfied in many applications. 
\Cref{as.total_bounded} through \ref{as.entropy} are standard Donsker conditions in the literature of weak convergence of empirical processes \citep[e.g.,][]{van2023weak}, generalized for the presence of the functions $\eta \in H$. 
In fact, if $T = \lrbc{t}$ and $\cP = \lrbc{P}$ are singletons, these conditions are implied by the ``$2 + \delta$'' moments condition in \Cref{as.2plusdelta} (\Cref{prop.avg_implies_empproc}). 
These assumptions are standard for proving functional CLTs by limiting the complexity of the sets $T$ and $\cF_\eta$. 
In addition to ensuring that each set $\cF_\eta$ is Donsker, \Cref{as.emp_proc} requires that the inequalities and convergences be uniform in $\eta \in H$. 
Importantly, \Cref{as.entropy} does not restrict the complexity of the class $H$, and it does not imply the much stronger condition that $\bigcup_{\eta \in H} \cF_\eta$ is Donsker. 
In applications, except for the restrictions on $\cP$, \Cref{as.total_bounded} through \Cref{as.entropy} are verifiable since they depend only on the choices of $T$ and $\cF_\eta$, and typically do not depend on how $\eta$ is calculated. 
The assumptions on $\cP$ involve the mild uniform square integrability condition \Cref{as.unif_integrability}, and the smoothness condition \Cref{as.equicontinuity}. 

Assumptions \Cref{as.total_bounded} through \Cref{as.entropy} give standard conditions for a CLT when $\cR$ consists of a single sample split. 
The proof for the case of multiple splits relies on the additional \Cref{as.etahat}. 
This is a weak stability condition that requires $\etatilde$ to converge at any rate to any function $\etastarP$, which is allowed to depend on $P$. 
If $T$ and $\cP$ are singletons, this is implied by \Cref{as.avg_etahat} (\Cref{prop.avg_implies_empproc}). 
Note that the requirement is pointwise in $t \in T$, and it holds, for example, if $f_{t, \etatilde}(w) \Pto f_{t, \etastarP}(w)$ for almost all $w \in \cW$. 

\begin{theorem} \label{th.clt_general} 
    \hyperlink{proof.th.clt_general}{(CLT for split-sample empirical processes)} \, \\
    Let \Cref{as.emp_proc,as.etahat} hold. 
    Then, the sequence $\G_{n, \etahat}$ is asymptotically $\rho$-equicontinuous uniformly in $P \in \cP$ and 
    $$\sup_{t \in T} \lrm{\G_{n, \etahat}(t) - \G_{n,\etastarP} (t)} \Pto 0$$
    uniformly in $P \in \cP$, where
    $$\G_{n,\etastarP} (t) = \sqrt{n} \frac{1}{M} \sum_{r \in \cR} \frac{1}{K} \sum_{\s \in r} \lrp{\bP_{\s} f_{t, \etastarP} - P f_{t, \etastarP}}.$$ 
    For any sequence $(P_n)_{n \ge 1} \subseteq \cP$ such that, for every $t, t' \in T$, 
    \begin{equation} \label{eq.cov_function}
        \E[P_n]{ \lrp{f_{t, \etastarPn}(W) - P_n f_{t, \etastarPn}} \lrp{f_{t', \etastarPn}(W) - P_n f_{t', \etastarPn}}} \to \sigma_{t,t'}, 
    \end{equation}
    for some $\sigma_{t,t'}$, 
    $$\G_{n, \etahat} \leadsto \G_{\etastar}$$
    in $\ell^{\infty}(T)$, where $\G_{\etastar}$ is a tight Gaussian process. 
    Moreover, the covariance function of $\G_{\etastar}$ is given by $V_{\Mbar, K} \sigma_{t,t'}$, where 
    $$V_{\Mbar, K} = \begin{cases}
    \Mbar^{-1} \lrp{\pi^{-1} + \Mbar - 1}, & \text{if } K = 1 \text{ and } \Mbar < \infty \\
    1, & \text{otherwise}. 
    \end{cases}$$
\end{theorem}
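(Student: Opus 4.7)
The plan is to generalize the proof of Theorem \ref{th.clt_avg} from a pointwise-in-$t$ average to a uniform-in-$t$ empirical-process statement, by (i) coupling $\G_{n,\etahat}$ to the oracle process $\G_{n,\etastarP}$ uniformly in $t$, (ii) invoking a standard uniform functional CLT for $\G_{n,\etastarP}$, and (iii) identifying the limit covariance $V_{\Mbar,K}\sigma_{t,t'}$ through a direct calculation that exploits independence between the splits and the data.

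\textbf{Step 1 (oracle coupling).} The first and central task is to show
$$\sup_{t \in T}\lrm{\G_{n,\etahat}(t) - \G_{n,\etastarP}(t)} \Pto 0$$
uniformly in $P \in \cP$. By the triangle inequality across the $MK$ splits, it suffices to bound, for a generic split $\s$,
$$\E{\sup_{t \in T}\lrm{\sqrt{\lrm{\s}}(\bP_\s - P)(f_{t,\etahat_{\stilde}} - f_{t,\etastarP})} \Bigm| D_{\stilde}}.$$
Conditional on $D_{\stilde}$ the function $\etahat_{\stilde}$ is deterministic and $(W_i)_{i \in \s}$ is an iid sample from $P$ independent of $\etahat_{\stilde}$, so a maximal inequality (uniform entropy or bracketing, per \Cref{as.entropy}) controls this quantity by the entropy integral over the class $\{f_{t,\etahat_{\stilde}} - f_{t,\etastarP} : t \in T\}$ evaluated at its $L_2(P)$ radius $\delta_{\etahat_{\stilde}}^2 = \sup_t P(f_{t,\etahat_{\stilde}} - f_{t,\etastarP})^2$. \Cref{as.entropy} bounds the entropy integral uniformly in $\eta \in H$ and $P \in \cP$, \Cref{as.etahat} combined with the $L_2$ smoothness in \Cref{as.equicontinuity} drives $\delta_{\etahat_{\stilde}} \Pto 0$ uniformly in $P$, and \Cref{as.unif_integrability} converts the conditional bound into an unconditional one via dominated convergence, as in the proof of \Cref{th.clt_avg}.

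\textbf{Step 2 (CLT and covariance for the oracle process).} Writing $N_i = \sum_{r \in \cR}\sum_{\s \in r}\I{i \in \s}$ for the count of how many splits observation $i$ appears in as evaluation data,
$$\G_{n,\etastarP}(t) = \frac{\sqrt{n}}{MKb}\sum_{i=1}^{n} N_i\bigl(f_{t,\etastarP}(W_i) - Pf_{t,\etastarP}\bigr),$$
a weighted empirical process with weights $(N_i)$ independent of $(W_i)_{i=1}^n$. For $K>1$ one has $N_i \equiv M$, reducing $\G_{n,\etastarP}$ to the standard sample-average process, whose uniform-in-$P$ functional CLT follows directly from \Cref{as.emp_proc}. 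For $K=1$ with $\Mbar < \infty$, conditioning on $(N_i)$ preserves the Donsker property since $N_i$ is bounded by $M$; asymptotic equicontinuity then transfers through the tower property. A direct second-moment calculation using $N_1 \perp W_1$ gives
$$\Cov[P]{\G_{n,\etastarP}(t),\G_{n,\etastarP}(t')} = \frac{\E{N_1^2}}{(MK\pi)^2}\,\Cov[P]{f_{t,\etastarP}(W), f_{t',\etastarP}(W)},$$
and plugging $\E{N_1^2} = M^2$ when $K>1$ and $N_1 \sim \mathrm{Bin}(M,\pi)$ when $K=1$ recovers exactly the scalar $V_{\Mbar,K}$ after simplification with $b = \pi n$. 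Along the sequence $(P_n)$ satisfying \eqref{eq.cov_function}, Lyapunov's CLT --- justified by \Cref{as.unif_integrability} --- yields finite-dimensional Gaussian convergence with covariance $V_{\Mbar,K}\sigma_{t,t'}$; combined with equicontinuity this gives $\G_{n,\etastarP} \leadsto \G_{\etastar}$ in $\ell^\infty(T)$, and Step 1 transfers the conclusion to $\G_{n,\etahat}$.

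\textbf{Main obstacle.} The hard step is Step 1: the oracle coupling must be uniform in both $t \in T$ and $P \in \cP$, while the only stability condition on $\etahat$ is the conditional-variance convergence of \Cref{as.etahat}, with no rate whatsoever. The random class $\{f_{t,\etahat_{\stilde}} - f_{t,\etastarP} : t \in T\}$ has a data-dependent $L_2(P)$ radius, so the chaining argument must be run conditionally on $D_{\stilde}$ while the complexity bounds (entropy plus envelope integrability) are applied uniformly in $\eta \in H$ and $P \in \cP$. This is precisely why \Cref{as.emp_proc} is stated as uniform-in-$\eta$ Donsker rather than Donsker at the single limit function $\etastarP$.
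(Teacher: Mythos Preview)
Your three–step architecture matches the paper's proof. For Step~1 you take a somewhat different route: you propose a local maximal inequality on the difference class $\{f_{t,\etahat_{\stilde}}-f_{t,\etastarP}:t\in T\}$, with the bound driven to zero by the random $L_2$ radius. The paper instead writes the difference process as the sum of two processes $\sqrt{b}(\bP_\s-P)f_{t,\etahat_{\stilde}}$ and $\sqrt{b}(\bP_\s-P)f_{t,\etastarP}$, shows each is asymptotically $\rho$-equicontinuous uniformly in $\eta$ and $P$ (Lemmas~\ref{lemma.asy_equic_seq_eta}--\ref{lemma.asy_equic_etahat}), establishes pointwise-in-$t$ convergence to zero via the conditional-variance trick, and then invokes the generic ``equicontinuity $+$ pointwise $\Rightarrow$ uniform'' principle. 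Both routes are valid; the paper's avoids checking entropy of the difference class and needs only pointwise $\delta\to0$, while yours is more direct once that bookkeeping is done. (A small slip: for the centered empirical process the relevant radius is the conditional \emph{variance}, not $\sup_t P(f_{t,\etahat}-f_{t,\etastarP})^2$; this is exactly what \Cref{as.etahat} controls, so the argument still goes through.)

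There is a real gap in Step~2. Your covariance identity and the marginal $N_1\sim\mathrm{Bin}(M,\pi_n)$ are correct, but the summands $\lambda_i g_t(W_i)$ in $\G_{n,\etastarP}$ are \emph{not} independent: the weights $(N_i)_{i=1}^n$ are dependent across $i$ because each split has fixed size $b$. Lyapunov's CLT therefore does not apply unconditionally. Conditioning on $(N_i)$ gives a conditional CLT with variance $\sigma_{t,t'}\cdot n^{-1}\sum_i\lambda_i^2$, and you then need the law of large numbers $n^{-1}\sum_i\lambda_i^2\Pto V_{\Mbar,K}$; knowing $E[N_1^2]$ is not enough. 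The paper devotes a separate combinatorial lemma (\Cref{lemma.combinatorial}) to showing $n^{-1}\sum_i\I{N_i=j}\Pto\binom{M}{j}\pi^j(1-\pi)^{M-j}$ despite the dependence, by explicitly computing that $\Cov[\I{N_1=j},\I{N_2=j}]\to0$. You also omit the case $K=1$, $\Mbar=\infty$ entirely; there the paper argues separately that $\lambda_i=N_i/(M\pi_n)\Pto1$ (a LLN over the $M$ independent repetitions) and couples $\G_{n,\etastarP}$ to the unweighted empirical process in $L_2$.
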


To the best of my knowledge, this appears to be the first central limit theorem for empirical processes that average over multiple splits of the sample. 
This result enables asymptotic inference for a large class of split-sample estimators. 
For example, combined with the functional delta method, it immediately implies asymptotic normality of Hadamard differentiable functionals of the split-sample empirical measure 
$$\sqrt{n} \frac{1}{M} \sum_{r \in \cR} \frac{1}{K} \sum_{\s \in r} \bP_{\s} f_{t, \etahat_{\stilde}}.$$
In \Cref{section.z_estimators}, I use \Cref{th.clt_general} as a building block to prove asymptotic normality of split-sample Z-estimators, a broad class that cover many if not most estimators used in practice, including the ones in \Cref{section.application_ghana}. 

\subsection{Proofs}

\begin{lemma} \label{lemma.asy_equic_seq_eta}
    Let Assumptions \Cref{as.total_bounded} through \Cref{as.entropy} hold, $(\eta_n)_{n \ge 1} \subseteq H$ be a deterministic sequence, $(P_n)_{n \ge 1} \subseteq \cP$, and $s \subseteq \bkn$ be a random (uniformly) subset of $\bkn$ such that $|\s| \to \infty$ as $n \to \infty$. 
    Define
    $$X_{n,s}(t) = \frac{1}{\sqrt{|\s|}} \sum_{i \in \s} \lrp{f_{t, \eta_n}(W_i) - P_n f_{t, \eta_n}}$$
    Then, the sequence $X_{n,s}$ is asymptotically $\rho$-equicontinuous. 
\end{lemma}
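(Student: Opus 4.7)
The plan is to exploit the iid structure by conditioning on $\s$ and then applying a uniform empirical-process maximal inequality for the class $\cF_{\eta_n}$. Conditional on $\s$, the variables $(W_i)_{i \in \s}$ are iid $P_n$, so $X_{n,\s}(\cdot) \mid \s$ has the same distribution as the standard empirical process $\sqrt{|\s|}\,(\bP_{|\s|} - P_n) f_{t,\eta_n}$ indexed by $t \in T$. Asymptotic $\rho$-equicontinuity will then follow by showing that the modulus of continuity of this conditional process tends to $0$ as $\delta \to 0$, uniformly in $n$.

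The second step applies a maximal inequality suited to whichever form of \Cref{as.entropy} holds. Under the uniform-entropy condition \cref{eq.entropy1}, a Pollard-type bound (e.g., Theorem~2.14.1 of \citet{van2023weak}) yields, for any $\bar\delta > 0$,
\begin{equation*}
    \E{\sup_{\bar\rho_n(t,t') < \bar\delta} \lrm{X_{n,\s}(t) - X_{n,\s}(t')} \;\Bigm|\; \s} \;\lesssim\; J(\bar\delta,\cF_{\eta_n})\, \norm{F_{\eta_n}}_{P_n,2} + R_{n,\s}(\bar\delta),
\end{equation*}
where $\bar\rho_n(t,t') = \norm{f_{t,\eta_n} - f_{t',\eta_n}}_{P_n,2}$, $J$ is the uniform entropy integral, and the remainder $R_{n,\s}(\bar\delta)$ comes from envelope truncation. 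Under the bracketing condition \cref{eq.entropy2}, the analogous bracketing maximal inequality gives the same type of bound with $J_{[\,]}$ in place of $J$.

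The third step translates this $\bar\rho_n$-modulus into a $\rho$-modulus. \Cref{as.equicontinuity} implies that for every $\delta_n \downarrow 0$,
$$\sup_{P \in \cP,\,\eta \in H}\sup_{\rho(t,t')<\delta_n} \norm{f_{t,\eta} - f_{t',\eta}}_{P,2} \to 0,$$
so every $\rho$-ball of radius $\delta$ is contained in an $L_2(P_n)$-ball of radius $g(\delta)$ with $g(\delta) \to 0$ as $\delta \to 0$, uniformly in $n$. Substituting $\bar\delta = g(\delta)$ into the maximal inequality, the supremum form of \Cref{as.entropy} forces $J(g(\delta),\cF_{\eta_n}) \to 0$ uniformly in $\eta_n$, while \Cref{as.unif_integrability} bounds $\norm{F_{\eta_n}}_{P_n,2}$ uniformly and controls the truncation remainder. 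Hence $\limsup_{n\to\infty}\E{\sup_{\rho(t,t')<\delta}\lrm{X_{n,\s}(t)-X_{n,\s}(t')}} \to 0$ as $\delta \to 0$, and Markov's inequality yields asymptotic $\rho$-equicontinuity.

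The main obstacle is maintaining uniformity in $\eta_n$ and $P_n$ throughout. Standard empirical-process maximal inequalities are stated for a fixed law and class, so it must be checked that every dependence on $(\eta,P)$ factors only through quantities that are controlled uniformly by \Cref{as.emp_proc} -- namely $\norm{F_\eta}_{P,2}$, the $L_2(P)$-modulus of $\cF_\eta$, and the entropy integral. The randomness of $\s$ is not a serious difficulty: conditionally on $\s$, the bound depends on $\s$ only through $|\s|$, which tends to infinity, and the leading entropy term is independent of $|\s|$.
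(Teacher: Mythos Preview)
Your outline is correct. The paper's own proof is considerably more terse: it simply invokes Theorems~2.11.1 and~2.11.9 of \citet{van2023weak} for triangular arrays, setting $m_n = |\s|$, $\cF = T$, and $Z_{ni}(t) = |\s|^{-1/2} f_{t,\eta_n}(W_i)$ with $W_i \sim P_n$, and observes that the uniform-in-$(\eta,P)$ form of Assumptions~\ref{as.total_bounded}--\ref{as.entropy} guarantees the hypotheses of those theorems hold along any sequences $(\eta_n)$ and $(P_n)$. Your route---condition on $\s$ to reduce to a standard iid empirical process, then apply a maximal inequality (in the spirit of Theorem~2.14.1 or 2.14.2 of \citet{van2023weak}) to the $L_2(P_n)$-modulus, and finally translate back to the $\rho$-modulus via \Cref{as.equicontinuity}---is essentially a hands-on reconstruction of the equicontinuity part of Theorems~2.11.1/2.11.9. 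What you gain is transparency about where each assumption enters; what the paper gains is economy, since those theorems are designed precisely for this triangular-array situation and package the conditioning, maximal inequality, truncation remainder, and metric translation into a single citation. Your closing paragraph about uniformity in $(\eta_n,P_n)$ correctly identifies the only real content, and matches the paper's one-sentence remark that the suprema in the assumptions handle arbitrary sequences.
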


\begin{proof}[\hypertarget{proof.lemma.asy_equic_seq_eta}{Proof of \Cref{lemma.asy_equic_seq_eta}}] 
    \, \\ 
    The result follows from an application of Theorems 2.11.1 and 2.11.9 in \citet{van2023weak}, respectively for when conditions \cref{eq.entropy1,eq.entropy2} hold. 
    Their notation is adapted with $m_n = |\s|$, $\cF = T$, and $Z_{ni}(t) = |\s|^{-1/2} f_{t,\eta_n}(W_i)$, where it is implicit in the notation that $W_i \sim P_n$ (alternatively, one could denote $W_{ni}$ instead of $W_i$). 
    The presence of the suprema over $P \in \cP$ and $\eta \in H$ guarantee that the conditions in those theorems hold for any sequences $(\eta_n)_{n \ge 1}$ and $(P_n)_{n \ge 1}$. 
\end{proof}

\begin{lemma} \label{lemma.asy_equic_etahat}
    Let Assumptions \Cref{as.total_bounded} through \Cref{as.entropy} hold, and $s \subseteq \bkn$ be a random (uniformly) subset such that $|\s| \to \infty$ as $n \to \infty$. 
    Define 
    $$X_{n,s,\eta}(t) = \frac{1}{\sqrt{|\s|}} \sum_{i \in \s} \lrp{f_{t, \eta}(W_i) - P f_{t, \eta}}.$$
    Then, the sequence $X_{n,s,\etahat_{\stilde}}$ is asymptotically $\rho$-equicontinuous uniformly in $P \in \cP$. 
\end{lemma}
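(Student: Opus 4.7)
The plan is to reduce this lemma to the deterministic-sequence case handled by Lemma~\ref{lemma.asy_equic_seq_eta}, by conditioning on the training data $D_{\stilde}$. The structural facts I would exploit are that $\etahat_{\stilde}$ is $\sigma(D_{\stilde})$-measurable while $(W_i)_{i \in \s}$ and the random subsample $\s$ are independent of $D_{\stilde}$, so that conditional on $D_{\stilde}$ the map $\etahat_{\stilde}$ is a fixed element of $H$ and the conditional law of $X_{n,\s,\etahat_{\stilde}}$ coincides with the unconditional law of $X_{n,\s,\eta}$ at the realized $\eta = \etahat_{\stilde}(\omega)$.

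The first step is to promote Lemma~\ref{lemma.asy_equic_seq_eta} to a bound that is uniform in $\eta \in H$ and $P \in \cP$: for every $\varepsilon > 0$,
$$\lim_{\delta \downarrow 0} \limsup_{n \to \infty} \sup_{P \in \cP} \sup_{\eta \in H} P^* \lrp{\sup_{\rho(t,t') < \delta} \lrm{X_{n,\s,\eta}(t) - X_{n,\s,\eta}(t')} > \varepsilon} = 0.$$
I would prove this by contradiction. A failure would yield $\varepsilon_0 > 0$ and subsequences $n_k \to \infty$, $\delta_k \downarrow 0$, $\eta_k \in H$, $P_k \in \cP$ with the outer probability above $\varepsilon_0$ at $(n_k, \delta_k, \eta_k, P_k)$. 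I would then define an arbitrary deterministic sequence $(\eta_n, P_n) \subseteq H \times \cP$ that agrees with $(\eta_{n_k}, P_{n_k})$ at $n = n_k$; by monotonicity of the event in $\delta$, for any fixed $\delta > 0$ the outer probability at $(n_k, \delta)$ dominates the one at $(n_k, \delta_k)$ once $\delta_k \le \delta$, so the outer $\limsup_n$ stays above $\varepsilon_0$ for every $\delta > 0$, contradicting the asymptotic $\rho$-equicontinuity granted by Lemma~\ref{lemma.asy_equic_seq_eta} for this deterministic sequence.

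With the uniform bound in hand, conditioning on $D_{\stilde}$ gives
$$P^* \lrp{\sup_{\rho(t,t') < \delta} \lrm{X_{n,\s,\etahat_{\stilde}}(t) - X_{n,\s,\etahat_{\stilde}}(t')} > \varepsilon \Biggm| D_{\stilde}} \le \sup_{\eta \in H} P^* \lrp{\sup_{\rho(t,t') < \delta} \lrm{X_{n,\s,\eta}(t) - X_{n,\s,\eta}(t')} > \varepsilon},$$
and taking outer expectations followed by the supremum over $P \in \cP$ on both sides delivers the claimed uniform asymptotic $\rho$-equicontinuity of $X_{n,\s,\etahat_{\stilde}}$.

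The main obstacle I anticipate is measurability bookkeeping around $P^*$ under conditioning, since $\sup_{\rho(t,t') < \delta} \lrm{X_{n,\s,\eta}(t) - X_{n,\s,\eta}(t')}$ need not be measurable in $\omega$ and the displayed conditional inequality must be interpreted as a bound on conditional outer probabilities. I would handle this using the Fubini-type inequalities for outer integrals in Section~1.2 of \citet{van2023weak}, first replacing the non-measurable supremum by a measurable majorant that depends on $(W_i)_{i \in \s}$ but is constructed uniformly over $\eta \in H$, and then integrating over $D_{\stilde}$.
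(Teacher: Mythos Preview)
Your proposal is correct and follows essentially the same route as the paper: condition on $D_{\stilde}$, bound the conditional probability by $\sup_{\eta \in H}$ of the deterministic-$\eta$ probability, then reduce the resulting $\sup_{P \in \cP}\sup_{\eta \in H}$ bound to Lemma~\ref{lemma.asy_equic_seq_eta} via the same subsequence/worst-case-sequence argument you sketch. The only difference is cosmetic ordering (the paper does the conditioning step first and the uniform-in-$(\eta,P)$ reduction second) and that you are more explicit about outer-probability measurability issues, which the paper treats informally.
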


\begin{proof}[\hypertarget{proof.lemma.asy_equic_etahat}{Proof of \Cref{lemma.asy_equic_etahat}}] 
    Let $\cF_{\eta, \delta} = \lrbc{f - g: f,g \in \cF_{\eta}, \rho(f,g)< \delta}$ and $\varepsilon > 0$. 
    \begin{align}
        & \sup_{P \in \cP} P \lrp{\norm{X_{n,s,\etahat_{\stilde}}}_{\cF_{\etahat_{\stilde}, \delta}} > \varepsilon} \nonumber \\
        & = \sup_{P \in \cP} \int_{D_s, D_{\stilde}} \I{\norm{X_{n,s,\etahat_{\stilde}(D_{\stilde})}(D_{s})}_{\cF_{\etahat_{\stilde}(D_{\stilde}), \delta}} > \varepsilon} dP(D_s, D_{\stilde}) \label{eq.proof.lemma.asy_equic_etahat.integral} \\
        & = \sup_{P \in \cP} \int_{D_{\stilde}} \lrbk{\int_{D_s} \I{\norm{X_{n,s,\etahat_{\stilde}(D_{\stilde})}(D_{s})}_{\cF_{\etahat_{\stilde}(D_{\stilde}), \delta}} > \varepsilon} dP(D_s)} dP(D_{\stilde}) \label{eq.proof.lemma.asy_equic_etahat.independence} \\
        & \le \sup_{P \in \cP} \int_{D_{\stilde}} \sup_{\eta \in H} \lrbk{\int_{D_s} \I{\norm{X_{n,s,\eta}(D_{s})}_{\cF_{\eta, \delta}} > \varepsilon} dP(D_s)} dP(D_{\stilde}) \nonumber \\
        & = \sup_{P \in \cP} \sup_{\eta \in H} \lrbk{\int_{D_s} \I{\norm{X_{n,s,\eta}(D_{s})}_{\cF_{\eta, \delta}} > \varepsilon} dP(D_s)} \nonumber \\
        & = \sup_{P \in \cP} \sup_{\eta \in H} P \lrp{\norm{X_{n,s,\eta}(D_{s})}_{\cF_{\eta, \delta}} > \varepsilon}, \nonumber
    \end{align}
    where \cref{eq.proof.lemma.asy_equic_etahat.integral} makes explicit the dependence of $X_{n,s,\etahat_{\stilde}}$ on the subsample $D_s$ and of $\etahat_{\stilde}$ on $D_{\stilde}$, and \cref{eq.proof.lemma.asy_equic_etahat.independence} uses the fact that the split is random and $D_s, D_{\stilde}$ are independent. 

    Hence, for an arbitrary $\delta_n \downarrow 0$, $\sup_{P \in \cP} P \lrp{\norm{X_{n,s,\etahat_{\stilde}}}_{\cF_{\etahat_{\stilde}, \delta_n}} > \varepsilon} \to 0$ follows from 
    $$\sup_{P \in \cP} P \lrp{\norm{X_{n,s,\eta_n}(D_{s})}_{\cF_{\eta_n, \delta_n}} > \varepsilon} \to 0$$
    for any deterministic $(\eta_n)_{n \ge 1} \subseteq H$, which is established in \Cref{lemma.asy_equic_seq_eta}. 
\end{proof}

\begin{proof}[\hypertarget{proof.th.clt_general}{Proof of \Cref{th.clt_general}}] 
    \, \\ 
    The proof is divided into three main steps. 
    First, I show that 
    $$\sup_{t \in T} \lrm{\G_{n, \etahat}(t) - \G_{n,\etastarP}(t)} \Pto 0$$
    uniformly in $P \in \cP$.
    Second, I show that $\G_{n,\etastarP}$ is asymptotically $\rho$-equicontinuous. 
    Finally, I prove the Gaussian limit of $(\G_{n,\etastarP}(t))_{t \in T'}$ for any finite $T' \subseteq T$.  

    \paragraph{Step one.}
    Let $h_t(w, \eta)=\lrbk{f_{t, \eta}(w) - P f_{t, \eta}} - \lrbk{f_{t, \etastarP}(w) - P f_{t, \etastarP}}$, $\pi_n = b / n$, and fix $\varepsilon > 0$. 
    It follows that 
    \begin{align}
        & \sup_{P \in \cP} P \lrp{\sup_{t \in T} \lrm{\G_{n, \etahat}(t) - \G_{n,\etastarP}(t)} > \varepsilon} \nonumber \\
        & = \sup_{P \in \cP} P \lrp{\sup_{t \in T} \lrm{\sqrt{n} \frac{1}{M K} \sum_{r \in \cR} \sum_{\s \in r} \frac{1}{b} \sum_{i \in \s} h_t(W_i, \etahat_{\stilde})} > \varepsilon} \nonumber \\
        & \le \sup_{P \in \cP} P \lrp{ \frac{1}{M K} \sum_{r \in \cR} \sum_{\s \in r} \sup_{t \in T} \lrm{ \frac{\sqrt{n}}{b} \sum_{i \in \s} h_t(W_i, \etahat_{\stilde})} > \varepsilon} \nonumber \\
        & \le \varepsilon^{-1} \frac{1}{M K} \sum_{r \in \cR} \sum_{\s \in r} \sup_{P \in \cP} \E[P]{  \sup_{t \in T} \lrm{ \frac{\sqrt{n}}{b} \sum_{i \in \s} h_t(W_i, \etahat_{\stilde})}} \label{eq.proof.th.clt_general.markov} \\
        & = \varepsilon^{-1} \sup_{P \in \cP} \E[P]{  \sup_{t \in T} \lrm{ \frac{\sqrt{n}}{b} \sum_{i \in \xi} h_t(W_i, \etahat_{\xitilde})}} \label{eq.proof.th.clt_general.xi} \\
        & = \varepsilon^{-1} \pi_{n}^{-1/2} \sup_{P \in \cP} \E[P]{  \sup_{t \in T} \lrm{ \frac{1}{\sqrt{b}} \sum_{i \in \xi} h_t(W_i, \etahat_{\xitilde})}}, \label{eq.proof.th.clt_general.pi}
    \end{align}
    where \cref{eq.proof.th.clt_general.markov} follows from Markov's inequality, and \cref{eq.proof.th.clt_general.xi} defines $\xi$ as a random subset of $\bkn$ (uniformly over all subsets). 

    Since $\pi_n \to \pi \in (0,1)$, \cref{eq.proof.th.clt_general.pi} converges to zero if the term inside the expectation convergences in probability to zero uniformly in $P \in \cP$, since it is uniformly integrable (by \Cref{as.unif_integrability}). 
    This follows from stochastic equicontinuity of $\frac{1}{\sqrt{b}} \sum_{i \in \xi} h_t(W_i, \etahat_{\xitilde})$ (as a process indexed by $t \in T$) and pointwise convergence in $t$, by applying Theorem 22.9 in \citet{davidson2021stochastic}. 
    Stochastic equicontinuity follows since 
    \begin{align*}
        \frac{1}{\sqrt{b}} \sum_{i \in \xi} h_t(W_i, \etahat_{\xitilde}) & = \frac{1}{\sqrt{b}} \sum_{i \in \xi} \lrbk{f_{t, \etahat_{\xitilde}}(W_i) - P f_{t, \etahat_{\xitilde}}} - \frac{1}{\sqrt{b}} \sum_{i \in \xi} \lrbk{f_{t, \etastarP}(W_i) - P f_{t, \etastarP}}
    \end{align*}
    is a sum of two stochastically equicontinuous processes, respectively by \Cref{lemma.asy_equic_etahat} and \Cref{lemma.asy_equic_seq_eta}. 
    For pointwise convergence, I show that the variance converges to zero, and note $h_t(w,\eta)$ is mean zero by construction. 
    For an arbitrary $t \in T$, 
    \begin{align}
        & \sup_{P \in \cP} \Var[P]{\frac{1}{\sqrt{b}} \sum_{i \in \xi} h_t(W_i, \etahat_{\xitilde})} \nonumber \\ 
        & = \sup_{P \in \cP} \Var[P]{\frac{1}{\sqrt{b}} \sum_{i \in \xi} \lrp{f_{t, \etahat_{\xitilde}}(W_i) - P f_{t, \etahat_{\xitilde}}} - \lrp{f_{t, \etastarP}(W_i) - P f_{t, \etastarP}}} \nonumber \\ 
        & = \sup_{P \in \cP} \E[P]{\Var[P]{\frac{1}{\sqrt{b}} \sum_{i \in \xi} \lrp{f_{t, \etahat_{\xitilde}}(W_i) - P f_{t, \etahat_{\xitilde}}} - \lrp{f_{t, \etastarP}(W_i) - P f_{t, \etastarP}} \Biggm| D_{\xitilde}}} \label{eq.proof.th.clt_general.ltv} \\ 
        & = \sup_{P \in \cP} \E[P]{\Var[P]{ \lrp{f_{t, \etahat_{\xitilde}}(W) - P f_{t, \etahat_{\xitilde}}} - \lrp{f_{t, \etastarP}(W) - P f_{t, \etastarP}} \Biggm| D_{\xitilde}}} \label{eq.proof.th.clt_general.indep} \\ 
        & = \sup_{P \in \cP} \E[P]{\Var[P]{ f_{t, \etahat_{\xitilde}}(W) - f_{t, \etastarP}(W) \Biggm| D_{\xitilde}}}, \nonumber
    \end{align}
    where \cref{eq.proof.th.clt_general.ltv} uses the Law of Total Variance and the fact that $\E[P]{h_t(W_i, \etahat_{\xitilde}) | D_{\xitilde}} = 0$, and \cref{eq.proof.th.clt_general.indep} follows since the summands are iid conditional on $D_{\xitilde}$. 
    Finally, the last term converges to zero from \Cref{as.etahat}. 
    Note that since $f_{t,\eta}$ are uniformly square integrable by \Cref{as.unif_integrability}, convergence in probability of the conditional variance implies its convergence in $L_1$. 

    \paragraph{Step two.}
    Let $\lambda_i = (\pi_n M K)^{-1} \lrm{\lrbc{s \in \lrbc{s_{m,k}}_{m \in \lrbk{M}, k \in \lrbk{K}} : i \in \s}}$ and note that
    $$\G_{n,\etastarP}(t) = \frac{1}{\sqrt{n}} \sum_{i = 1}^n \lambda_i \lrp{f_{t, \etastarP}(W_i) - P f_{t, \etastarP}}.$$
    
    Let $\lambda = \lrp{\lambda_i}_{i \in \bkn}$, $\cF_{\etastarP, \delta} = \lrbc{f - g: f,g \in \cF_{\etastarP}, \rho(f,g)< \delta}$, $\varepsilon > 0$, and $\delta_n \downarrow 0$.  
    \begin{align*}
        \sup_{P \in \cP} P \lrp{\norm{\G_{n,\etastarP}}_{\cF_{\etastarP, \delta_n}} > \varepsilon} & = \sup_{P \in \cP} \E[P]{ P \lrp{\norm{\G_{n,\etastarP}}_{\cF_{\etastarP, \delta_n}} > \varepsilon \Bigm| \lambda }} \\
        & \le \sup_{\lambda: \lambda_i \le \pi_n^{-1}} \sup_{P \in \cP} P \lrp{\norm{\G_{n,\etastarP}}_{\cF_{\etastarP, \delta_n}} > \varepsilon \Bigm| \lambda }.
    \end{align*}
    The last term converges to zero from asymptotic equicontinuity of $\frac{1}{\sqrt{n}} \sum_{i = 1}^n \lambda_{n,i} \lrp{f_{t, \etastarPn}(W_i) - P f_{t, \etastarPn}}$ for arbitrary sequences $(P_n)_{n \ge 1} \subseteq \cP$ and $(\lambda_{n,i})_{n \ge 1, i \in \bkn}$ satisfying $\lambda_{n,i} \le \pi_n^{-1}$ for all $n,i$. 
    Asymptotic equicontinuity can be verified under \Cref{as.emp_proc}, for example, by applying Theorem 2.11.1 (when \cref{eq.entropy1} holds) and Theorem 2.11.9 (when \cref{eq.entropy2} holds) of \citet{van2023weak}. 
    Their notation is adapted with $m_n = n$, $\cF = T$, and $Z_{ni}(t) = n^{-1/2} \lambda_{n,i} f_{t, \etastarPn}(W_i)$, where it is implicit in the notation that $W_i \sim P_n$ (alternatively, one could denote $W_{ni}$ instead of $W_i$). 
    For $\gamma > 0$, note that 
    \begin{align*}
        & \E[P]{\sup_{t \in T} \lrp{n^{-1/2} \lambda_{n,i} f_{t, \etastarPn}(W_i)}^2 \I{\sup_{t \in T} \lrm{n^{-1/2} \lambda_{n,i} f_{t, \etastarPn}(W_i)} > \gamma }} \\
        & \le \pi_n^{-2} \E[P]{\sup_{t \in T} \lrp{n^{-1/2} f_{t, \etastarPn}(W_i)}^2 \I{\sup_{t \in T} \lrm{n^{-1/2} \pi_n^{-1} f_{t, \etastarPn}(W_i)} > \gamma }},
    \end{align*}
    and 
    \begin{align*}
        & \lrp{n^{-1/2} \lambda_{n,i} f_{t, \etastarPn}(W_i) - n^{-1/2} \lambda_{n,i} f_{t', \etastarPn}(W_i)}^2 \\
        & \le \pi_n^{-2} \lrp{n^{-1/2} f_{t, \etastarPn}(W_i) - n^{-1/2} f_{t', \etastarPn}(W_i)}^2,
    \end{align*}
    for any $t, t' \in T$, $n$, and $i \in \bkn$. 

    \paragraph{Step three.} 
    If $K > 1$, $\lambda_i=1$ for all $i$, and the Gaussian limit follows from Lindeberg's CLT and the Cram\'er-Wold device, using \Cref{as.unif_integrability}. 
    
    For $K=1$ and $\Mbar < \infty$, let
    $$M_i = \lrm{\lrbc{s \in \lrbc{s_{m,k}}_{m \in \lrbk{M}, k \in \lrbk{K}} : i \in \s}},$$
    so $\lambda_i = (\pi_n M)^{-1} M_i$. 
    $$\Var[P_n]{\G_{n,\etastarP}(t)}[\lambda] = \E[P_n]{\lrp{f_{t,\etastarPn}(W) - P_n f_{t,\etastarPn}}^2} \frac{1}{n} \sum_{i=1}^{n} \lambda_i^2.$$
    Without loss of generality, let $M = \Mbar$. 
    $$\frac{1}{n} \sum_{i=1}^{n} \lambda_i^2 = \frac{1}{\pi_n^2 \Mbar^2} \sum_{j=1}^{\Mbar} j^2 \frac{1}{n} \sum_{i=1}^{n} \I{M_i = j}.$$
    In \Cref{lemma.combinatorial}, I show that 
    $$\frac{1}{n} \sum_{i=1}^{n} \I{M_i = j} \Pnto \binom{\Mbar}{j} \pi^j (1 - \pi)^{\Mbar - j}.$$
    Hence, 
    $$\sum_{j=1}^{\Mbar} j^2 \frac{1}{n} \sum_{i=1}^{n} \I{M_i = j} \Pnto \sum_{j=1}^{\Mbar} j^2 \binom{\Mbar}{j} \pi^j (1 - \pi)^{\Mbar - j} = \pi (1 - \pi) \Mbar + (\pi \Mbar)^2,$$
    since the sum in the right is the second moment of a binomial distribution with parameters $\Mbar$ and $\pi$. 
    Collecting the results,
    $$\frac{1}{n} \sum_{i=1}^{n} \lambda_i^2 \Pnto 1 + (1 - \pi) \pi^{-1} M^{-1}.$$
    The Gaussian limit follows from Lindeberg's CLT conditional on $\lambda$ and the dominated convergence theorem, and the Cram\'er-Wold device. 

    Finally, let $K = 1$ and $\Mbar = \infty$. 
    I show that 
    \begin{equation} \label{eq.rep_Minf}
        \frac{1}{\sqrt{n}} \sum_{i = 1}^n \lambda_i \lrp{f_{t, \etastarPn}(W_i) - P_n f_{t, \etastarPn}} - \frac{1}{\sqrt{n}} \sum_{i = 1}^n \lrp{f_{t, \etastarPn}(W_i) - P_n f_{t, \etastarPn}}
    \end{equation}
    converges to zero in $L_2$. 
    For the mean, 
    \begin{align*}
        & \E[P_n]{\frac{1}{\sqrt{n}} \sum_{i = 1}^n \lrp{\lambda_i - 1} \lrp{f_{t, \etastarPn}(W_i) - P_n f_{t, \etastarPn}}} \\
        & = \sqrt{n} \E[P_n]{\lrp{\lambda_1 - 1} \lrp{f_{t, \etastarPn}(W_1) - P_n f_{t, \etastarPn}}} \\
        & = \sqrt{n} \E[P_n] \biggl[ \lrp{\lambda_1 - 1} \underbrace{\E[P_n]{f_{t, \etastarPn}(W_1) - P_n f_{t, \etastarPn} \Bigm| \lambda_1}}_{=0} \biggr]. \\
    \end{align*}
    For the variance, 
    \begin{align*}
        & \Var[P_n]{\frac{1}{\sqrt{n}} \sum_{i = 1}^n \lrp{\lambda_i - 1} \lrp{f_{t, \etastarPn}(W_i) - P_n f_{t, \etastarPn}}}  \\
        & = \E[P_n]{\Var[P_n]{\frac{1}{\sqrt{n}} \sum_{i = 1}^n \lrp{\lambda_i - 1} \lrp{f_{t, \etastarPn}(W_i) - P_n f_{t, \etastarPn}} \Biggm| \lambda }} \\
        & = \Var[P_n]{f_{t, \etastarPn}(W_i) - P_n f_{t, \etastarPn}} \E[P_n]{\frac{1}{n} \sum_{i=1}^{n} \lrp{\lambda_i - 1}^2} \\
        & = \Var[P_n]{f_{t, \etastarPn}(W_i) - P_n f_{t, \etastarPn}} \E[P_n]{\lrp{\lambda_1 - 1}^2},
    \end{align*}
    where the first equality follows since $\E[P_n]{f_{t, \etastarPn}(W) - P_n f_{t, \etastarPn} \Bigm| \lambda} = 0$ by the Law of Total Variance, and the second equality since the summands are iid conditional on $\lambda$. 
    Since $\lambda_1 - 1$ is bounded, $\E[P_n]{\lrp{\lambda_1 - 1}^2} \to 0$ if $\lambda_1 \Pnto 1$, which follows from 
    \begin{align*}
        \lambda_1 & = \lrp{\pi_n M}^{-1} M_1 \\
        & = \lrp{\pi_n}^{-1} \frac{1}{M} \sum_{m = 1}^M \I{1 \in s_{m,1}} \Pnto 1
    \end{align*}
    by a law of large numbers, since $\E[P_n]{\I{1 \in s_{m,1}}} = P_n(1 \in s_{m,1}) = \pi_n$ and splits are independent. 
    Finally, the Gaussian limit follows from Lindeberg's CLT and the Cram\'er-Wold device. 
\end{proof}

\section{Inference with Fast Converging Moments} \label{section.fast_convergence}

Consider the setting of \Cref{section.inference_normal}. 
The normal approximation CI \cref{eq.ci_h} may not cover $h(\theta_\etahat)$ with nominal probability when the variance of any moment function evaluated at the limit parameter $\theta_{\etastarP}$ is $0$, that is, 
\begin{equation} \label{eq.fast_moments}
    \Var[P]{\psi_{\theta_{\etastarP},\etastarP,j}(W)} = 0
\end{equation}
for any $j \in \lrbk{1,\dots,d}$. 
If that happens, either $\sigma^2_{\etastarP} = 0$, $\Psidot_{\etastarP}$ is not invertible, or both. 
If $\sigma^2_{\etastarP} = 0$, \cref{eq.clt_h} implies that the centered estimator multiplied by $\sqrt{n}$ converges in probability to zero, and the normal approximation in \cref{eq.ci_h} may not be accurate. 
Similarly, if $\Psidot_{\etastarP}$ is not invertible, $V^*_{\etastarP}$ is not well-defined, and the normal approximation may be inaccurate. 
In this subsection, I provide an approach to inference on $\theta_\etahat$ that is general in considering the class of Z-estimators in \Cref{section.z_estimators}.

I explore the fact that \cref{eq.fast_moments} implies that the empirical moment equation evaluated at $\theta_\etahat$ converges faster than the typical $\sqrt{n}$ rate to construct a confidence interval for $\theta_\etahat$ that is uniformly asymptotically valid regardless of whether \cref{eq.fast_moments} happens or not. 
The issue discussed in this section is not important for every application. 
First, I discuss examples of when one may or may not comfortably assume that \cref{eq.fast_moments} does not hold. 
Then, I propose a confidence interval, prove its uniform asymptotic validity, and characterize its power properties. 
I focus on the estimator $\thetahat_{\etahat} = \thetahat_{\etahat}^{(2)}$ from \Cref{section.z_estimators}, and the results can be extended to $\thetahat_{\etahat}^{(1)}$ and $\thetahat_{\etahat}^{(3)}$ using similar techniques.

\subsection{Examples} \label{section.fast_convergence.examples}

In many cases, the researcher can safely assume that \cref{eq.fast_moments} does not happen, depending on the setup and definition of $\psi_{\theta, \eta}$. 
In other cases, as in Section \ref{section.application_hte}, \cref{eq.fast_moments} can happen under one of the main hypotheses of interest. 
I present examples of both cases below.

\begin{RevExample}{example.classif_binary}
    In \Cref{example.classif_binary}, $W = (Y,X)$, $Y$ is binary, and $\etahat : \cX \to \lrbc{0,1}$ is a predictor of $Y$ using covariates $X$. 
    The parameter of interest is a split-sample Z-estimand with $\psi_{\theta, \eta}(y, x) = \I{y = \eta(x)} - \theta$: 
    $$\theta_\etahat = \frac{1}{M K} \sum_{r \in \cR} \sum_{\s \in r}  \int \I{y = \etahat_{\stilde}(x)} dP(y,x).$$
    The variance 
    $$\Var[P]{\psi_{\theta_{\etastarP}, \etastarP}(Y, X)} = P \lrp{Y = \etastarP(X)} \lrbk{1 - P \lrp{Y = \etastarP(X)}}$$
    is positive unless $\etastarP(X)$ always predicts $Y$ correctly or always incorrectly. 
    In practice, predictive algorithms rarely have a near perfect (or imperfect) performance, and in many cases the researcher can confidently assume $\operatorname{Var}_P [\psi_{\theta_{\etastarP}, \etastarP}(Y, X)] > 0$. 
\end{RevExample}

\begin{example} 
    Consider a dataset with covariates $X$, a mean zero continuous outcome $Y \in \R$, and the goal of assessing whether a predictor $\etahat(X)$ has predictive power for $Y$. 
    One way of assessing predictive power for $Y$ is by conducting inference on the covariance 
    $$\theta_\etahat = \frac{1}{M K} \sum_{r \in \cR} \sum_{\s \in r}  \int y \etahat_{\stilde}(x) dP(y,x).$$ 
    $\theta_\etahat$ is a Z-estimand with moment function $\psi_{\theta, \eta}(y, x) = y \eta(x) - \theta$, and its limit variance is 
    $$\Var[P]{\psi_{\theta_{\etastarP}, \etastarP}(Y, X)} = \Var[P]{Y \etastarP(X)}.$$
    Let $\etastar(x) = \E{Y | X = x}$ be the limit of $\etahat(x)$. 
    If $X$ has no predictive power for $Y$, for example because $Y$ and $X$ are independent, $\etastar(x) = 0$, and $\Var[P]{Y \etastarP(X)} = 0$. 
    Hence, the CI in \cref{eq.ci_h} may fail to achieve nominal coverage asymptotically. 
\end{example}

\begin{remark} \label{remark.ybar}
    \sloppy
    When $\operatorname{Var}_P [\psi_{\theta_{\etastarP}, \etastarP}(Y, X)] = 0$, the asymptotic distribution of $\thetahat_\etahat$ may depend on the specific structure of how $\etahat$ is calculated. 
    Let $Y$ be a mean zero scalar random variable, $K=2$, $M=1$, and $(\s, \stilde)$ be a 2-fold cross-split of the data of equal sizes. 
    Let $\psi_{\theta,\etahat_{\stilde}}(y) = y \bar{y}_{\stilde}^d - \theta$ for some odd positive $d$, where $\bar{y}_{\stilde} = \frac{1}{|\stilde|} \sum_{i \in \stilde} Y_i$. 
    Then, $\thetahat_\etahat = \frac{1}{2} \lrp{\bar{y}_{\s} \bar{y}_{\stilde}^d + \bar{y}_{\stilde} \bar{y}_{\s}^d}$ and $n^{1/2 + d/2} \thetahat_\etahat = \frac{1}{2} \lrp{(\sqrt{n} \bar{y}_{\s}) (\sqrt{n} \bar{y}_{\stilde})^d + (\sqrt{n} \bar{y}_{\stilde}) (\sqrt{n} \bar{y}_{\s})^d}$, which follows a non-trivial distribution that depends on $d$. 
    If, for example, $d=3$, $(\sqrt{n} \bar{y}_{\s})^d$ is approximately distributed as the cube of a standard normal distribution, and $d = 5$ leads to a different distribution. 
    Moreover, the dependence between $(\sqrt{n} \bar{y}_{\s}) (\sqrt{n} \bar{y}_{\stilde})^d$ and $(\sqrt{n} \bar{y}_{\stilde}) (\sqrt{n} \bar{y}_{\s})^d$ is not trivial. 
\end{remark}

\subsection{An Adaptive Confidence Interval} \label{section.fast_convergence.approach}

I show how to construct a confidence interval $\hat{C}_{1-\alpha}$ that satisfies 
$$\lim_{n \to \infty} \inf_{P \in \cP} P(\theta_{\etahat} \in \hat{C}_{1-\alpha}) = 1 - \alpha,$$
regardless of whether \cref{eq.fast_moments} may hold or not, by introducing a tuning parameter. 
In \Cref{section.diff_performance,appendix.hte}, I propose a different approach for the particular cases of inference on comparisons between models and in the Generic ML context of \citet{chernozhukov2025generic}, which explicitly account for the dependence across splits. 

I construct $\hat{C}_{1-\alpha}$ by inverting the test 
\begin{equation} \label{eq.fast_test}
    \begin{cases}
    H_{0,\etahat}: & h(\theta_{\etahat}) = \tau \\
    H_{A,\etahat}: & h(\theta_{\etahat}) \neq \tau,
\end{cases} 
\end{equation}
that is, $\hat{C}_{1-\alpha}$ contains all values of $\tau$ for which the null hypothesis is not rejected at significance level $\alpha$. 
My approach consists of a data-driven procedure to choose one of two p-values for testing \cref{eq.fast_test}: $p_c(\tau)$ or $p_e(\tau)$. 
$p_c(\tau)$ is a \textit{conservative} p-value, meant to be valid when \cref{eq.fast_moments} holds, that is, the p-value a researcher would use if they knew \cref{eq.fast_moments} were true. 
$p_e(\tau)$ is an \textit{exact} p-value, coming from the normal approximation \cref{eq.ci_h}, as it achieves exact nominal coverage in large samples when \cref{eq.fast_moments} does not hold. 
Hence, I test \cref{eq.fast_test} with the p-value $p_e(\tau)$ when the data suggest that the empirical moment equations are away from zero, and with $p_c(\tau)$ otherwise. 
The idea of using different tests based on pre-testing some condition (in this case, whether the empirical moment equations are away from zero), is similar to \citet{shi2015model}, in the context of moment inequalities. 
Specifically, 
$$p_e(\tau) = 2 \Phi \lrp{- \lrm{\frac{\sqrt{n} \lrp{h(\thetahat_\etahat) - \tau}}{\sigmahat_{\etahat}}}},$$
$$\Psihatmin(\tau) = \min_{j \in \lrbk{d}} \lrm{\frac{1}{M K} \sum_{r \in \cR} \sum_{\s \in r} \Psihat_{\s,\etahat_{\stilde},j}(\tau)},$$
$$\Psihat(\tau) = \norm{\frac{1}{M K} \sum_{r \in \cR} \sum_{\s \in r} \Psihat_{\s,\etahat_{\stilde}}(\tau)},$$
$$a_n(\tau) = \I{\Psihatmin(\tau) \Psihat(\tau) > \gamma_{n}},$$
$$p(\tau) = a_n(\tau) p_e(\tau) + \lrbk{1 - a_n(\tau)} p_c(\tau),$$
$$\hat{C}_{1-\alpha} = \lrbc{\tau \in \R : p(\tau) \le \alpha}.$$
The idea behind $a_n$ is that $\sqrt{n} \Psihatmin(\tau) \Pto 0$ when $\operatorname{Var}_P [\psi_{\theta_{\etastarP}, \etastarP}(W)] = 0$. 
The sequence $\gamma_{n}$ is a tuning parameter that should ideally be specified before the data analysis. 
The properties of $\gamma_{n}$ and $p_c(\tau)$ are specified in \Cref{as.fast_rate} below.

\begin{assumption} \label{as.fast_rate}
    The following conditions hold: 
    \begin{assumptionenum}
        \item \label{as.fast_pc} $\sup_{P \in \cP} P (p_c(\theta_\etahat) \le \alpha) \le \alpha$;
        \item \label{as.fast_gamma} $n \gamma_{n} \to \gamma \in (0, \infty)$;
        \item \label{as.fast_variance} The set $\cP$ can be decomposed as $\cP = \cP_{+} \bigcup \cP_0$, where 
        \begin{itemize}
            \item[(a)] For every $\varepsilon > 0$, 
            $$\sup_{P \in \cP_{+}} \sup_{\norm{\theta - \theta_{\etastarP}} > \varepsilon} - \norm{\Psi_{\etastarP}(\theta)} < 0 = \norm{\Psi_{\etastarP}(\theta_{\etastarP})},$$
            $\Psi_{\eta}$ is differentiable at $\theta_{\eta}$ for $\eta \in H$, and for some $\bar{c}_1 > 0$, 
            $$\inf_{P \in \cP_{+}} \lrm{\det \lrp{\Psidot_{\etastarP}}} \ge \bar{c}_1;$$
            \item[(b)] $\sup_{P \in \cP_0} \norm{\Psi_{\etastarP}(\theta_{\etastarP})} = 0$, $\theta_\etahat \Pto \theta_{\etastarP}$ for some $\theta_{\etastarP} \in \Theta'$ uniformly in $P \in \cP_0$, and 
            $\sup_{P \in \cP_0} \min_{j \in \lrbk{d}} \Var[P]{\psi_{\theta_\etastarP,\etastarP,j}(W)} = 0.$
        \end{itemize}
    \end{assumptionenum}
\end{assumption}
\Cref{as.fast_pc} requires the p-value $p_c(\tau)$ to be valid, even if conservative, including when $\operatorname{Var}_P [\psi_{\theta_{\etastarP}, \etastarP}(W)] = 0$. 
Constructing $p_c(\tau)$ is context-specific, but a conservative, trivially valid option is $p_c(\tau)=1$. 
Note that this option does not lead to an unbounded CI since $a_n(\tau) = 1$ with probability approaching one for values of $\tau$ far from $\theta_\etahat$. 
\Cref{as.fast_gamma} requires $\gamma_{n}$ to converge to zero at the $n^{-1}$ rate. 
\Cref{as.fast_variance} substitutes and weakens \Cref{as.z_unique_0} and \Cref{as.z_jacobian}. 
It allows $\Psidot_{\etastarP}$ to be singular and $\lVert \Psi_{\etastarP}(\theta) \rVert = 0$
to have multiple solutions for $\theta$ when the variance of $\psi_{\theta_\etastarP,\etastarP,j}(W)$ is zero for some $j$. Valid inference is achieved in these cases since $a_n = 0$ with probability approaching one. 
Note that \Cref{as.z_unique_0} and \Cref{as.z_jacobian} imply \Cref{as.fast_variance} since $\cP = \cP_{+}$, and if 
$$\inf_{P \in \cP} \min_{j \in \lrbk{d}} \Var[P]{\psi_{\theta_\etastarP,\etastarP,j}(W)} > 0,$$
\Cref{as.fast_variance} implies both \Cref{as.z_unique_0} and \Cref{as.z_jacobian}. 
I establish the uniform asymptotic validity of $\hat{C}_{1-\alpha}$, and explore its power properties.

\begin{theorem} \label{th.fast_size}
    \hyperlink{proof.th.fast_ci1}{(Uniform Asymptotic Validity of $\hat{C}_{1-\alpha}$)} \, \\
    Let Assumptions \ref{as.z_donsker}-\ref{as.z_psidot_etastar} and \ref{as.fast_rate} hold. 
    Then, 
    $$\liminf_{n \to \infty} \inf_{P \in \cP} P(\theta_{\etahat} \in \hat{C}_{1-\alpha}) \ge 1 - \alpha.$$
\end{theorem}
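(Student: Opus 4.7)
The target inequality rewrites as $\limsup_n \sup_{P \in \cP} P(p(\theta_\etahat) \le \alpha) \le \alpha$, so by extracting subsequences it suffices to verify $\limsup_n P_n(p(\theta_\etahat) \le \alpha) \le \alpha$ for an arbitrary sequence $(P_n) \subseteq \cP$, which by \Cref{as.fast_variance} may be assumed to lie eventually in $\cP_+$ or eventually in $\cP_0$. Since $a_n$ is binary, the event $\{p(\theta_\etahat) \le \alpha\}$ partitions into the disjoint union
$$\{p_e(\theta_\etahat) \le \alpha, a_n(\theta_\etahat) = 1\} \cup \{p_c(\theta_\etahat) \le \alpha, a_n(\theta_\etahat) = 0\},$$
and I bound the two contributions separately in each regime.

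For subsequences eventually in $\cP_0$, the plan is to show $P_n(a_n(\theta_\etahat)=1) \to 0$, so that the first term vanishes and the second is bounded by $P_n(p_c(\theta_\etahat) \le \alpha) \le \alpha$ via \Cref{as.fast_pc}. Let $j^\star$ be a coordinate with $\Var_{P_n}[\psi_{\theta_{\etastarPn}, \etastarPn, j^\star}(W)] \to 0$. Since $\theta_\etahat$ satisfies $\Psi_\etahat(\theta_\etahat) = 0$ by definition of $\thetahat_\etahat^{(2)}$'s estimand, $\sqrt{n}\Psihat_{j^\star}(\theta_\etahat) = \sqrt{n}(\Psihat - \Psi_\etahat)_{j^\star}(\theta_\etahat)$ is a centered split-sample sum whose conditional variance given $D$ is controlled by $\E[P_n]{\lrm{\psi_{\theta_\etahat,\etahat,j^\star}(W) - \psi_{\theta_{\etastarPn}, \etastarPn, j^\star}(W)}^2 \mid D} + \Var_{P_n}[\psi_{\theta_{\etastarPn}, \etastarPn, j^\star}(W)]$; both summands tend to zero in probability by \Cref{as.z_etahat} and the definition of $\cP_0$. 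Thus $\sqrt{n}\Psihatmin(\theta_\etahat) \Pto 0$ while $\sqrt{n}\Psihat(\theta_\etahat) = O_P(1)$ from \Cref{th.clt_general}, yielding $n\Psihatmin(\theta_\etahat)\Psihat(\theta_\etahat) \Pto 0$; combined with $n\gamma_n \to \gamma > 0$ in \Cref{as.fast_gamma}, this gives $P_n(a_n(\theta_\etahat) = 1) \to 0$.

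For subsequences eventually in $\cP_+$, Assumptions \ref{as.z_donsker}--\ref{as.z_psidot_etastar} together with \Cref{as.fast_variance}(a) supply the prerequisites of \Cref{th.clt_z}, delivering the asymptotic linear representation $\sqrt{n}(\thetahat_\etahat - \theta_\etahat) = -\Psidot_{\etastarPn}^{-1}\sqrt{n}\Psihat(\theta_\etahat) + o_P(1)$ and the weak limit $\sqrt{n}\Psihat(\theta_\etahat) \leadsto Z \sim \cN(0,\Omega)$ for some $\Omega$. The continuous mapping theorem then yields a joint weak limit for $(p_e(\theta_\etahat), a_n(\theta_\etahat))$, and in particular $P_n(p_e(\theta_\etahat) \le \alpha) \to \alpha$ under the null by the delta method. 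Writing
$$P_n(p_e \le \alpha, a_n=1) + P_n(p_c \le \alpha, a_n=0) = \alpha + \lrbk{P_n(p_c \le \alpha, a_n=0) - P_n(p_e \le \alpha, a_n=0)} + o(1),$$
the plan reduces to showing the bracketed difference is asymptotically nonpositive. The intuition is that $\{a_n(\theta_\etahat)=0\}$ is the event on which $\sqrt{n}\Psihat(\theta_\etahat)$ is small, which by the linear representation forces $\lrm{\sqrt{n}(h(\thetahat_\etahat) - h(\theta_\etahat))/\sigmahat_\etahat}$ to be small and hence $p_e$ large, so the mass of $\{p_c \le \alpha, a_n=0\}$ available under \Cref{as.fast_pc} is dominated by the corresponding mass of $\{p_e \le \alpha, a_n=0\}$.

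The hard part will be closing this last step in the $\cP_+$ case, since $p_e$ and $a_n$ are both continuous functionals of the same limiting Gaussian $Z$ and are therefore not asymptotically independent. The argument must exploit the explicit forms $p_e \leadsto 2\Phi(-\lrm{\hdot(\theta_{\etastarPn}) \Psidot_{\etastarPn}^{-1} Z}/\sigma_{\etastarPn})$ and $\{a_n=0\} \leadsto \{\min_j\lrm{Z_j} \cdot \norm{Z} \le \gamma\}$, and verify, for every $p_c$ satisfying \Cref{as.fast_pc}, that the limiting probability $P(\min_j\lrm{Z_j}\cdot \norm{Z} \le \gamma, p_c \le \alpha)$ is absorbed by $P(\lrm{\hdot(\theta_{\etastarPn}) \Psidot_{\etastarPn}^{-1} Z}/\sigma_{\etastarPn} \ge z_{1-\alpha/2}, \min_j\lrm{Z_j}\cdot\norm{Z} \le \gamma)$. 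The tuning condition $n\gamma_n \to \gamma \in (0,\infty)$ in \Cref{as.fast_gamma} is what makes both limiting events non-degenerate and amenable to the required comparison.
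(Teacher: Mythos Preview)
Your $\cP_0$ case is the paper's $v=0$ argument: the vanishing-variance coordinate plus \Cref{th.clt_general} give $\sqrt{n}\,\Psihatmin(\theta_\etahat)=o_{P_n}(1)$ while $\sqrt{n}\,\Psihat(\theta_\etahat)=O_{P_n}(1)$, so $n\,\Psihatmin(\theta_\etahat)\Psihat(\theta_\etahat)=o_{P_n}(1)<n\gamma_n\to\gamma$ forces $a_n(\theta_\etahat)\Pnto 0$ and \Cref{as.fast_pc} finishes. (Minor clean-up: your ``conditional on $D$'' variance bound should be conditional on the training folds; the paper instead first replaces $(\etahat,\theta_\etahat)$ by $(\etastarPn,\theta_{\etastarPn})$ via \Cref{th.clt_general} and then bounds the unconditional variance, which is tidier.)

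For $\cP_+$ the paper does \emph{not} pursue the joint-weak-limit program you sketch. It asserts the pointwise inequality $p(\tau)\ge p_e(\tau)$ and concludes in one line via \Cref{th.clt_z}. That inequality is exactly $p_c(\tau)\ge p_e(\tau)$ on $\{a_n(\tau)=0\}$, and under it your dependence concern dissolves: whenever the selector picks $p_c$ the p-value only gets larger, so $\{p(\theta_\etahat)\le\alpha\}\subseteq\{p_e(\theta_\etahat)\le\alpha\}$ and normal-approximation size control applies directly. In your decomposition this is precisely the bracketed difference being nonpositive, obtained for free.

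You are right, though, that \Cref{as.fast_pc} alone --- marginal validity of $p_c$ --- does not imply this pointwise inequality. The paper's one-line argument is implicitly using the ``conservative'' reading of $p_c$ in this stronger pointwise sense, consistent with its suggested default $p_c\equiv 1$. Your unfinished program is what one would need for a $p_c$ constrained only by \Cref{as.fast_pc}, but it cannot be completed in that generality: a $p_c$ that, while satisfying \Cref{as.fast_pc}, places all of its rejection mass on $\{a_n(\theta_\etahat)=0\}$ can push $P(p(\theta_\etahat)\le\alpha)$ strictly above $\alpha$. So the gap you flag is real at the level of \Cref{as.fast_pc}; the paper closes it by the stronger (but unstated) pointwise ordering rather than by the joint-distribution route you propose.
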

I show that the hypothesis test \cref{eq.fast_test}, where $H_{0,\etahat}$ is rejected if $p(\tau) > \alpha$, has power approaching $1$ for fixed alternatives and non-trivial power for some sequences of local alternative hypotheses. 
I compare my test with an oracle test that correctly picks $p_e(\tau)$ or $p_c(\tau)$ depending on the asymptotic behavior of $\thetahat_\etahat$. 
In order to study local power, I consider sequences $(P_n)_{n \ge 1} \subseteq \cP$ under different regimes for the limit behavior of $\sqrt{n} \norm{\theta_\etahat - \tau}$ and the variance of $\psi_{\theta_\etastarPn,\etastarPn}(W)$. 
Let 
$$v_n^2 = \min_{j \in \lrbk{d}} \Var[P_n]{\psi_{\theta_\etastarPn,\etastarPn,j}(W)}.$$
The oracle test is defined by 
$$p^*(\tau) = 
\begin{cases}
    p_c(\tau) \text{, if } v_n \to 0 \text{ and } \sqrt{n} \norm{\theta_\etahat - \tau} = O_{P_n}(1), \\ 
    p_e(\tau) \text{, otherwise}.
\end{cases}$$
This test is infeasible since it depends on the sequence of DGPs $(P_n)_{n \ge 1}$. 
For the different regimes, I compare the limits
\begin{align*}
    \pi_\alpha & = \lim_{n \to \infty} P_n \lrp{p(\tau) \le \alpha}, \\
    \pi^*_\alpha & = \lim_{n \to \infty} P_n \lrp{p^*(\tau) \le \alpha}.
\end{align*}

\begin{theorem} \label{th.fast_power}
    Let \Cref{as.z_donsker}-\Cref{as.z_psidot_etastar} and \Cref{as.fast_rate} hold, $\tau \in \R$, $\alpha \in (0,1)$, and $(P_n)_{n \ge 1}$ be a sequence such that the limits $v$, $\pi_\alpha$ and $\pi^*_\alpha$ exist. 
    Assume $p_c(\tau)$ is an independent Bernoulli random variable taking value $0$ with probability $\alpha$ and $1$ with probability $1-\alpha$ (that is, it rejects the null with probability $\alpha$). 
    Then, the relationships in \Cref{tab.power} hold, where each row defines a separate regime for $\sqrt{n} \norm{\theta_\etahat - \tau}$. 
    \begin{table}[!ht]
    \centering
    \setlength{\arrayrulewidth}{1.0pt} 
    \caption{Power Comparison by Regime}
    \renewcommand{\arraystretch}{1.5} 
    \label{tab.power}
    \begin{tabular}{|l|c|c|c|c|}
        \cline{2-5}
        \multicolumn{1}{c|}{} & \multicolumn{1}{c|}{$n v_n^2 = o(1)$} & \multicolumn{1}{c|}{$n v_n^2 = O(1)^*$} & \multicolumn{1}{c|}{$n v_n^2 \to \infty^{**}$} & \multicolumn{1}{c|}{$v_n^2 \to v^2 > 0$} \\
        \cline{2-5}
        \hline
        $\sqrt{n} \norm{\theta_\etahat - \tau} \Pnto \infty$ 
        & $\alpha = \pi_\alpha < \pi^*_\alpha$
        & $\alpha < \pi_\alpha < \pi^*_\alpha$
        & $\pi_\alpha = \pi^*_\alpha = 1$
        & $\pi_\alpha = \pi^*_\alpha = 1$ \\
        \hline
        $\sqrt{n} \norm{\theta_\etahat - \tau} = O_{P_n}(1)^{***}$ 
        & $\alpha = \pi_\alpha < \pi^*_\alpha$
        & $\alpha = \pi_\alpha < \pi^*_\alpha$
        & $\alpha = \pi_\alpha < \pi^*_\alpha$
        & $\alpha < \pi_\alpha < \pi^*_\alpha$ \\
        \hline
        $\sqrt{n} \norm{\theta_\etahat - \tau} \Pnto 0$ 
        & $\alpha = \pi_\alpha = \pi^*_\alpha$
        & $\alpha = \pi_\alpha = \pi^*_\alpha$
        & $\alpha = \pi_\alpha = \pi^*_\alpha$
        & $\alpha = \pi_\alpha = \pi^*_\alpha$ \\
        \hline
    \end{tabular}\\[6pt]
    \raggedright \footnotesize * Assumes $n v_n^2 \nrightarrow 0$; ** Assumes $v_n^2 \to 0$; *** Assumes $\sqrt{n} \norm{\theta_\etahat - \tau} \neq o_{P_n}(1)$. 
\end{table}
\end{theorem}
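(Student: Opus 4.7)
The plan is to reduce the twelve-cell table to two building blocks: the asymptotic behavior of the detector $a_n(\tau)$ and of the normal-approximation p-value $p_e(\tau)$. Once these are in hand, the decomposition
$$P_n(p(\tau) \le \alpha) = P_n\bigl(a_n(\tau) = 1,\, p_e(\tau) \le \alpha\bigr) + \alpha \cdot P_n\bigl(a_n(\tau) = 0\bigr),$$
which uses the stipulated independence and Bernoulli form of $p_c(\tau)$, pins down $\pi_\alpha$ in each regime, and the comparison with $\pi_\alpha^{*}$ (the oracle that skips the pre-test and dispatches directly on the regime pair $(v_n, \sqrt{n}\|\theta_\etahat - \tau\|)$) yields the claimed inequalities.

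First I will establish a linearization for $\sqrt{n}\,\Psihat_\etahat(\tau)$ around $\theta_\etahat$. Following the Z-estimator expansion underlying \Cref{th.clt_z}, on $\cP_+$ one has
$$\sqrt{n}\,\Psihat_\etahat(\tau) = \sqrt{n}\,\Psihat_\etahat(\theta_\etahat) + \Psidot_{\etastarP}\sqrt{n}(\tau - \theta_\etahat) + o_P(1),$$
where the stochastic term is asymptotically centered with covariance $P\psi_{\theta_\etastarP,\etastarP}\psi^{T}_{\theta_\etastarP,\etastarP}$, whose $j$-th diagonal has order $v_{n,j}^2$. This yields the order-of-magnitude control $n\,\Psihat_{\etahat,j}(\tau)^2 = O_P\bigl(n v_{n,j}^2 + n\|\tau - \theta_\etahat\|^2\bigr)$, together with a matching lower bound away from degenerate cancellations, so whether $n\,\Psihatmin(\tau)\Psihat(\tau)$ crosses the threshold $n\gamma_n \to \gamma \in (0,\infty)$ is dictated by the sizes of $n v_n^2$ and $n\|\tau - \theta_\etahat\|^2$. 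On $\cP_0$, where the minimizing coordinate of $\psi_{\theta_\etastarP,\etastarP}$ vanishes almost surely and $\Psidot_{\etastarP}$ may be singular, I will refine the expansion coordinate by coordinate, using \Cref{as.fast_variance}(b) to maintain consistency of $\theta_\etahat$.

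Second I will characterize $p_e(\tau) = 2\Phi(-|t_n|)$ with $t_n = \sqrt{n}(h(\thetahat_\etahat) - \tau)/\sigmahat_\etahat$. Splitting $t_n$ into a CLT term plus a shift and invoking \Cref{th.clt_z} whenever $V_{\etastarP}$ is nondegenerate gives $\sqrt{n}(h(\thetahat_\etahat) - h(\theta_\etahat))/\sigmahat_\etahat \leadsto \cN(0,1)$; the limit of $t_n$ is then driven by the size of $\sqrt{n}\|\theta_\etahat - \tau\|$ relative to $\sigmahat_\etahat$, which is of order $v_n$. I obtain: $p_e \leadsto U[0,1]$ in row 3; $p_e$ has a nontrivial limit distribution in row 2 whenever $v_n$ stays away from zero; $p_e \Pto 0$ in row 1 in the two right-most columns; and more delicate mixed behavior when $v_n$ and the shift are calibrated together.

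Third I will assemble the twelve cells. Row 3 is immediate since both $p_e$ and $p_c$ are asymptotically size $\alpha$, so $\pi_\alpha = \pi_\alpha^{*} = \alpha$ irrespective of $a_n$. In row 1 the two rightmost columns give full power for both tests, while in the two leftmost columns the magnitude analysis of Step 1 controls whether $a_n \Pto 0$ (so that $p = p_c$ asymptotically, giving $\pi_\alpha = \alpha$) or $a_n$ has a nontrivial positive limit (giving $\alpha < \pi_\alpha$); the oracle meanwhile always uses $p_e$ and rejects with higher probability, producing the strict inequality $\pi_\alpha < \pi_\alpha^{*}$. Row 2 is where the joint limit of $\bigl(a_n(\tau), p_e(\tau)\bigr)$ is required, since both objects are driven by the same linearization in Step 1; once this joint law is characterized, the comparison $\pi_\alpha < \pi_\alpha^{*}$ follows because the pre-test occasionally defers to the conservative $p_c$ even when $p_e$ would have correctly rejected. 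The main technical obstacle is precisely this local-alternative row: computing $\pi_\alpha$ demands the joint rather than marginal distribution of $(a_n, p_e)$, and the comparison with the oracle must be sharp enough to yield strict inequalities without any independence shortcut.
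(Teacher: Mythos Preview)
Your approach is essentially the same as the paper's: both reduce the table to (i) the asymptotic behavior of the indicator $a_n(\tau)$, analyzed via a linearization of the empirical moment at $\tau$ against $\theta_\etahat$, and (ii) the behavior of $p_e(\tau)$, with the Bernoulli form of $p_c$ closing the argument. The paper organizes the case analysis by row, writing $\sqrt{n}\,\Psihat(\tau) = \sqrt{n}(\Psihat(\tau) - \Psi_n(\tau)) + \sqrt{n}\,\Psi_n(\tau)$ and using $\sqrt{n}(\Psi_{\etahat_{\xitilde}}(\tau) - \Psi_{\etahat_{\xitilde}}(\theta_\etahat)) = (\Psidot_{\etastarPn} + o_{P_n}(1))\sqrt{n}(\tau - \theta_\etahat)$ to track whether $n\,\Psihatmin(\tau)\Psihat(\tau)$ crosses $n\gamma_n \to \gamma$, which is exactly your Step~1 in slightly different notation; your explicit decomposition $P_n(p(\tau)\le\alpha) = P_n(a_n=1,\,p_e\le\alpha) + \alpha\,P_n(a_n=0)$ is implicit in the paper but not written out. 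One minor caution: your claim in Step~2 that $\sigmahat_\etahat$ is ``of order $v_n$'' holds cleanly in the scalar case but needs care when $d>1$, since $v_n$ is the \emph{minimum} coordinate variance while $\sigmahat_\etahat$ depends on the full sandwich $\Psidot_{\etastarP}^{-1}(P\psi\psi^T)(\Psidot_{\etastarP}^{-1})^T$; the paper sidesteps this by working directly with $\Psihatmin$ and $\Psihat$ rather than passing through $\sigmahat_\etahat$.
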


\subsection{Proofs and Extra Definitions} \label{section.proofs_section.fast_convergence}

\begin{proof}[\hypertarget{proof.th.fast_size}{Proof of \Cref{th.fast_size}}] 

Let $(P_n)_{n \ge 1} \subseteq \cP$ be such that 
$$v = \lim_{n \to \infty} \min_{j \in \lrbk{d}} \Var[P_n]{\psi_{\theta_\etastarPn,\etastarPn,j}(W)}$$
exists. 

If $v > 0$, 
$p(\tau) \ge p_e(\tau)$, and 
$$P_n(\theta_{\etahat} \in \hat{C}_{1-\alpha}) \ge 1 - \alpha + o(1)$$
by \Cref{th.clt_z}. 

For $v = 0$, note that by \Cref{th.clt_general},
\begin{align*}
    & \sqrt{n} \frac{1}{M K} \sum_{r \in \cR} \sum_{\s \in r} \Psihat_{\s,\etahat_{\stilde}}(\theta_\etahat) \\
    & = \sqrt{n} \frac{1}{M K} \sum_{r \in \cR} \sum_{\s \in r} \lrp{\Psihat_{\s,\etahat_{\stilde}}(\theta_\etahat) - \Psi_{\etahat_{\stilde}}(\theta_\etahat)} \\
    & = \sqrt{n} \frac{1}{M K} \sum_{r \in \cR} \sum_{\s \in r} \lrp{\Psihat_{\s,\etastarPn}(\theta_\etahat) - \Psi_{\etastarPn}(\theta_\etahat)} + o_{P_n}(1) \\
    & = \sqrt{n} \frac{1}{M K} \sum_{r \in \cR} \sum_{\s \in r} \lrp{\Psihat_{\s,\etastarPn}(\theta_\etastarPn) - \Psi_{\etastarPn}(\theta_\etastarPn)} + o_{P_n}(1) \\
    & = \sqrt{n} \frac{1}{M K} \sum_{r \in \cR} \sum_{\s \in r} \Psihat_{\s,\etastarPn}(\theta_\etastarPn) + o_{P_n}(1), 
\end{align*}
and, for any $j \in \lrbk{d}$, 
$$\Var[P_n]{\frac{\sqrt{n}}{M K} \sum_{r \in \cR} \sum_{\s \in r} \Psihat_{\s,\etahat_{\stilde},j}(\theta_\etahat)} \le \Var[P_n]{\psi_{\theta_\etastarPn,\etastarPn,j}(W)} + o(1).$$
If $v=0$, 
$$\Var[P_n]{\frac{\sqrt{n}}{M K} \sum_{r \in \cR} \sum_{\s \in r} \Psihat_{\s,\etahat_{\stilde},{j_n}}(\theta_\etahat)} \to 0$$
for some $(j_n)_{n \ge 1}$, and hence 
$$\Var[P_n]{\min_{j \in \lrbk{d}} \lrm{\frac{\sqrt{n}}{M K} \sum_{r \in \cR} \sum_{\s \in r} \Psihat_{\s,\etahat_{\stilde},j}(\theta_\etahat)}} \to 0.$$
As a consequence, 
$$P_n \Bigl( \underbrace{\sqrt{n} \Psihatmin(\theta_\etahat)}_{=o_{P_n}(1)} \underbrace{\sqrt{n} \Psihat(\tau)}_{=O_{P_n}(1)} > \underbrace{n \gamma_n}_{=\gamma + o(1)} \Bigr) \to 0,$$
and $a_n(\theta_\etahat) \Pnto 0$, which concludes the proof. 
\end{proof}

\begin{proof}[\hypertarget{proof.th.fast_power}{Proof of \Cref{th.fast_power}}] 

    Define 
    $$\Psi_n(\tau) = \norm{P_n \frac{1}{M K} \sum_{r \in \cR} \sum_{\s \in r} \Psihat_{\s,\etahat_{\stilde}}(\tau)}.$$
    
    First, let $\sqrt{n} \norm{\theta_\etahat - \tau} \Pnto \infty^*$. 
    If $n v_n \to \infty$,
    $$P_n \Bigl( \underbrace{\sqrt{n} \Psihatmin(\theta_\etahat)}_{=O_{P_n}(1)} \Bigl[ \underbrace{\sqrt{n} \Psihat(\tau) - \sqrt{n} \Psi_n(\tau)}_{=O_{P_n}(1)} + \underbrace{\sqrt{n} \Psi_n(\tau)}_{\Pnto \infty} \Bigr] \underbrace{\sqrt{n} \Psihat(\tau)}_{=O_{P_n}(1)} > \underbrace{n \gamma_n}_{=\gamma + o(1)} \Bigr) \Pnto 1,$$
    since 
    $$\sqrt{n} \lrp{\Psi_{\etahat_{\xitilde}}(\tau) - \Psi_{\etahat_{\xitilde}}(\theta_\etahat)} = \lrp{\Psidot_\etastarPn + o_{P_n}(1)} \sqrt{n} \lrp{\tau - \theta_\etahat}.$$
    If $n v_n = O(1)$,
    $$P_n \Bigl( \underbrace{v_n^{-1} \sqrt{n} \Psihatmin(\theta_\etahat)}_{=O_{P_n}(1)} \Bigl[ \underbrace{v_n \sqrt{n} (\Psihat(\tau) - \Psi_n(\tau))}_{=o_{P_n}(1)} + \underbrace{v_n \sqrt{n} \Psi_n(\tau)}_{=O_{P_n}(1)} \Bigr] \underbrace{\sqrt{n} \Psihat(\tau)}_{=O_{P_n}(1)} > \underbrace{n \gamma_n}_{=\gamma + o(1)} \Bigr)$$
    is $O_{P_n}(1)$, and $a_n(\tau) =O_{P_n}(1)$. 
    If $n v_n = o(1)$, $v_n \sqrt{n} \Psi_n(\tau) = o_{P_n}(1)$, and $a_n \Pnto 0$. 

    Second, let $\sqrt{n} \norm{\theta_\etahat - \tau} = O_{P_n}(1)$, and hence $\sqrt{n} \Psihat(\tau) = O_{P_n}(1)$. 
    It follows that 
    $$P_n \Bigl( \underbrace{v_n^{-1} \sqrt{n} \Psihatmin(\theta_\etahat)}_{=O_{P_n}(1)} \underbrace{\sqrt{n} \Psihat(\tau)}_{=O_{P_n}(1)} > v_n^{-1} \underbrace{n \gamma_n}_{=\gamma + o(1)} \Bigr)$$
    is $O_{P_n}(1)$ if $v_n^2 \to v^2 > 0$, and converges to zero if $v_n \to 0$. 

    Finally, the last row follows since $p_e(\theta_\etahat) - p_e(\tau) = o_{P_n}(1)$. 
    Note that $p_e(\theta_\etahat) - p_e(\tau) > o_{P_n}(1)$ for the second row. 
\end{proof}

\section{Note on Comparing Two Nonparametric Models} \label{appendix.comparison}

I discuss an extension of the setting of \Cref{section.diff_performance} for comparing $\theta_\etahat$ to the performance of another model $\etahat'$, computed with the same split-sample approach as $\etahat$. 

Let 
$$\cS = \lrp{\s_{m,k}}_{m \in \lrbk{M}, k \in \lrbk{K}}$$
and denote the split-specific models $\etahat = (\etahat_{\stilde})_{\s \in \cS}$ and $\etahat' = (\etahat'_{\stilde})_{\s \in \cS}$, where $\etahat_{\stilde} = \cA(D_{\stilde})$ and $\etahat'_{\stilde} = \cA'(D_{\stilde})$, that is, the two models are trained using the same sample but different algorithms. 
For example, $\etahat$ could be estimated with random forests while $\etahat'$ could be estimated with neural networks. 
Denote 
$$\deltahat_{\etahat, \etahat'}^{(1)} = \lrp{\thetahat_{\etahat_{\stilde}} - \thetahat_{\etahat'}}_{\s \in \cS}$$
and 
$$\deltahat_{\etahat, \etahat'}^{(2)} = \lrp{\thetahat_{\etahat'_{\stilde}} - \thetahat_{\etahat}}_{\s \in \cS}.$$

$\deltahat_{\etahat, \etahat'}^{(1)}$ can be used for testing whether $\thetahat_{\etahat_{\stilde}} \ge \thetahat_{\etahat'}$ for all $\s \in \cS$ versus the alternative that $\thetahat_{\etahat_{\stilde}} < \thetahat_{\etahat'}$ for at least one $\s \in \cS$, similarly to \Cref{section.diff_onesided_test} and \Cref{th.diff_onesided_test}. 
Note that the Donsker and rate conditions in \Cref{as.diff_onesided_test_bP} are not required for \Cref{th.diff_onesided_test}. 
They are used only for the pointwise \Cref{th.diff_pointwise} to cover the case $\theta_\etastarP = \theta_{b_P}$. 
Similarly, $\deltahat_{\etahat, \etahat'}^{(2)}$ can be used to test whether $\thetahat_{\etahat'_{\stilde}} \ge \thetahat_{\etahat}$ for all $\s \in \cS$ versus the alternative that $\thetahat_{\etahat'_{\stilde}} < \thetahat_{\etahat}$ for at least one $\s \in \cS$.

\section{Additional Results} \label{app.additional_results}

\begin{proposition} \label{prop.avg_implies_empproc}
    In the context of \Cref{as.emp_proc}, let $T=\lrbc{t}$ and $\cP=\lrbc{P}$ be singletons, and let \Cref{as.avg} hold for $P$ and $f_{\eta,t}$ measurable with $f_{\eta,t}(w) < \infty$ for all $w$. 
    Then, \Cref{as.emp_proc} holds. 
\end{proposition}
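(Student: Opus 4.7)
The plan is to verify each of the six clauses of \Cref{as.emp_proc} separately, exploiting the singleton structure of $T$ and $\cP$ together with the $2+\delta$ moment bound from \Cref{as.2plusdelta}. Four of the six conditions are essentially trivial consequences of $T = \{t\}$; the only piece requiring an argument is uniform integrability of the envelope, which follows from a one-line Markov-type estimate.

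First I would dispose of the structural conditions. \Cref{as.total_bounded} is immediate since every singleton is totally bounded under any semimetric $\rho$. \Cref{as.measurability} is given by hypothesis. For \Cref{as.envelope}, I would set $F_\eta(w) := |f_{t,\eta}(w)|$, which is measurable (being the absolute value of a measurable function) and satisfies $|f_{t,\eta}(w)| \le F_\eta(w) < \infty$ for all $w$ by the hypothesis that $f_{\eta,t}(w) < \infty$. \Cref{as.equicontinuity} is vacuous: whenever $\rho(t,t') < \delta_n$ in a singleton $T$, we must have $t = t'$, so $(f_{t,\eta}(W) - f_{t',\eta}(W))^2 \equiv 0$. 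For \Cref{as.entropy}, the class $\cF_\eta = \{f_{t,\eta}\}$ contains exactly one function, hence $N(\varepsilon, \cF_\eta, L_2(Q)) = N_{[\,]}(\varepsilon, \cF_\eta, L_2(P)) = 1$ for every $\varepsilon > 0$, making both integrals in \cref{eq.entropy1} and \cref{eq.entropy2} identically zero.

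The only substantive step is \Cref{as.unif_integrability}. Since $\cP = \{P\}$ is a singleton, I need to show
\[
\lim_{B \to \infty} \sup_{\eta \in H} \E[P]{F_\eta(W)^2 \I{F_\eta(W) > B}} = 0.
\]
The plan is a straightforward Markov argument with exponent $\delta$: on the event $\{F_\eta(W) > B\}$, we have $F_\eta(W)^{\delta} > B^{\delta}$, so $F_\eta(W)^2 \le B^{-\delta} F_\eta(W)^{2+\delta}$, giving
\[
\E[P]{F_\eta(W)^2 \I{F_\eta(W) > B}} \le B^{-\delta} \E[P]{|f_{t,\eta}(W)|^{2+\delta}} \le B^{-\delta} \sup_{\eta' \in H} \E[P]{|f_{\eta'}(W)|^{2+\delta}}.
\]
The supremum on the right is finite by \Cref{as.2plusdelta}, so the bound tends to $0$ as $B \to \infty$ uniformly in $\eta$.

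This verifies every clause of \Cref{as.emp_proc}. There is no real obstacle: the singleton assumption on $T$ eliminates all complexity-related content, and the $2+\delta$ moment bound handles the remaining integrability condition via a routine Markov inequality. Note that \Cref{as.avg_etahat} from \Cref{as.avg} is not needed for this proposition, which only concerns \Cref{as.emp_proc}.
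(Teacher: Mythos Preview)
Your proof is correct and follows essentially the same route as the paper: the structural clauses (\ref{as.total_bounded}, \ref{as.measurability}, \ref{as.envelope}, \ref{as.equicontinuity}, \ref{as.entropy}) are dismissed as trivial for singleton $T$, and \ref{as.unif_integrability} is deduced from the $2+\delta$ moment bound---the paper just says ``implied by'' while you spell out the Markov-type estimate $F_\eta^2 \I{F_\eta > B} \le B^{-\delta} F_\eta^{2+\delta}$ explicitly.

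One small remark: the paper's own proof goes slightly beyond the stated proposition and also verifies \Cref{as.etahat} (using \Cref{as.avg_etahat} and the argument from the proof of \Cref{th.clt_avg}), since elsewhere the text cites this proposition to justify both \Cref{as.emp_proc} and \Cref{as.etahat}. Your closing observation that \Cref{as.avg_etahat} is not needed is correct for the proposition \emph{as stated}, but if you want to match how the result is actually invoked in the paper, you would want to append that extra verification.
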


\begin{proof}[\hypertarget{proof.prop.avg_implies_empproc}{Proof of \Cref{prop.avg_implies_empproc}}]  
    \Cref{as.total_bounded} and \Cref{as.measurability} hold trivially. 
    \Cref{as.envelope} holds by taking $f_{\eta,t}$ as its own envelope. 
    The uniform integrability condition \Cref{as.unif_integrability} is implied by the $2+\delta$ assumption \Cref{as.2plusdelta}. 
    \Cref{as.equicontinuity} holds trivially. 
    \Cref{as.entropy} holds since both covering and bracketing numbers are equal to $1$ with singleton $T$. 
    Finally, \Cref{as.etahat} follows since 
    $$\E[P]{\Var[P]{f_{\etahat_{\xitilde}}(W) - f_{\etastar}(W) \Bigm| D_{\xitilde} }} \to 0,$$
    as established under \Cref{as.avg_etahat} in the proof of \Cref{th.clt_avg} \cref{eq.proof_th.clt_avg.7}, since convergence in $L_1$ implies convergence in probability.
\end{proof}

\begin{lemma} \label{lemma.combinatorial}
    In the context of \Cref{th.clt_general}, 
    $$Z_n = n^{-1} \sum_{i = 1}^n \I{M_i = c} \Pnto \dbinom{M}{c} \pi^c (1-\pi)^{M-c}.$$
\end{lemma}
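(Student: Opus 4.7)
The plan is to establish convergence in probability via the second-moment method: I will show that $\E[P_n]{Z_n}$ converges to the claimed limit and that $\Var[P_n]{Z_n} \to 0$, then apply Chebyshev's inequality.

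First I would identify the marginal distribution of each $M_i$. Since the setting here is $K=1$ with $M$ repetitions drawn independently and uniformly (with replacement, as emphasized in \Cref{section.setup}), for fixed $i$ the indicators $(\I{i \in s_{m,1}})_{m=1}^M$ are iid Bernoulli with success probability $\pi_n = b/n$. Hence $M_i \sim \operatorname{Binomial}(M,\pi_n)$, which gives
$$\E[P_n]{Z_n} = P_n(M_1 = c) = \binom{M}{c}\pi_n^{c}(1-\pi_n)^{M-c} \to \binom{M}{c}\pi^{c}(1-\pi)^{M-c}.$$

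Next I would bound the variance. Writing
$$\Var[P_n]{Z_n} = n^{-2}\sum_{i=1}^{n}\Var[P_n]{\I{M_i=c}} + n^{-2}\sum_{i\neq j}\Cov[P_n]{\I{M_i=c},\I{M_j=c}},$$
the first sum is at most $n^{-1}/4$. For the second, I rely on the key step: for $i\neq j$, within a single split the pair $(\I{i\in s_{m,1}},\I{j\in s_{m,1}})$ has covariance $\pi_n(\pi_n-1)/(n-1) = O(1/n)$ (by a direct count of subsets of size $b$), while across splits the pairs are iid. This means the bivariate distribution of $(\I{i\in s_{m,1}},\I{j\in s_{m,1}})$ converges to the product of two independent $\operatorname{Bernoulli}(\pi)$. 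Since $M$ is bounded (this part of \Cref{th.clt_general} handles $\bar M<\infty$), the continuous mapping theorem then yields weak convergence of $(M_i,M_j)$ to a pair of independent $\operatorname{Binomial}(M,\pi)$ random variables, so
$$\Cov[P_n]{\I{M_i=c},\I{M_j=c}} \to 0.$$
By exchangeability this bound is the same for every pair $i\neq j$, and the double sum contributes $o(1)$.

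The main (and only) technical point is justifying that this covariance is $o(1)$ uniformly — the sum has $n(n-1)$ terms while the prefactor is only $n^{-2}$, so I cannot afford any constant-order remainder. The explicit $O(1/n)$ computation of the within-split covariance together with the boundedness of $M$ makes this transparent: in fact the joint pmf of $(M_i,M_j)$ differs from the product pmf by $O(1/n)$, so the double sum is $O(1)\cdot o(1) = o(1)$. Combining the two displays, $\Var[P_n]{Z_n} \to 0$, and Chebyshev's inequality together with the convergence of $\E[P_n]{Z_n}$ delivers $Z_n \Pnto \binom{M}{c}\pi^{c}(1-\pi)^{M-c}$.
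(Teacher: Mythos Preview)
Your proposal is correct and follows the same overall second-moment/Chebyshev structure as the paper: compute $\E[P_n]{Z_n}$ via the $\operatorname{Binomial}(M,\pi_n)$ marginal, decompose the variance, and reduce to showing $\Cov[P_n]{\I{M_1=c},\I{M_2=c}}\to 0$. The only difference is how you handle that last step. The paper writes out $P_n(M_1=c\mid M_2=c)$ as an explicit sum over the number $t$ of splits containing both observations, simplifies the ratios of binomial coefficients $\binom{n-2}{b-2}/\binom{n-1}{b-1}$ and $\binom{n-2}{b-1}/\binom{n-1}{b}$ to $\pi_n+o(1)$, and closes with Vandermonde's identity $\sum_{t=0}^{c}\binom{c}{t}\binom{M-c}{c-t}=\binom{M}{c}$. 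You instead argue via weak convergence of the within-split indicator pair (whose covariance you correctly compute as $\pi_n(\pi_n-1)/(n-1)=O(1/n)$) and then lift to $(M_i,M_j)$ by summing $M$ iid copies. Your route is shorter and more conceptual; the paper's is more explicit and self-contained. Both are fine, and your observation that exchangeability makes the covariance bound automatically uniform across pairs is exactly the right justification for why $o(1)$ on a single pair suffices.
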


\begin{proof}
    I show that $\E[P_n]{Z_n} = \dbinom{M}{c} {\pi_n}^c (1-{\pi_n})^{M-c}$ and $\Var[P_n]{Z_n} \to 0$ as $n \to \infty$. 
    By definition, $M_i = \lrm{\lrbc{s \in \lrbc{s_{m,1}}_{m \in \lrbk{M}} : i \in s}}$. 
    $\E[P_n]{Z_n} = \E[P_n]{\I{M_1 = c}} = P_n \lrp{M_1 = c}$ since all $M_i$ are equally distributed for any $i$. 
    The event $\lrbc{M_i = c}$ is equivalent to the event that observation $i$ is chosen in exactly $c$ of the $M$ splits of the sample. 
    Since the splits are independent, $M_i$ follows a binomial distribution with parameters $M$ and $\pi_n$. 
    Hence, the probability of this event is $\dbinom{M}{c} {\pi_n}^c (1-{\pi_n})^{M-c}$. 

    To show that $\Var[P_n]{Z_n} \to 0$, I use the fact that 
    \begin{align*}
        \Var[P_n]{Z_n} & = n^{-2} \sum_{i = 1}^n \Var[P_n]{\I{M_i = c}} + n^{-2} \sum_{i \neq j} \Cov[P_n]{\I{M_i = c}, \I{M_j = c}} \\
        & = n^{-1} \Var[P_n]{\I{M_1 = c}} + n^{-2} n (n - 1) \Cov[P_n]{\I{M_1 = c}, \I{M_2 = c}}.
    \end{align*}
    Hence, it's enough to show that
    $$\Cov[P_n]{\I{M_1 = c}, \I{M_2 = c}} = P_n \lrp{M_1 = c, M_2 = c} - P_n \lrp{M_1 = c}^2 \to 0.$$
    I show that $P_n \lrp{M_1 = c \Bigm| M_2 = c} \to P_n \lrp{M_1 = c}$. 
    Note $b = \pi_n n$ is the number of draws in each split. 
    Using combinatorial arguments, the conditional probability is given by
    \begin{align*}
        P_n \lrp{M_1 = c \Bigm| M_2 = c} = \sum_{t=0}^{c} & \dbinom{c}{t} \dbinom{M - c}{c - t} \left( \frac{\dbinom{n - 2}{b - 2}}{\dbinom{n - 1}{b - 1}} \right)^t \left( 1 - \frac{\dbinom{n - 2}{b - 2}}{\dbinom{n - 1}{b - 1}} \right)^{c - t} \\
        & \times \left( \frac{\dbinom{n - 2}{b - 1}}{\dbinom{n - 1}{b}} \right)^{c - t} \left( 1 - \frac{\dbinom{n - 2}{b - 1}}{\dbinom{n - 1}{b}} \right)^{M - 2c + t}.
    \end{align*}
    $t$ represents the number of splits that contain both observations 1 and 2. 
    Since observation 2 is chosen in $c$ splits, $0 \le t \le c$. 
    There are $\dbinom{c}{t}$ ways of choosing among the $c$ splits that contain observation 2, which $t$ will also contain observation 1. 
    There are $\dbinom{M - c}{c - t}$ ways of choosing the remaining $c - t$ splits that contain observation 1 but not 2. 
     $\left( \frac{\dbinom{n - 2}{b - 2}}{\dbinom{n - 1}{b - 1}} \right)^t$ is the probability of choosing observation 1 in the $t$ splits that contain both observations. 
    $\left( 1 - \frac{\dbinom{n - 2}{b - 2}}{\dbinom{n - 1}{b - 1}} \right)^{c - t}$ is the probability of not choosing observation 1 in the remaining $c - t$ splits that contain observation 2. 
    $\left( \frac{\dbinom{n - 2}{b - 1}}{\dbinom{n - 1}{b}} \right)^{c - t}$ is the probability of choosing observation 1 in the $c - t$ splits that contain observation 1 but not 2. 
    Finally, $\left( 1 - \frac{\dbinom{n - 2}{b - 1}}{\dbinom{n - 1}{b}} \right)^{M - 2c + t}$ is the probability of not choosing observation 1 in the remaining $M - 2c + t$ splits that contain neither observation.

    For large $n$, we can approximate the combinatorial terms:
    $$\frac{\dbinom{n - 2}{b - 2}}{\dbinom{n - 1}{b - 1}} = \frac{(n - 2)!}{(b - 2)! (n - b)!} \lrp{\frac{(n - 1)!}{(b - 1)! (n - b)!}}^{-1} = \frac{b - 1}{n - 1} = {\pi_n} + o(1).$$
    Similarly,
    $$\frac{\dbinom{n - 2}{b - 1}}{\dbinom{n - 1}{b}} = \frac{(n - 2)!}{(b - 1)! (n - b - 1)!} \lrp{\frac{(n - 1)!}{b! (n - b - 1)!}}^{-1} = \frac{b}{n - 1} = {\pi_n} + o(1).$$
    It follows that 
    \begin{align*}
        P_n \lrp{M_1 = c \Bigm| M_2 = c} & = \sum_{t=0}^{c} \dbinom{c}{t} \dbinom{M - c}{c - t} {\pi_n}^t (1 - {\pi_n})^{c - t} {\pi_n}^{c - t} (1 - {\pi_n})^{M - 2c + t} + o(1) \\
        & = {\pi_n}^c (1 - {\pi_n})^{M - c} \sum_{t=0}^{c} \dbinom{c}{t} \dbinom{M - c}{c - t} + o(1) \\
        & = {\pi_n}^c (1 - {\pi_n})^{M - c} \dbinom{M}{c} + o(1) \\
        & = P_n \lrp{M_1 = c} + o(1),
    \end{align*}
    where the third equality uses Vandermonde's Identity. 
\end{proof}

\section{Covariate Adjustment in Randomized Trials} \label{appendix.application_rct}

Let $W = (Y, A, X)$, where $Y \in \R$ is an observed outcome, $A$ is a binary (randomized) treatment assignment indicator, and $X \in \cX \subseteq \R^d$ is a set of covariates, for some $d \ge 1$. 
Let $Y(1), Y(0)$ denote potential outcomes respectively under treatment and control, and $Y = A Y(1) + (1 - A) Y(0)$. 
In the simplest form of an RCT, $A \perp (X, Y(1), Y(0))$. 
In this setting, the ATE $\theta$ can be identified from the regression 
\begin{equation} \label{eq.rct_basic}
    Y = \alpha + \theta A + \varepsilon.
\end{equation}

The covariates are not necessary for identification of $\theta$. 
However, adding regressors in (\ref{eq.rct_basic}) can lead to power improvement by reducing the variance of the error term $\varepsilon$ and thus the asymptotic variance of the least squares estimator of $\theta$. 
One approach to incorporating covariates is through a covariate-adjustment term $\eta(X)$: 
\begin{equation} \label{eq.rct_eta}
    Y = \alpha_\eta + \thetaeta D + \beta_\eta \eta(X) + \varepsilon.
\end{equation}
If $A \perp (X, Y(1), Y(0))$, $\thetaeta = \theta$ does not depend on $\eta$. Still, its OLS estimator $\thetahat_\eta$ does depend on $\eta$. 
In practice, one needs to estimate $\eta$ with a model $\etahat$. 
Inference becomes challenging if the same data is used to estimate both $\etahat$ and $\thetahatetahat$ because the observations in \cref{eq.rct_eta} become no longer iid. 
The asymptotic distribution of $\sqrt{n} (\thetahat_\etahat - \theta_\etahat)$ can be characterized following \Cref{section.z_estimators}, specifically \Cref{th.clt_z}.

\end{appendices}

\end{document}